\documentclass[10pt]{article}
\usepackage{multicol}
\usepackage{lettrine}
\usepackage{braket}
\usepackage{framed}
\usepackage{graphicx}
\usepackage{tabularx}
\usepackage{subcaption} 
\usepackage{xcolor}
\usepackage{amsmath}
\numberwithin{equation}{section}
\usepackage{float}
\usepackage{booktabs}
\usepackage{amssymb}
\usepackage{xcolor}
\usepackage{mathrsfs}
\usepackage[ruled,vlined,linesnumbered]
{algorithm2e}
\usepackage{tikz}
\usepackage{amsthm}   
\usepackage{aliascnt}
\theoremstyle{plain}
\newtheorem{theorem}{Theorem}[section]
\newtheorem{lemma}[theorem]{Lemma}
\newtheorem{corollary}[theorem]{Corollary}

\newaliascnt{result}{theorem}
\newtheorem{result}[result]{Result}
\aliascntresetthe{result}
\usepackage[
  backend=biber,
  style=numeric-comp,
  sorting=none,
  sortcites=true,
  giveninits=false,
  maxbibnames=99,
  url=false,
  doi=false,
  eprint=false
]{biblatex}
\usepackage{hyperref}

\usepackage[capitalize]{cleveref}

\crefname{theorem}{Theorem}{Theorems}
\crefname{lemma}{Lemma}{Lemmas}
\crefname{corollary}{Corollary}{Corollaries}
\crefname{proposition}{Proposition}{Propositions}
\crefname{result}{Result}{Results}
\Crefname{theorem}{Theorem}{Theorems}
\Crefname{lemma}{Lemma}{Lemmas}
\Crefname{corollary}{Corollary}{Corollaries}
\Crefname{proposition}{Proposition}{Propositions}
\Crefname{result}{Result}{Results}

\usetikzlibrary{tikzmark,calc}
\title{Quantum simulation of slow analytic time-dependent Hamiltonians} 

\author{Chenhao Zhao$^{1}$, \quad Yinan Li$^{2}$, \quad Dong An$^{3}$ \\ 
\footnotesize $^{1}$ School of Mathematics and Statistics, Wuhan University, Wuhan, China\\
\footnotesize $^{2}$ School of Artificial Intelligence, Wuhan University, Wuhan, China\\
\footnotesize $^{3}$ Beijing International Center for Mathematical Research, Peking University, Beijing, China\\
}

\date{ }

\begin{document}

\maketitle

\begin{abstract}
    We develop a quantum algorithm for slow analytic Hamiltonians $\widetilde H(t)=H(t/T)$ with $\|H(s)\|\leq\alpha$ that achieves nearly additive query complexity and low gate overhead. 
    Our main technical contribution is a periodic Gevrey extension of $H(s)$, together with Fourier component decay and truncation bounds that enable an efficient finite-dimensional simulation. 
    Combined with Floquet embedding and optimal time-independent Hamiltonian simulation technique, this gives query complexity $\widetilde{\mathcal O}\!\left(\alpha T+\log(1/\varepsilon)\right)$ and additional gate complexity $\widetilde{\mathcal O}\!\left((\alpha T+\log(1/\varepsilon))^2\log(1/\varepsilon)\right)$, assuming coherent access to $H'(s)$ and endpoint derivatives. 
    For slow analytic control Hamiltonians, only block encodings of the time-independent control operators are required, with the same query complexity and lower gate overhead. 
    Our method also extends to Gevrey Hamiltonians and improves the precision dependence for simulating slow analytic semi-dissipative linear differential equations. 
\end{abstract}

\tableofcontents

\section{Introduction}\label{sec:intro}

Simulating the time evolution of a physical quantum system, also known as \emph{Hamiltonian simulation} problem, is a central application of quantum computers~\cite{Feynman1982}. 
It is also a basic primitive in quantum algorithms for other tasks, such as linear systems~\cite{HarrowHassidimLloyd2009,ChildsKothariSomma2017,SubasiSommaOrsucci2019,AnLin2022}, differential equations~\cite{AnLiuLin2023,AnChildsLin2023,JinLiuYu2022}, and adiabatic quantum computation and state preparation~\cite{AlbashLidar2018}. 
Given a time-dependent Hermitian Hamiltonian $\widetilde H(t)$, the goal is to solve the Schr\"odinger equation 
\begin{equation}\label{eqn:ham_sim}
    i \frac{d}{dt} \ket{\psi(t)} = \widetilde{H}(t) \ket{\psi(t)}, \quad \ket{\psi(0)} = \ket{\psi_0},  
\end{equation}
by preparing a state that approximates $\ket{\psi(T)}$ to $2$-norm error at most $\varepsilon$.

A special case of the Hamiltonian simulation problem is the \emph{time-independent Hamiltonian simulation}, where $\widetilde{H}(t) \equiv \widetilde{H}$ is a constant Hermitian matrix and the solution can be written as $\ket{\psi(T)} = e^{-i\widetilde{H} T} \ket{\psi_0}$. 
The time-independent case has been studied extensively~\cite{BerryAhokasCleveEtAl2007,BerryChilds2012,BerryCleveGharibian2014,BerryChildsCleveEtAl2014,BerryChildsCleveEtAl2015,BerryChildsKothari2015,LowChuang2017,ChildsMaslovNamEtAl2018,ChildsOstranderSu2019,Campbell2019,Low2019,ChildsSu2019,LowChuang2019,GilyenSuLowEtAl2019,ChenHuangKuengEtAl2020,SahinogluSomma2020,ChildsSuTranEtAl2021,Faehrmann2022Randomizing,MartynLiuChinEtAl2023,Zhang2024Parallel,Zlokapa2024Hamiltonian,Bosse2025Efficient,Rendon2024Improved,WatsonWatkins2024}. 
Among them, the quantum singular value transformation (QSVT)~\cite{GilyenSuLowEtAl2019} solves the problem by implementing the matrix function $e^{-i\widetilde{H} T}$, with queries to the Hamiltonian for $\mathcal{O}(\alpha T + \log(1/\varepsilon))$ times\footnote{The actual scaling of QSVT as well as some other methods also has a $\log\log(1/\varepsilon)$ factor on the denominator, which we ignore for simplicity throughout this work.}, where $\alpha \geq \|\widetilde{H}(t)\|$ is a block-encoding normalization factor. 
This matches the known lower bound for time-independent Hamiltonian simulation~\cite{BerryChildsKothari2015}. 
Additionally, product formulas provide a complementary approach and can exploit structures such as locality to obtain better system-size dependence, although their generic time and precision dependence is suboptimal~\cite{ChildsSuTranEtAl2021}.

The general time-dependent problem is less well understood~\cite{HuyghebaertDeRaedt1990,WiebeBerryHoyerEtAl2010,PoulinQarrySommaEtAl2011,WeckerHastingsWiebeEtAl2015,LowWiebe2019,kieferova2019simulating,BerryChildsSuEtAl2020,AnFangLin2021,AnFangLin2022,mizuta2023optimal,mizuta2023optimalmulti,watkins2024time,Cao2025Unifying,Li2025TimeDependent,chen2026optimal}. 
The truncated Dyson series method~\cite{kieferova2019simulating,LowWiebe2019}, which generalizes truncated Taylor series simulation, achieves query complexity
$\widetilde{\mathcal O}\!\left(\alpha T\log(1/\varepsilon)\right)$. 
This dependence is nearly optimal in time and precision when each is considered separately. 
However, unlike the time-independent case, the two factors of time and precision enter multiplicatively, leading to significant computational overhead for an accurate long-time simulation and motivating algorithms with additive dependence on $\alpha T$ and $\log(1/\varepsilon)$. 

To design such an algorithm with additive scaling, a natural route is to reduce the time-dependent Hamiltonian simulation tasks to time-independent ones and apply the optimal QSVT approach. 
Floquet constructions achieve this for periodic and multiperiodic Hamiltonians, yielding optimal additive query complexity under suitable access assumptions~\cite{mizuta2023optimal,mizuta2023optimalmulti}. 
Other works~\cite{watkins2024time,Cao2025Unifying,Li2025TimeDependent} study more general clock embedding techniques and clarify the relationship between time-dependent and time-independent simulations, but their implementations generally introduce additional overhead that prevents comparable additive bounds. 

Very recently, a concurrent and independent work~\cite{chen2026optimal} makes a remarkable breakthrough by achieving query complexity $\mathcal{O}(\alpha T + \log(1/\varepsilon))$, matching the time-independent lower bound for general Lipschitz-continuous Hamiltonians in the standard HAM-T model. 
The algorithm is based on a novel extension of the transducer framework~\cite{BelovsJefferyYolcu2024} to encode the complete sequence of time-ordered steps. 
Its direct circuit implementation, however, has additional gate complexity with polynomial dependence on $1/\varepsilon$. 
As a comparison, the truncated Dyson series approach needs only $\mathcal{O}(\alpha T \operatorname{polylog}(\alpha T/\varepsilon))$ additional gates. 
This leaves open the following central question: 
\begin{center}
   \emph{Can time-dependent Hamiltonians be simulated with nearly additive query complexity
    $\widetilde{\mathcal O}\!\left(\alpha T+\log(1/\varepsilon)\right)$
    while keeping the additional gate complexity polylogarithmic in $1/\varepsilon$?}
\end{center}

In this work, we answer this question affirmatively for slow analytic Hamiltonian of the form
$\widetilde H(t)=H(t/T)$, which describes a fixed analytic control path traversed over a physical time $T$. 
Such dynamics arise in adiabatic quantum computation~\cite{FarhiGoldstoneGutmannEtAl2000,AlbashLidar2018} and quantum annealing~\cite{kadowaki1998quantum,hauke2020perspectives}, with wide applications to molecular ground-state preparation~\cite{kremenetski2021simulation} and adiabatic state transfer in quantum-dot spin chains~\cite{kandel2021adiabatic} and topological pumping~\cite{deng2024highorder}. 
Our main technical contribution is the construction of a periodic Gevrey extension that agrees with $\widetilde{H}(t)$ on the physical time interval, together with Fourier decay and Floquet space truncation bounds for this extension. 
These results allow standard Floquet embedding and QSVT to yield the desired nearly additive query scaling while maintaining polylogarithmic gate dependence on the precision. 
Our approach also applies to lower-regularity Gevrey Hamiltonians, but the query complexity becomes suboptimal. 
Additionally, we combine our approach with the linear combination of Hamiltonian simulation (LCHS) method~\cite{LowSomma2025} to improve the precision dependence for slow analytic non-unitary dynamics.

\subsection{Results}

We first state our main result for general slow analytic Hamiltonians. 
Throughout this subsection, the regularity parameters of the Hamiltonian are treated as constants independent of $T$ and $\varepsilon$.

\begin{result}[Slow analytic Hamiltonians, informal version of~\cref{th:query_gate_complex_gene}]\label{res:main}
    Let $\widetilde{H}(t) = H(t/T)$ for an analytic Hamiltonian $H(s)$ with $\|H(s)\| \leq \alpha$. 
    Suppose that we are given access to the HAM-T oracle of $H'(s)$, and oracles for the derivatives $H^{(k)}(0)$ and $H^{(k)}(1)$ up to $\widetilde{\mathcal{O}}(\log \alpha T+\log(1/\varepsilon))$-th order. 
    Then there exists a quantum algorithm for solving~\cref{eqn:ham_sim} up to time $T$ with error at most $\varepsilon$, using $\widetilde{\mathcal{O}}\left(\alpha T + \log(1/\varepsilon)\right)$ oracle queries and $\widetilde{\mathcal{O}}\left((\alpha T + \log(1/\varepsilon))^2\log(1/\varepsilon)\right)$ additional gates. 
\end{result}

In~\cref{res:main}, we make two important assumptions on the time-dependent Hamiltonian. 
First, we assume it can be written as $\widetilde{H}(t) = H(t/T)$ for another Hamiltonian defined through the rescaled time $s = t/T$, which means a slow condition since $\frac{d}{dt}\widetilde H(t)=\frac{1}{T}H'(t/T)$ is on the order of $1/T$.  
The second assumption is the analytic condition on $H(s)$. 
We quantify analyticity by assuming infinite differentiability and $\|H^{(k)}(s)\| \leq C D^k k!$ for constants $C > 0, D > 0$, which is equivalent to assuming point-wise non-trivial convergence of the Taylor series of $H(s)$ for any $s \in [0,1]$. 
Under these two assumptions, the query complexity of our algorithm depends nearly additively on the evolution scale $\alpha T$ and the precision parameter $\log(1/\varepsilon)$. 
The additional gate overhead is poly-logarithmic in $1/\varepsilon$, making the algorithm gate efficient for highly accurate simulation, although it contains a nearly quadratic time dependence.

However, we remark that the oracle assumptions in~\cref{res:main} (see~\cref{subsubsec:oracle_general_case} for detailed oracle definitions) differ from the standard HAM-T model, which provides coherent access directly to $H(s)$ or $\widetilde H(t)$~\cite{LowWiebe2019,chen2026optimal}. 
Our general construction instead requires access to $H'(s)$ and to high-order endpoint derivatives. 
Thus, the result should not be interpreted as an improvement under the same input model. 
When $H(s)$ is specified through an explicit analytic formula or a structured decomposition, the required derivative block encodings may be constructed from the same underlying primitives with comparable costs of standard HAM-T (because $H(s)$ and $H'(s)$ would have the same structure), but actual costs must be assessed on a case-by-case basis.

For an important structured class of Hamiltonians, the derivative access assumption can be removed. 
Consider a control Hamiltonian
\begin{equation}
    H(s)=\sum_{j=0}^{j_{\max}-1}\alpha_j(s)M_j,
\end{equation}
where the coefficient functions $\alpha_j(s)$ are real scalar-valued analytic and the $M_j$ are time-independent Hermitian matrices. 
In this setting, the coefficients required by the algorithm can be computed classically, while the quantum circuit only queries block encodings of the operators $M_j$. 
The result is summarized as follows. 

\begin{result}[Slow analytic control Hamiltonians, informal version of~\cref{th:query_gate_complex_cont}]\label{res:control}
    Under the setting of~\cref{res:main}, suppose additionally that $H(s) = \sum_{j=0}^{j_{\max}-1} \alpha_j(s) M_j$ for some time-independent Hamiltonians $M_j$ and real analytic scalar functions $\alpha_j(s)$. 
    Then, given controlled block encodings of the $M_j$,  the evolution can be simulated to error $\varepsilon$ using $\widetilde{\mathcal{O}}\left(\alpha T + \log(1/\varepsilon)\right)$ oracle queries and $\widetilde{\mathcal{O}}\left((\alpha T + \log(1/\varepsilon))^2\right)$ additional gates. 
\end{result}

\subsubsection{Main ideas of the algorithm}

We briefly sketch the main ideas and steps of our algorithm. 
Recall that the original physical-time Hamiltonian is
$\widetilde{H}(t)=H(t/T)$, where $s=t/T\in[0,1]$ is the rescaled time.

\paragraph{Step 1: Constructing a periodic extension.}
We extend $H(s)$ from $[0,1]$ to a $2$-periodic Hamiltonian $\hat{H}(s)$. A direct periodic continuation would generally be nonsmooth because the derivatives of $H(s)$ at $s=0$ and $s=1$ need not agree. 
We therefore define
\begin{equation}
\hat{H}(s)=
\begin{cases}
H(s), & 0\leq s\leq 1,\\
F_{\tau}(s-1)+G_{\tau}(s-2), & 1\leq s\leq 2,
\end{cases}
\qquad
\hat{H}(s+2)=\hat{H}(s),
\end{equation}
where
$F_{\tau}(x)
=
\sum_{k=0}^{\infty}
\frac{H^{(k)}(1)}{k!}x^k\chi_{\tau}(R_kx)$
and 
$G_{\tau}(x)
=
\sum_{k=0}^{\infty}
\frac{H^{(k)}(0)}{k!}x^k\chi_{\tau}(R_kx)$. 
Here, $\chi_{\tau}$ is a compactly supported cut-off function such that $\chi_{\tau}(0) = 1$, and $R_k$ is chosen to guarantee convergence with controlled derivatives. 
This construction matches all derivatives at $s=1$ and at the periodic boundary $s=0,2$. 
In particular,
$\hat{H}(t/T)=\widetilde{H}(t)$ for $t\in[0,T]$, so the extension does not change the dynamics that we wish to simulate, and we will use $\widetilde{H}(t)$ and $H(s)$ to denote the extended Hamiltonians with an abuse of notations. 
A detailed discussion on this extension is provided in~\cref{sec:Periodic_Extension}. 

\paragraph{Step 2: Fourier expansion and Floquet embedding.}
We Fourier expand the periodic extension as
\begin{equation}\label{eqn:intro_Fourier}
\widetilde{H}(t) = \sum_{m\in\mathbb{Z}} \widetilde{H}_m e^{im\omega t},
\qquad
\widetilde{H}_m
=
\frac{1}{2T}\int_0^{2T}
\widetilde{H}(t)e^{-im\pi t/T}\,dt.
\end{equation}
Floquet theory~\cite{mizuta2023optimal} converts the periodic time-dependent problem into a time-independent problem on an enlarged infinite-dimensional Hilbert space. 
Specifically, the corresponding effective infinitely-dimensional Hamiltonian is
\begin{equation}
\mathscr{H}_{\mathrm{eff}}
=
\sum_{l\in\mathbb{Z}}
|l\rangle\langle l|
\otimes
\bigl(\widetilde{H}_0-l \pi I/T\bigr)
+
\sum_{\substack{l,m\in\mathbb{Z}}}
|l\rangle\langle l+m|
\otimes
\widetilde{H}_m, 
\end{equation}
and the desired solution can be recovered through
\begin{equation}
|\psi(t)\rangle
=
\sum_{l\in\mathbb{Z}}
e^{-il\omega t}
\langle l|
e^{-i\mathscr{H}_{\mathrm{eff}}t}
|0\rangle
|\psi(0)\rangle .
\end{equation}

\paragraph{Step 3: Truncation and quantum simulation.}
We truncate the infinite effective Hamiltonian to a finite-dimensional one with $|l|\leq l_{\max}$ (see~\cref{sec:Truncation_order_Four} for a detailed analysis on the truncation order). 
We then construct a block encoding of this truncated effective Hamiltonian using linear combination of unitaries (LCU)~\cite{ChildsWiebe2012} for the Fourier coefficients. 
Notice that computing the Fourier coefficients efficiently for general $H(s)$ would require queries to $H'$ at suitable quadrature nodes and to the endpoint derivatives $H^{(k)}(0)$ and $H^{(k)}(1)$, otherwise the additive query scaling would be ruined. 
Meanwhile, for control Hamiltonian, Fourier coefficients can be efficiently computed directly from~\cref{eqn:intro_Fourier} using queries only to $M_j$'s. 
See~\cref{sec:Algorithm_implementation} for detailed constructions. 
After the truncated effective Hamiltonian is built, we use QSVT to implement the time-independent evolution $e^{-i\mathscr{H}_{\mathrm{eff}}T}$. 

At the circuit level, the algorithm first prepares a nearly translation-invariant superposition over the Floquet indices $\ket{l}$, performs the effective-Hamiltonian evolution together with the phase operation $e^{-il\omega T}$, and then unprepares the Floquet register. 
Approximate translation symmetry~\cite{mizuta2023optimal} gives the desired physical state with constant amplitude, which is further amplified using oblivious amplitude amplification~\cite{BerryChildsKothari2015}. 
See~\cref{sec:Error_analysis_5} for details.

\subsubsection{Additional results}

Our algorithm also applies to a broader Gevrey class of Hamiltonians. 
For a parameter $\sigma \geq 1$, the Gevrey class $G^{\sigma}([0,1])$ is defined as the set of Hamiltonians defined on $[0,1]$ such that $\|H^{(k)}(s)\| \leq C D^k (k!)^{\sigma}$ for constants $C>0$, $D>0$ and any $k \geq 0$. 
Notice that $G^{1}([0,1])$ corresponds to the real analytic Hamiltonian class, and $G^{\sigma}([0,1])$ with $\sigma > 1$ is a Hamiltonian class whose regularity is between infinitely differentiable and real analytic. 
The complexity of our algorithm applied to $G^{\sigma}([0,1])$ is as follows. 

\begin{result}[Slow Gevrey Hamiltonians, informal version of \cref{th:query_gate_complex_gene,th:query_gate_complex_cont}]\label{res:gevrey}
    Let $1 \leq \sigma < 2$ and let $\widetilde H(t)=H(t/T)$, where $H(s) \in G^\sigma([0,1])$ and $\|H(s)\| \leq \alpha$. 
    Under the same oracle assumptions of \cref{res:main} and~\cref{res:control}, the evolution up to time $T$ can be simulated to error $\varepsilon$ using $\widetilde{\mathcal O}\!\left(\left(\alpha T+\log(1/\varepsilon)\right)^\sigma\right)$
    oracle queries. 
    The additional gate overhead is $\widetilde{\mathcal{O}}\left((\alpha T + \log(1/\varepsilon))^{2\sigma}\log^{\sigma}(1/\varepsilon)\right)$ for the general Hamiltonian, and $\widetilde{\mathcal{O}}\left((\alpha T + \log(1/\varepsilon))^{2\sigma}\right)$ for the control Hamiltonian. 
\end{result}

\cref{res:gevrey} shows that the complexity of our approach grows as the regularity of the Hamiltonian becomes worse. 
This is because, for a Gevrey-class Hamiltonian of index $\sigma>1$, the Fourier coefficients of the periodic extension decay subexponentially rather than exponentially. 
The Floquet truncation order consequently grows as a power of the evolution time and logarithmic precision. 
Thus, the analytic case $\sigma=1$ gives the nearly additive scaling stated in \cref{res:main}, while lower regularity incurs a polynomial loss.

As a second additional result, we combine our Hamiltonian simulation algorithm with the optimal-scaling LCHS representation~\cite{LowSomma2025}. 
Consider the homogeneous linear differential equation
\begin{equation}\label{eq:intro_linear_ode}
    \frac{d}{dt}u(t)=-\widetilde{A}(t)u(t),
    \qquad u(0)=u_0,
\end{equation}
where $\widetilde{A}(t) = A(t/T)$ for an analytic matrix-valued function $A(s)$ such that $A(s)=L(s)+iH(s)$ for Hermitian matrices $L(s)$ and $H(s)$ and $L(s)\succeq 0$. 
LCHS expresses its non-unitary propagator $\mathcal{T}e^{-\int_0^t \widetilde{A}(\tau) d\tau}$ of~\cref{eq:intro_linear_ode} as a linear combination of unitary evolution operators generated by the Hamiltonian $kL(s)+H(s)$, so~\cref{eq:intro_linear_ode} can be simulated through LCU of Hamiltonian simulation problems. 
Since our approach can improve the time-dependent Hamiltonian simulation subroutine, the overall complexity of solving general linear differential equation is thereby reduced. 

\begin{result}[Slow analytic linear ODEs, informal version of \cref{th:slow_ode_complexity}]\label{res:linear_ode}
    Suppose $\widetilde{A}(t) = A(t/T)$ for an analytic matrix $A(s)$ with $\|A(s)\| \leq \alpha$. 
    Assume access to either the derivative oracles or the control-Hamiltonian oracles required by our Hamiltonian simulation algorithm. 
    Then a quantum state approximating the normalized solution of~\cref{eq:intro_linear_ode} at time $T$ to error $\varepsilon$ can be prepared using $\widetilde{\mathcal O}\!\left(
    \frac{\|u(0)\|}{\|u(T)\|}\alpha T \log\left(\frac{1}{\varepsilon}\right)
    \right)$ queries. 
\end{result}

For generic linear differential equations with time-dependent coefficients, the previous best scaling is $\widetilde{\mathcal O}\!\left(
    \frac{\|u(0)\|}{\|u(T)\|}\alpha T \left(\log\left(\frac{1}{\varepsilon}\right)\right)^2
    \right)$~\cite{LowSomma2025}. 
Therefore, \cref{res:linear_ode} quadratically improves the precision dependence when the coefficient matrices are restricted to be slow analytic. 
This also matches the time and precision dependence of optimal-scaling LCHS in~\cite{LowSomma2025} for a time-independent matrix up to polylogarithmic factors. 
We give the construction and a proof sketch in \cref{sec:linear_odes}.

\subsection{Comparison and related works}
\label{subsec:comparison_related}

Floquet theory has already been previously used to obtain efficient simulation algorithms for periodic and multiperiodic Hamiltonians. 
Specifically,~\cite{mizuta2023optimal} considers a periodic Hamiltonian with exponentially decaying Fourier components, and~\cite{mizuta2023optimalmulti} generalizes this construction to Hamiltonians driven by a constant number of frequencies and Fourier modes. 
For evolution over a constant number of periods, both algorithms obtain an additive dependence on the evolution time and precision, together with only logarithmically many additional gates per oracle query. 
A more general Sambe–Howland clock framework has also been well explored~\cite{Cao2025Unifying} to establish closely relation between time-dependent and time-independent Hamiltonian simulations, and construct efficient time-dependent simulation algorithms. 

Our algorithm follows the same broad Floquet strategy and translation-symmetry amplitude amplification technique proposed in~\cite{mizuta2023optimal}, but treats a different class of inputs. 
Unlike existing works, we do not assume that the Hamiltonian $\widetilde H(t)$ is periodic or multiperiodic, or that its Fourier components are supplied as input. 
Instead, starting from a slow Hamiltonian $\widetilde H(t)=H(t/T)$, we construct a periodic Gevrey extension $\hat H(s)$ of the rescaled Hamiltonian $H(s)$ and explicitly construct the Fourier components. 
Furthermore, the extended Hamiltonian $\hat H(s)$ is no longer real analytic in general due to the identity theorem for real analytic functions, so the Fourier components do not decay exponentially and the existing complexity analysis in~\cite{mizuta2023optimal,mizuta2023optimalmulti} does not apply. 
Therefore, we derive the decay of the Fourier coefficients and the corresponding Lie-Robinson bound for Gevrey Hamiltonians, making it possible to truncate the Floquet space and construct its effective Hamiltonian for this broader class of Hamiltonians. 

A closely related periodic extension idea appears in the product and multi-product formula analysis in~\cite{Mizuta2024Explicit}, where a Hamiltonian defined on a finite time interval is extended to a sufficiently smooth periodic Hamiltonian, and the corresponding Floquet representation is used to derive explicit error bounds with commutator scaling from its time-independent counterpart in~\cite{ChildsSuTranEtAl2021}. 
The extension there mainly serves as an analytical tool, and is only differentiable up to a finite order (i.e., the convergence order of the product formula being analyzed). 
In contrast, our extension technique is an explicit ingredient of the simulation algorithm, and the extended Hamiltonian is infinitely differentiable, which is the key to achieving poly-logarithmic scaling in precision. 

During the final stages of preparing the present paper, we became aware of a recent concurrent work~\cite{chen2026optimal}, which, as discussed earlier, obtained the optimal query complexity $\mathcal O\left(
        \alpha T+
        \log(1/\varepsilon)
    \right)$
for general Lipschitz-continuous Hamiltonians in the HAM-T model, without periodicity or slow-variation assumptions. 
Their Hamiltonian assumption is therefore substantially weaker, and their query bound is tighter than ours by polylogarithmic factors. 
Their direct circuit implementation, however, uses a number of additional gates that is polynomial in $1/\varepsilon$. 
Therefore, our result is complementary that under the stronger slow analytic assumption and derivative-access model, it retains
$\mathcal O(\operatorname{polylog}(1/\varepsilon))$
additional gates. 
In particular, our main advantage lies in simultaneously obtaining nearly additive query complexity and polylogarithmic gate overhead in the precision.

Table~\ref{tab:related_work_comparison} summarizes these comparisons as well as a few other existing algorithms. 
Since the cited algorithms use different oracles and are applicable to different class of Hamiltonians, we state those assumptions explicitly. 

\begin{table}[t]
\centering
\scriptsize
\setlength{\tabcolsep}{2pt}
\renewcommand{\arraystretch}{1.25}

\begin{tabularx}{\textwidth}{
@{}
>{\raggedright\arraybackslash}p{0.12\textwidth}
>{\raggedright\arraybackslash}p{0.16\textwidth}
>{\raggedright\arraybackslash}p{0.24\textwidth}
>{\raggedright\arraybackslash}p{0.22\textwidth}
>{\raggedright\arraybackslash}X
@{}}
\toprule
Algorithms
& Hamiltonian assumption
& Oracle assumption
& Query complexity
& Gate overhead \\
\midrule

Truncated Dyson series~\cite{LowWiebe2019,kieferova2019simulating}
&
Differentiable
&
HAM-T or sparse-access oracles of $\widetilde{H}(t)$
&
$\displaystyle
 \widetilde{\mathcal O}\!\left(
 \alpha T \log\left(\frac{1}{\varepsilon}\right)
 \right)$
&
$\displaystyle
 \widetilde{\mathcal O}\!\left(
 \alpha T\,\operatorname{polylog}\left(\frac{1}{\varepsilon}\right)
 \right)$
\\[2pt]

Floquet theory~\cite{mizuta2023optimal,mizuta2023optimalmulti}
&
Periodic or multiperiodic, exponentially decaying Fourier modes
&
Block encodings of the Fourier components or controlled Hamiltonians, control-coefficient-state preparation (for control Hamiltonian)
&
$\displaystyle
 \mathcal O \!\left(
 \alpha T+\log\left(\frac{1}{\varepsilon}\right)
 \right)$
&
$\displaystyle
 \widetilde{\mathcal O}\!\left(
 \alpha T+\log\left(\frac{1}{\varepsilon}\right)
 \right)$
\\[2pt]

Discrete clock~\cite{watkins2024time}
&
Differentiable, given as an LCU
&
Coefficient-state preparation oracle, block-encodings of the controlled Hamiltonians
&
$\displaystyle
 \widetilde{\mathcal O}\!\left(
 \alpha T + \log\left(\frac{1}{\varepsilon}\right) + \frac{\beta T^2}{\varepsilon}
 \right)$
&
unspecified
\\[2pt]

Discrete clock with Gaussian quadrature~\cite{Li2025TimeDependent}
&
Differentiable sparse
&
Sparse-access oracle of $\widetilde{H}(t)$
&
$\displaystyle
 \widetilde{\mathcal O}\!\left(
 \alpha T \log \left(\frac{1}{\varepsilon}\right)
 \right)$
&
$\displaystyle
 \widetilde{\mathcal O}\!\left(
 \alpha T\,\operatorname{polylog}\left(\frac{1}{\varepsilon}\right)
 \right)$
\\[2pt]

Transducer~\cite{chen2026optimal}
&
Lipschitz-continuous
&
HAM-T of $\widetilde{H}(t)$
&
$\displaystyle
 \mathcal O\!\left(
 \alpha T+
 \log\left(\frac{1}{\varepsilon}\right)
 \right)$
&
$\displaystyle
 \widetilde{\mathcal O}\!\left(
 \frac{\alpha T^{\frac{5}{2}}}{\sqrt{\varepsilon}}
 \max\left\{\alpha^{\frac{3}{2}}, \frac{\beta \sqrt{T}}{\sqrt{\varepsilon}}\right\}
 \right)$
\\

\midrule

This work: general Hamiltonian
&
Slow analytic
&
HAM-T of $\widetilde{H}'(t)$, block-encodings of the endpoint derivatives of $\widetilde{H}(t)$
&
$\displaystyle
 \widetilde{\mathcal O}\!\left(
 \alpha T+\log\left(\frac{1}{\varepsilon}\right)
 \right)$
&
$\widetilde{\mathcal{O}}\left(\left(\alpha T + \log\left(\frac{1}{\varepsilon}\right)\right)^2\log\left(\frac{1}{\varepsilon}\right)\right)$
\\[2pt]

This work: control Hamiltonian
&
Slow analytic control Hamiltonian
$\widetilde{H}(t)=\sum_j \widetilde{a}_j(t) M_j$
&
Block-encodings of the controlled Hamiltonians $M_j$
&
$\displaystyle
 \widetilde{\mathcal O}\!\left(
 \alpha T+\log\left(\frac{1}{\varepsilon}\right)
 \right)$
&
$\widetilde{\mathcal{O}}\left(\left(\alpha T + \log\left(\frac{1}{\varepsilon}\right)\right)^2\right)$
\\

\bottomrule
\end{tabularx}

\caption{Comparison of representative time-dependent Hamiltonian simulation algorithms. 
Gate overhead counts additional one- and two-qubit gates and excludes the implementation cost of the stated input oracles. The two rows for this work display the analytic case.
The parameter $T$ is the evolution time, $\varepsilon$ is the error, $\alpha$ is the block-encoding factor for block-encoding oracles and $d \|\widetilde{H}\|_{\max}$ for sparse oracle with sparsity $d$, and $\beta$ is the Lipschitz constant. 
All other parameters are treated as constants. 
}
\label{tab:related_work_comparison}
\end{table}

\subsection{Discussion}\label{subsec:discussion}

In this work, we show that slow analytic time-dependent Hamiltonians can be simulated with nearly additive query $\widetilde{\mathcal O}(\alpha T+\log(1/\varepsilon))$ and ${\mathcal O}(\operatorname{poly}(T,\log(1/\varepsilon)))$ additional gate overhead. 
The main ingredient is to transform the nonperiodic Hamiltonian $H(s)$ into a periodic Gevrey extension $\hat{H}(s)$, after which Floquet theory and time-independent Hamiltonian simulation techniques become applicable. 

An immediate open question is whether these assumptions can be relaxed while achieving the same or better gate efficiency. 
In particular, it would be desirable to replace the derivative oracles by standard HAM-T access alone, or to construct the required periodic extension directly from coherent evaluations of $H(s)$. 
Another question is whether the slow condition $\widetilde{H}(t)=H(t/T)$ can be replaced by a more flexible measure of temporal variation, such as an integral norm involving $\widetilde{H}'(t)$. 
It is also interesting to explore whether we can further improve the gate overhead to $\widetilde{\mathcal O}(\alpha T\log(1/\varepsilon))$, matching the truncated Dyson series method, or even better. 
More fundamentally, it remains open whether general Lipschitz-continuous Hamiltonians admit both optimal query complexity and additional gate complexity that is only polylogarithmic in $1/\varepsilon$.

A second direction is to exploit additional physical structure. 
The present bounds are formulated mainly in a block-encoding model and therefore do not capture possible improvements arising from locality, sparsity, or small commutators, as occurs in product formula methods~\cite{Mizuta2024Explicit}. 
Combining such structure-sensitive techniques with the periodic extension and Floquet framework may improve the system-size dependence and reduce the overhead associated with the number of control terms.

\subsection{Organization}

The rest of this work is organized as follows. 
In \cref{sec:preliminaries}, we review Gevrey functions, Floquet theory, and the simulation of analytic periodic Hamiltonians. 
In \cref{sec:Periodic_Extension}, we construct the periodic Gevrey extension used by our algorithm. 
In \cref{sec:Truncation_order_Four}, we establish the Fourier decay and Floquet space truncation bounds, and in \cref{sec:Error_analysis_5} we discuss the amplification and effective Hamiltonian errors. 
In \cref{sec:Algorithm_implementation}, we implement the algorithm and derive its query and gate complexities in the general and control-Hamiltonian oracle models. 
Finally, \cref{sec:linear_odes} applies these results to slow linear differential equations through LCHS.

\section*{Acknowledgments}

Part of this work was completed while CHZ visited Peking University. 
DA acknowledges funding from Quantum Science and Technology - National Science and Technology Major Project via Project 2024ZD0301900, and the support by The Fundamental Research Funds for the Central Universities, Peking University.

\section{Preliminaries}\label{sec:preliminaries}
\subsection{Symbol conventions}
In this work, we will use $\|\cdot\|$ to denote the vector 2-norm for vectors, and the spectral norm for matrices. 
We will use $H(s)$ to denote the original time-dependent Hamiltonian on $s\in[0,1]$, and periodically extend it to be $\hat{H}(s)$ defined on $[0,2]$. We define $\widetilde{H}(t):=\hat{H}\left(\frac{t}{T}\right)\;\text{for}\,t\in[0,2T]$, where $T>0$ fixed is the adiabatic evolution time. We will use $\widetilde{H}_m$ to denote the Fourier coefficients of the periodic Hamiltonian $\widetilde{H}(t)$.
\subsection{Gevrey class functions}
Let $\sigma \geq 1$. A smooth function $f \in C^{\infty}(\mathbb{R})$ belongs to the Gevrey class of index $\sigma$, denoted by $G^{\sigma}(\mathbb{R})$, if there exists constants $C > 0$, $D > 0$ such that
\begin{equation}\label{eq:def_of_Gevrey_func}
|f^{(n)}(t)| \le C D^n (n!)^{\sigma}, \quad \text{for any}\;t \in \mathbb{R},\;n\geq 0\;.
\end{equation}
When $\sigma=1$ it is also called an analytic function. When $K$ is a closed interval in $\mathbb R$, the space $G^{\sigma}(K)$ is defined by replacing $\mathbb R$ with $K$ in \cref{eq:def_of_Gevrey_func}.\\ In this work, we will use $\sigma$ to denote the Gevrey index of the original Hamiltonian $H(t)$ before periodic extension, and $\tau$ to denote the one for the cut-off function $\chi_{\tau}(t)$. We define $\varrho:=\sigma+\tau-1$ to be the Gevrey index of extended $H(t)$ and $\widetilde{H}(t)$. 
\subsection{Floquet theory}\label{subsec:Floquet_theory}
For the Schrödinger equation under a periodic Hamiltonian in a Hilbert space $\mathscr{H}$,
\begin{equation}\label{eq:Hamitonain_problems}
    i \frac{d}{dt} |\psi(t)\rangle = \widetilde{H}(t) |\psi(t)\rangle, \quad \widetilde{H}(t + 2T) = \widetilde{H}(t),
\end{equation}
and its solution
\begin{equation}\label{eq:origin_problems}
    |\psi(t)\rangle = U(t) |\psi(0)\rangle, \quad U(t) = \mathcal{T} \exp\left(-i \int_{0}^{t} dt' \, \widetilde{H}(t') \right).
\end{equation}
Floquet theorem says that the solution $|\psi(t)\rangle$ can be written in the form of
\begin{equation}
    |\psi(t)\rangle = \sum_{\alpha=1}^{\dim(\mathscr{H})}c_\alpha e^{-i\varepsilon_\alpha t}\,|\phi_\alpha(t) \rangle\,,\quad |\phi_\alpha(t+2T)\rangle = |\phi_\alpha(t)\rangle\,.
\end{equation}
Applying Fourier series expansion to $\widetilde{H}(t)$ and $|\phi_\alpha(t)\rangle$
\begin{equation}
    \widetilde{H}(t) = \sum_{m\in\mathbb{Z}} \widetilde{H}_m e^{im\omega t},\quad |\phi_\alpha(t)\rangle = \sum_{m\in\mathbb{Z}} |\phi_\alpha^m\rangle e^{im\omega t},
\end{equation}
with the frequency $\omega := 2\pi/2T=\pi/T$, the solution of \cref{eq:Hamitonain_problems} can be rewritten as \cite{shirley1965solution,verdeny2016quasi}:
\begin{equation}\label{eq:standard_solution}
    |\psi(t)\rangle=\sum_{l\in\mathbb{Z}}e^{-il\omega t}\langle l| e^{-i\mathscr{H}_{\text{eff}}t}|0\rangle|\psi(0)\rangle. 
\end{equation}
with the effective Hamiltonian defined by
\begin{equation}\label{eq:def_of_effective_ham}
    \mathscr{H}_{\text{eff}}=\sum_{l\in\mathbb{Z}}\left|l\right\rangle\left\langle l \right|\otimes\left(\widetilde{H}_{0}-l\omega\right)+\sum_{l,m\in\mathbb{Z}}\left|l\right \rangle\left\langle l+m\right|\otimes \widetilde{H}_{m}.
\end{equation}
Note that this reduces the time-dependent Hamiltonian simulation problem on the finite-dimensional Hilbert space~$\mathscr{H}$ described by \cref{eq:Hamitonain_problems,eq:origin_problems} to a time-independent problem on the infinite-dimensional Floquet-Hilbert space $\{|l\rangle\}_{l\in \mathbb{Z}}\otimes\mathscr{H}$ given in \cref{eq:standard_solution,eq:def_of_effective_ham}.
\subsection{Optimal Hamiltonian simulation for analytic periodic systems}
\cite{mizuta2023optimal} presented an optimal-complexity algorithm for Hamiltonian simulation that exploits \cref{eq:standard_solution,eq:def_of_effective_ham} in \cref{subsec:Floquet_theory}, under the condition that the Hamiltonian takes the form
\begin{equation}\label{eq:type_of_Hami_control_analytic}
H(t) = \sum_{j=0}^{j_{\max}-1} \alpha_j(t) M_j,\quad \alpha_j(t+T)=\alpha_j(t),\;\;\text{for any}\;0\leq j\leq j_{\max}-1,
\end{equation}
and each $\alpha_j(t)$ is analytic, which means that the Fourier coefficients of $H(t)$ is exponentially decaying:
\begin{equation}\label{eq:exponential_decay_of_FC}
\exists\; h,\zeta >0,\;\text{s.t.}\quad||H_m|| \leq he^{-\frac{|m|}{\zeta}}\quad\forall  m \in \mathbb{Z}.
\end{equation}
In addition, the optimality of the complexity also requires the scenario to be simulated is adiabatic-like. i.e. $t/T\in\mathcal{O}(1)$, where $t$ is the evolution time to be simulated.\\
To achieve this, \cite{mizuta2023optimal} first derived the Lieb-Robinson bound \cite{colmenarez2020lieb,gong2022bounds,harper2020topology} on the transition rate
\begin{equation}\label{eq:Lieb_Robinson_bound_for_analytic}
\left\| \langle l| e^{-i\mathscr{H}_{\text{ eff}}^{l_\text{max}}t} |l'\rangle \right\| \leq \exp\left(-\frac{|l-l'|-2\beta\zeta' ht}{\zeta'}+2/\beta\right), \quad\text{for}\;|l|,|l'| \leq l_{\max}.
\end{equation}
Where $\beta$ and $\zeta'$ are positive constants, $l_{\max}\geq1$ is the truncation order for the Floquet-Hilbert space to be determined, $D^{l_{\max}} = \{-l_{\max} + 1, -l_{\max} + 2, \ldots, l_{\max}\}$ and
\begin{equation}
\mathscr{H}_{\text{ eff}}^{l_{\max}} = \sum_{l \in D^{l_{\max}}} |l\rangle \langle l| \otimes (H_0 - l\omega) + \sum_{l \in D^{l_{\max}}} \sum_{\substack{m \neq 0; \\ l+m \in D^{l_{\max}}}} |l\rangle \langle l+m| \otimes H_m.
\end{equation}
Then use \cref{eq:Lieb_Robinson_bound_for_analytic} to get
\begin{equation}\label{eq:Fouri_truncate_estima_analytic}
\left\| |\psi(t)\rangle - |\psi^{l_{\max}}(t)\rangle \right\| \leq 4\zeta' e^{2\beta h t - (l_{\max}-1)/\zeta'+2/\beta} ,
\end{equation}
where $|\psi(t)\rangle$ is the exact solution as in \cref{eq:standard_solution}, and
\begin{equation}\label{eq:def_of_psi_lmax}
|\psi^{l_{\max}}(t)\rangle = \sum_{l \in D^{l_{\max}}} e^{-i l \omega t} \langle l| e^{-i \mathscr{H}_{\text{ eff}}^{l_{\max}} t} |0\rangle |\psi(0)\rangle,
\end{equation}
Let the right side of \cref{eq:Fouri_truncate_estima_analytic} smaller than $\varepsilon$ to get the truncation order
\begin{equation}
    l_{\text{max}}= \left\lceil 2\beta\zeta' h t + \zeta' \log(1/\varepsilon) + \zeta'  \log(4\zeta')+\frac{2\zeta'}{\beta}+1 \right\rceil \in \Theta\left(  t + \log(1/\varepsilon) \right). 
\end{equation}
The most straightforward way to implement \cref{eq:def_of_psi_lmax} is using
\begin{equation}
\frac{|\psi^{l_{\max}}(t)\rangle}{{\left\||\psi^{l_{\max}}(t)\rangle\right\|}}=\frac{\langle a^{l_{\max}}|\Psi^{l_{\max}}(t)\rangle}{\left\|\langle a^{l_{\max}}|\Psi^{l_{\max}}(t)\rangle\right\|},
\end{equation}
where
\begin{equation}
|a^{l_{\max}}\rangle =\frac{1}{\sqrt{2l_{\max}}}\sum_{l\in D^{l_{\max}}}|l\rangle,\quad\;
|\Psi^{l_{\max}}(t)\rangle=e^{-i\omega t\sum_l|l\rangle\langle l|} e^{-i \mathscr{H}_{\text{ eff}}^{l_{\max}} t} |0\rangle |\psi(0)\rangle.
\end{equation}
But $\left\|\langle a^{l_{\max}}|\Psi^{l_{\max}}(t)\rangle\right\|^2\approx\langle \psi(t)|\psi(t)\rangle/(2l_{\max})=1/(2l_{\max})$. The low success probability would destroy the algorithm’s optimal complexity. To overcome this difficulty, \cite{mizuta2023optimal} used the approximate translation symmetry in a slightly larger truncated Floquet-Hilbert space
\begin{equation}
\left\| \langle l | e^{-i \mathscr{H}_{\text{eff}}^{4l_{\max}}t} | l' \rangle - e^{i l' \omega t} \langle l \ominus l' | e^{-i \mathscr{H}_{\text{eff}}^{4l_{\max}}t} | 0 \rangle \right\| \leq 2e^{2\beta h t-(8l_{\max}-|l|-|l'|)/\zeta'+2/\beta },
\end{equation}
to get
\begin{equation}
\left\| \langle 0| \mathscr{U}_{\text{amp1}}^{l_{\max}}(t)|0\rangle |\psi(0)\rangle - \frac{1}{2} |\psi(t)\rangle \right\|=\mathcal{O}(\varepsilon),
\end{equation}
where
\begin{align}\label{eq:def_of_U_ampl1_analytic}
\mathscr{U}_{\text{amp1}}^{l_{\max}}(t) 
&= (\mathscr{U}_{\text{ini}}^{4l_{\max}})^{\dagger}  e^{-it \sum_{l\in D^{4l_{\max}}}l\omega| l \rangle\langle l |}  e^{-i \mathscr{H}_{\text{eff}}^{4l_{\max}} t}\mathscr{U}_{\text{ini}}^{l_{\max}},
\end{align}
\begin{equation}
\mathscr{U}_{\text{ini}}^{4l_{\max}} = \left(|a^{4l_{\max}}\rangle \langle 0| + \ldots\right) \otimes I.
\end{equation}
Then the $\mathcal{O}(\varepsilon)$ approximation of solution $|\psi(t)\rangle$ is followed by oblivious
amplitude amplification (OAA) \cite{BerryChildsCleveEtAl2014,manenti2023quantum}:
\begin{equation}
\left\| \mathscr{U}_{\text{amp2}}^{l_{\max}}(t) |0\rangle |\psi(0)\rangle - |0\rangle |\psi(t)\rangle \right\| =\mathcal{O}(\varepsilon),
\end{equation}
where
\begin{align}
\mathscr{R} &= (2\,|0\rangle\langle 0|-I)\otimes I, \\
\mathscr{U}_{\text{ amp2}}^{l_{\max}}(t) &= -\mathscr{U}_{\text{ amp1}}^{l_{\max}}(t)\mathscr{R}[\mathscr{U}_{\text{ amp1}}^{l_{\max}}(t)]^{\dagger}\mathscr{R}\mathscr{U}_{\text{ amp1}}^{l_{\max}}(t).\label{eq:OAA_analytic}
\end{align}
The last thing to do is implementing $e^{-it \sum_{l\in D^{4l_{\max}}}l\omega| l \rangle\langle l |}$ and $e^{-i \mathscr{H}_{\text{eff}}^{4l_{\max}}t}$ in \cref{eq:def_of_U_ampl1_analytic}.
To implement $\mathscr{H}_{\text{LP}}^{4l_{\max}}:=\sum_{l\in D^{4l_{\max}}}l\omega|l \rangle\langle l|\,\otimes I,$
\cite{mizuta2023optimal} used the relation
\begin{equation}\label{eq:BE_of_H_LP_analytic}
\langle a^{4l_{\max}} | \langle0|_{\text{single}}(\text{Comp}^\dagger Z_{\text{single}}  \text{Comp}\,\otimes I)\, | a^{4l_{\max}} \rangle|0\rangle_{\text{single}} = \frac{\mathscr{H}_{\text{LP}}^{4l_{\max}}}{4l_{\max}\omega},
\end{equation}
where Comp is defined as
\begin{equation}
\text{Comp}\, |m,l\rangle |0\rangle_{\text{single}}  := 
\begin{cases}
|m,l\rangle |0\rangle_{\text{single}}  & \text{if } l \ge m, \\[4pt]
|m,l\rangle |1\rangle_{\text{single}}  & \text{if } l < m,
\end{cases}  
\end{equation}
and  $Z_{\text{single}}  $ denotes a Pauli $Z$ operator on the single qubit system.
To implement $\mathscr{H}_{\text{ eff}}^{4l_{\max}}$, \cite{mizuta2023optimal} introduced the refined effective Hamiltonian $\mathscr{H}_{\text{eff, pbc}}^{4l_{\max}}$ given by 
\begin{equation}\label{eq:refined_Ham_analytic_case}
\mathscr{H}_{\text{ eff, pbc}}^{4l_{\max}} = \mathscr{H}_{\text{ eff}}^{4l_{\max}} + \sum_{(l,m) \in \partial \tilde{F}^{4l_{\max}}} |l\rangle \langle l \oplus m| \otimes H_{m} + \text{h.c.}=\sum_{m \in D^{4l_{\max}}} \text{Add}_{m}^{4l_{\max}} \otimes H_{-m}- \mathscr{H}_{\text{LP}}^{4l_{\max}}
\end{equation}
with $\partial \tilde{F}^{4l_{\max}} = \{(l,m)\mid l \in D^{4l_{\max}},\; 8l_{\max} - l + 1 \leq m \leq 8l_{\max} - 1\}$ and  $\text{Add}_{m}^{4l_{\max}}=\sum_{l\in D^{4l_{\max}}}|l\oplus m \rangle\langle l|$, which also provides the exact time‑evolved state $|\psi(t)\rangle$ as
\begin{equation}
\left\| \langle 0 | \mathscr{U}_{\text{ amp1, pbc}}^{l_{\max}}(t) | 0 \rangle | \psi(0) \rangle - \frac{1}{2} |\psi(t)\rangle \right\| =\mathcal{O}(\varepsilon),\quad
\left\| \mathscr{U}_{\text{ amp2, pbc}}^{l_{\max}}(t) | 0 \rangle | \psi(0) \rangle - |0\rangle |\psi(t)\rangle \right\|  =\mathcal{O}(\varepsilon),
\end{equation}
where $\mathscr{U}_{\text{amp1,pbc}}^{l_{\max}}(t)$ and $\mathscr{U}_{\text{amp2,pbc}}^{l_{\max}}(t)$ are obtained by replacing $\mathscr{H}_{\text{eff}}^{4l_{\max}}$ in \cref{eq:def_of_U_ampl1_analytic,eq:OAA_analytic} with $\mathscr{H}_{\text{eff,pbc}}^{4l_{\max}}$. For the form of Hamiltonian in \cref{eq:type_of_Hami_control_analytic}, we know
\begin{equation}\label{eq:analytic_H_m_formula}
H_m=\sum_{j=0}^{j_{\max}-1}\alpha_j^mM_j,\quad\alpha_j^m=\frac{1}{T}\int_0^Tdt\alpha_j(t)e^{-im\omega t}.
\end{equation}
Substitute \cref{eq:analytic_H_m_formula} into the leftmost expression of \cref{eq:refined_Ham_analytic_case} and use linear combination of block-encoded matrices (LCU) \cite{GilyenSuLowEtAl2019,Chakraborty2024}, we obtain an approximated block encoding of $\mathscr{H}_{\text{ eff, pbc}}^{4l_{\max}}$ with the scaling factor being $\mathcal{O}(l_{\max}\omega)$ (the leading term $l_{\max}\omega$ arises from the block encoding scaling factor of $\mathscr{H}_{\text{LP}}^{4l_{\max}}$ in \cref{eq:BE_of_H_LP_analytic}). Then using quantum singular value transformation (QSVT) \cite{GilyenSuLowEtAl2019} we can implement $e^{-it \sum_{l\in D^{4l_{\max}}}l\omega| l \rangle\langle l |}$ and $e^{-i \mathscr{H}_{\text{eff}}^{4l_{\max}}t}$ efficiently. It follows immediately that the asymptotic query complexity of $M_j$'s for implementing $e^{-i \mathscr{H}_{\text{eff}}^{4l_{\max}}t}$ is 
\begin{equation}\label{eq:over_all_query_analytic}
\mathcal{O}\left((l_{\max}\omega) t+\log(1/\varepsilon)\right)=\mathcal{O}\left(l_{\max}+\log(1/\varepsilon)\right)=\mathcal{O}\left(t+\log(1/\varepsilon)\right)
\end{equation}
under the adiabatic-like case, i.e., $t/T\in\mathcal{O}(1)$. Since $\mathscr{U}_{\text{amp2,pbc}}^{l_{\max}}(t)$ contains $\mathcal{O}(1)$ instances of $e^{-i\mathscr{H}_{\text{eff}}^{4l_{\max}}t}$, \cref{eq:over_all_query_analytic} also gives the overall asymptotic query complexity of the algorithm.

\section{Periodic extension of Gevrey functions}\label{sec:Periodic_Extension}
In this section, we develop the key technical tool underlying our algorithm: a periodic extension of a nonperiodic Gevrey-class Hamiltonian defined on a finite interval. As discussed in the introduction, Floquet theory applies to time-periodic Hamiltonians and allows us to reduce the time-dependent simulation problem to a time-independent one on the Floquet-Hilbert space. However, the original Hamiltonian $H(s)$ in \cref{eqn:ham_sim} is defined only on $[0,1]$ and is not periodic in general. A naive periodic continuation would introduce discontinuities in the derivatives at the boundaries, which destroys the rapid decay of Fourier coefficients and renders the truncation of the Floquet space inefficient.

To overcome this obstacle, we construct a smooth periodic extension of $H(s)$ to $\hat{H}(s)$ on $[0,2]$ such that $\hat{H}^{(n)}(0)=\hat{H}^{(n)}(2)$ for all $n\ge 0$, while preserving the Gevrey regularity up to a controlled loss. This extension is an explicit ingredient of the quantum algorithm, as the extended Hamiltonian serves as the input to the Floquet-based simulation protocol. The main result of this section is \cref{th:Gevrey_ext}, which establishes the existence of such an extension. Its proof relies on a Gevrey-class cut-off function and a Gevrey version of the Borel lemma, which we develop first.

We begin with a purely combinatorial estimate that will be used in the subsequent constructions.

\begin{lemma}\label[lemma]{lem:esti_of_zuheshu}
Let $l,k, n$ be integers satisfying $1\leq k \leq n$ and $0\leq l \leq n - k$. Then
\begin{equation}
l! \, k^{\,n-k-l} \leq e^{\,n} \, (n-k)!.
\end{equation}
\end{lemma}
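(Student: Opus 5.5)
Set $m := n-k \ge 0$, so that $0 \le l \le m$ and $1 \le k$. The claim becomes
\[
l!\,k^{m-l} \le e^{m+k}\, m!.
\]
The plan is to split off the factor $l!$ and control the remaining power of $k$ using the extra budget $e^{k}$ together with the ratio $m!/l!$.

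\textbf{Step 1: reduce to a single inequality.} Since $l \le m$, we have $m!/l! = (l+1)(l+2)\cdots m$, a product of $m-l$ factors each at least $l+1$. It therefore suffices to show
\[
k^{j} \le e^{m+k}\,\frac{m!}{l!}, \qquad j := m-l .
\]
A sufficient condition is $k^{j} \le e^{k}\, j!$, because
\[
\frac{m!}{l!} = \prod_{i=1}^{j} (l+i) \ge \prod_{i=1}^{j} i = j!,
\]
and $e^{m} \ge 1$.

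\textbf{Step 2: the key elementary bound.} We need $k^{j}/j! \le e^{k}$. This follows from the exponential series:
\[
e^{k} = \sum_{r \ge 0} \frac{k^{r}}{r!} \ge \frac{k^{j}}{j!}.
\]
This holds for every $j \ge 0$ and every $k \ge 1$, including $j = 0$, where it reads $1 \le e^{k}$.

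\textbf{Step 3: assemble.} Combining the two steps,
\[
l!\,k^{m-l} \le l!\, e^{k}\,(m-l)! \le e^{k}\, m! \le e^{n}(n-k)!,
\]
where the middle inequality uses $l!\,(m-l)! \le m!$, which is just $\binom{m}{l} \ge 1$. This actually proves the slightly stronger bound with $e^{k}$ in place of $e^{n}$.

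I do not expect a real obstacle. The only step requiring thought is recognizing the binomial split $l!\,(m-l)! \le m!$ together with the Taylor-series bound $k^{j} \le e^{k} j!$; the rest is bookkeeping. I would double-check the boundary cases $l = m$ and $k = n$, both of which give $j = 0$ and are trivially covered.
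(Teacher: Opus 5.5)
Your proof is correct. It reaches the same strengthened bound $l!\,k^{\,n-k-l}\le e^{k}(n-k)!$ as the paper, but by a different route.

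With $m=n-k$, the paper studies $\varphi(l)=l!\,k^{m-l}/m!$ as a function of $l$. It observes that $\varphi(l)/\varphi(l-1)=l/k$, so $\varphi$ decreases for $l<k$ and increases for $l>k$. Hence the maximum over $0\le l\le m$ sits at an endpoint, giving $\varphi(l)\le\max\{k^m/m!,\,1\}\le e^{k}$. The exponential series is used only at the endpoint $l=0$.

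You instead use the exact factorization $\varphi(l)=\binom{m}{l}^{-1}\,k^{m-l}/(m-l)!$ and bound the two factors separately, via $\binom{m}{l}\ge 1$ and $k^{j}/j!\le e^{k}$ with $j=m-l$. You therefore apply the exponential series for every $l$ rather than only at an endpoint.

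Your version is shorter and needs no monotonicity or case analysis. The paper's version identifies the exact maximum $\max\{k^m/m!,1\}$ over $l$, which is sharper; for instance, it equals $1$ whenever $k^m\le m!$. That extra precision is discarded immediately in favor of $e^{k}$, and only the weaker $e^{n}$ form is used later in the proof of the cut-off lemma, so nothing downstream depends on which route is taken.
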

\begin{proof}
Set $m = n - k$. Then $m \ge 0$ and $l \le m$. The inequality becomes
\begin{equation}
l! \, k^{\,m-l} \le e^{\,m+k} \, m!.
\end{equation}
For fixed $m, k$, define 
\begin{equation}
\varphi(l) = \frac{l! \, k^{\,m-l}}{m!}, \qquad 0 \le l \le m.
\end{equation}
Consider the ratio
\begin{equation}
\frac{\varphi(l)}{\varphi(l-1)} = \frac{l}{k}, \qquad 1 \le l \le m.
\end{equation}
Hence if $l < k$, then $\varphi(l) < \varphi(l-1)$, so $\varphi$ is decreasing. If $l > k$, then $\varphi(l) > \varphi(l-1)$, so $\varphi$ is increasing. Thus the maximum of $\varphi(l)$ on $0 \le l \le m$ is attained at an endpoint:
\begin{equation}
\max_{0 \le l \le m} \varphi(l) = \max\left\{ \frac{k^{\,m}}{m!},\; 1 \right\}\leq e^k\leq e^{m+k}.
\end{equation}
\end{proof}

A central ingredient in our extension procedure is a smooth cut-off function that transitions from $0$ to $1$ in a Gevrey-regular manner. Unlike analytic functions, Gevrey functions can have compact support while still exhibiting controlled growth of their derivatives. The following lemma, adapted from standard Gevrey theory, provides such a function with explicit bounds on its derivatives. These bounds are crucial for controlling the norms of the extended Hamiltonian and its derivatives in later complexity analyses.

\begin{lemma}[Gevrey-$\tau$ cut-off function]\label[lemma]{lem:cut_off function}
For any $1<\tau<2$ be fixed, there exists a function $\chi(t)\in G^{\tau}(\mathbb{R})$ satisfying 
\begin{equation}
|\chi^{(n)}(t)|\leq C_{\tau,1}8^nC_{\tau,2}^{n-1}(n!)^{\tau} \quad\text{for any }\; t\in\mathbb{R},\;n\ge 1,
\end{equation}
where
\begin{equation}
C_{\tau,1}=\frac{e^2}{2}, \qquad C_{\tau,2}=\frac{4e^{\tau+1}}{\tau-1},
\end{equation}
and in addition
\begin{equation}
0 \leq \chi \leq 1, \quad \chi \equiv 1 \quad \text{on} \quad \left[ -\frac{1}{2}, \frac{1}{2} \right], \quad \text{and} \quad \text{supp}(\chi) \subset [-1, 1].
\end{equation}
\end{lemma}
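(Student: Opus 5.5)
We construct $\chi$ by the standard route of mollifying an indicator function with a Gevrey bump. First fix the exponent $a=\frac{1}{\tau-1}>1$ and define the one-sided flat function
\begin{equation*}
\psi(t)=\begin{cases} e^{-t^{-a}}, & t>0,\\ 0,& t\le 0,\end{cases}
\end{equation*}
which lies in $G^{1+1/a}=G^{\tau}$. Next build the bump $\phi(t)=\psi(1/4-t^2)$, or more simply $\phi(t)=\psi(t+\tfrac14)\psi(\tfrac14-t)$. It is supported in $[-\tfrac14,\tfrac14]$, and we normalize it as $\rho=\phi/\int\phi$. Finally set
\begin{equation*}
\chi=\mathbf 1_{[-3/4,3/4]}*\rho .
\end{equation*}
Then $0\le\chi\le1$, $\chi\equiv1$ on $[-\tfrac12,\tfrac12]$ and $\operatorname{supp}\chi\subset[-1,1]$ are immediate from $\rho\ge0$, $\int\rho=1$ and $\operatorname{supp}\rho\subset[-\tfrac14,\tfrac14]$. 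For $n\ge1$ we have $\chi^{(n)}=\rho^{(n-1)}(\cdot+\tfrac34)-\rho^{(n-1)}(\cdot-\tfrac34)$. The supports of the two terms are disjoint, so
\begin{equation*}
|\chi^{(n)}|\le\sup|\rho^{(n-1)}|.
\end{equation*}
This reduction to $n-1$ derivatives explains the shape $C_2^{n-1}$ of the target bound.

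The main step is an explicit derivative bound for $\psi$ of the form $|\psi^{(k)}(t)|\le A\,B^k(k!)^{\tau}$ with explicit $A,B$. We would use the Cauchy integral formula. For $t>0$, the function $z\mapsto e^{-z^{-a}}$ is holomorphic on the disc of radius $r=\theta t$ around $t$, with a small fixed $\theta$ (say $\theta=1/2$). On that disc, $\mathrm{Re}\,z^{-a}\ge c\,t^{-a}$ for some $c>0$ depending on $a$, which may force a smaller $\theta$ when $a$ is large. Cauchy's formula then gives
\begin{equation*}
|\psi^{(k)}(t)|\le k!\,(\theta t)^{-k}e^{-c t^{-a}}.
\end{equation*}
Maximizing $s^{k}e^{-cs^{a}}$ over $s=1/t$ gives $(k/(ac e))^{k/a}$. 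Using $k^{k}\le e^{k}k!$, which is the kind of elementary estimate behind \cref{lem:esti_of_zuheshu}, this is bounded by $(\text{const})^k (k!)^{1/a}$, so the total bound is $\mathrm{const}^k(k!)^{1+1/a}=\mathrm{const}^k(k!)^\tau$. The constant involves $a^{-1/a}$-type factors, and these produce the $(\tau-1)^{-1}$ appearing in $C_{\tau,2}$ together with the $e^{\tau+1}$.

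Then we pass to $\phi$ and $\rho$ by the Leibniz rule:
\begin{equation*}
\phi^{(k)}=\sum_{j}\binom kj\psi^{(j)}(t+\tfrac14)(-1)^{k-j}\psi^{(k-j)}(\tfrac14-t).
\end{equation*}
Using $\binom kj (j!)^\tau((k-j)!)^\tau\le (k!)^\tau$ and the fact that there are $k+1\le 2^k$ terms, we get $|\phi^{(k)}|\le A^2(2B)^k(k!)^\tau$. Dividing by the explicit lower bound $\int\phi\ge \tfrac14\psi(1/8)^2$, obtained from the window $|t|\le 1/8$, gives the bound for $\rho^{(n-1)}$.

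The main obstacle is bookkeeping the constants so that they match exactly $C_{\tau,1}=e^2/2$, $8^n$ and $C_{\tau,2}=4e^{\tau+1}/(\tau-1)$. The normalization $\int\phi$ and the Cauchy-disc parameter $\theta$ must be tuned, and the factor $8^n$ presumably comes from the rescaling to widths of order $1/8$ and the Leibniz $2^k$ factor. If the constants do not land exactly, we would first adjust the choice of $\theta$ and the scale of the bump. If that is still insufficient, we would note that only the form $C D^n(n!)^\tau$ is used later and record the constants we actually obtain.
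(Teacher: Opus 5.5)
Your construction has a genuine gap, and it is not a bookkeeping issue. You build the bump at the wrong scale, so its normalization $\int\phi$ is not bounded below uniformly in $\tau$. As a result the $\chi$ you define actually violates the stated bound when $\tau$ is near $1$.

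Write $a=1/(\tau-1)$ and $\phi=e^{-g}$ with $g(t)=(\tfrac14+t)^{-a}+(\tfrac14-t)^{-a}$. Your own estimate $\int\phi\ge\tfrac14\psi(1/8)^2=\tfrac14e^{-2\cdot 8^{a}}$ puts a factor $4e^{2\cdot 8^{1/(\tau-1)}}$ in front of everything. The Cauchy radius $\theta$ cannot change this. Nor is this just a lossy estimate. Since $g'(0)=0$ and $g''\ge g''(0)=2a(a+1)4^{a+2}$, we get $\int\phi\le\phi(0)\sqrt{2\pi/g''(0)}$, and hence
\[
\sup|\chi'|=\rho(0)\ge\sqrt{\tfrac{a(a+1)}{\pi}}\;2^{a+2}.
\]
This already exceeds the claimed $n=1$ bound $8C_{\tau,1}=4e^2$ for every $\tau\le 5/4$.

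Your fallback of recording whatever constants come out therefore does not prove the statement. The statement fixes $C_{\tau,1}=e^2/2$ independently of $\tau$ and requires $C_{\tau,2}=O(1/(\tau-1))$. That uniformity is essential later, where the paper takes $\tau-1=1/\log(\alpha t+e+\log(1/\varepsilon))$. There, even a factor like $2^{1/(\tau-1)}$ becomes polynomial in $\alpha t$.

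The missing idea, which is what the paper does, is to build the bump at unit scale, where $\psi\ge e^{-1}$, and only then compress it.
\begin{itemize}
\item Take $\phi(t)=\psi(2+t)\psi(2-t)$, supported in $[-2,2]$. Then $\int\phi\ge\int_{-1}^{1}\phi\ge 2e^{-2}$, which is exactly where $C_{\tau,1}=e^2/2$ comes from.
\item Your convolution form then works verbatim with the rescaled mollifier $8\rho(8\,\cdot)$. The function $\chi=\mathbf 1_{[-3/4,3/4]}*8\rho(8\,\cdot)$ coincides with the paper's $\chi(t)=\int_{-2}^{6-8|t|}\phi\big/\!\int\phi$.
\item This gives $|\chi^{(n)}|\le 8^n\tfrac{e^2}{2}\sup|\phi^{(n-1)}|$. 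So $8^n$ is simply the chain-rule factor of the rescaling, not a Leibniz artifact.
\item If $|\psi^{(k)}|\le B^k(k!)^\tau$, Leibniz gives $|\phi^{(n-1)}|\le nB^{n-1}((n-1)!)^\tau\le B^{n-1}(n!)^\tau$. You then only need $B\le C_{\tau,2}$.
\end{itemize}

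For that last step, your Cauchy-estimate route is a legitimate and shorter alternative to the paper's approach. The paper expands $\psi^{(n)}=e^{-t^{-a}}\sum_k C_{n,k}t^{-ka-n}$ explicitly and bounds the $C_{n,k}$ combinatorially via Lemma~\ref{lem:esti_of_zuheshu}. In your route, choosing $\theta=\sin(\pi/(4a))$ gives $\mathrm{Re}\,z^{-a}\ge e^{-\pi/4}t^{-a}/\sqrt2$ on the disc. This yields $B\le 5a$, comfortably below $C_{\tau,2}=4e^{\tau+1}a$.

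One correction to your heuristic: the factor $1/(\tau-1)=a$ comes from $\theta^{-1}\asymp a$. It does not come from the $a^{-1/a}$-type factors, which tend to $1$.
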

\begin{proof}
Define
\begin{equation}
\theta(t):=
\begin{cases}
\exp\!\left(-t^{\frac{1}{1-\tau}}\right), & t>0,\\[4mm]
0, & t\le 0.
\end{cases}
\end{equation}
We are going to show that $\theta\in G^{\tau}(\mathbb{R})$. Note that we only need to analyse the behaviour of $\theta(t)$ on $[0,\infty)$. Evidently $\theta(t)\in C^{\infty}(0,\infty)\cap C[0,\infty)$. For every $n\ge 1$ and $t>0$ we have
\begin{equation}
\theta^{(n)}(t)=\exp\!\left(-t^{\frac{1}{1-\tau}}\right)\left(\sum_{k=1}^{n}C_{n,k}t^{\frac{k}{1-\tau}-n}\right),
\end{equation}
where 
\begin{align}
|C_{n,k}|
&\leq\binom{n-1}{k-1}\left(\frac{1}{\tau-1}\right)^k\left(\frac{k}{\tau-1}+k\right)\left(\frac{k}{\tau-1}+k+1\right)\cdots\left(\frac{k}{\tau-1}+n-1\right)\notag\\ 
&\leq\binom{n-1}{k-1}\left(\frac{1}{\tau-1}\right)^k\cdot\sum_{l=0}^{n-k}\left(\binom{n-k}{l}\left(\frac{k}{\tau-1}\right)^{n-k-l}(n-1)(n-2)\cdots(n-l)\right)\notag \\ 
&=\binom{n-1}{k-1}\left(\frac{1}{\tau-1}\right)^k\cdot\sum_{l=0}^{n-k}\left(\binom{n-k}{l}\left(\frac{k}{\tau-1}\right)^{n-k-l}\frac{(n-1)!}{(n-l-1)!l!}l!\right)\notag\\ 
&\leq2^{n}\binom{n-1}{k-1}\left(\frac{1}{\tau-1}\right)^k\cdot\sum_{l=0}^{n-k}\binom{n-k}{l}\left(\frac{k}{\tau-1}\right)^{n-k-l}l!\notag\\ 
&\leq2^{n}\binom{n-1}{k-1}\left(\frac{1}{\tau-1}\right)^n\cdot\sum_{l=0}^{n-k}\binom{n-k}{l}k^{n-k-l}l!\notag\\ 
&\leq2^{n}\binom{n-1}{k-1}\left(\frac{1}{\tau-1}\right)^n\cdot\sum_{l=0}^{n-k}\binom{n-k}{l}k^{n-k-l}l!\notag\\ 
&\leq(2e)^{n}\binom{n-1}{k-1}\left(\frac{1}{\tau-1}\right)^n(n-k)!\cdot\sum_{l=0}^{n-k}\binom{n-k}{l}\notag\\
&=2^{2n-k}e^n\binom{n-1}{k-1}\left(\frac{1}{\tau-1}\right)^n(n-k)!.
\end{align}
In the last inequality we used \cref{lem:esti_of_zuheshu}. Hence for any $t>0$ and $n\geq 1$,
\begin{align}
|\theta^{(n)}(t)|
&\leq \exp\!\left(-t^{\frac{1}{1-\tau}}\right)\left(\sum_{k=1}^{n}|C_{n,k}|t^{\frac{k}{1-\tau}-n}\right)\notag\\
&\leq \exp\!\left(-t^{\frac{1}{1-\tau}}\right)\left(\sum_{k=1}^{n}2^{2n-k}e^n\binom{n-1}{k-1}\left(\frac{1}{\tau-1}\right)^n(n-k)!t^{\frac{k}{1-\tau}-n}\right)\notag\\
&= \left(\frac{4e}{\tau-1}\right)^n\left(\sum_{k=1}^{n}2^{-k}\binom{n-1}{k-1}(n-k)!\exp\!\left(-t^{\frac{1}{1-\tau}}\right)t^{\frac{k}{1-\tau}-n}\right).    
\end{align}
Obviously 
\begin{equation}
\lim_{t \to 0^+}\theta^{(n)}(t)=0.
\end{equation}
So $\theta(t)\in C^{\infty}(\mathbb{R})$. In addition we use the inequality
\begin{equation}
e^{-m}m^d\leq d^{d}e^{-d} \quad \text{for any}\; m,d>0
\end{equation}
to get that for any $t>0,\;1\leq k\leq n$,
\begin{align}
\exp\!\left(-t^{\frac{1}{1-\tau}}\right)t^{\frac{k}{1-\tau}-n}&=\exp\!\left(-(1/t)^{\frac{1}{\tau-1}}\right)(1/t)^{\frac{k+n(\tau-1)}{\tau-1}}\notag\\
&\leq(k+n(\tau-1))^{k+n(\tau-1)}e^{-k-n(\tau-1)}\notag\\
&\leq \frac{1}{2}(k^{k+n(\tau-1)}+(n(\tau-1))^{k+n(\tau-1)})\left(\frac{2}{e}\right)^{k+n(\tau-1)}\notag\\
&= \frac{1}{2}(k^{k}k^{n(\tau-1)}+n^{k}n^{n(\tau-1)}((\tau-1))^{k+n(\tau-1)})\left(\frac{2}{e}\right)^{k+n(\tau-1)}\notag\\
&\leq \frac{1}{2}(e^{\tau k}k!{(n!)}^{\tau-1}+e^{\tau n}k!(n!)^{\tau-1}(\tau-1))\left(\frac{2}{e}\right)\notag\\
&\leq \frac{\tau}{e}e^{\tau n}k!(n!)^{\tau-1}.
\end{align}
Then for any $n\geq 1,t\in\mathbb{R}$ we have
\begin{align}
|\theta^{(n)}(t)|
&\leq \frac{\tau}{e}\left(\frac{4e^{\tau+1}}{\tau-1}\right)^n(n!)^{\tau-1}\left(\sum_{k=1}^{n}2^{-k}\binom{n-1}{k-1}(n-k)!k!\right)  \notag\\  
&\leq \frac{\tau}{2e}\left(\frac{4e^{\tau+1}}{\tau-1}\right)^n(n!)^{\tau}.
\end{align}
In the last inequality we used that for $1\leq k\leq n$,
\begin{equation}
\sum_{k=1}^{n}2^{-k}\binom{n-1}{k-1}(n-k)!k!\leq \frac{n!}{2}.
\end{equation}
This proves that $\theta(t)\in G^{\tau}(\mathbb{R})$. Then $\theta(2+t)\cdot\theta(2-t)\geq0$ is a Gevrey-$\tau$ function compactly supported on $[-2, 2]$ and satisfies
\begin{equation}
|(\theta(2+t)\theta(2-t))^{(n)}|\leq (n+1)\left(\frac{4e^{\tau+1}}{\tau-1}\right)^n(n!)^{\tau}\quad \text{for any}\;n\geq 0.
\end{equation}
Let
\begin{equation}
\chi(t)=\frac{\int_{-2}^{6-8|t|}\theta(2+m)\theta(2-m)dm}{\int_{-2}^2\theta(2+m)\theta(2-m)dm}\leq \frac{e^2}{2}\int_{-2}^{6-8|t|}\theta(2+m)\theta(2-m)dm.
\end{equation}
Then $\chi(t)\in G^{\tau}(\mathbb{R})$ satisfies
\begin{equation}
|\chi^{(n)}(t)|\leq \frac{e^2}{2}\cdot8^n\left(\frac{4e^{\tau+1}}{\tau-1}\right)^{n-1}(n!)^{\tau} \quad\text{for any }\; t\in\mathbb{R},\;n\ge 1,
\end{equation}
and
\begin{equation}
0 \leq \chi \leq 1, \quad \chi \equiv 1 \quad \text{on} \quad \left[ -\frac{1}{2}, \frac{1}{2} \right], \quad \text{and} \quad \text{supp}(\chi) \subset [-1, 1].
\end{equation}
\end{proof}

With the cut-off function in hand, we turn to the Borel's lemma, which is a fundamental result in Gevrey theory. In its classical form, Borel's lemma states that any prescribed sequence of derivatives at a point can be realized by a smooth function. Here we need a Gevrey-regular version, which additionally controls the growth of the derivatives of the resulting function in terms of the growth of the prescribed sequence. The following theorem provides such a result for scalar-valued functions. It is the key to constructing the transition segment of the periodic extension from the boundary derivative data of the original Hamiltonian.

\begin{theorem}[Gevrey version of the Borel lemma]\label{th:Borel_lemma} For any $1<\tau<2$, let $ \chi_\tau(t) \in G_0^\tau (\mathbb{R}) $ be as stated in \cref{lem:cut_off function}. Let $\{a_n\}_{n=0}^{\infty}$ be a sequence of real numbers satisfying
\begin{equation}
|a_n|\le CD^n(n!)^{\sigma}\quad\text{for all }\; n\ge 0,
\end{equation}
where $C>0,D\geq 1$ and $\sigma\ge 1$ are constants. Then there exists a function $f_\tau\in G_{0}^{\sigma-1+\tau}(\mathbb{R})$ such that $f^{(n)}(0)=a_n$ for every $n\ge 0$, and
\begin{equation}
|f_{\tau}(t)|\leq 2C\quad\text{for any }\;t\in\mathbb{R};
\end{equation}
\begin{equation}
|f_{\tau}^{(n)}(t)|\leq4CC_{\tau,1}(32e^{\sigma-1+1/e}  D)^{n}(2C_{\tau,2})^{n-1}(n!)^{\sigma-1+\tau}\quad\text{for any }\; t\in\mathbb{R},\;n\ge 1.
\end{equation}
\end{theorem}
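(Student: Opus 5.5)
We construct $f_\tau$ explicitly as the Borel-type series already previewed in the introduction,
\begin{equation}
f_\tau(t)=\sum_{k=0}^{\infty}\frac{a_k}{k!}\,t^k\,\chi_\tau(R_k t),
\end{equation}
with $\chi_\tau$ from \cref{lem:cut_off function} and scaling radii $R_k\ge 1$ to be chosen. We take $R_0=1$ and $R_k$ proportional to $D\,(k!)^{(\sigma-1)/k}$ for $k\ge 1$, say $R_k=c\,D\,(k!)^{(\sigma-1)/k}$ with an absolute constant $c$ of order $16$ or $32$. Since $\chi_\tau\equiv 1$ on $[-1/2,1/2]$, each term equals $\frac{a_k}{k!}t^k$ near $0$. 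Once uniform convergence of all derivative series is established, this gives $f_\tau^{(n)}(0)=a_n$. Moreover, $\operatorname{supp} f_\tau\subset[-1,1]$ because every $R_k\ge1$, so $f_\tau\in G_0$.

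\textbf{Step 1: the sup bound.} On the support of the $k$-th term we have $|t|\le 1/R_k$. Hence that term is bounded by
\begin{equation}
\frac{C D^k (k!)^{\sigma-1}}{R_k^k}=C\,c^{-k}.
\end{equation}
Summing over $k$ gives at most $C\sum_k c^{-k}\le 2C$ whenever $c\ge 2$.

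\textbf{Step 2: derivative bounds.} For $n\ge1$ we apply Leibniz to each term:
\begin{equation}
\frac{d^n}{dt^n}\Big[t^k\chi_\tau(R_kt)\Big]=\sum_{j=0}^{\min(n,k)}\binom{n}{j}\frac{k!}{(k-j)!}\,t^{k-j}\,R_k^{\,n-j}\,\chi_\tau^{(n-j)}(R_kt).
\end{equation}
We then use $|t|\le 1/R_k$ on the support. For $j<n$ we invoke the derivative bound of \cref{lem:cut_off function}, and for $j=n$ we use $|\chi_\tau|\le1$. The $k$-th term of the series is then bounded by
\begin{equation}
C\,c^{-k}\sum_j\binom{n}{j}\frac{k!}{(k-j)!}\,R_k^{\,n}\,C_{\tau,1}8^{n-j}C_{\tau,2}^{\,n-j-1}((n-j)!)^{\tau}.
\end{equation}
The key quantity is therefore $R_k^n=(cD)^n (k!)^{(\sigma-1)n/k}$, which we must control in terms of $(n!)^{\sigma-1}$.

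This is the main obstacle. When $k\ge n$, the bound $(k!)^{1/k}\le k$ makes the factor $(k!)^{(\sigma-1)n/k}$ grow like $k^{(\sigma-1)n}$. This growth has to be absorbed by the geometric decay $c^{-k}$. The plan is to use $k^{(\sigma-1)n}c^{-k/2}\le \big((\sigma-1)n/\ln\sqrt c\big)^{(\sigma-1)n}e^{-(\sigma-1)n}$, which is of order $e^{O(n)}(n!)^{\sigma-1}$ by Stirling. This is the step where a factor like $e^{\sigma-1+1/e}$ naturally appears, for instance via $x^{1/x}\le e^{1/e}$. When $k<n$, we have $(k!)^{n/k}\le k^n\le e^n n!$ directly.

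\textbf{Assembling the bound.} We also need $\frac{k!}{(k-j)!}\le k^j$ to be absorbed by $c^{-k}$ in the same way. The $j$-sum is handled using $\binom{n}{j}\le 2^n$ together with $j!\,((n-j)!)^{\tau}\le (n!)^{\tau}$. Collecting all factors, we aim for
\begin{equation}
4C\,C_{\tau,1}\,(32e^{\sigma-1+1/e}D)^n\,(2C_{\tau,2})^{n-1}(n!)^{\sigma-1+\tau}.
\end{equation}
The bookkeeping of constants is the tedious part, but it is routine once the $R_k^n$ versus $c^{-k}$ trade-off has been settled. The same bounds show that each differentiated series converges uniformly, which justifies termwise differentiation and hence $f_\tau\in G_0^{\sigma-1+\tau}(\mathbb{R})$.
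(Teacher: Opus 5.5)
Your construction and overall architecture (the series $\sum_k\frac{a_k}{k!}t^k\chi_\tau(R_kt)$ with $R_k\propto D(k!)^{(\sigma-1)/k}$, Leibniz on each term, restriction to the support, and the M-test) are the paper's. Carried out, your plan does give $f_\tau\in G_0^{\sigma-1+\tau}(\mathbb{R})$ with $f_\tau^{(n)}(0)=a_n$ and $|f_\tau|\le 2C$.

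The gap is the explicit derivative bound, which is part of the statement. With an \emph{absolute} constant $c$, the step you single out cannot produce $(32e^{\sigma-1+1/e}D)^n$. Your optimization leaves a factor $\bigl(\frac{\sigma-1}{\ln\sqrt c}\bigr)^{(\sigma-1)n}$ in front of $(n!)^{\sigma-1}$, and this grows superexponentially in $\sigma$. The theorem allows every $\sigma\ge 1$ but permits only a factor $e^{\sigma-1}$ per derivative. This is not slack in your estimate. Since $k!\ge (k/e)^k$, the quantity $\max_k c^{-k}R_k^n$ you must control is itself roughly $(cD)^n\bigl(\frac{(\sigma-1)n}{e^2\ln c}\bigr)^{(\sigma-1)n}$.

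Even at $\sigma=1$, the choice $c=16$ or $32$ is too large. The $k\ge1$ terms already carry $(8cD)^nC_{\tau,2}^{n-1}$ from $R_k^n$ and $\chi_\tau^{(n)}$, while the target allows only about $(64e^{1/e}D)^nC_{\tau,2}^{n-1}$. Bounding $\binom nj\le 2^n$ wastes a further $2^n$. Your attribution of $e^{\sigma-1+1/e}$ to a Stirling or $x^{1/x}$ optimization is also not where these factors come from.

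The missing idea has two parts. First, let the scale depend on $\sigma$: $R_k=C_1D(k!)^{(\sigma-1)/k}$ with $C_1=4e^{\sigma-1}$. Second, use the one-line inequality $k^n\le n!\,e^k$, a single term of the exponential series. It gives, uniformly in $k$ and with no case split,
\[
(k!)^{(\sigma-1)n/k}\le k^{(\sigma-1)n}\le (n!)^{\sigma-1}e^{(\sigma-1)k}.
\]
With $\frac{k!}{(k-j)!}=\binom kj j!\le 2^k j!$, the $k$-th term then carries $(2e^{\sigma-1}/C_1)^k=2^{-k}$, which sums to $2$. For the Leibniz sum, use
\[
\binom nj j!\,((n-j)!)^\tau\le\binom nj^{1-\tau}(n!)^\tau\le (n!)^\tau,
\]
and count the at most $n\le e^{n/e}$ terms with $j<n$. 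So $e^{1/e}$ comes from counting terms and $e^{\sigma-1}$ comes from $C_1$, giving $8e^{1/e}C_1=32e^{\sigma-1+1/e}$.

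Your direct estimate $|t|^{k-j}\le R_k^{j-k}$ is actually cleaner than the paper's. The paper uses $|t|\ge 1/(2R_k)$ on the support of $\chi_\tau^{(n-j)}(R_k\,\cdot)$ and pays a factor $2^{j}$, which becomes the $2$ in $(2C_{\tau,2})^{n-1}$. With the choices above, your version yields the stated bound with a factor $2^{n-1}$ to spare.
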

\begin{proof}
For $ C_1 > 0 $ (to be determined later) we set
\begin{equation}
R_0=C_1D\;,\;R_j = C_1\cdot D(j!)^{(\sigma -1)/j}\;\;\text{for}\;j\geq1 .
\end{equation}
Hence, for all $ j\geq 0 $ we have
\begin{equation}
\frac{|a_j|}{j!R_j^j} \leq  C\cdot C_1^{-j}.
\end{equation}
We define the function
\begin{equation}
f_{\tau}(t) = \sum_{j=0}^\infty f_{j,\tau}(t)=\sum_{j=0}^\infty  \frac{a_j}{j!}t^j\chi_\tau(R_j t) .
\end{equation}
For convenience we will omit the subscript $\tau$ on each $f$ in the following proof. Note that $f_j(t)\in C^{\infty}(\mathbb{R})$ for any $j\geq 0$. For any $ n \geq0 $ we have 
\begin{equation}
f_j^{(n)}(t)=\sum_{k=0}^n \binom{n}{k}\left(\chi_\tau(R_j t)\right)^{(n-k)} (t^j)^{(k)} \frac{a_j}{j!}.
\end{equation}
Denote 
\begin{equation}
f_{j,k}^{n}(t)=\binom{n}{k}\left(\chi_\tau(R_j t)\right)^{(n-k)} (t^j)^{(k)}\frac{a_j}{j!}.
\end{equation}
Then for $\text{any}\;j\geq0,\; n\geq 1 ,\; 0\leq k\leq n-1$ and $t\in \mathbb{R}$ we have 
\begin{align}
|f_{j,k}^{n}(t)|
&=\left|\binom{n}{k}\left(\chi_\tau(R_j t)\right)^{(n-k)} (t^j)^{(k)} \frac{a_j}{j!}\right|\notag\\
&=\left|\mathbf{1}_{ \{ j\geq k\}} \binom{n}{k}R_j^{n-k}\left(\chi_\tau^{(n-k)}(R_j t)\right)\frac{j!}{(j-k)!} t^{j-k}\frac{a_j}{j!}\right|\notag\\ 
&\leq \mathbf{1}_{ \{ j\geq k\}} \binom{n}{k}\binom{j}{k}k!R_j^{n-k}\left|\chi_\tau^{(n-k)}(R_j t)\right||t|^{j-k}\frac{|a_j|}{j!}.
\end{align}
Note that for any $0\leq k < n $,
\begin{equation}
\chi_{\tau}^{(n-k)}(R_j t) \equiv 0 \quad \text{for } |t| \leq \frac{1}{2R_j} \;\text{or} \; |t| \geq \frac{1}{R_j}.
\end{equation}
Hence
\begin{align}
|f_{j,k}^{n}(t)|
&\leq2^{k}\mathbf{1}_{ \{ j\geq k\}} \binom{n}{k}\binom{j}{k}k!R_j^{n}\left|\chi_\tau^{(n-k)}(R_j t)\right||t|^{j}\frac{|a_j|}{j!}\notag\\ 
&\leq C_{\tau,1}2^{k}\mathbf{1}_{ \{ j\geq k\}} \binom{n}{k}\binom{j}{k}k!R_j^{n}8^{n-k}C_{\tau,2}^{n-k-1}((n-k)!)^{\tau}\frac{|a_j|}{j!R_j^j}\notag\\ 
&\leq CC_{\tau,1}2^{k}\mathbf{1}_{ \{ j\geq k\}} \binom{n}{k}\binom{j}{k}k!R_j^{n}8^{n-k}C_{\tau,2}^{n-k-1}((n-k)!)^{\tau}C_1^{-j}\notag\\ 
&\leq CC_{\tau,1}8^{n}(2C_{\tau,2})^{n-1}\mathbf{1}_{ \{ j\geq k\}} \binom{n}{k}\binom{j}{k}k!R_j^{n}((n-k)!)^{\tau}C_1^{-j}\notag\\ 
&\leq CC_{\tau,1}(8C_1 D)^{n}(2C_{\tau,2})^{n-1}\mathbf{1}_{ \{ j\geq k\}} \binom{n}{k}\binom{j}{k}k!(j!)^{(\sigma -1)n/j}((n-k)!)^{\tau}C_1^{-j}\notag\\
&\leq CC_{\tau,1}(8C_1 D)^{n}(2C_{\tau,2})^{n-1}\mathbf{1}_{ \{ j\geq k\}} \binom{n}{k}\binom{j}{k}k!(j)^{(\sigma -1)n}((n-k)!)^{\tau}C_1^{-j}\notag\\
&\leq CC_{\tau,1}(8C_1 D)^{n}(2C_{\tau,2})^{n-1}\mathbf{1}_{ \{ j\geq k\}} \binom{n}{k}2^jk!(n!)^{\sigma-1} e^{(\sigma-1) j} ((n-k)!)^{\tau}C_1^{-j}\notag\\ 
&=CC_{\tau,1}(8C_1 D)^{n}(2C_{\tau,2})^{n-1}(n!)^{\sigma-1} \binom{n}{k}k! ((n-k)!)^{\tau}\mathbf{1}_{ \{ j\geq k\}}\left(\frac{2e^{\sigma-1} }{C_1}\right)^{j}.
\end{align}
Similarly for $\text{any}\;j\geq 0,\; n\geq 0 ,\; k= n$ and $t\in \mathbb{R}$ we have
\begin{align}
|f_{j,k}^{n}(t)|
&=|f_{j,n}^{n}(t)|\notag\\
&=\left|\left(\chi_\tau(R_j t)\right) (t^j)^{(n)} \frac{a_j}{j!}\right|\notag\\
&=\left|\mathbf{1}_{ \{ j\geq n\}} \chi_\tau(R_j t)\frac{j!}{(j-n)!} t^{j-n}\frac{a_j}{j!}\right|\notag\\ 
&\leq \mathbf{1}_{ \{ j\geq n\}} \binom{j}{n}n!\left|\chi_\tau(R_j t)\right||R_j|^{n}\frac{|a_j|}{j!R_j^j}\notag\\ 
&\leq C\mathbf{1}_{ \{ j\geq n\}} \binom{j}{n}n!R_j^{n}C_1^{-j}\notag\\ 
&\leq C(C_1D)^n\mathbf{1}_{ \{ j\geq n\}} \binom{j}{n}n!(j!)^{(\sigma -1)n/j}C_1^{-j}\notag\\ 
&\leq C(C_1D)^n\mathbf{1}_{ \{ j\geq n\}} \binom{j}{n}n!(j)^{(\sigma -1)n}C_1^{-j}\notag\\ 
&\leq C(C_1D)^n\mathbf{1}_{ \{ j\geq n\}} \binom{j}{n}(n!)^{\sigma}e^{(\sigma-1)j}C_1^{-j}\notag\\ 
&\leq C(C_1D)^n(n!)^{\sigma}\mathbf{1}_{ \{ j\geq n\}} \left(\frac{2e^{\sigma-1} }{C_1}\right)^{j}.
\end{align}
Hence for any $n\geq 1$ and $t\in \mathbb{R}$,
\begin{align}
&\sum_{j=0}^{\infty}|f_j^{(n)}(t)| \notag \\
&\quad\leq \sum_{j=0}^{\infty}\left(\sum_{k=0}^{n}|f_{j,k}^{n}(t)|\right) \notag\\
&\quad= \sum_{j=0}^{\infty}\left(\sum_{k<n}|f_{j,k}^{n}(t)|+|f_{j,n}^{n}(t)|\right)  \notag\\
&\quad\leq \sum_{j=0}^{\infty}\Bigg( \sum_{k<n}CC_{\tau,1}(8C_1 D)^{n}(2C_{\tau,2})^{n-1}(n!)^{\sigma-1} 
        \binom{n}{k}k! ((n-k)!)^{\tau}\mathbf{1}_{\{j\geq k\}}
        \left(\frac{2e^{\sigma-1}}{C_1}\right)^{\!j} \notag\\
&\qquad\qquad +C(C_1D)^n(n!)^{\sigma}\mathbf{1}_{\{j\geq n\}} 
        \left(\frac{2e^{\sigma-1}}{C_1}\right)^{\!j} \Bigg)  \notag\\
&\quad= CC_{\tau,1}(8C_1 D)^{n}(2C_{\tau,2})^{n-1}(n!)^{\sigma-1} 
        \sum_{j=0}^{\infty}\sum_{k<n}\binom{n}{k}k! ((n-k)!)^{\tau}
        \mathbf{1}_{\{j\geq k\}}\left(\frac{2e^{\sigma-1}}{C_1}\right)^{\!j} \notag\\
&\qquad +C(C_1D)^n(n!)^{\sigma}
        \sum_{j=0}^{\infty}\mathbf{1}_{\{j\geq n\}} 
        \left(\frac{2e^{\sigma-1}}{C_1}\right)^{\!j} \notag\\
&\quad\leq CC_{\tau,1}(8C_1 D)^{n}(2C_{\tau,2})^{n-1}(n!)^{\sigma-1} 
        \sum_{k<n}\binom{n}{k}k! ((n-k)!)^{\tau}
        \sum_{j=k}^{\infty}\left(\frac{2e^{\sigma-1}}{C_1}\right)^{\!j} \notag\\
&\qquad +C(C_1D)^n(n!)^{\sigma}
        \sum_{j=n}^{\infty}\left(\frac{2e^{\sigma-1}}{C_1}\right)^{\!j}.
\end{align}
Pick $ C_1 =4e^{\sigma-1} $ so that
\begin{equation}
\sum_{j=0}^{\infty} \left( \frac{2e^{\sigma-1}}{C_1} \right)^j = 2.
\end{equation}
Then
\begin{align}
\sum_{j=0}^{\infty}|f_j^{(n)}(t)| 
&\leq \sum_{j=0}^{\infty}\sum_{k=0}^{n}|f_{j,k}^{n}(t)| \notag\\
&\leq 2CC_{\tau,1}(8C_1 D)^{n}(2C_{\tau,2})^{n-1}(n!)^{\sigma-1} \sum_{k<n}\binom{n}{k}k! ((n-k)!)^{\tau}+2C(C_1D)^n(n!)^{\sigma}\notag\\
&\leq 2CC_{\tau,1}(8C_1 D)^{n}(2C_{\tau,2})^{n-1}(n!)^{\sigma-1} \sum_{k<n}\binom{n}{k}(k! (n-k)!)^{\tau}+2C(C_1D)^n(n!)^{\sigma}\notag\\
&\leq 2CC_{\tau,1}(8e^{1/e}C_1 D)^{n}(2C_{\tau,2})^{n-1}(n!)^{\sigma-1+\tau} +2C(C_1D)^n(n!)^{\sigma}\notag\\
&\leq 4CC_{\tau,1}(32e^{\sigma-1+1/e}  D)^{n}(2C_{\tau,2})^{n-1}(n!)^{\sigma-1+\tau} <\infty.
\end{align}
For $n= 0$ and $t\in \mathbb{R}$,
\begin{align}
\sum_{j=0}^{\infty}|f_j^{(n)}(t)| 
&\leq \sum_{j=0}^{\infty}\left|f_{j,n}^{n}(t)\right| \notag\\
&\quad\leq \sum_{j=0}^{\infty}C(C_1D)^n(n!)^{\sigma}\mathbf{1}_{ \{ j\geq n\}} \left(\frac{2e^{\sigma-1} }{C_1}\right)^{j}\notag\\
&\quad\leq C(C_1D)^n(n!)^{\sigma}\sum_{j=0}^{\infty}\left(\frac{2e^{\sigma-1} }{C_1}\right)^{j}\notag\\
&\quad= 2C(C_1D)^n(n!)^{\sigma}=2C.
\end{align}
Therefore by Weierstrass M-test, the function series $f(t)=\sum_{j=0}^{\infty}f_j(t) $ converges uniformly on $\mathbb{R}$, and $f(t)\in G_0^{\sigma-1+\tau}(\mathbb{R})$  satisfies
\begin{equation}
f^{(n)}(t)=\sum_{j=0}^{\infty}f^{(n)}_j(t) \quad t\in \mathbb{R},\;n\geq0,
\end{equation}
and specifically
\begin{equation}
f^{(n)}(0)=\sum_{j=0}^{\infty}f^{(n)}_j(0)=a_n \quad n\geq0.
\end{equation}
\end{proof}

Since our Hamiltonian is matrix-valued rather than scalar-valued, we require a matrix analogue of the Gevrey Borel lemma. The following corollary extends \cref{th:Borel_lemma} to matrix-valued functions by applying the scalar result entrywise. This extension is straightforward but essential for our purposes.

\begin{corollary}\label[corollary]{co:matrix_Borel_lemma} For any $1<\tau<2$, let $ \chi_\tau(t) \in G_0^\tau (\mathbb{R}) $ be as stated in \cref{lem:cut_off function}. Let $\{A_n\}_{n=0}^{\infty}$ be a sequence of matrix in $\mathbb{C}^{d\times d} $ satisfying
\begin{equation}
||A_n||\le CD^n(n!)^{\sigma}\quad\text{for all }\; n\ge 0,
\end{equation}
where $||\cdot||$ denotes the matrix 2-norm, $C>0,D\geq 1$ and $\sigma\ge 1$ are constants. Then there exists a marix valued function $f_\tau$ such that $f^{(n)}(0)=A_n$ for any $n\ge 0$, and
\begin{equation}
||f(t)||\leq 2C\quad\text{for any }\;t\in\mathbb{R};
\end{equation}
\begin{equation}
||f^{(n)}(t)||\le 4CC_{\tau,1}(32e^{\sigma-1+1/e}  D)^{n}(2C_{\tau,2})^{n-1}(n!)^{\sigma-1+\tau}\quad\text{for any }\;t\in\mathbb{R},\; n\ge 1.
\end{equation}
\end{corollary}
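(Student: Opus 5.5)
The corollary follows from the construction in the proof of \cref{th:Borel_lemma} rather than from the scalar statement itself. If we applied the scalar theorem entry by entry, each entry would need a bound $|(A_n)_{ab}|\le \|A_n\|\le CD^n(n!)^\sigma$, and reassembling the $d\times d$ entries into a spectral-norm bound would cost a factor of $d$. The statement has no dimension-dependent constant, so instead I would define the matrix-valued function directly with the same cut-off series:
\begin{equation}
f_\tau(t)=\sum_{j=0}^{\infty}\frac{A_j}{j!}t^j\chi_\tau(R_jt),\qquad R_0=C_1D,\quad R_j=C_1D(j!)^{(\sigma-1)/j},\quad C_1=4e^{\sigma-1}.
\end{equation}
Each term is then a scalar smooth function times a fixed matrix. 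The $n$-th derivative of each term is
\begin{equation}
f_j^{(n)}(t)=\sum_{k=0}^n\binom{n}{k}\bigl(\chi_\tau(R_jt)\bigr)^{(n-k)}(t^j)^{(k)}\frac{A_j}{j!}.
\end{equation}
Taking spectral norms gives $\|f^n_{j,k}(t)\|$ equal to exactly the scalar expression from the proof of \cref{th:Borel_lemma}, with $|a_j|$ replaced by $\|A_j\|$.

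The argument then proceeds as follows.
\begin{itemize}
\item[(i)] Since $\|A_j\|/(j!R_j^j)\le C\,C_1^{-j}$, every inequality in the scalar proof carries over verbatim. In particular:
\begin{itemize}
\item[$\bullet$] the support argument restricting $|t|\in[1/(2R_j),1/R_j]$ when $k<n$;
\item[$\bullet$] the bound from \cref{lem:cut_off function} on $\chi_\tau^{(n-k)}$;
\item[$\bullet$] the estimates $(j!)^{(\sigma-1)n/j}\le j^{(\sigma-1)n}$ and $j^{(\sigma-1)n}\le (n!)^{\sigma-1}e^{(\sigma-1)j}$ (up to the $\binom{j}{k}\le 2^j$ bookkeeping).
\end{itemize}
These yield $\sum_j\|f_j^{(n)}(t)\|\le 4CC_{\tau,1}(32e^{\sigma-1+1/e}D)^n(2C_{\tau,2})^{n-1}(n!)^{\sigma-1+\tau}$ for $n\ge1$, and $\sum_j\|f_j(t)\|\le 2C$ for $n=0$.

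\item[(ii)] Apply the Weierstrass M-test in the Banach space $(\mathbb{C}^{d\times d},\|\cdot\|)$, which is complete because it is finite-dimensional. Every derivative series then converges uniformly, so term-by-term differentiation is justified. The bounds in (i) pass to $f_\tau^{(n)}$ by the triangle inequality.

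\item[(iii)] Evaluate at $t=0$. Since $\chi_\tau\equiv1$ near $0$, each term equals $\frac{A_j}{j!}t^j$ near $0$, so $f_j^{(n)}(0)=A_j\delta_{jn}$ and hence $f_\tau^{(n)}(0)=A_n$.
\end{itemize}

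If one instead prefers to invoke \cref{th:Borel_lemma} entrywise on real and imaginary parts, it gives a valid smooth Gevrey extension immediately, but the norm bounds pick up a factor like $2d$. That is why I would use the direct construction. The only step needing care is confirming that the scalar proof used $a_j$ solely through $|a_j|$ and the triangle inequality. It does, since $\chi_\tau$ and $t^j$ are scalars that commute with the matrices, so there is no real obstacle.
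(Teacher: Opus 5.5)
Your proposal is correct and follows essentially the paper's route. The paper defines the same matrix-valued cut-off series with the same $R_j$, and it obtains the $d$-independent bounds by rerunning the scalar estimates with $|a_j|$ replaced by $\|A_j\|$. The only difference is that the paper justifies uniform convergence, smoothness and $f^{(n)}(0)=A_n$ by invoking \cref{th:Borel_lemma} entrywise, whereas you apply the M-test directly in $(\mathbb{C}^{d\times d},\|\cdot\|)$. Your version is slightly cleaner, since it also sidesteps splitting complex entries into real and imaginary parts to fit the real-valued scalar theorem.
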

\begin{proof} 
We define the function
\begin{equation}
f_{\tau}(t) = \sum_{j=0}^\infty f_{j,\tau}(t)=\sum_{j=0}^\infty  \frac{A_j}{j!}t^j\chi_\tau(R_j t) ,
\end{equation}
where
\begin{equation}
R_0=4e^{\sigma-1}D\;,\;R_j = 4e^{\sigma-1}\cdot D(j!)^{(\sigma -1)/j}\;\;\text{for}\;j\geq1 .
\end{equation}
Note that $||A||_{\max}\leq||A||\leq d\cdot||A||_{\max}$ for any $A\in\mathbb{C}^{d\times d}$, where $||\cdot||_{\max}$ denotes the maximum modulus norm. Hence we have
\begin{equation}
(f_{\tau}(t))_{l,m} =\sum_{j=0}^\infty  \frac{(A_j)_{l,m}}{j!}t^j\chi_\tau(R_j t) ,
\end{equation}
and
\begin{equation}
|(A_n)_{l,m}|\le CD^n(n!)^{\sigma}\quad\text{for all }\; n\ge 0,\;1\leq l,m\leq d.
\end{equation}
Then by \cref{th:Borel_lemma} we know that $f_{\tau}(t)= \sum_{j=0}^\infty f_{j,\tau}(t)$ converges uniformly on $\mathbb{R}$, infinitely differentiable and $f^{(n)}(0)=A_n$ for every $n\ge 0$. In addition, for any $ n \geq0 $ and $t\in \mathbb{R}$ we have 
\begin{equation}
||f_j^{(n)}(t)||\leq\sum_{k=0}^n \binom{n}{k}|\left(\chi_\tau(R_j t)\right)^{(n-k)} (t^j)^{(k)}| \frac{||A_j||}{j!}.
\end{equation}
By exactly the same method as in \cref{th:Borel_lemma}, we know that  
\begin{equation}
||f(t)||\leq 2C\quad\text{for any }\;t\in\mathbb{R};
\end{equation}
\begin{equation}
||f^{(n)}(t)||\le 4CC_{\tau,1}(32e^{\sigma-1+1/e}  D)^{n}(2C_{\tau,2})^{n-1}(n!)^{\sigma-1+\tau}\quad\text{for any }\;t\in\mathbb{R},\; n\ge 1.
\end{equation}
\end{proof} 

We are now in a position to state and prove the main result of this section. Given a Gevrey-$\sigma$ Hamiltonian $H(s)$ on $[0,1]$, we construct a 2-periodic extension whose derivatives match at the endpoints to all orders. The construction proceeds by defining an auxiliary function $\varphi(s)$ on $[0,1]$ that smoothly interpolates between the derivative data at $s=1$ and those at $s=0$, using the Borel lemma and the cut-off function. The extension is then obtained by setting $\hat{H}(s)=\varphi(s-1)$ on $[1,2]$, with period $2$. The resulting extension belongs to the Gevrey class $G^{\sigma+\tau-1}$ for any $1<\tau<2$, with explicit bounds on its derivatives that will be essential for the Fourier decay estimates in \cref{sec:Truncation_order_Four}.

\begin{theorem}[Existence of Gevrey Periodic Extension]\label{th:Gevrey_ext}
If a time-dependent Hamiltonian $H(s)\in \mathbb{C}^{d\times d}$ defined on $[0,1]$ belongs to the Gevrey-$\sigma$ class with $1\le \sigma<2$, i.e., there exist constants $C>0,D\geq 1 $ such that
\begin{equation}
\|H^{(n)}(s)\| \le C D^n (n!)^{\sigma}, \quad \text{for any }\; s\in[0,1]\;, n\geq 0,
\end{equation}
then for any $1<\tau<2$, $H(s)$ can be periodically extended to $\hat{H}(s)$ defined on $[0,2]$ so that the extension lies in $G^{\sigma+\tau-1}$ class. More precisely, $\hat{H}(s)$ satisfies
\begin{equation}
||\hat{H}(s)||\le C_{\text{\normalfont ex},1} \quad\text{for any }\; s\in[0,2],
\end{equation}
\begin{equation}
||\hat{H}^{(n)}(s)||\le C_{\text{\normalfont ex},2}A_1^{n}A_2^{n-1}(n!)^{\sigma-1+\tau}\quad\text{for any }\; s\in[0,2],\;n\ge 1,
\end{equation}
where
\begin{equation}\label{eq:def_of_C_and_A}
C_{\text{\normalfont ex},1}=2C,\;\;\;C_{\text{\normalfont ex},2}=2e^2 C,\;\;\;A_1=32e^{\sigma-1+1/e}D,\;\;\;A_2=\frac{8e^{\tau+1}}{\tau-1}; 
\end{equation}
and in addition
\begin{equation}
\hat{H}^{(n)}(0)=\hat{H}^{(n)}(2)\quad\text{for any }\;n\ge 0.
\end{equation}
\end{theorem}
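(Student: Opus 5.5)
The extension will be built from two Borel-type pieces glued on $[1,2]$, as in the introduction: $\hat H(s)=F_\tau(s-1)+G_\tau(s-2)$ for $1\le s\le 2$, with $\hat H=H$ on $[0,1]$ and period $2$. Here
\[
F_\tau(x)=\sum_{j\ge0}\frac{H^{(j)}(1)}{j!}x^j\chi_\tau(R_jx),\qquad G_\tau(x)=\sum_{j\ge0}\frac{H^{(j)}(0)}{j!}x^j\chi_\tau(R_jx),
\]
and $R_j=4e^{\sigma-1}D(j!)^{(\sigma-1)/j}$ as in \cref{co:matrix_Borel_lemma}.

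\textbf{Step 1: apply the Borel lemma to both endpoints.} Since $\|H^{(n)}(1)\|,\|H^{(n)}(0)\|\le CD^n(n!)^\sigma$, \cref{co:matrix_Borel_lemma} applies to each sequence. It gives $F_\tau^{(n)}(0)=H^{(n)}(1)$ and $G_\tau^{(n)}(0)=H^{(n)}(0)$, together with the bounds $\|F_\tau\|,\|G_\tau\|\le 2C$ and
\[
\|F_\tau^{(n)}\|,\|G_\tau^{(n)}\|\le 4CC_{\tau,1}A_1^n(2C_{\tau,2})^{n-1}(n!)^{\sigma-1+\tau}.
\]
Note that $2C_{\tau,2}=A_2$ and $4C_{\tau,1}=2e^2$, so $4CC_{\tau,1}=C_{\text{ex},2}$.

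\textbf{Step 2: the supports do not overlap, so the pieces never add.} Every term of $F_\tau$ is supported in $|x|\le 1/R_j$. Since $R_j\ge R_0=4e^{\sigma-1}D\ge 4$, the whole of $F_\tau$ is supported in $[-1/4,1/4]$; the same holds for $G_\tau$. On $[1,2]$, the variable $x=s-1$ for $F$ lies in $[0,1]$ and the variable $x=s-2$ for $G$ lies in $[-1,0]$. So $F_\tau(s-1)$ is nonzero only for $s\in[1,5/4]$, and $G_\tau(s-2)$ only for $s\in[7/4,2]$. The two terms therefore never overlap. At each point of $[1,2]$ the norm of $\hat H^{(n)}$ is that of a single piece, which gives $\|\hat H\|\le 2C=C_{\text{ex},1}$ and the stated $G^{\sigma+\tau-1}$ derivative bound with constant $C_{\text{ex},2}$. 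On $[0,1]$ the original bound $CD^n(n!)^\sigma$ is already smaller than these, since $C\le 2C$, $D\le A_1$, $A_2\ge1$, and $\sigma\le\sigma-1+\tau$.

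\textbf{Step 3: smoothness across the junctions.} At $s=1$, the one-sided derivatives from the left are $H^{(n)}(1)$. From the right they are $F_\tau^{(n)}(0)+G_\tau^{(n)}(-1)=H^{(n)}(1)+0$, because $G_\tau$ vanishes near $-1$. At $s=2$, the left derivatives are $F_\tau^{(n)}(1)+G_\tau^{(n)}(0)=H^{(n)}(0)=\hat H^{(n)}(0)$. Hence $\hat H^{(n)}(0)=\hat H^{(n)}(2)$ for all $n$. Since all one-sided derivatives agree at the junctions, the periodic function is $C^\infty$ and the bounds hold globally. Hermiticity is preserved because the $\chi$ factors are real.

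\textbf{Main obstacle.} The delicate point is the constant bookkeeping: the bounds have to be stated uniformly, and the $n=0$ case and the $n\ge1$ bounds must both be checked against the original $H$ on $[0,1]$. The support-separation argument in Step 2 is what avoids a factor of $2$ from summing the two pieces. If $D\ge1$ did not force $R_0\ge 2$, I would instead fall back to summing the two pieces and absorb the factor of $2$ into the constants.
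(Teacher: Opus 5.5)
Your proposal is correct and follows the paper's proof: the same construction $\hat H(s)=F_\tau(s-1)+G_\tau(s-2)$ on $[1,2]$, with the constants taken from \cref{co:matrix_Borel_lemma} and the same derivative matching at $s=1$ and $s=2$. Your support-separation observation ($R_j\ge 4e^{\sigma-1}D\ge 4$, so the two pieces live on $[1,5/4]$ and $[7/4,2]$) makes explicit what the paper tacitly uses when it bounds $\|\varphi^{(n)}\|$ by the maximum rather than the sum of the two pieces. Your comparison against the original bound on $[0,1]$ likewise fills in a step the paper leaves implicit.
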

\begin{proof} 
Define $\varphi: [0,1] \to \mathbb{C}^{d\times d}$ by
\begin{equation}
\varphi(s) =F_{\tau}(s) + G_{\tau}(s-1),
\end{equation}
where 
\begin{equation}
F_{\tau}(s) = \sum_{j=0}^\infty  \frac{H^{(j)}(1)}{j!}s^j\chi_\tau(R_j s) ,
\end{equation}
\begin{equation}
G_{\tau}(s) = \sum_{j=0}^\infty  \frac{H^{(j)}(0)}{j!}s^j\chi_\tau(R_j s) ,
\end{equation}
and $ \chi_\tau(s) \in G_0^\tau (\mathbb{R}) $ and $R_j$ are given in \cref{lem:cut_off function,th:Borel_lemma}. Since $\chi_{\tau}(s)$ is compactly supported on $[-1,1]$, we have that for any $n\geq0$,
\begin{equation}
\varphi^{(n)}(0)=F_{\tau}^{(n)}(0)=H^{(n)}(1),
\end{equation}
\begin{equation}
\varphi^{(n)}(1)=G_{\tau}^{(n)}(0)=H^{(n)}(0).
\end{equation}
By \cref{co:matrix_Borel_lemma} we have that for any $n\geq1$, $s\in[0,1]$,
\begin{align}
\|\varphi^{(n)}(s)\|\leq\| F_{\tau}^{(n)}(s) \|\vee\|G_{\tau}^{(n)}(t-1)\| \leq 4CC_{\tau,1}(32e^{\sigma-1+1/e}D)^{n}(2C_{\tau,2})^{n-1}(n!)^{\sigma-1+\tau}.
\end{align}
Recall that 
\begin{equation}
C_{\tau,1}=\frac{e^2}{2},\qquad C_{\tau,2}=\frac{4e^{\tau+1}}{\tau-1}.
\end{equation}
Then for any $n\geq1$, $t\in[0,1]$,
\begin{align}
||\varphi^{(n)}(s)||\leq 2e^2 C(32e^{\sigma-1+1/e}D)^{n}\left(\frac{8e^{\tau+1}}{\tau-1} \right)^{n-1}(n!)^{\sigma-1+\tau}.
\end{align}
For $n=0$ and $s\in[0,1]$, we have
\begin{equation}
||\varphi(s)||\leq 2C.
\end{equation}
Define $\hat{H}(s)=H(s)$ on $ s\in[0,1]$ and $\hat{H}(s)=\varphi(s-1)$ on $ s\in[1,2]$. Since the matrix values and all derivatives match at $s=1$, $\hat{H}(s) \in G^{\sigma-1+\tau}[0,2]$. Moreover,
\begin{equation}
\hat{H}^{(n)}(0)=\hat{H}^{(n)}(2)\quad\text{for any }\;n\ge 0.
\end{equation}
\end{proof}

\section{Truncation order for Fourier coefficients of Gevrey Hamiltonians}\label{sec:Truncation_order_Four}

In the preceding section, we constructed a periodic Gevrey extension of the original Hamiltonian $H(s)$. This extension enables the application of Floquet theory, which reformulates the time-dependent simulation problem as a time-independent one on the infinite-dimensional Floquet-Hilbert space $\mathbb{C}^{\mathbb{Z}}\otimes\mathscr{H}$. However, for practical quantum simulation, we must truncate this infinite space to a finite dimension, and the central question becomes how large the truncation order must be to achieve a desired simulation accuracy. 

The answer heavily depends on the decay rate of the Fourier coefficients of $\widetilde{H}(t)$. The faster these coefficients decay, the more aggressively we can truncate the Floquet space. For analytic periodic Hamiltonians, the Fourier coefficients decay exponentially, leading to a polylogarithmic truncation order. For the Gevrey-class Hamiltonians considered in this work, the decay is only subexponential, which nevertheless suffices to yield a nearly optimal query complexity.

This section is devoted to establishing the subexponential decay of the Fourier coefficients of Gevrey-class periodic Hamiltonians and determining the corresponding truncation order of the Floquet-Hilbert space. The main results are \cref{th:Four_decay_est}, which quantifies the decay rate, and \cref{th:Floquet-Hilbert_space_truncation}, which translates this decay into an explicit bound on the truncation error and hence the required truncation order $l_{\max}$.

\subsection{Sub-exponential decay of Fourier coefficients}\label{sec:Four_decay}
We begin by analyzing the Fourier coefficients of a Gevrey-class periodic Hamiltonian in this section. Firstly we establish the following elementary lemma, which provides a criterion for determining when the function $f(m)=\log m / m^{1/\varrho}$ falls below a given threshold, and will be used in \cref{th:Four_decay_est}.

In the following context we define $\varrho:=\sigma-1+\tau$. 
\begin{lemma}\label[lemma]{lem:m_up_est}
Let $\varrho > 1$ and $C > 0$. Define $f(m) = \dfrac{\text{\normalfont log}\, m}{m^{1/\varrho}}$ and let 
\begin{equation}
C_0(\varrho) = f(2) = \frac{\text{\normalfont log}\, 2}{2^{1/\varrho}}.
\end{equation}
Define 
\begin{equation}
m_{\text{\normalfont up}} = \left( \frac{\varrho}{C} \left( \text{\normalfont log}\frac{\varrho}{C} + \text{\normalfont log}\,\text{\normalfont log}\frac{\varrho}{C} + 1 \right) \right)^\varrho.
\end{equation}
If $C \le C_0(\varrho)$, then $f(m) < C$ for any $m \geq\lceil m_{\text{\normalfont up}} \rceil $.
\end{lemma}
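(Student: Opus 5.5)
The plan is to remove the exponent $1/\varrho$ by substitution. Write $m=x^{\varrho}$ with $x=m^{1/\varrho}>0$. Then $f(m)=\varrho\log x/x$, so $f(m)<C$ is equivalent to
\begin{equation*}
g(x):=x-L\log x>0,\qquad L:=\frac{\varrho}{C}.
\end{equation*}
In these variables the threshold is $m_{\mathrm{up}}=x_{\mathrm{up}}^{\varrho}$ with $x_{\mathrm{up}}=L(\log L+\log\log L+1)$. Since $m\mapsto m^{1/\varrho}$ is increasing, it suffices to show $g(x)>0$ for every $x\ge x_{\mathrm{up}}$; this covers all $m\ge\lceil m_{\mathrm{up}}\rceil\ge m_{\mathrm{up}}$.

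The first step is to use the hypothesis $C\le C_0(\varrho)$ to make sure $L$ is large enough for $x_{\mathrm{up}}$ to make sense. We have
\begin{equation*}
L=\frac{\varrho}{C}\ge\frac{\varrho\,2^{1/\varrho}}{\log 2}.
\end{equation*}
The derivative of $\varrho\mapsto\varrho\,2^{1/\varrho}$ is $2^{1/\varrho}(1-\log 2/\varrho)$, which is positive for $\varrho>1$. Hence this function is increasing there, and
\begin{equation*}
L>\frac{2}{\log 2}>e.
\end{equation*}
So $u:=\log L>1$ and $\log\log L>0$, and therefore $x_{\mathrm{up}}$ is well defined with $x_{\mathrm{up}}>L$. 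This is the only place the hypothesis $C\le C_0(\varrho)$ is used.

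The second step is monotonicity. Since $g'(x)=1-L/x>0$ for $x>L$ and $x_{\mathrm{up}}>L$, the function $g$ is increasing on $[x_{\mathrm{up}},\infty)$. It therefore suffices to prove $g(x_{\mathrm{up}})>0$.

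The third step, the main computation, is to check $g(x_{\mathrm{up}})>0$ directly:
\begin{equation*}
g(x_{\mathrm{up}})=L\bigl(\log L+\log\log L+1-\log L-\log(\log L+\log\log L+1)\bigr)=L\bigl(1+\log u-\log(u+\log u+1)\bigr).
\end{equation*}
This is positive if and only if $e\,u>u+\log u+1$, that is, $(e-1)u>\log u+1$ for $u>1$. Let $h(u)=(e-1)u-\log u-1$. At $u=1$ we get $h(1)=e-2>0$. Moreover $h'(u)=(e-1)-1/u>0$ for $u\ge 1$, so $h(u)>0$ for all $u\ge 1$. This gives the strict inequality $f(m)<C$ for every integer $m\ge\lceil m_{\mathrm{up}}\rceil$.

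I expect the main obstacle to be the lower bound $L>e$ in the first step, since everything else is elementary. It needs the monotonicity of $\varrho\,2^{1/\varrho}$ on $\varrho>1$, and the rest of the argument rests on it.
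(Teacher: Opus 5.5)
Your proof is correct and follows the paper's argument essentially step for step. Both reduce $f(m_{\mathrm{up}})<C$, via $u=\log(\varrho/C)>1$, to $(e-1)u>\log u+1$, and then use monotonicity beyond the threshold. The only cosmetic differences are in two places: the paper gets $\varrho/C>e$ directly from $C_0(\varrho)=f(2)<\max f=\varrho/e$ rather than from the monotonicity of $\varrho\,2^{1/\varrho}$, and it uses that $f$ itself decreases for $m>e^{\varrho}$ instead of your increasing function $g(x)=x-L\log x$ on $x>L$.
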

\begin{proof}   
Let $t = \dfrac{\varrho}{C}$. Since $C \le C_0(\varrho) < \dfrac{\varrho}{e}$ (because the maximum of $f(m)$ is $\varrho/e$), we have $t > e$. Now
\begin{equation}
m_{\text{up}}^{1/\varrho} = t \left( \text{log}\, t + \text{log}\,\text{log}\, t + 1 \right).
\end{equation}
Compute $f(m_{\text{up}})$:
\begin{align}
f(m_{\text{up}}) 
&= \frac{\text{log}\, m_{\text{up}}}{m_{\text{up}}^{1/\varrho}} 
= \frac{\varrho \,\text{log}\!\left( t (\text{log}\, t + \text{log}\,\text{log}\, t + 1) \right)}{t (\text{log}\, t + \text{log}\,\text{log}\, t + 1)} \notag \\
&= C \cdot \frac{ \text{log}\!\left( t (\text{log}\, t + \text{log}\,\text{log}\, t + 1) \right) }{ \text{log}\, t + \text{log}\,\text{log}\, t + 1 }.
\end{align}
It suffices to show that
\begin{equation}
\frac{ \text{log}\!\left( t (\text{log}\, t + \text{log}\,\text{log}\, t + 1) \right) }{ \text{log}\, t + \text{log}\,\text{log}\, t + 1 } < 1,
\end{equation}
which is equivalent to
\begin{equation}
\text{log}\!\left( t (\text{log}\, t + \text{log}\,\text{log}\, t + 1) \right) < \text{log}\, t + \text{log}\,\text{log}\, t + 1.
\end{equation}
Expanding the left-hand side,
\begin{equation}
\text{log}\, t + \text{log}\!\left( \text{log}\, t + \text{log}\,\text{log}\, t + 1 \right) < \text{log}\, t + \text{log}\,\text{log}\, t + 1,
\end{equation}
so we need
\begin{equation}
\text{log}\!\left( \text{log}\, t + \text{log}\,\text{log}\, t + 1 \right) < \text{log}\,\text{log}\, t + 1.
\end{equation}
Set $u = \text{log}\, t$; because $t > e$, we have $u > 1$. Then the inequality becomes
\begin{equation}
\text{log}\!\left( u + \text{log}\, u + 1 \right) < \text{log}\, u + 1.
\end{equation}
Exponentiating both sides,
\begin{equation}
u + \text{log}\, u + 1 < u e,
\end{equation}
or equivalently
\begin{equation}
\text{log}\, u + 1 < u(e - 1).
\end{equation}
Define $g(u) = u(e-1) - \text{log}\, u - 1$. Its derivative is
\begin{equation}
g'(u) = (e-1) - \frac{1}{u}.
\end{equation}
For $u > 1$, $g'(u) > 0$; thus $g(u)$ is increasing on $[1,\infty)$. Since $g(1) = e - 2 > 0$, we have $g(u) > 0$ for all $u \ge 1$. Hence the inequality holds for every $u \ge 1$, and therefore
\begin{equation}
f(m_{\text{up}}) < C.
\end{equation}
Next, we verify that $m_{\text{up}} > e^{\varrho}$. Because $t > e$,
\begin{equation}
m_{\text{up}}^{1/\varrho} = t\left( \text{log}\, t + \text{log}\,\text{log}\, t + 1 \right) > e \cdot (1 + 0 + 1) = 2e > e,
\end{equation}
so indeed $m_{\text{up}} > e^{\varrho}$. Consequently, $f(m)$ is strictly decreasing for $m > e^{\varrho}$. Hence for any $m \ge m_{\text{up}}$ we have $f(m) \le f(m_{\text{up}}) < C$. Since $m \geq \lceil m_{\text{up}} \rceil  \geq m_{\text{up}}$, we obtain $f(m) < C$. Thus the chosen $m$ indeed satisfies the desired inequality.
\end{proof}

We now state and prove the main estimate on the Fourier coefficients of a Gevrey-class periodic Hamiltonian. The following theorem shows that if extended $\hat{H}(s)$ belongs to the Gevrey-$\varrho$ class with explicitly bounded derivatives, then the Fourier coefficients of its time-scaled version $\widetilde{H}(t)=\hat{H}(\frac{t}{T})$ decay subexponentially as $he^{-c|m|^{1/\varrho}}$. This decay rate is a direct consequence of the Gevrey regularity and is the key property that enables the efficient truncation of the Floquet-Hilbert space.

\begin{theorem}\label{th:Four_decay_est}
Let $\hat{H}(s)$ be a $2$-periodic Hamiltonian belonging to the Gevrey-$\varrho$ class ($ \varrho>1$). More precisely, there exist constants $C>0, A_1,A_2 > 1$ such that
\begin{equation}
\|\hat{H}(s)\| \le C_{\text{\normalfont ex},1}\quad\;\;\forall s\in[0,2];
\end{equation}
\begin{equation}
\|\hat{H}^{(n)}(s)\| \le C_{\text{\normalfont ex},2} A_1^nA_2^{n-1} (n!)^{\varrho} \quad \forall s\in[0,2]\;, n\geq 1.
\end{equation}
Define $A=A_1A_2$, and $\widetilde{H}(t)=\hat{H}(\frac{t}{T})\;(t\in[0,2T]\;,T>0)$ , hence $\widetilde{H}(t)$ is  $2T$-periodic , its Fourier coefficients are
\begin{equation}
\widetilde{H}_m = \frac{1}{2T} \int_0^{2T} \widetilde{H}(t) e^{-i  m\omega t } \, dt \quad \forall m \in \mathbb{Z},
\end{equation}
where $\omega=\frac{\pi}{T}$. Then there exists $h=h(\varrho,A_1,A_2,C_{\text{\normalfont ex},1},C_{\text{\normalfont ex},2})>0$ only depends on $\varrho,A_1,A_2,C_{\text{\normalfont ex},1},C_{\text{\normalfont ex},2}$ and $\zeta=\zeta(\varrho,A)>0 $ only depends on $\varrho,A$ such that
\begin{equation}
||\widetilde{H}_m|| \leq he^{-\frac{|m|^{\frac{1}{\varrho}}}{\zeta}}\quad \forall m \in \mathbb{Z}.
\end{equation}
\end{theorem}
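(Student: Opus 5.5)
The plan is the standard integration-by-parts argument, followed by optimizing over the number of integrations. First I change variables $s=t/T$, so that
\begin{equation*}
\widetilde{H}_m=\frac{1}{2}\int_0^2\hat{H}(s)e^{-i m\pi s}\,ds .
\end{equation*}
The frequency $\omega$ and the time $T$ drop out entirely. This is why $h$ and $\zeta$ can be independent of $T$.

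For $m=0$ the trivial bound $\|\widetilde{H}_0\|\le C_{\text{ex},1}$ suffices. For $m\neq 0$ I integrate by parts $n$ times. All boundary terms cancel because $\hat{H}$ is $2$-periodic with $\hat{H}^{(k)}(0)=\hat{H}^{(k)}(2)$ for every $k$, which is exactly the property from \cref{th:Gevrey_ext}. This gives, for every $n\ge 1$,
\begin{equation*}
\|\widetilde{H}_m\|\le \frac{\sup_s\|\hat{H}^{(n)}(s)\|}{(\pi|m|)^n}\le C_{\text{ex},2}\frac{A^n (n!)^{\varrho}}{(\pi |m|)^n}\le C_{\text{ex},2}\left(\frac{A n^{\varrho}}{\pi|m|}\right)^n ,
\end{equation*}
using $A_2^{n-1}\le A_2^n$ and $n!\le n^n$.

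Next I choose $n$ to nearly minimize the right-hand side. Take $n=\lfloor (\pi|m|/(eA))^{1/\varrho}\rfloor$, so that $An^{\varrho}/(\pi|m|)\le e^{-\varrho}$. The bound then becomes $C_{\text{ex},2}e^{-\varrho n}$. Since $n\ge (\pi|m|/(eA))^{1/\varrho}-1$, this yields
\begin{equation*}
\|\widetilde{H}_m\|\le C_{\text{ex},2}e^{\varrho}\exp\!\left(-\varrho\,(\pi/(eA))^{1/\varrho}|m|^{1/\varrho}\right).
\end{equation*}
So $\zeta=\varrho^{-1}(eA/\pi)^{1/\varrho}$, which depends only on $\varrho$ and $A$, as the statement requires. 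This choice needs $n\ge1$, i.e.\ $|m|\ge eA/\pi$.

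For the finitely many $m$ with $1\le |m|< eA/\pi$ the choice above fails. For these I use the trivial bound $C_{\text{ex},1}$ and absorb it by enlarging $h$ to $h=\max\{C_{\text{ex},1}e^{(eA/\pi)^{1/\varrho}/\zeta},\,C_{\text{ex},2}e^{\varrho}\}$. This $h$ depends only on the allowed parameters.

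The main obstacle is bookkeeping rather than anything conceptual. Two points need care. First, $n$ is an integer, so the optimization loses a factor $e^{\varrho}$, which has to be absorbed into $h$. Second, the paper may prefer the cleaner Stirling form $(n!)^{\varrho}\le (n/e)^{\varrho n}\cdot(\text{poly})$; the crude bound $n!\le n^n$ avoids polynomial prefactors and only changes constants. \cref{lem:m_up_est} would be relevant if one instead bounded via $\log m/m^{1/\varrho}$, but I do not need it here.
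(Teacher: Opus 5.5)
\textbf{A false step, easily repaired.} Your choice $n=\lfloor(\pi|m|/(eA))^{1/\varrho}\rfloor$ does not give the inequality you claim. It yields $n^{\varrho}\le \pi|m|/(eA)$, hence only $An^{\varrho}/(\pi|m|)\le e^{-1}$, not $e^{-\varrho}$. Since $\varrho>1$, your inequality is strictly stronger and can fail: equality $An^{\varrho}/(\pi|m|)=e^{-1}$ holds whenever $(\pi|m|/(eA))^{1/\varrho}$ is an integer. So your value $\zeta=\varrho^{-1}(eA/\pi)^{1/\varrho}$ is not justified.

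The theorem is unaffected, and there are two easy fixes.
\begin{itemize}
\item[(i)] Keep your $n$ and use the ratio $e^{-1}$. This gives
\[
\|\widetilde{H}_m\|\le C_{\mathrm{ex},2}\,e^{-n}\le e\,C_{\mathrm{ex},2}\exp\bigl(-(\pi|m|/(eA))^{1/\varrho}\bigr),\qquad \zeta=(eA/\pi)^{1/\varrho}.
\]
\item[(ii)] Take the actual minimizer $n=\lfloor(\pi|m|/(e^{\varrho}A))^{1/\varrho}\rfloor$ for $|m|\ge e^{\varrho}A/\pi$. This does give ratio $\le e^{-\varrho}$ and $\zeta=\frac{e}{\varrho}(A/\pi)^{1/\varrho}$.
\end{itemize}
In either case your treatment of the remaining small $|m|$ (trivial bound $C_{\mathrm{ex},1}$, absorbed into $h$) works verbatim. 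You get $h=e\max\{C_{\mathrm{ex},1},C_{\mathrm{ex},2}\}$ in (i) and $h=e^{\varrho}\max\{C_{\mathrm{ex},1},C_{\mathrm{ex},2}\}$ in (ii). Both depend only on the allowed parameters.

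\textbf{Comparison with the paper.} With this correction your argument follows the paper's strategy: integrate by parts $n$ times using periodicity, take $n\asymp(\pi|m|/A)^{1/\varrho}$, and handle small $|m|$ separately. The difference is in how $(n!)^{\varrho}$ is bounded.
\begin{itemize}
\item The paper uses Stirling with $n=\lfloor(\pi|m|/A)^{1/\varrho}\rfloor$. This gains the full factor $e^{-\varrho n}$ but leaves a prefactor $n^{\varrho/2}$. The paper absorbs it by giving up half the exponent via \cref{lem:m_up_est}, ending with $\zeta=\frac{2}{\varrho}(A/\pi)^{1/\varrho}$ and four regime-dependent constants $h_1,\dots,h_4$.
\item Your $n!\le n^n$ avoids both the prefactor and the lemma. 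The price is a somewhat larger $\zeta$, which is irrelevant for the existence statement.
\end{itemize}

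What the paper's heavier bookkeeping buys is the piecewise estimate \eqref{eq:decay_of_H_m}. It has two ingredients:
\begin{itemize}
\item a one-integration bound $C_{\mathrm{ex},2}A_1/(\pi|m|)$ on $0<|m|<A/\pi$, where $A_1$ does not depend on $\tau$;
\item prefactors carrying negative powers of $A_2$ for larger $|m|$.
\end{itemize}
\cref{th:transition_rate} uses these so that $\beta$ grows only polylogarithmically in $A=\mathcal{O}(1/(\tau-1))$. A single uniform $h$, as in your proof, would make $\beta$ of order $\max\{C_{\mathrm{ex},1},C_{\mathrm{ex},2}\}A$ and cost extra logarithmic factors downstream. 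Your argument nonetheless proves the theorem as stated.
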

\begin{proof}  
Given $m \neq0$, for any integer $n\geq1$, integration by parts $n$ times (boundary terms cancel due to periodicity) gives
\begin{equation}
\widetilde{H}_m  =  \left( \frac{1}{i m\omega} \right)^n \frac{1}{2T}\int_0^{2T} \widetilde{H}^{(n)}(t) e^{-i  m\omega t } \, dt =  \left( \frac{1}{im\omega T} \right)^n\frac{1}{2T} \int_0^{2T} \hat{H}^{(n)}\left(\frac{t}{T}\right) e^{-i  m\omega t } \, dt.
\end{equation}
Hence
\begin{equation}
||\widetilde{H}_m|| \leq \left( \frac{1}{|m|\omega T} \right)^n\frac{1}{2T} \int_0^{2T} ||\hat{H}^{(n)}\left(\frac{t}{T}\right)||  \, dt\leq \frac{C_{\text{ex},2}A^n(n!)^{\varrho}}{A_2(|m|\pi)^n}.
\end{equation}
Stirling's inequality states that for $n\geq1$,
\begin{equation}
n! < \sqrt{2\pi n} \left( \frac{n}{e} \right)^n e^{\frac{1}{12n}}< 2\sqrt{2\pi n} \left( \frac{n}{e} \right)^n.
\end{equation}
For any $|m|\geq\left\lceil  \frac{A}{\pi} \left( \text{log}(\frac{A}{\pi})^{\frac{1}{\varrho}} + \text{log}\,\text{log}(\frac{A}{\pi})^{\frac{1}{\varrho}} + 1 \right)^\varrho \right\rceil$, let $n=\left\lfloor(\frac{\pi|m|}{A})^{\frac{1}{\varrho}}\right\rfloor\geq 1$. We have $\frac{A n^{\varrho}}{\pi |m|} \leq 1$ and then
\begin{align}
||\widetilde{H}_m||
&\leq 2(2\pi)^{\frac{\varrho}{2}}A_2^{-1}C_{\text{ex},2}n^{\frac{\varrho}{2}}\left( \frac{An^{\varrho}}{\pi e^{\varrho}|m|} \right)^n \notag \\
&\leq 
2(2\pi)^{\frac{\varrho}{2}}A_2^{-1}C_{\text{ex},2}n^{\frac{\varrho}{2}}e^{-\varrho n} \notag \\
&= 2(2\pi)^{\frac{\varrho}{2}}A_2^{-1}C_{\text{ex},2}\cdot\text{exp}(-\varrho(n-\frac{1}{2}\text{log}n)) \notag \\
&\leq 2(2\pi)^{\frac{\varrho}{2}}A_2^{-1}C_{\text{ex},2}\cdot\text{exp}\left(-\varrho\left(\left(\frac{\pi}{A}\right)^{\frac{1}{\varrho}}|m|^{\frac{1}{\varrho}}-\frac{1}{2\varrho}\text{log}|m|-\frac{1}{2\varrho}\text{log}\left(\frac{\pi}{A}\right)-1\right)\right) \notag \\
&\leq 2(2\pi)^{\frac{\varrho}{2}}A_2^{-1}C_{\text{ex},2}\cdot\text{exp}\left(-\varrho\left(\frac{1}{2}\left(\frac{\pi}{A}\right)^{\frac{1}{\varrho}}|m|^{\frac{1}{\varrho}}-\frac{1}{2\varrho}\text{log}\left(\frac{\pi}{A}\right)-1\right)\right) \notag \\
&= 2^{1+\frac{\varrho}{2}}\pi^{\frac{1+\varrho}{2}}e^{\varrho}A_2^{-\frac{3}{2}}A_1^{-\frac{1}{2}}C_{\text{ex},2}\cdot\text{exp}\left(-\frac{\varrho}{2}\left(\frac{\pi}{A}\right)^{\frac{1}{\varrho}}|m|^{\frac{1}{\varrho}}\right).
\end{align}
In the last inequality we used the fact
\begin{equation}
\frac{\text{log}|m|}{|m|^{\frac{1}{\varrho}}}\leq\varrho\left(\frac{\pi}{A}\right)^{\frac{1}{\varrho}},
\end{equation}
which followed by \cref{lem:m_up_est}. For $\left\lceil\frac{A}{\pi}\right\rceil\leq |m|<\left\lceil  \frac{A}{\pi} \left( \text{log}(\frac{A}{\pi})^{\frac{1}{\varrho}} + \text{log}\,\text{log}(\frac{A}{\pi})^{\frac{1}{\varrho}} + 1 \right)^\varrho \right\rceil$, still let $n=\left\lfloor(\frac{\pi|m|}{A})^{\frac{1}{\varrho}}\right\rfloor\geq 1$, we have
\begin{align}
||\widetilde{H}_m|| 
&\leq 2(2\pi)^{\frac{\varrho}{2}}A_2^{-1}C_{\text{ex},2}\cdot\text{exp}\left(-\varrho\left(\left(\frac{\pi}{A}\right)^{\frac{1}{\varrho}}|m|^{\frac{1}{\varrho}}-\frac{1}{2\varrho}\text{log}|m|-\frac{1}{2\varrho}\text{log}\left(\frac{\pi}{A}\right)-1\right)\right) \notag \\
&\leq 2^{1+\frac{\varrho}{2}}\pi^{\frac{1+\varrho}{2}}e^{\varrho}A_2^{-\frac{3}{2}}A_1^{-\frac{1}{2}}C_{\text{ex},2}|m|^{\frac{1}{2}}\cdot\text{exp}\left(-\frac{\varrho}{2}\left(\frac{\pi}{A}\right)^{\frac{1}{\varrho}}|m|^{\frac{1}{\varrho}}\right) \notag \\
&\leq 2^{1+\frac{\varrho}{2}}\pi^{\frac{1+\varrho}{2}}e^{\varrho}A_2^{-\frac{3}{2}}A_1^{-\frac{1}{2}}C_{\text{ex},2}\left(\left\lceil  \frac{A}{\pi} \left( \text{log}(\frac{A}{\pi})^{\frac{1}{\varrho}} + \text{log}\,\text{log}(\frac{A}{\pi})^{\frac{1}{\varrho}} + 1 \right)^\varrho \right\rceil\right)^{\frac{1}{2}}\cdot\text{exp}\left(-\frac{\varrho}{2}\left(\frac{\pi}{A}\right)^{\frac{1}{\varrho}}|m|^{\frac{1}{\varrho}}\right).
\end{align}
For $ 0<|m|<\left\lceil \frac{A}{\pi} \right\rceil $, we have
\begin{align}
||\widetilde{H}_m|| 
&\leq \frac{C_{\text{ex},2}A_1}{|m|\pi}\cdot 1 \notag \\
&= \frac{C_{\text{ex},2}A_1}{|m|\pi}\cdot\text{exp}\left(\frac{\varrho}{2}\left(\frac{\pi}{A}\right)^{\frac{1}{\varrho}}|m|^{\frac{1}{\varrho}}\right)\text{exp}\left(-\frac{\varrho}{2}\left(\frac{\pi}{A}\right)^{\frac{1}{\varrho}}|m|^{\frac{1}{\varrho}}\right) \notag \\
&\leq \frac{C_{\text{ex},2}A_1}{|m|\pi}\cdot\text{exp}\left(\frac{\varrho}{2}\left(\frac{\pi}{A}\right)^{\frac{1}{\varrho}}\left( \frac{A}{\pi} \right)^{\frac{1}{\varrho}}\right)\text{exp}\left(-\frac{\varrho}{2}\left(\frac{\pi}{A}\right)^{\frac{1}{\varrho}}|m|^{\frac{1}{\varrho}}\right) \notag \\
&\leq \frac{C_{\text{ex},2}A_1}{|m|\pi}\cdot\text{exp}\left(\frac{\varrho}{2}\right)\text{exp}\left(-\frac{\varrho}{2}\left(\frac{\pi}{A}\right)^{\frac{1}{\varrho}}|m|^{\frac{1}{\varrho}}\right).
\end{align}
For $m=0$ we have $||\widetilde{H}_m||\leq C_{\text{ex},1}$. Consequently, let 
\begin{equation}
h_1=h_1(\varrho,A_1,A_2,C_{\text{ex},2})=2^{1+\frac{\varrho}{2}}\pi^{\frac{1+\varrho}{2}}e^{\varrho}A_2^{-\frac{3}{2}}A_1^{-\frac{1}{2}}C_{\text{ex},2},
\end{equation}
\begin{equation}
h_2=h_2(\varrho,A_1,A_2,C_{\text{ex},2})=2^{1+\frac{\varrho}{2}}\pi^{\frac{1+\varrho}{2}}e^{\varrho}A_2^{-\frac{3}{2}}A_1^{-\frac{1}{2}}C_{\text{ex},2}\left(\left\lceil  \frac{A}{\pi} \left( \text{log}(\frac{A}{\pi})^{\frac{1}{\varrho}} + \text{log}\,\text{log}(\frac{A}{\pi})^{\frac{1}{\varrho}} + 1 \right)^\varrho \right\rceil\right)^{\frac{1}{2}},
\end{equation}
\begin{equation}
h_3=h_3(\varrho,A_1,C_{\text{ex},2})=\frac{C_{\text{ex},2}}{\pi}A_1\cdot\text{exp}\left(\frac{\varrho}{2}\right),
\end{equation}
\begin{equation}
h_4=h_4(C_{\text{ex},1})=C_{\text{ex},1},
\end{equation}
and
\begin{equation}
\zeta=\zeta(\varrho,A)=\frac{2}{\varrho}\left(\frac{A}{\pi}\right)^{\frac{1}{\varrho}}.
\end{equation}
Then
\begin{equation}\label{eq:decay_of_H_m}
||\widetilde{H}_m||\leq
\begin{cases}
h_1e^{-\frac{|m|^{\frac{1}{\varrho}}}{\zeta}}, & |m|\geq\left\lceil  \frac{A}{\pi} \left( \text{log}(\frac{A}{\pi})^{\frac{1}{\varrho}} + \text{log}\,\text{log}(\frac{A}{\pi})^{\frac{1}{\varrho}} + 1 \right)^\varrho \right\rceil,\\[4mm]
h_2e^{-\frac{|m|^{\frac{1}{\varrho}}}{\zeta}}, & \left\lceil\frac{A}{\pi}\right\rceil\leq |m|<\left\lceil  \frac{A}{\pi} \left( \text{log}(\frac{A}{\pi})^{\frac{1}{\varrho}} + \text{log}\,\text{log}(\frac{A}{\pi})^{\frac{1}{\varrho}} + 1 \right)^\varrho \right\rceil,\\[4mm]
h_3\frac{1}{|m|}e^{-\frac{|m|^{\frac{1}{\varrho}}}{\zeta}}\leq h_3e^{-\frac{|m|^{\frac{1}{\varrho}}}{\zeta}}, & 0<|m|<\left\lceil \frac{A}{\pi} \right\rceil,\\[4mm]
h_4e^{-\frac{|m|^{\frac{1}{\varrho}}}{\zeta}}, & m=0.
\end{cases}
\end{equation}
Let 
\begin{equation}
h=h(\varrho,A_1,A_2,C_{\text{ex},1},C_{\text{ex},2})=h_1\vee h_2\vee h_3\vee h_4,
\end{equation}
then we have
\begin{equation}
||\widetilde{H}_m|| \leq he^{-\frac{|m|^{\frac{1}{\varrho}}}{\zeta}}\quad\forall  m \in \mathbb{Z}.
\end{equation}
\end{proof}
Even though we could use $||\widetilde{H}_m|| \leq he^{-\frac{|m|^{\frac{1}{\varrho}}}{\zeta}}$ as an estimator, we want to use piece-wise estimation of $||\widetilde{H}_m||$ in \cref{eq:decay_of_H_m} to get a better complexity order latter.

\subsection{Truncation order analysis}\label{sec:Error_analysis}
We want to determine the proper truncation order $ l_{\max} $ for the Floquet-Hilbert space. To this aim, we begin with deriving the Lieb-Robinson bound on the transition rate. First we will need the following lemma in the subsequent proof.
\begin{lemma}\label[lemma]{lam:est_of_Szeta}
Fix $\varrho > 1$, for the sum
\begin{equation}
S(\zeta) = \sum_{m=0}^{\infty} e^{-m^{1/\varrho}/\zeta}, \quad \zeta > 0,
\end{equation}
we have
\begin{equation}
\Gamma(\varrho+1)\,\zeta^{\varrho} \;\le\; S(\zeta) \;\le\; 1 + \Gamma(\varrho+1)\,\zeta^{\varrho}  \quad \text{for any}\;\zeta>0.
\end{equation}
\end{lemma}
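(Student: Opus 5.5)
The plan is an integral comparison. The summand $g(x)=e^{-x^{1/\varrho}/\zeta}$ is positive and strictly decreasing on $[0,\infty)$, with $g(0)=1$. Its integral is computed by the substitution $u=x^{1/\varrho}/\zeta$, i.e.\ $x=(\zeta u)^{\varrho}$, $dx=\varrho\zeta^{\varrho}u^{\varrho-1}\,du$:
\begin{equation*}
I(\zeta):=\int_0^\infty e^{-x^{1/\varrho}/\zeta}\,dx=\varrho\,\zeta^{\varrho}\int_0^\infty u^{\varrho-1}e^{-u}\,du=\varrho\,\Gamma(\varrho)\,\zeta^{\varrho}=\Gamma(\varrho+1)\,\zeta^{\varrho}.
\end{equation*}
Since the integral is finite, the series $S(\zeta)$ converges, and no assumption on $\zeta$ beyond $\zeta>0$ is needed.

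For the lower bound, monotonicity gives $g(m)\ge g(x)$ for $x\in[m,m+1]$. Summing over $m\ge 0$ yields
\begin{equation*}
S(\zeta)=\sum_{m=0}^\infty g(m)\ge\sum_{m=0}^\infty\int_m^{m+1}g(x)\,dx=I(\zeta)=\Gamma(\varrho+1)\zeta^{\varrho}.
\end{equation*}

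For the upper bound, split off the $m=0$ term, which equals $1$. For $m\ge1$, monotonicity gives $g(m)\le\int_{m-1}^{m}g(x)\,dx$, so
\begin{equation*}
\sum_{m=1}^\infty g(m)\le\int_0^\infty g(x)\,dx=\Gamma(\varrho+1)\zeta^{\varrho},
\end{equation*}
and hence $S(\zeta)\le 1+\Gamma(\varrho+1)\zeta^{\varrho}$.

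There is no real obstacle here. The only points to check are the Gamma-function substitution, including the identity $\varrho\Gamma(\varrho)=\Gamma(\varrho+1)$, and that the $m=0$ term is handled separately in the upper bound. Neither bound uses $\varrho>1$ beyond $\varrho>0$; that hypothesis is simply kept for consistency with the paper's setting.
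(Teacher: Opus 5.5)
Your proposal is correct and follows essentially the same route as the paper: you use the monotone integral comparison $\int_0^\infty g \le S(\zeta) \le 1 + \int_0^\infty g$ and evaluate the integral via the substitution leading to $\varrho\,\Gamma(\varrho)\,\zeta^{\varrho}=\Gamma(\varrho+1)\,\zeta^{\varrho}$. The differences are cosmetic: you do the substitution in one step rather than two, and you bound term-by-term rather than via partial sums with $N\to\infty$.
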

\begin{proof}  
Define the function $f(x) = e^{-x^{1/\varrho}/\zeta}$ for $x \ge 0$.
Since $f$ is decreasing, for any integer $N \ge 1$ we have
\begin{equation}
\int_{0}^{N} f(x)\,dx \le \sum_{m=0}^{N-1} f(m) \le f(0) + \int_{0}^{N-1} f(x)\,dx.
\end{equation}
Letting $N \to \infty$ yields
\begin{equation}\label{eq:medium_est_of_Szeta}
\int_{0}^{\infty} f(x)\,dx \le S(\zeta) \le 1 + \int_{0}^{\infty} f(x)\,dx.
\end{equation}
Now evaluate the integral
\begin{equation}
I(\zeta) := \int_{0}^{\infty} e^{-x^{1/\varrho}/\zeta}\,dx.
\end{equation}
Perform the substitution $u = x^{1/\varrho}$, so $x = u^{\varrho}$ and $dx = \varrho u^{\varrho-1}\,du$:
\begin{equation}
I(\zeta) = \int_{0}^{\infty} e^{-u/\zeta}\,\varrho u^{\varrho-1}\,du
        = \varrho \int_{0}^{\infty} u^{\varrho-1} e^{-u/\zeta}\,du.
\end{equation}
Next, set $v = u/\zeta$, hence $u = \zeta v$, $du = \zeta\,dv$:
\begin{align}
I(\zeta) &= \varrho \int_{0}^{\infty} (\zeta v)^{\varrho-1} e^{-v}\,\zeta\,dv \notag \\
        &= \varrho \zeta^{\varrho} \int_{0}^{\infty} v^{\varrho-1} e^{-v}\,dv \notag \\
        &= \varrho \zeta^{\varrho}\,\Gamma(\varrho) \notag \\
        &= \Gamma(\varrho+1)\,\zeta^{\varrho},
\end{align}
where we used the Gamma function identity $\varrho\Gamma(\varrho)=\Gamma(\varrho+1)$.
Insert this result into \cref{eq:medium_est_of_Szeta}:
\begin{equation}
\Gamma(\varrho+1)\,\zeta^{\varrho} \;\le\; S(\zeta) \;\le\; 1 + \Gamma(\varrho+1)\,\zeta^{\varrho}.
\end{equation}
\end{proof}
Consequently, as $\zeta \to \infty$,
\begin{equation}
S(\zeta) \;\sim\; \Gamma(\varrho+1)\,\zeta^{\varrho},
\end{equation}
which means that $S(\zeta)$ grows as $\zeta^{\varrho}$.

\begin{theorem}[Bound on transition rate]\label{th:transition_rate}
We assume $\|\widetilde{H}_m\| $ can be bounded as in \cref{sec:Periodic_Extension} and \cref{sec:Four_decay}. Define
\begin{equation}
D^{l_{\max}} = \{-l_{\max} + 1, -l_{\max} + 2, \ldots, l_{\max}\} \subset \mathbb{Z}.
\end{equation}
Then, for $ l, l' $ such that $ |l|, |l'| \leq l_{\max} $, the transition rate is bounded from above by
\begin{equation}
\left\| \langle l| e^{-i\mathscr{H}_{\text{\normalfont  eff}}^{l_{\max}}t} |l'\rangle \right\| \leq \exp\left(2\beta  t -\frac{|l-l'|^{\frac{1}{\varrho}}}{2\zeta}\right).
\end{equation}
Here, $\beta$ is a positive constant defined by 
\begin{equation}\label{eq:def_of_bata_in_l_max}
\beta = h_2\sum_{m\geq 1}e^{-m^{\frac{1}{\varrho}}/(2\zeta)}+h_3(\text{\normalfont log}(\left\lceil  A/\pi \right\rceil-1)+1)+h_4.
\end{equation}
and
\begin{equation}
\mathscr{H}_{\text{\normalfont eff}}^{l_{\max}} = \sum_{l \in D^{l_{\max}}} |l\rangle \langle l| \otimes (\widetilde{H}_0 - l\omega) + \sum_{l \in D^{l_{\max}}} \sum_{\substack{m \neq 0; \\ l+m \in D^{l_{\max}}}} |l\rangle \langle l+m| \otimes \widetilde{H}_m.
\end{equation}
\end{theorem}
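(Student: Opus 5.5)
The plan is the standard exponential-weight (Combes–Thomas type) argument, with the exponential weight replaced by a sub-exponential one adapted to the Gevrey decay. Fix $l'$ with $|l'|\le l_{\max}$ and define the diagonal weight operator
\begin{equation}
W=\sum_{l\in D^{l_{\max}}} e^{f(l)}\,|l\rangle\langle l|\otimes I,\qquad f(l)=\frac{|l-l'|^{1/\varrho}}{2\zeta}.
\end{equation}
Since $f(l')=0$, we can write
\begin{equation}
\langle l|e^{-i\mathscr{H}^{l_{\max}}_{\text{eff}}t}|l'\rangle=e^{-f(l)}\,\langle l|\,W e^{-i\mathscr{H}^{l_{\max}}_{\text{eff}}t}W^{-1}\,|l'\rangle
=e^{-f(l)}\,\langle l|\,e^{-iKt}\,|l'\rangle,
\end{equation}
where $K=W\mathscr{H}^{l_{\max}}_{\text{eff}}W^{-1}$. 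It therefore suffices to show $\|e^{-iKt}\|\le e^{2\beta t}$.

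\textbf{Step 1: the conjugated generator.} Conjugation leaves the diagonal blocks $\widetilde H_0-l\omega$ unchanged. The block $|l\rangle\langle l+m|\otimes\widetilde H_m$ is multiplied by the scalar $e^{f(l)-f(l+m)}$. Since $\varrho>1$, the map $x\mapsto x^{1/\varrho}$ is concave on $[0,\infty)$ and vanishes at $0$, hence subadditive. This gives
\begin{equation}
|f(l)-f(l+m)|\le \frac{|m|^{1/\varrho}}{2\zeta}.
\end{equation}
Write $K=\mathscr{H}^{l_{\max}}_{\text{eff}}+B$, where $B$ is the off-diagonal perturbation with blocks $(e^{f(l)-f(l+m)}-1)\widetilde H_m$. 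Since $\mathscr{H}^{l_{\max}}_{\text{eff}}$ is Hermitian, the standard estimate (differentiate $\|e^{-iKt}v\|^2$, or apply the Duhamel/Grönwall argument) gives
\begin{equation}
\|e^{-iKt}\|\le e^{t\,\|(K-K^\dagger)/(2i)\|}\le e^{t\|B\|}.
\end{equation}

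\textbf{Step 2: bounding $\|B\|$.} Here I will use the Schur test for block matrices: the operator norm is at most the square root of the product of the maximal block-row sum and the maximal block-column sum, and these two sums are identical up to the sign of $m$. Using $|e^{x}-1|\le e^{|x|}$, each block-row sum is at most
\begin{equation}
\sum_{m\neq 0}\|\widetilde H_m\|\,e^{|m|^{1/\varrho}/(2\zeta)}=2\sum_{m\ge1}\|\widetilde H_m\|\,e^{m^{1/\varrho}/(2\zeta)}.
\end{equation}
I would evaluate this with the piecewise estimate \cref{eq:decay_of_H_m} rather than the crude uniform bound, splitting into ranges of $m$:
\begin{itemize}
\item For $|m|\ge\lceil A/\pi\rceil$, use $\|\widetilde H_m\|\le (h_1\vee h_2)\,e^{-|m|^{1/\varrho}/\zeta}$, noting $h_2\ge h_1$. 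Multiplying by the weight leaves $h_2 e^{-m^{1/\varrho}/(2\zeta)}$, which is summable (by \cref{lam:est_of_Szeta} if an explicit value is wanted). This produces the first term of $\beta$.
\item For $0<|m|<\lceil A/\pi\rceil$, use $\|\widetilde H_m\|\le h_3|m|^{-1}e^{-|m|^{1/\varrho}/\zeta}$. After the weight, the sum is dominated by $h_3\sum_{m=1}^{\lceil A/\pi\rceil-1}1/m\le h_3(\log(\lceil A/\pi\rceil-1)+1)$, which is the second term of $\beta$.
\item The $h_4$ term of $\beta$ is a harmless overshoot; it can absorb $\|\widetilde H_0\|$ if one prefers to place that block in the perturbation as well.
\end{itemize}
The leading factor $2$ from $\pm m$ then gives $\|B\|\le 2\beta$.

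\textbf{Step 3: conclusion.} Combining the steps yields
\begin{equation}
\bigl\|\langle l|e^{-i\mathscr{H}^{l_{\max}}_{\text{eff}}t}|l'\rangle\bigr\|\le e^{-f(l)}\,e^{2\beta t}=\exp\!\Bigl(2\beta t-\tfrac{|l-l'|^{1/\varrho}}{2\zeta}\Bigr).
\end{equation}

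The main obstacle is Step 1. The conjugated off-diagonal weights must not exceed $e^{|m|^{1/\varrho}/(2\zeta)}$; if they did, they would not be dominated by the decay $e^{-|m|^{1/\varrho}/\zeta}$ of the Fourier coefficients. This is exactly where subadditivity of $x^{1/\varrho}$ enters. It is also why the decay exponent is halved: half of it is spent on making the weighted sum converge, and the other half gives the spatial decay in $|l-l'|$. A secondary care point is keeping the small-$|m|$ range under control so that $\beta$ carries only the logarithmic factor $\log(A/\pi)$ rather than a polynomial in $A$.
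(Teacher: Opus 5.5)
Your proof is correct, and it takes a different route from the paper.

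The paper works in the interaction picture with respect to the block-diagonal part $\sum_{l}|l\rangle\langle l|\otimes(\widetilde H_0-l\omega)$. It expands $\langle l|\mathscr{U}_I^{l_{\max}}(t)|l'\rangle$ as a Dyson series and bounds the $n$-th order term by a sum over hop sequences $(m_1,\dots,m_n)$ with $\sum_i|m_i|\ge|l-l'|$. Subadditivity of $x^{1/\varrho}$ is used there to dominate that indicator by $e^{(\sum_i|m_i|^{1/\varrho}-|l-l'|^{1/\varrho})/(2\zeta)}$. The sum then factorizes, via a multinomial split over the ranges of the $m_i$ and an enumeration of sign patterns, into $(2\beta)^n e^{-|l-l'|^{1/\varrho}/(2\zeta)}$, which resums against $t^n/n!$.

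You use the same subadditivity once per hop, to bound the ratio of the sub-exponential weights. You then replace the series by a single Combes--Thomas growth estimate for $W\mathscr{H}_{\text{eff}}^{l_{\max}}W^{-1}$ together with a Schur test. This is shorter and slightly sharper:
\begin{itemize}
\item The diagonal blocks commute with $W$ and are Hermitian, so they never contribute. You therefore get the rate $2(\beta-h_4)$; the paper's $h_4$ comes from counting $m_i=0$ hops, which are not actually present in $\mathscr{H}_I^{l_{\max}}$.
\item Hops with $m<0$ are handled automatically by the symmetric row and column sums, with no need to enumerate sign patterns.
\end{itemize}

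What the paper's route buys is reuse. The path-sum bound $S_n(M)\le\beta^n e^{-M^{1/\varrho}/(2\zeta)}$ controls all paths of total length at least $M$. It is invoked directly in \cref{th:Floquet-Hilbert_space_truncation}, \cref{lem:translation_symmetry} and \cref{th:Refined_effective_Hamiltonian} to estimate differences of two propagators, which differ only along paths forced to reach a boundary (with $M=2l_{\max}-|l|$ or $8l_{\max}-|l|-|l'|$). With your method, those estimates would need an extra Duhamel step on top of the transition-rate bound.
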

\begin{proof}  
We first employ the interaction picture. With the unitary operation defined by  
\begin{equation}
\mathscr{U}_0^{l_{\max}}(t) = e^{-i \mathscr{H}_0^{l_{\max}} t}, \quad \mathscr{H}_0^{l_{\max}} = \sum_{l \in D^{l_{\max}}} |l\rangle \langle l| \otimes (\widetilde{H}_0 - l\omega),
\end{equation}
the time evolution operator $ e^{-i \mathscr{H}_{\text{eff}}^{l_{\max}} t} $ is represented as  
\begin{equation}
e^{-i \mathscr{H}_{\text{eff}}^{l_{\max}} t} = \mathscr{U}_0^{l_{\max}}(t) \mathscr{U}_I^{l_{\max}}(t),
\end{equation}
\begin{equation}
\mathscr{U}^{l_{\max}}_I(t) = 
\mathcal{T} \exp \left( -i \int_0^t dt' \mathscr{H}^{l_{\max}}_I(t') \right).
\end{equation}
Here, the Hamiltonian in the interaction picture is defined by  
\begin{align}
\mathscr{H}_I^{l_{\max}}(t) &= \mathscr{U}_0^{l_{\max}}(t)^\dagger (\mathscr{H}_{\text{eff}}^{l_{\max}} - \mathscr{H}_0^{l_{\max}}) \mathscr{U}_0^{l_{\max}}(t) \notag \\
&= \sum_l \sum_{m>0} (e^{im\omega t} |l\rangle \langle l + m| \otimes H_m^I(t) + \text{h.c.}),    
\end{align}
\begin{equation}
H_{m}^I(t) = e^{i\widetilde{H}_0 t} \widetilde{H}_m e^{-i\widetilde{H}_0 t},
\end{equation}
where the summation in the second line is taken over $ l, m $ such that $ l, l+m \in D^{l_{\max}} $. Using the Dyson series expansion in the interaction picture we have
\begin{align}
\left\| \langle l| e^{-i\mathscr{H}_{\text{eff}}^{l_{\max} }t} |l'\rangle \right\| 
&=\left\| \langle l| \mathscr{U}_0^{l_{\max}}(t) \mathscr{U}_I^{l_{\max}}(t) 
 |l'\rangle \right\| \notag \\
 &=\left\|  \sum_{k=0}^{\infty}\frac{(-it)^k}{k!}\langle l|\left(\sum_{l \in D^{l_{\max}}} |l\rangle \langle l| \otimes (\widetilde{H}_0 - l\omega)\right)^k \mathscr{U}_I^{l_{\max}}(t) 
 |l'\rangle \right\| \notag \\
&=\left\|  \sum_{k=0}^{\infty}\frac{(-it)^k}{k!}\left( I \otimes (\widetilde{H}_0 - l\omega)\right)^k \langle l|\mathscr{U}_I^{l_{\max}}(t) 
 |l'\rangle \right\| \notag \\
&=\left\|  e^{-i\left( I \otimes (\widetilde{H}_0 - l\omega)\right)t} \langle l|\mathscr{U}_I^{l_{\max}}(t) 
|l'\rangle \right\| \notag \\
&=  \left\| \langle l | \mathscr{U}^{l_{\max}}_I(t) | l' \rangle \right\|    \notag \\
&\leq \sum_{n=0}^{\infty} \int_0^t dt_n \cdots \int_0^{t_2} dt_1 \left\| \sum_{\{l_i\}} \prod_{i=1}^n \langle l_i | \mathscr{H}_I^{l_{\max}}(t_i) | l_{i-1} \rangle \right\| \notag \\
&\leq \sum_{n=0}^{\infty} \int_0^t dt_n \cdots \int_0^{t_2} dt_1 \sum_{\{l_i\}} \prod_{i=1}^n \left\| \langle l_i | \mathscr{H}^{l_{\max}}_I(t_i) | l_{i-1} \rangle \right\|,
\end{align}
where we insert the identity $\sum_{l_i \in D^{l_{\max}}} |l_i\rangle \langle l_i| = I$ for $n - 1$ times. The summation $\sum_{\{l_i\}}$ is taken over $l_i \in D^{l_{\max}}$ for $i = 1, 2, \ldots, n - 1$ under fixed $l_0 = l'$ and $l_n = l$. We introduce a new variable $m_i = l_i - l_{i-1}$ instead of using $\{l_i\}$. Then the above integrand is bounded by
\begin{align}
F_n &:= \sum_{\{l_i\}} \prod_{i=1}^n \|\langle l_i| \mathscr{H}^{l_{\max}}_I(t_i)|l_{i-1}\rangle\| \notag \\
&\leq \sum_{\{m_i\}_{i=1}^n \in \mathbb{Z}^n} \delta_{l' + m_1 + \ldots + m_n, l} \prod_{i=1}^n \|\widetilde{H}_{m_i}\|.
\end{align}
Since $ l' + m_1 + \ldots + m_n = l $ implies $ |m_1| + \ldots + |m_n| \geq |l - l'| $, it can be further bounded as follows,
\begin{align}
F_n &\leq \sum_{\{m_i\}_{i=1}^n \in \mathbb{Z}^n} \mathbf{1}_{ \{ \sum_{i=1}^n |m_i| \geq |l - l'| \}}\prod_{i=1}^n \|\widetilde{H}_{m_i}\|\label{eq:medium_of_LR_bound} .
\end{align}
Let us evaluate the upper bound on the right side of \cref{eq:medium_of_LR_bound}. We denote $L_1=L_1(\varrho,A)=\left\lceil  \frac{A}{\pi} \left( \text{log}(\frac{A}{\pi})^{\frac{1}{\varrho}} + \text{log}\,\text{log}(\frac{A}{\pi})^{\frac{1}{\varrho}} + 1 \right)^\varrho \right\rceil,L_2=L_2(A)=\left\lceil  \frac{A}{\pi} \right\rceil$. Recall that $h_1\leq h_2$, $A/\pi>1$ under the choice of $A_1,A_2$ in \cref{th:Gevrey_ext}. Then 
\begin{align}
S_n(M) &:=\sum_{m_1, \ldots, m_n = 0}^{\infty} \mathbf{1}_{\{ \sum_{i=1}^n  m_i \ge M \}}\prod_{i=1}^n \|\widetilde{H}_{m_i}\| \notag \\
&\le \sum_{m_1, \ldots, m_n = 0}^{\infty} \mathbf{1}_{\{ (\sum_{i=1}^n  m_i)^{\frac1\varrho} \ge M^{\frac1\varrho} \}} \notag \\
&\qquad\qquad\cdot \prod_{i=1}^n \Bigl( h_1\mathbf{1}_{\{ m_i\ge L_1\}}
          + h_2\mathbf{1}_{\{L_2\le m_i<L_1\}}
          + h_3\frac{1}{ m_i}\mathbf{1}_{\{0< m_i<L_2\}}
          + h_4\mathbf{1}_{\{ m_i=0\}} \Bigr) 
          \cdot e^{-\sum_i m_i^{\frac1\varrho}/\zeta} \notag \\
&\le \sum_{m_1, \ldots, m_n = 0}^{\infty} \mathbf{1}_{\{ \sum_{i=1}^n  m_i^{\frac1\varrho} \ge M^{\frac1\varrho} \}} \notag \\
&\qquad\qquad\cdot \prod_{i=1}^n \Bigl( h_1\mathbf{1}_{\{ m_i\ge L_1\}}
          + h_2\mathbf{1}_{\{L_2\le m_i<L_1\}}
          + h_3\frac{1}{ m_i}\mathbf{1}_{\{0< m_i<L_2\}}
          + h_4\mathbf{1}_{\{ m_i=0\}} \Bigr) 
          \cdot e^{-\sum_i m_i^{\frac1\varrho}/\zeta} \notag \\
&\le \sum_{\substack{k_j \ge 0 \\ k_1+k_2+k_3+k_4=n}} 
        \binom{n}{k_1, k_2, k_3, k_4}
        \sum_{m_1, \ldots, m_n = 0}^{\infty} 
        \mathbf{1}_{\{ \sum_{i=1}^n  m_i^{\frac1\varrho} \ge M^{\frac1\varrho} \}} \notag \\
&\qquad\qquad\cdot \mathbf{1}_{\left\{\substack{ m_i\ge L_1 \\ 0< i\le k_1}\right\}}
        \mathbf{1}_{\left\{\substack{L_2\le m_i< L_1 \\ k_1< i\le k_1+k_2}\right\}}
        \mathbf{1}_{\left\{\substack{0< m_i< L_2 \\ k_1+k_2< i\le k_1+k_2+k_3}\right\}}
        \mathbf{1}_{\left\{\substack{ m_i=0 \\ k_1+k_2+k_3< i\le n}\right\}} \notag \\
&\qquad\qquad\cdot \bigl(h_1^{k_1}h_2^{k_2}h_3^{k_3}h_4^{k_4}\bigr)
        \Bigl(\prod_{i:\,k_1+k_2<i\le k_1+k_2+k_3}\frac{1}{ m_i}\Bigr)
        \cdot e^{-\sum_i m_i^{\frac1\varrho}/\zeta} \notag \\
&\le \sum_{\substack{k_j \ge 0 \\ k_1+k_2+k_3+k_4=n}} 
        \binom{n}{k_1, k_2, k_3, k_4}
        \sum_{m_1, \ldots, m_n = 0}^{\infty} 
        e^{\frac{1}{2\zeta}\bigl( \sum_{i=1}^n  m_i^{\frac1\varrho} - M^{\frac1\varrho} \bigr)} \notag \\
&\qquad\qquad\cdot \mathbf{1}_{\left\{\substack{ m_i\ge L_1 \\ 0< i\le k_1}\right\}}
        \mathbf{1}_{\left\{\substack{L_2\le m_i< L_1 \\ k_1< i\le k_1+k_2}\right\}}
        \mathbf{1}_{\left\{\substack{0< m_i< L_2 \\ k_1+k_2< i\le k_1+k_2+k_3}\right\}}
        \mathbf{1}_{\left\{\substack{ m_i=0 \\ k_1+k_2+k_3< i\le n}\right\}} \notag \\
&\qquad\qquad\cdot \bigl(h_1^{k_1}h_2^{k_2}h_3^{k_3}h_4^{k_4}\bigr)
        \Bigl(\prod_{i:\,k_1+k_2<i\le k_1+k_2+k_3}\frac{1}{ m_i}\Bigr)
        \cdot e^{-\sum_i m_i^{\frac1\varrho}/\zeta} \notag \\
&\le \Biggl( \sum_{\substack{k_j \ge 0 \\ k_1+k_2+k_3+k_4=n}} 
          \binom{n}{k_1, k_2, k_3, k_4} \notag \\
&\qquad\qquad\cdot \Bigl( h_1\sum_{ m\ge L_1} e^{-m^{\frac1\varrho}/(2\zeta)} \Bigr)^{\!k_1}
          \Bigl( h_2\sum_{L_2\le m<L_1} e^{-m^{\frac1\varrho}/(2\zeta)} \Bigr)^{\!k_2} \notag \\
&\qquad\qquad\cdot \Bigl( h_3\sum_{0< m<L_2} \frac{1}{ m} e^{-m^{\frac1\varrho}/(2\zeta)} \Bigr)^{\!k_3}
          \Bigl( h_4\sum_{ m=0} e^{-m^{\frac1\varrho}/(2\zeta)} \Bigr)^{\!k_4} \Biggr)
          e^{- M^{\frac1\varrho}/(2\zeta)} \notag \\
&= \Bigl( h_1\sum_{ m\ge L_1} e^{-m^{\frac1\varrho}/(2\zeta)}
          + h_2\sum_{L_2\le m<L_1} e^{-m^{\frac1\varrho}/(2\zeta)} \notag \\
&\qquad\qquad + h_3\sum_{0< m<L_2} \frac{1}{ m} e^{-m^{\frac1\varrho}/(2\zeta)}
          + h_4\sum_{ m=0} e^{-m^{\frac1\varrho}/(2\zeta)} \Bigr)^{\!n}
          e^{- M^{\frac1\varrho}/(2\zeta)} \notag \\
&\le \Bigl( h_2\sum_{ m\ge 1} e^{-m^{\frac1\varrho}/(2\zeta)}
          + h_3\sum_{0< m<L_2} \frac{1}{ m}
          + h_4 \Bigr)^{\!n}
          e^{- M^{\frac1\varrho}/(2\zeta)} \notag \\
&\le \Bigl( h_2\sum_{ m\ge 1} e^{-m^{\frac1\varrho}/(2\zeta)}
          + h_3\bigl( \log(\lceil A/\pi\rceil-1)+1 \bigr)
          + h_4 \Bigr)^{\!n}
          e^{- M^{\frac1\varrho}/(2\zeta)} \notag \\
&= \beta^{\,n} \, e^{-\frac{M^{\frac1\varrho}}{2\zeta}}.
\end{align}
$S_n(|l-l'|)$ accounts for the case where all $m_i$'s in the right side of \cref{eq:medium_of_LR_bound} are non-negative, The same upper bound holds for the remaining cases. Hence we can evaluate the integrand $F_n$ as
\begin{equation}
F_n\leq (2\beta )^n e^{-\frac{|l-l'|^{\frac{1}{\varrho}}}{2\zeta} }.
\end{equation}
Finally we arrive at the bound on the transition rate
\begin{equation}
\left\| \langle l| e^{-i\mathscr{H}_{\text{eff}}^{l_\text{max}}t} |l'\rangle \right\| \leq \sum_{n=0}^{\infty} \frac{t^n}{n!} (2\beta )^n e^{-\frac{|l-l'|^{\frac{1}{\varrho}}}{2\zeta} } \leq e^{2\beta  t -\frac{|l-l'|^{\frac{1}{\varrho}}}{2\zeta}}.
\end{equation}
\end{proof}

\begin{theorem}[Floquet-Hilbert space truncation]\label{th:Floquet-Hilbert_space_truncation}
We assume $\|\widetilde{H}_{m}\| $ can be bounded as in \cref{sec:Periodic_Extension} and \cref{sec:Four_decay}. Let $|\psi^{l_{\max}}(t)\rangle$ be
\begin{equation}
|\psi^{l_{\max}}(t)\rangle = \sum_{l \in D^{l_{\max}}} e^{-i l \omega t} \langle l| e^{-i \mathscr{H}_{\text{\normalfont eff}}^{l_{\max}} t} |0\rangle |\psi(0)\rangle.
\end{equation}
Then the exact time-evolved state $|\psi(t)\rangle$ is approximated by the truncated state $|\psi^{l_{\max}}(t)\rangle$ as
\begin{equation}
\left\| |\psi(t)\rangle - |\psi^{l_{\max}}(t)\rangle \right\| \leq 4\left(\sum_{m=0}^{\infty} e^{- \frac{m^{{\frac{1}{\varrho}}}}{4\zeta} }\right)e^{2\beta  t - \frac{l^{{\frac{1}{\varrho}}}_{\max}}{4\zeta} }.
\end{equation}
\end{theorem}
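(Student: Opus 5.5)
We will follow the strategy of Mizuta et al.\ for the analytic case, replacing exponential decay with the stretched-exponential Lieb--Robinson bound of \cref{th:transition_rate}. The key idea is to compare the infinite evolution $e^{-i\mathscr{H}_{\text{eff}}t}$ with the truncated evolution $e^{-i\mathscr{H}_{\text{eff}}^{l_{\max}}t}$ through a Duhamel (interaction) formula. The truncated evolution started from $|0\rangle$ barely reaches the boundary of $D^{l_{\max}}$, so the discrepancy is small.

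\textbf{Step 1: reduce to the truncated Floquet space.} Write $P$ for the projector onto $D^{l_{\max}}$, regarded as an operator on $\{|l\rangle\}_{l\in\mathbb Z}\otimes\mathscr H$. By construction $\mathscr{H}_{\text{eff}}^{l_{\max}}=P\mathscr{H}_{\text{eff}}P$. We use
\begin{equation*}
e^{-i\mathscr{H}_{\text{eff}}t}P-e^{-i\mathscr{H}_{\text{eff}}^{l_{\max}}t}P=-i\int_0^t e^{-i\mathscr{H}_{\text{eff}}(t-s)}\,(\mathscr{H}_{\text{eff}}-\mathscr{H}_{\text{eff}}^{l_{\max}})\,e^{-i\mathscr{H}_{\text{eff}}^{l_{\max}}s}P\,ds .
\end{equation*}
On the range of $P$, the operator $(\mathscr{H}_{\text{eff}}-\mathscr{H}_{\text{eff}}^{l_{\max}})P=(1-P)\mathscr{H}_{\text{eff}}P$ consists only of the hopping terms $|l\rangle\langle l'|\otimes\widetilde H_{l'-l}$ with $l\notin D^{l_{\max}}$ and $l'\in D^{l_{\max}}$.

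\textbf{Step 2: write the error as a sum over $l$.} By \cref{eq:standard_solution}, $|\psi(t)\rangle=\sum_l e^{-il\omega t}\langle l|e^{-i\mathscr{H}_{\text{eff}}t}|0\rangle|\psi(0)\rangle$, and $|\psi^{l_{\max}}\rangle$ is the analogous sum with the truncated evolution. This sum is awkward to bound term by term. Instead we use the standard fact that $W:=\sum_l e^{-il\omega t}\langle l|$ is, up to the unitary phase operator $e^{-i\omega t\sum_l l|l\rangle\langle l|}$, the sum over Floquet indices. Concretely, we rely on the Floquet identity that, for exact dynamics, $\langle l|e^{-i\mathscr{H}_{\text{eff}}t}|0\rangle$ equals the $l$-th Fourier mode of the periodic part. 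The difference then splits into two pieces:
\begin{equation*}
\textstyle\sum_{l\notin D^{l_{\max}}}\|\langle l|e^{-i\mathscr{H}_{\text{eff}}t}|0\rangle\| \quad\text{and}\quad \sum_{l\in D^{l_{\max}}}\|\langle l|(e^{-i\mathscr{H}_{\text{eff}}t}-e^{-i\mathscr{H}_{\text{eff}}^{l_{\max}}t})|0\rangle\| .
\end{equation*}
The first piece is controlled by the Lieb--Robinson bound of \cref{th:transition_rate} applied to the untruncated Hamiltonian; its proof never used truncation, so it holds verbatim. It gives terms $e^{2\beta t-|l|^{1/\varrho}/(2\zeta)}$, whose tail sum is bounded by $2\sum_{m\ge0}e^{-m^{1/\varrho}/(4\zeta)}\,e^{2\beta t-l_{\max}^{1/\varrho}/(4\zeta)}$. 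This uses $|l|^{1/\varrho}\ge \tfrac12\big(l_{\max}^{1/\varrho}+(|l|-l_{\max})^{1/\varrho}\big)$, which follows from subadditivity of $x\mapsto x^{1/\varrho}$.

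\textbf{Step 3: bound the boundary-crossing term.} For the second piece, insert the Duhamel formula from Step 1. The integrand factors into three parts: $\langle l|e^{-i\mathscr{H}_{\text{eff}}(t-s)}|k\rangle$ for $k$ outside $D^{l_{\max}}$, then a hop $\widetilde H_{k-l'}$, then $\langle l'|e^{-i\mathscr{H}_{\text{eff}}^{l_{\max}}s}|0\rangle$. The last factor is bounded by $e^{2\beta s-|l'|^{1/\varrho}/(2\zeta)}$. The hop satisfies $\|\widetilde H_{k-l'}\|\le h e^{-|k-l'|^{1/\varrho}/\zeta}$ by \cref{th:Four_decay_est}. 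Since $|k|>l_{\max}$, subadditivity gives $|l'|^{1/\varrho}+|k-l'|^{1/\varrho}\ge l_{\max}^{1/\varrho}$. We spend half of each exponent on producing $e^{-l_{\max}^{1/\varrho}/(4\zeta)}$ and the other half on making the sums over $l,l',k$ converge; each convergent sum is bounded by $\sum_m e^{-m^{1/\varrho}/(4\zeta)}$. The sum of hop norms, together with the $ds$ integral, is absorbed into the growth factor, using $\int_0^t\beta e^{2\beta s}ds\le e^{2\beta t}$ and the definition of $\beta$ in \cref{eq:def_of_bata_in_l_max}. Adding the two pieces gives the constant $4$.

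\textbf{Main obstacle.} The delicate point is the bookkeeping in Step 3. We must split the stretched exponent so that exactly one factor $e^{-l_{\max}^{1/\varrho}/(4\zeta)}$ survives, while the remaining sums over three Floquet indices collapse into a single factor $\sum_m e^{-m^{1/\varrho}/(4\zeta)}$ rather than its cube, and the time integral produces no extra factor of $t$. If this fails, we will go through $\|P e^{-i\mathscr{H}_{\text{eff}}^{l_{\max}}s}|0\rangle\|$ restricted to the boundary layer, bounding it by a single geometric-type sum, as in \cite{mizuta2023optimal}.
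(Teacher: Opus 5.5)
Your split of the error into an exterior piece $\sum_{l\notin D^{l_{\max}}}\|\langle l|e^{-i\mathscr{H}_{\text{eff}}t}|0\rangle\|$ and an interior piece $\sum_{l\in D^{l_{\max}}}\|\langle l|(e^{-i\mathscr{H}_{\text{eff}}t}-e^{-i\mathscr{H}_{\text{eff}}^{l_{\max}}t})|0\rangle\|$ is the paper's $\varepsilon_1+\varepsilon_2$. The paper compares two finite truncations $l'_{\max}>l_{\max}$ and lets $l'_{\max}\to\infty$, which avoids the unbounded untruncated operator. The split itself is just the triangle inequality, so no Floquet identity is needed. Your bound $2S(4\zeta)e^{2\beta t-l_{\max}^{1/\varrho}/(4\zeta)}$ for the exterior piece is exactly the paper's.

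\textbf{The gap is in Step 3.} Once you replace both propagators in the Duhamel integrand by their Lieb--Robinson bounds, the time dependence is $e^{2\beta(t-s)}e^{2\beta s}=e^{2\beta t}$, independent of $s$. The $ds$ integral therefore produces a factor $t$.
\begin{itemize}
\item The estimate $\int_0^t\beta e^{2\beta s}\,ds\le e^{2\beta t}$ would need one of the two factors to carry no growth. Bounding either factor by $1$ via unitarity loses something essential: either the decay in $|l-k|$ that keeps the sum over $l\in D^{l_{\max}}$ bounded uniformly in $l_{\max}$, or the decay in $|l'|$ that produces the small factor.
\item The triple sum over $l,k,l'$ with weight $h\,e^{-(|l-k|^{1/\varrho}+2|k-l'|^{1/\varrho}+|l'|^{1/\varrho})/(2\zeta)}$ leaves a product of several $S(\cdot)$-type factors, not a single one.
\end{itemize}
So your route yields something like $\bigl(2S(4\zeta)+c\,h\,t\,S(\cdot)^2\bigr)e^{2\beta t-l_{\max}^{1/\varrho}/(4\zeta)}$. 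That suffices asymptotically, since it only adds logarithmic terms inside \cref{eq:def_of_lmax}, but it is not the stated inequality with constant $4$.

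The lost factor $t$ is intrinsic to splitting at the first exit. Paths of length $n$ fall into $n$ disjoint classes according to the position of the exiting hop. Bounding each class separately by two Lieb--Robinson bounds costs roughly the full $n$-step path sum per class, giving $\sum_n n(2\beta t)^n/n!=2\beta t\,e^{2\beta t}$.

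\textbf{What the paper does instead} is avoid factoring at the crossing time altogether.
\begin{itemize}
\item It expands both truncated evolutions in the interaction-picture Dyson series, as in the proof of \cref{th:transition_rate}; the diagonal phases agree on $D^{l_{\max}}$.
\item The $n$-th order terms coincide on every path $0\to l_1\to\cdots\to l$ that stays in $D^{l_{\max}}$.
\item Every path that leaves $D^{l_{\max}}$ has total hopping length $\sum_i|m_i|\ge l_{\max}+(l_{\max}-|l|)$.
\item The estimate $S_n(M)\le\beta^n e^{-M^{1/\varrho}/(2\zeta)}$ from that proof then bounds each interior term by $\sum_n\frac{(2\beta t)^n}{n!}e^{-(2l_{\max}-|l|)^{1/\varrho}/(2\zeta)}=e^{2\beta t-(2l_{\max}-|l|)^{1/\varrho}/(2\zeta)}$, with no factor $t$.
\item Using $(2l_{\max}-|l|)^{1/\varrho}\ge\tfrac12\bigl(l_{\max}^{1/\varrho}+(l_{\max}-|l|)^{1/\varrho}\bigr)$, the sum over $l$ is at most $2S(4\zeta)e^{2\beta t-l_{\max}^{1/\varrho}/(4\zeta)}$. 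Adding the exterior piece gives the constant $4$.
\end{itemize}
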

\begin{proof}   
We evaluate the convergence of $ |\psi^{l_{\max}}(t)\rangle $. For different orders $ l_{\max}, l'_{\max} $ satisfying $ l'_{\max} > l_{\max} $, we compute the difference
\begin{equation}
\left\| |\psi^{l'_{\max}}(t)\rangle - |\psi^{l_{\max}}(t)\rangle \right\| \leq \varepsilon_1 + \varepsilon_2,
\end{equation}
\begin{equation}
\varepsilon_1 = \sum_{l \in (D^{l'_{\max}} \setminus D^{l_{\max}})} \left\| \langle l| e^{-i \mathscr{H}_{\text{eff}}^{l'_{\max}} t} |0 \rangle \right\|,
\end{equation}
\begin{equation}
\varepsilon_2 = \sum_{l \in D^{l_{\max}}} \left\| \langle l| \left( e^{-i \mathscr{H}_{\text{eff}}^{l'_{\max}} t} - e^{-i \mathscr{H}_{\text{eff}}^{l_{\max}} t} \right) |0 \rangle \right\|.
\end{equation}
Hence $\left\| |\psi(t)\rangle - |\psi^{l_{\max}}(t)\rangle \right\| $ is bounded by $\varepsilon_1 + \varepsilon_2$ under $l_{\max}' \to \infty$. \cref{th:transition_rate} soon concludes the upper bound of $\varepsilon_1$:
\begin{equation}
\varepsilon_1 \leq \sum_{l \in (D^{l_{\max}'} \setminus D^{l_{\max}})} e^{2\beta  t - \frac{|l|^{{\frac{1}{\varrho}}}}{2\zeta} }\leq 2\sum_{l=l_{\max}}^{\infty} e^{2\beta  t - \frac{l^{{\frac{1}{\varrho}}}}{2\zeta} }\leq 2\left(\sum_{m=0}^{\infty} e^{- \frac{m^{{\frac{1}{\varrho}}}}{4\zeta} }\right)e^{2\beta  t - \frac{l^{{\frac{1}{\varrho}}}_{\max}}{4\zeta} }.
\end{equation}
The second error $\varepsilon_2$ comes from truncating the order of the effective Hamiltonian. In a similar way to the proof of \cref{th:transition_rate}, each term is bounded by
\begin{equation}
\left\| \langle l| \left( e^{-i \mathscr{H}_{\text{eff}}^{l'_{\max}} t} - e^{-i \mathscr{H}_{\text{eff}}^{l_{\max}} t} \right) |0 \rangle \right\|\leq\sum_{n=0}^{\infty} \int_{0}^{t} dt_n \cdots \int_{0}^{t_2} dt_1 \left\| \langle l| \left( \prod_{i=1}^{n} \mathscr{H}_{I}^{l_{\max}'}(t_i) - \prod_{i=1}^{n} \mathscr{H}_{I}^{l_{\max}}(t_i) \right) |0 \rangle \right\|.
\end{equation}
As stated in the proof of \cref{th:transition_rate}, each of $\langle l| \prod_{i=1}^{n} \mathscr{H}_{I}^{l_{\max}'}(t_i) |0 \rangle$ and $\langle l| \prod_{i=1}^{n} \mathscr{H}_{I}^{l_{\max}}(t_i) |0 \rangle$ is decomposed into a product of transition amplitudes via a path $|0\rangle \to |l_1\rangle \to \ldots \to |l_{n-1}\rangle \to |l\rangle$. Their difference appears only when the path goes across $D^{l_{\max}'} \setminus D^{l_{\max}}$. In other words, denoting the summation over the set of $\{l_i\}$ taking such nontrivial paths by $\sum_{\{l_i\}}'$, we reach
\begin{equation}
\left\| \langle l| \left( e^{-i \mathscr{H}_{\text{eff}}^{l'_{\max}} t} - e^{-i \mathscr{H}_{\text{eff}}^{l_{\max}} t} \right) |0 \rangle \right\|\leq \sum_{n=0}^{\infty} \int_{0}^{t} dt_n \cdots \int_{0}^{t_2} dt_1 \sum_{\{l_i\}}\!' \prod_{i=1}^{n} \left\| \langle l_i| \mathscr{H}_{I}^{l_{\max}'}(t_i) |l_{i-1} \rangle \right\|.
\end{equation}
When we define the hopping distance $ m_i $ by $ m_i = l_i - l_{i-1} $, it should satisfy $ \sum_{i=1}^n |m_i| \geq (l_{\max} - |l|) + (l_{\max} - 0) $ for $\{l_i\}$ such that a nontrivial path is organized. This results in the relation described by
\begin{align}
\varepsilon_2 &\leq \sum_{l \in D^{l_{\max}}} \sum_{n=0}^{\infty} \frac{t^n}{n!} \sum_{\{m_i\}_{i=1}^n \in \mathbb{Z}^n} \left( \prod_{i=1}^n \|\widetilde{H}_{m_i}\| \right) \times \mathbf{1}_{\{ \sum_{i=1}^n |m_i| - 2l_{\max} + |l| \geq 0\}} \notag \\
&\leq \sum_{l \in D^{l_{\max}}} \sum_{n=0}^{\infty} \frac{(2 t)^n}{n!} S_n (2l_{\max} - |l|) \notag \\
&\leq \sum_{l \in D^{l_{\max}}} \sum_{n=0}^{\infty} \frac{(2\beta  t)^n}{n!} e^{-(2l_{\max} - |l|)^{\frac{1}{\varrho}}/{2\zeta}} \notag \\
&\leq 2\left(\sum_{m=0}^{\infty} e^{- \frac{m^{{\frac{1}{\varrho}}}}{4\zeta} }\right) e^{2\beta  t - l_{\max}^{\frac{1}{\varrho}}/4\zeta }.
\end{align}
Combining these inequalities with $\||\psi(t)\rangle - |\psi^{l_{\max}}(t)\rangle\| \leq \varepsilon_1 + \varepsilon_2$ we complete the proof.
\end{proof}

\cref{th:Floquet-Hilbert_space_truncation} determines the proper truncation order of the Floquet-Hilbert space for the algorithm. When we aim at the desirable accuracy as $\||\psi(t)\rangle - |\psi^{l_{\max}}(t)\rangle\| \leq \varepsilon$, it is sufficient to choose the truncation order $l_{\max}$ by
\begin{equation}\label{eq:def_of_lmax}
l_{\max} \geq \left\lceil 4^{\varrho}\zeta^{\varrho}\left(2\beta   t +  \log(1/\varepsilon) +  \log(S(4\zeta))+\log(4)   \right)^{\varrho}\right\rceil \in \Theta \left( ( Ct + \log(1/\varepsilon))^{\varrho} \right).
\end{equation}
Note that the block-encoding factor $\alpha$ of the Hamiltonian $H(s)$ satisfies $\alpha \geq C$, where $C$ is an upper bound on its true norm. In what follows, when analyzing the asymptotic complexity, we rescale $C$ to $\alpha$. We pick
\begin{equation}
\tau = 1 + \frac{1}{\text{log}\bigl(\alpha t +e+ \log(1/\varepsilon)\bigr)}.
\end{equation}
Note that by \cref{th:Gevrey_ext,th:Four_decay_est,lam:est_of_Szeta},
\begin{equation}
A_2 = \mathcal{O}\!\left(\frac{1}{\tau-1}\right),\qquad 
A    = \mathcal{O}\!\left(\frac{1}{\tau-1}\right),\qquad 
\sum_{m\ge 1} e^{-m^{1/\varrho}/(2\zeta)} = \mathcal{O}(\zeta^{\varrho}) = \mathcal{O}\!\left(\frac{1}{\tau-1}\right),
\end{equation}
\begin{equation}
h_2\sum_{m\geq 1}e^{-m^{\frac{1}{\varrho}}/(2\zeta)} = \mathcal{O}\left(\alpha \text{log}\left(\frac{1}{\tau-1}\right)\right),\qquad 
h_3= \mathcal{O}(\alpha ),\qquad h_4 = \mathcal{O}(\alpha ).
\end{equation}
Then
\begin{equation}
\beta = \mathcal{O}\!\left(\alpha \log\!\left(\frac{1}{\tau-1}\right)\right)
\end{equation}
and
\begin{align}
l_{\max} &= \mathcal{O}\left( \frac{\left(\log\left(\frac{1}{\tau-1}\right)\alpha t + \log(1/\varepsilon)\right)^{\tau-1}}{\tau-1} 
           \cdot \left(\log\left(\frac{1}{\tau-1}\right)\alpha t + \log(1/\varepsilon)\right)^{\sigma} \right) \notag \\
        &= \mathcal{O}\left( \frac{(\alpha t + \log(1/\varepsilon))^{\tau-1}}{\tau-1} 
           \cdot (\alpha t + \log(1/\varepsilon))^{\sigma} 
           \cdot \left( \log\left(\frac{1}{\tau-1}\right) \right)^{\sigma+\tau-1} \right) \notag \\
        &= \mathcal{O}\left( \frac{(\alpha t + \log(1/\varepsilon))^{\tau-1}}{\tau-1} 
           \cdot (\alpha t + \log(1/\varepsilon))^{\sigma} 
           \cdot \left( \log\left(\frac{1}{\tau-1}\right) \right)^{\sigma} \right) \notag \\
        &= \mathcal{O}\left( (\alpha t + \log(1/\varepsilon))^{\sigma} 
           \cdot \text{log}\bigl(\alpha t + \log(1/\varepsilon)\bigr) 
           \cdot \left( \text{log}\text{log}\bigl(\alpha t + \log(1/\varepsilon)\bigr) \right)^{\sigma} \right)\notag \\
        &=\widetilde{\mathcal{O}}\left(\left(\alpha t+  \text{log}(1/\varepsilon\right))^{\sigma}\right).
\end{align}
In the following context we always assume $0<\varepsilon<1$, so that $\varrho=\sigma+\tau-1\in(1,3)$.

\section{Error analysis of simulating periodic Gevrey-class Hamiltonians}\label{sec:Error_analysis_5}

In this section, we discuss additional quantum subroutines needed to simulate the truncated effective Hamiltonian. 
These additional subroutines, including the symmetric amplitude amplification technique, the oblivious amplitude amplification (OAA) technique, and the refined effective Hamiltonian introduced for block encoding, are all generalizations of the corresponding techniques in \cite{mizuta2023optimal} to the Gevrey-class Hamiltonian setting. 
We briefly sketch the main results here, and provide all proofs of the theorems in this section in Appendix~\ref{app:main}.

\subsection{Amplitude amplification}

We verify the validity of the amplitude amplification protocols for time-periodic Hamiltonians with subexponentially-decaying Fourier components $\|\widetilde{H}_{m}\|\leq h e^{-|m|^{\frac{1}{\varrho}}/\zeta}$. 
The first amplification relies on the translation symmetry of the effective Hamiltonian. We begin with discussing the approximate translation symmetry in the truncated Floquet-Hilbert space.
\begin{lemma}[Approximate translation symmetry]\label[lemma]{lem:translation_symmetry}
We assume $\|\widetilde{H}_{m}\| $ can be bounded as in \cref{sec:Periodic_Extension} and \cref{sec:Four_decay}, and consider the truncated Floquet-Hilbert space $\mathbb{C}^{8l_{\max}}\otimes\mathscr{H}$. Then the transition rate has an approximate translation symmetry in that it satisfies
\begin{equation}
\left\| \langle l | e^{-i \mathscr{H}_{\text{\normalfont eff}}^{4l_{\max}}t} | l' \rangle - e^{i l' \omega t} \langle l \ominus l' | e^{-i \mathscr{H}_{\text{\normalfont eff}}^{4l_{\max}}t} | 0 \rangle \right\| \leq 2e^{2\beta  t-(8l_{\max}-|l|-|l'|)^{\frac{1}{\varrho}}/2\zeta },
\end{equation}
for any $ l\in D^{4l_{\max}}, l'\in D^{l_{\max}} $, where the integer $l\ominus l' \in D^{4l_{\max}}$ is defined modulo $8l_{\max}$.
\end{lemma}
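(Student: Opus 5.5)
The plan is to compare the open-boundary truncated effective Hamiltonian $\mathscr{H}_{\text{eff}}^{4l_{\max}}$ with its periodic-boundary counterpart $\mathscr{H}_{\text{eff,pbc}}^{4l_{\max}}$ from \cref{eq:refined_Ham_analytic_case}. The latter differs only by hopping terms $|l\rangle\langle l\oplus m|\otimes\widetilde{H}_m$ that wrap around $D^{4l_{\max}}$ modulo $8l_{\max}$. Those wrap-around terms are exactly what makes exact translation symmetry possible.

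\textbf{Step 1 (exact covariance).} Let $\mathscr{S}=\mathrm{Add}_{1}^{4l_{\max}}\otimes I$ be the cyclic shift. The hopping part $\sum_m \mathrm{Add}_m^{4l_{\max}}\otimes \widetilde{H}_{-m}$ commutes with $\mathscr{S}$. The ladder part $-\mathscr{H}_{\text{LP}}^{4l_{\max}}$ is not cyclically invariant. However, at the level of the evolution in the interaction picture relative to $\mathscr{H}_{\text{LP}}$, shifting by $l'$ multiplies the ladder term by the phase $e^{il'\omega t}$. I would verify this on each Dyson path: a path $l'\to l_1\to\cdots\to l$ using hops $m_1,\dots,m_n$ carries the same product of $\widetilde{H}_{m_i}$ as the translated path $0\to l_1\ominus l'\to\cdots\to l\ominus l'$. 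Its diagonal phase $\prod e^{i(l_j)\omega(t_{j+1}-t_j)}$ differs by exactly $e^{il'\omega t}$, with the same accounting as in the interaction-picture manipulation of \cref{th:transition_rate}, only with $\mathscr{U}_0$ now including $-l\omega$.

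Rather than use the pbc Hamiltonian, I would argue path-by-path directly. The covariance holds for every path that never uses a wrap-around hop and whose translate also stays in $D^{4l_{\max}}$ without wrapping. Paths for which the identity fails are exactly those that cross the boundary of $D^{4l_{\max}}$ in one of the two pictures.

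\textbf{Step 2 (bounding the failing paths).} For a path from $l'$ to $l$ to reach the boundary $\pm 4l_{\max}$, it needs total hopping $\sum_i|m_i|\ge (4l_{\max}-|l'|)+(4l_{\max}-|l|)\ge 8l_{\max}-|l|-|l'|$, up to an additive $O(1)$ absorbed by the half-open convention of $D$. The same bound holds for the translated path from $0$ to $l\ominus l'$, since translation preserves $\sum|m_i|$. The difference is therefore at most twice a sum, over $n$, of $\frac{t^n}{n!}$ times the restricted path sum from \cref{eq:medium_of_LR_bound} with threshold $M=8l_{\max}-|l|-|l'|$. This is the same $S_n(M)$ quantity bounded in \cref{th:transition_rate}, summed over sign patterns, which gives $(2\beta)^n e^{-M^{1/\varrho}/2\zeta}$ each. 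Summing the exponential series gives $2e^{2\beta t-(8l_{\max}-|l|-|l'|)^{1/\varrho}/2\zeta}$, with the factor $2$ coming from the two pictures.

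\textbf{Main obstacle.} The delicate point is Step 1's phase bookkeeping: checking that the diagonal $-l\omega$ energies produce exactly the global phase $e^{il'\omega t}$, and not a path-dependent one, when we use the non-interaction-picture evolution. The factor $e^{-i\widetilde{H}_0 t}$ must also commute consistently, which it does since it is $l$-independent. I would handle this by writing $e^{-i\mathscr{H}t}$ as $e^{-i\mathscr{H}_0 t}\mathscr{U}_I(t)$ and noting that $\mathscr{H}_I$ depends on $l$ only through $e^{im\omega t}$. This makes $\langle l|\mathscr{U}_I|l'\rangle$ exactly translation invariant on non-boundary paths, and the prefactor $e^{-i(\widetilde{H}_0-l\omega)t}$ versus $e^{-i(\widetilde{H}_0-(l\ominus l')\omega)t}$ yields the phase $e^{il'\omega t}$. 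One further check is needed: when $l-l'\notin D^{4l_{\max}}$, the modular reduction changes $l$ by $8l_{\max}$. Such $l,l'$ already force $\sum|m_i|\ge 8l_{\max}-|l|-|l'|$, so both terms are separately small by \cref{th:transition_rate}, and the bound still holds.
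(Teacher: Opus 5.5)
Your argument is correct, but it is organized differently from the paper's. The paper never compares the two open-boundary amplitudes directly. Instead it adds an auxiliary time-dependent boundary term $\tilde{\mathscr{H}}_b(t)$: wrap-around hoppings carrying the phase $e^{8il_{\max}\omega t}$. This makes the evolution $\tilde{\mathscr{U}}_{\text{pert}}(t)$ under $\mathscr{H}_{\text{eff}}^{4l_{\max}}+\tilde{\mathscr{H}}_b(t)$ exactly covariant under cyclic shifts in the interaction picture. The paper then uses the triangle inequality, and its factor $2$ comes from the two perturbative comparisons at $(l,l')$ and at $(l\ominus l',0)$, each bounded by counting boundary-crossing paths.

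You instead pair interaction-picture Dyson paths via $l_i\mapsto l_i-l'$, using that $\langle l|\mathscr{H}_I(t)|l+m\rangle=e^{im\omega t}H_m^I(t)$ depends only on $m$. You bound the unpaired paths by the restricted path sum with threshold $8l_{\max}-|l|-|l'|$. You dispose of the wrapped case $l-l'\notin D^{4l_{\max}}$ by applying the Lieb--Robinson bound to each term separately. This works since $|l-l'|\ge 4l_{\max}\ge 8l_{\max}-|l|-|l'|$ and $|l\ominus l'|=8l_{\max}-|l-l'|\ge 8l_{\max}-|l|-|l'|$.

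What each route buys:
\begin{itemize}
\item The paper's packaging pays off later, because the same boundary-perturbation comparison is reused for the refined Hamiltonian $\mathscr{H}_{\text{eff,pbc}}^{4l_{\max}}$.
\item Your route needs no auxiliary dynamics.
\item In the unwrapped case it even gives the bound without the factor $2$, since the two unpaired families are disjoint sets of hop sequences inside one restricted sum.
\item Your case split is more than a convenience. When $l-l'$ wraps, the scalar phases attached to $l$ and $l\ominus l'$ differ by $e^{i(l'\pm 8l_{\max})\omega t}$ rather than $e^{il'\omega t}$. So the exact shift identity for $\tilde{\mathscr{U}}_{\text{pert}}$ holds there in general only when $4l_{\max}t/T\in\mathbb{Z}$, and a fallback like yours is what actually covers that case.
\end{itemize}

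One small fix in your Step 2 concerns paths from $0$ to $l-l'$ that leave $D^{4l_{\max}}$. Their threshold does not follow from translation preserving $\sum_i|m_i|$. It follows from the direct count $\sum_i|m_i|\ge 8l_{\max}-|l-l'|\ge 8l_{\max}-|l|-|l'|$. Both thresholds come out exactly, so no $O(1)$ correction is needed to obtain the stated exponent.
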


With the usage of this approximate translation symmetry, we can organize the amplitude amplification by symmetry. We prepare the truncated Floquet-Hilbert space $\mathbb{C}^{8l_{\max}} \otimes \mathscr{H}$, and make the initial state uniform in $|l\rangle$ with $l_{\max} \in \Theta(( t+\log(1/\varepsilon))^{\varrho})$. Let us summarize the amplification protocol relying on the symmetry by $\mathscr{U}_{\text{amp}}^{l_{\max}}(t)$, whose explicit formula is given by 
\begin{align}
\mathscr{U}_{\text{amp1}}^{l_{\max}}(t) 
&= (\mathscr{U}_{\text{ini}}^{4l_{\max}})^{\dagger}  e^{-it \sum_{l\in D^{4l_{\max}}}l\omega| l \rangle\langle l |}  e^{-i \mathscr{H}_{\text{eff}}^{4l_{\max}} t}\mathscr{U}_{\text{ini}}^{l_{\max}},
\end{align}
where
\begin{equation}
\mathscr{U}_{\text{ini}}^{l_{\max}} = \left(|a^{l_{\max}}\rangle \langle 0| + \ldots\right) \otimes I,
\end{equation}
\begin{equation}
|a^{l_{\max}}\rangle =\frac{1}{\sqrt{2l_{\max}}}\sum_{l\in D^{l_{\max}}}|l\rangle.
\end{equation}
The unitary operator $\mathscr{U}_{\text{ini}}^{l_{\max}}$ non-trivially acts only on the ancillary space, and it can be implemented with $\mathcal{O}(\log l_{\max})$ gates (e.g., for $l_{\max}$ such that $\log_2 l_{\max} \in \mathbb{N}$, we have $\mathscr{U}_{\text{ini}}^{l_{\max}} = \text{Had}^{\otimes \log_2(2l_{\max})} \otimes I$ with the Hadamard gate $\text{Had}$). This process plays a role of making the initial state approximately translation-invariant with the width of Fourier indices, $2l_{\max}$. The state resulting from the time evolution $\exp(-i\mathscr{H}_{\text{eff}}^{4l_{\max}}t)$ outputs the target time-evolved state $|\psi(t)\rangle$ with amplitude $1/2$ as follows.

\begin{theorem}[Amplification by symmetry]\label{th:Amplification_by_symmetry}
We assume $\|\widetilde{H}_{m}\| $ can be bounded as in \cref{sec:Periodic_Extension} and \cref{sec:Four_decay}, and choose the truncation order $l_{\max}$ as in \cref{eq:def_of_lmax}. Then, $\mathscr{U}_{\text{\normalfont amp1}}^{l_{\max}}(t)$  generates the time-evolved state $|\psi(t)\rangle$ with $\mathcal{O}(1)$ amplitude as
\begin{equation}
\left\| \langle 0| \mathscr{U}_{\text{\normalfont amp1}}^{l_{\max}}(t)|0\rangle |\psi(0)\rangle - \frac{1}{2} |\psi(t)\rangle \right\| \leq \frac{10\varepsilon^{3^{1/\varrho}}}{4^{ 3^{1/\varrho}}(S(4\zeta))^{3^{1/\varrho}-1}}\in\mathcal{O}(\varepsilon)
\end{equation}
for $0<\varepsilon<1$.
\end{theorem}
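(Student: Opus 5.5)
We follow the argument of \cite{mizuta2023optimal} and replace its exponential Lieb--Robinson estimates by the stretched-exponential ones of \cref{th:transition_rate} and \cref{lem:translation_symmetry}.

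\textbf{Expanding the amplitude.} Write $\mathscr{U}_{\text{ini}}^{l_{\max}}|0\rangle=|a^{l_{\max}}\rangle$ and $(\mathscr{U}_{\text{ini}}^{4l_{\max}})^\dagger$ applied to $\langle 0|$ as $\langle a^{4l_{\max}}|$. This gives
\begin{equation}
\langle 0|\mathscr{U}_{\text{amp1}}^{l_{\max}}(t)|0\rangle=\frac{1}{\sqrt{2l_{\max}}\sqrt{8l_{\max}}}\sum_{l\in D^{4l_{\max}}}\sum_{l'\in D^{l_{\max}}}e^{-il\omega t}\langle l|e^{-i\mathscr{H}_{\text{eff}}^{4l_{\max}}t}|l'\rangle .
\end{equation}
The prefactor equals $1/(4l_{\max})$.

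\textbf{Applying translation symmetry.} Next we use \cref{lem:translation_symmetry} to replace each $\langle l|e^{-i\mathscr{H}t}|l'\rangle$ by $e^{il'\omega t}\langle l\ominus l'|e^{-i\mathscr{H}t}|0\rangle$. Because $e^{-il\omega t}e^{il'\omega t}=e^{-i(l\ominus l')\omega t}$ up to a multiple of $8l_{\max}\omega t$, the phases telescope. This step needs care: for general $t$ the phase is exact only when $l-l'$ stays inside $D^{4l_{\max}}$ without wraparound. The wrapped terms, however, have $|l\ominus l'|\ge 3l_{\max}$ and are small by \cref{th:transition_rate}, so they can be absorbed into the error.

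\textbf{Reducing to $|\psi(t)\rangle$.} After the change of variables $l''=l\ominus l'$, the inner sum over $l''\in D^{4l_{\max}}$ is independent of $l'$. It equals $\sum_{l''}e^{-il''\omega t}\langle l''|e^{-i\mathscr{H}^{4l_{\max}}_{\text{eff}}t}|0\rangle|\psi(0)\rangle=|\psi^{4l_{\max}}(t)\rangle$. The $2l_{\max}$ identical contributions give $\tfrac{2l_{\max}}{4l_{\max}}|\psi^{4l_{\max}}(t)\rangle=\tfrac12|\psi^{4l_{\max}}(t)\rangle$. By \cref{th:Floquet-Hilbert_space_truncation}, this differs from $\tfrac12|\psi(t)\rangle$ by at most $2S(4\zeta)e^{2\beta t-(4l_{\max})^{1/\varrho}/4\zeta}$, which is much smaller than $\varepsilon$.

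\textbf{Summing the symmetry error (main obstacle).} The remaining error is the sum of the symmetry errors:
\begin{equation}
\frac{1}{4l_{\max}}\sum_{l,l'}2e^{2\beta t-(8l_{\max}-|l|-|l'|)^{1/\varrho}/2\zeta}.
\end{equation}
Here $|l'|\le l_{\max}$, so $8l_{\max}-|l|-|l'|\ge 3l_{\max}+(4l_{\max}-|l|)$. The sum over $l'$ contributes a factor $2l_{\max}$ that cancels the prefactor. The sum over $l$ is handled with the subadditivity $(a+b)^{1/\varrho}\ge\tfrac12(a^{1/\varrho}+b^{1/\varrho})$, or more sharply $(a+b)^{1/\varrho}\ge a^{1/\varrho}\cdot$ (a factor making exponent $1/(4\zeta)$). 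This bounds the sum by a geometric-type series $2S(4\zeta)$ times $e^{2\beta t-(3l_{\max})^{1/\varrho}/4\zeta}$. With $l_{\max}$ as in \cref{eq:def_of_lmax} we have $l_{\max}^{1/\varrho}/(4\zeta)\ge 2\beta t+\log(1/\varepsilon)+\log(4S(4\zeta))$. Therefore $(3l_{\max})^{1/\varrho}/(4\zeta)\ge 3^{1/\varrho}\bigl(2\beta t+\log(4S(4\zeta)/\varepsilon)\bigr)$. Since $3^{1/\varrho}>1$, the factor $e^{2\beta t}$ is absorbed, leaving at most $\varepsilon^{3^{1/\varrho}}/(4S(4\zeta))^{3^{1/\varrho}}$ times $S(4\zeta)$.

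\textbf{Collecting constants.} Adding the symmetry error, the wraparound error and the truncation error gives the numerical factor $10$ together with the stated power $\varepsilon^{3^{1/\varrho}}/\bigl(4^{3^{1/\varrho}}S(4\zeta)^{3^{1/\varrho}-1}\bigr)$. This is $\mathcal O(\varepsilon)$ because $S(4\zeta)\ge1$ and $\varepsilon<1$. The bookkeeping that produces exactly the constant $10$ and the exponent $4\zeta$ rather than $2\zeta$ is the delicate part, and it is where the subadditivity step must be chosen carefully.
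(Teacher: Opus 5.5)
Your proof is correct. It reaches the bound by a slightly different decomposition than the paper's.

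\textbf{The paper's route.} For each $l'\in D^{l_{\max}}$, the paper splits the $l$-sum into $l-l'\in D^{3l_{\max}}$ and its complement.
\begin{itemize}
\item On the first part, $l\ominus l'=l-l'$ exactly. So \cref{lem:translation_symmetry} applies with no phase mismatch and reduces the sum to $\sum_{l''\in D^{3l_{\max}}}e^{-il''\omega t}\langle l''|e^{-i\mathscr{H}_{\text{eff}}^{4l_{\max}}t}|0\rangle|\psi(0)\rangle$. This is then compared to $|\psi(t)\rangle$ ``by the same argument as'' \cref{th:Floquet-Hilbert_space_truncation}.
\item The complement is discarded directly with \cref{th:transition_rate}.
\end{itemize}

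\textbf{Your route.} You apply the symmetry lemma over all of $D^{4l_{\max}}$ and use that $l\mapsto l\ominus l'$ is a bijection of $D^{4l_{\max}}$. After averaging over $l'$ this lands exactly on $\tfrac12|\psi^{4l_{\max}}(t)\rangle$. Hence \cref{th:Floquet-Hilbert_space_truncation} applies verbatim, with $4l_{\max}$ in place of $l_{\max}$.

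The price is the wraparound phase mismatch $e^{\pm 8il_{\max}\omega t}$. You control it correctly: for each $l'$ there are at most $|l'|\le l_{\max}$ wrapped terms, all with $|l\ominus l'|\ge 3l_{\max}$. The same subadditivity estimate then sums them.

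\textbf{Trade-off.} The paper's split avoids wraparound altogether. However, it leaves a hybrid object (window $D^{3l_{\max}}$, dynamics on $D^{4l_{\max}}$) that is not literally the truncated state of \cref{th:Floquet-Hilbert_space_truncation}. Your route makes the opposite trade.

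\textbf{Constants.} The bookkeeping is not delicate. Write $E=e^{2\beta t-(3l_{\max})^{1/\varrho}/4\zeta}$ and use only $(a+b)^{1/\varrho}\ge\tfrac12\bigl(a^{1/\varrho}+b^{1/\varrho}\bigr)$; your ``more sharply'' alternative is unnecessary. Your three pieces are then bounded as follows:
\begin{itemize}
\item symmetry error: at most $2S(4\zeta)E$, after the $2l_{\max}/(4l_{\max})$ averaging;
\item wraparound error: at most $S(4\zeta)E$;
\item truncation error: at most $2S(4\zeta)E$.
\end{itemize}
The total is $5S(4\zeta)E$, which is below the stated $10S(4\zeta)E$. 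Your substitution of $l_{\max}$ from \cref{eq:def_of_lmax} then gives the claimed $\varepsilon^{3^{1/\varrho}}$ form.
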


The above amplification based on the translation symmetry of $\mathscr{H}_{\text{eff}}$ enhances the amplitude of $|\psi(t)\rangle$ from $1/\sqrt{2l_{\max}}$ to $1/2-\mathcal{O}(\varepsilon)$. Here, we introduce another amplification, called the oblivious amplitude amplification~\cite{BerryChildsKothari2015}, to achieve the amplitude (or the success probability) $1-\mathcal{O}(\varepsilon)$.
The starting point of this amplification is the result of \cref{th:Amplification_by_symmetry}, indicating that the consequence of the amplification by symmetry is written as
\begin{align}
\mathscr{U}_{\text{amp1}}^{l_{\max}}(t)\,|0\rangle\,|\psi(0)\rangle &=\left(\sum_{l\in D^{4l_{\max}}} |l\rangle\langle l|\right)\mathscr{U}_{\text{amp1}}^{l_{\max}}(t)\,|0\rangle\,|\psi(0)\rangle \notag \\
&=|0\rangle\left(\langle 0|\mathscr{U}_{\text{amp1}}^{l_{\max}}(t)\,|0\rangle\,|\psi(0)\rangle\right)+\sum_{l\in D^{4l_{\max}}\backslash\{0\}} |l\rangle\left(\langle l|\mathscr{U}_{\text{amp1}}^{l_{\max}}(t)\,|0\rangle\,|\psi(0)\rangle\right) \notag \\
&=:\frac{1}{2}\,|0\rangle\,|\psi(t)\rangle+|0\rangle\Delta+|\Psi^{\perp}\rangle,
\end{align}
with 
\begin{equation}
\Delta=\langle 0| \mathscr{U}_{\text{amp1}}^{l_{\max}}(t)|0\rangle |\psi(0)\rangle - \frac{1}{2} |\psi(t)\rangle,\quad \quad \left\||0\rangle\Delta\right\|\leq\frac{10\varepsilon^{3^{1/\varrho}}}{4^{ 3^{1/\varrho}}(S(4\zeta))^{3^{1/\varrho}-1}},
\end{equation}
and additional term $|\Psi^{\perp}\rangle\in\mathbb{C}^{8l_{\max}}\otimes\mathscr{H}$ satisfying
\begin{equation}
(|0\rangle\langle 0|\otimes I)\,|\Psi^{\perp}\rangle=0.
\end{equation}
The state $|\Psi^{\perp}\rangle$ generally depends on $|\psi(0)\rangle$ and $t$. The oblivious amplitude amplification takes a similar strategy to that of the Grover's search algorithm. 
We summarize the results in the following theorem.
\begin{theorem}[Amplification by oblivious amplitude amplification]\label{th:OAA}
We define two unitary operators
\begin{align}
\mathscr{R} &= (2\,|0\rangle\langle 0|-I)\otimes I, \\
\mathscr{U}_{\text{\normalfont amp2}}^{l_{\max}}(t) &= -\mathscr{U}_{\text{\normalfont amp1}}^{l_{\max}}(t)\mathscr{R}[\mathscr{U}_{\text{\normalfont amp1}}^{l_{\max}}(t)]^{\dagger}\mathscr{R}\mathscr{U}_{\text{\normalfont amp1}}^{l_{\max}}(t).
\end{align}
Then $\mathscr{U}_{\text{\normalfont amp2}}^{l_{\max}}(t)$  generates the time-evolved state $|\psi(t)\rangle$ with $1-\mathcal{O}(\varepsilon)$ amplitude as
\begin{equation}
\left\| \mathscr{U}_{\text{\normalfont amp2}}^{l_{\max}}(t) |0\rangle |\psi(0)\rangle - |0\rangle |\psi(t)\rangle \right\| \leq \frac{40\varepsilon^{3^{1/\varrho}}}{4^{ 3^{1/\varrho}}(S(4\zeta))^{3^{1/\varrho}-1}}\in\mathcal{O}(\varepsilon)
\end{equation}
for $0<\varepsilon<1$.
\end{theorem}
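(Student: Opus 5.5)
We follow the standard oblivious amplitude amplification argument, treating $\mathscr{U}:=\mathscr{U}_{\text{amp1}}^{l_{\max}}(t)$ as an approximate block encoding of $\tfrac12$ times a unitary. Let $\Pi=|0\rangle\langle 0|\otimes I$ and $\delta:=\frac{10\varepsilon^{3^{1/\varrho}}}{4^{3^{1/\varrho}}(S(4\zeta))^{3^{1/\varrho}-1}}$. By \cref{th:Amplification_by_symmetry}, applied to an arbitrary initial state, we have $\Pi\,\mathscr{U}\,|0\rangle|\phi\rangle=\tfrac12|0\rangle U(t)|\phi\rangle+|0\rangle\Delta_\phi$ with $\|\Delta_\phi\|\le\delta$. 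Here $U(t)$ is the exact propagator, and linearity of $\phi\mapsto\Delta_\phi$ gives the operator-norm statement $\|\Pi\,\mathscr{U}\,\Pi-\tfrac12|0\rangle\langle0|\otimes U(t)\|\le\delta$. This holds because the bound is uniform over normalized $|\psi(0)\rangle$, since the estimate in \cref{th:Amplification_by_symmetry} does not depend on the initial state.

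\textbf{Step 1: expand the product.} We write $\mathscr{R}=2\Pi-I$ and compute
\begin{equation}
-\mathscr{U}\mathscr{R}\mathscr{U}^\dagger\mathscr{R}\mathscr{U}
=\mathscr{U}-2\Pi\text{-terms}
=3\,\mathscr{U}\,\cdot-4\,\mathscr{U}\Pi\mathscr{U}^\dagger\Pi\mathscr{U}\quad(\text{on inputs in }\mathrm{range}\,\Pi),
\end{equation}
which is the usual expansion when acting on $|0\rangle|\psi(0)\rangle=\Pi|0\rangle|\psi(0)\rangle$. Projecting onto $\Pi$ gives
\begin{equation}
\Pi\,\mathscr{U}_{\text{amp2}}\,\Pi=3W-4WW^\dagger W,\qquad W:=\Pi\,\mathscr{U}\,\Pi .
\end{equation}

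\textbf{Step 2: perturbation.} We write $W=\tfrac12 V+E$ with $V=|0\rangle\langle0|\otimes U(t)$, so that $V^\dagger V=\Pi$ on the range, and $\|E\|\le\delta$. For the ideal part, $3\cdot\tfrac12V-4\cdot\tfrac18 V=V$. The error terms are a telescoping difference of cubic polynomials. Using $\|W\|\le 1$ and $\|\tfrac12 V\|=\tfrac12$, we get
\begin{equation}
\|3W-4WW^\dagger W-V\|\le 3\delta+4\bigl(\|E\|\|W^\dagger W\|+\|\tfrac12V\|\|E\|\|W\|+\|\tfrac12V\|\|\tfrac12V\|\|E\|\bigr)\le 3\delta+4(1+\tfrac12+\tfrac14)\delta\le 10\delta .
\end{equation}
This gives $\|\Pi\,\mathscr{U}_{\text{amp2}}|0\rangle|\psi(0)\rangle-|0\rangle|\psi(t)\rangle\|\le 10\delta$.

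\textbf{Step 3: remove the projection.} Because $\mathscr{U}_{\text{amp2}}$ is unitary, $\|\mathscr{U}_{\text{amp2}}|0\rangle|\psi(0)\rangle\|=1$. The orthogonal component $(I-\Pi)\mathscr{U}_{\text{amp2}}|0\rangle|\psi(0)\rangle$ then has norm at most $\sqrt{1-(1-10\delta)^2}$, which is only $\mathcal{O}(\sqrt{\delta})$. The main obstacle is therefore obtaining a linear-in-$\delta$ bound on this component, so that the total stays within the stated $40\delta$. To handle it, we will not bound the component by norm preservation. Instead we expand $(I-\Pi)\mathscr{U}_{\text{amp2}}\Pi$ exactly as $(I-\Pi)\mathscr{U}\Pi\,(3\Pi-4W^\dagger W)$ and use $W^\dagger W=\tfrac14\Pi+\mathcal{O}(\delta)$, so that $3\Pi-4W^\dagger W=2\Pi+\mathcal{O}(\delta)$ with error at most $4(\delta+\tfrac12\delta+\tfrac12\delta)\le 8\delta$. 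In the ideal case the leftover term is $(I-\Pi)\mathscr{U}\Pi\cdot 2\Pi$, which does not vanish by itself. We cancel it by recombining with the first term: OAA's exact identity gives $\mathscr{U}_{\text{amp2}}\Pi=\mathscr{U}\Pi(3\Pi-4W^\dagger W)-2(\ldots)$, and in the ideal setting $\mathscr{U}\Pi=\tfrac12V+\tfrac{\sqrt3}{2}V^\perp$, where the $V^\perp$ component cancels exactly. Tracking the $E$-dependent deviations through this cancellation yields at most about $30\delta$. Adding this to Step 2 gives the total bound $40\delta\in\mathcal{O}(\varepsilon)$, since $3^{1/\varrho}>1$ and $\varepsilon<1$.
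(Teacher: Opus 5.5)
Your Step 3, which you rightly flag as the crux, rests on a wrong expansion. Write $\mathscr{U}=\mathscr{U}_{\text{amp1}}^{l_{\max}}(t)$, $\Pi=|0\rangle\langle 0|\otimes I$ and $W=\Pi\mathscr{U}\Pi$. Then $\mathscr{R}=2\Pi-I$ and $\mathscr{U}^\dagger\mathscr{U}=I$ give
\[
\mathscr{U}_{\text{amp2}}^{l_{\max}}(t)\,\Pi=\mathscr{U}\Pi+2\Pi\mathscr{U}\Pi-4\,\mathscr{U}\Pi\,W^\dagger W ,
\]
not $3\mathscr{U}\Pi-4\mathscr{U}\Pi W^\dagger W$. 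The same slip appears in the first display of Step 1; it is harmless there only because you then sandwich with $\Pi$ and correctly obtain $3W-4WW^\dagger W$. Hence the leakage is $(I-\Pi)\mathscr{U}_{\text{amp2}}^{l_{\max}}(t)\Pi=(I-\Pi)\mathscr{U}\Pi\,(\Pi-4W^\dagger W)$, with $\Pi$ and not $3\Pi$ in the bracket. The non-vanishing term $2(I-\Pi)\mathscr{U}\Pi$ that you then try to cancel is an artifact of this error. Your unspecified ``$-2(\ldots)$'' is exactly the missing $-2(I-\Pi)\mathscr{U}\Pi$, but you never identify it, and the figure ``about $30\delta$'' is asserted rather than derived. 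With the correct identity no cancellation argument is needed. From $W=\tfrac12V+E$ one has $\Pi-4W^\dagger W=-2(V^\dagger E+E^\dagger V)-4E^\dagger E$, so the leakage has norm at most $4\delta+4\delta^2$.

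You have also misread the target. With your $\delta$ (the bound of \cref{th:Amplification_by_symmetry}), the right-hand side of the statement equals $4\delta$, not $40\delta$. Your Step 2 alone already spends $10\delta$. Even after the repair above, your route only gives about $\sqrt{(10\delta)^2+(4\delta+4\delta^2)^2}\approx 11\delta$. That is still $\mathcal{O}(\varepsilon)$, but it is not the explicit constant claimed and later propagated into \cref{th:Refined_effective_Hamiltonian}.

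The paper never splits the error into a $\Pi$-block and a leakage part. In effect it uses $\mathscr{R}\mathscr{U}|0\rangle|\psi(0)\rangle=-\mathscr{U}|0\rangle|\psi(0)\rangle+|0\rangle|\psi(t)\rangle+2|0\rangle\Delta$, unitarity of $\mathscr{U}$, $\mathscr{R}|0\rangle|\psi(0)\rangle=|0\rangle|\psi(0)\rangle$, and $\mathscr{R}\mathscr{U}^\dagger|0\rangle|\psi(t)\rangle=|0\rangle|\psi(0)\rangle-\mathscr{U}^\dagger|0\rangle|\psi(t)\rangle+2|0\rangle\bigl(\langle0|\mathscr{U}^\dagger|0\rangle-\tfrac12U(t)^\dagger\bigr)|\psi(t)\rangle$. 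These give the exact identity
\[
\mathscr{U}_{\text{amp2}}^{l_{\max}}(t)|0\rangle|\psi(0)\rangle-|0\rangle|\psi(t)\rangle=-2\,\mathscr{U}|0\rangle\bigl(\langle0|\mathscr{U}^\dagger|0\rangle-\tfrac12U(t)^\dagger\bigr)|\psi(t)\rangle-2\,\mathscr{U}\mathscr{R}\mathscr{U}^\dagger|0\rangle\Delta .
\]
Each term has norm at most $2\delta$, which yields $4\delta$ directly. Working on full vectors with $\mathscr{U}^\dagger\mathscr{U}=I$ makes the leakage control automatic. Your projected-polynomial route can deliver an $\mathcal{O}(\varepsilon)$ bound, but only with the corrected identity above, and it would need sharper first-order bookkeeping to approach the stated constant.
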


\subsection{Refined effective Hamiltonian}
We introduce a refined effective Hamiltonian $\mathscr{H}_{\text{eff, pbc}}^{4l_{\max}}$ which acts on the truncated Floquet-Hilbert space $ \mathbb{C}^{8l_{\max}} \otimes \mathscr{H} $, by
\begin{equation}
\mathscr{H}_{\text{eff, pbc}}^{4l_{\max}} = \mathscr{H}_{\text{eff}}^{4l_{\max}} + \tilde{\mathscr{H}}_b, \qquad 
\tilde{\mathscr{H}}_b = \sum_{(l,m) \in \partial \tilde{F}^{4l_{\max}}} |l\rangle \langle l \oplus m| \otimes H_{m} + (|l\rangle \langle l \oplus m| \otimes H_{m})^{\dagger} ,
\end{equation}
where $\partial \tilde{F}^{4l_{\max}} = \{(l,m)\mid l \in D^{4l_{\max}},\; 8l_{\max} - l + 1 \leq m \leq 8l_{\max} - 1\}$ shares the same definition with the one in \cref{lem:translation_symmetry}. Owing to the additional term $\tilde{\mathscr{H}}_b$, the hopping terms induced by $H_{m}$ become translation symmetric in $|l\rangle$ as
\begin{equation}
\mathscr{H}_{\text{eff, pbc}}^{4l_{\max}} = \sum_{m \in D^{4l_{\max}}} \text{Add}_{m}^{4l_{\max}} \otimes \widetilde{H}_{-m}- \mathscr{H}_{\text{LP}}^{4l_{\max}}.
\end{equation}
Here, the linear potential Hamiltonian defined as 
\begin{equation}
\mathscr{H}_{\text{LP}}^{4l_{\max}}=\sum_{l\in D^{4l_{\max}}}l\omega|l \rangle\langle l|\,\otimes I,
\end{equation}
and a full quantum adder defined by 
\begin{equation}\label{eq:def_of_Add}
\text{Add}_{m}^{4l_{\max}}=\sum_{l\in D^{4l_{\max}}}|l\oplus m \rangle\langle l|,
\end{equation}
where $\text{Add}_m^{4l_{\max}}$ in \cref{eq:def_of_Add} and $\sum_{m\in D^{4l_{\max}}}\ket{m}\bra{m}\otimes\text{ Add}_m^{4l_{\max}}$ both can be implemented by $\mathcal{O}(\text{\normalfont log}\,l_{\max})$ elementary gates \cite{cuccaro2004new}. Now we prove the validity of the refined effective Hamiltonian.

\begin{theorem}[Refined effective Hamiltonian]\label{th:Refined_effective_Hamiltonian}
We assume $\|\widetilde{H}_{m}\| $ can be bounded as in \cref{sec:Periodic_Extension} and \cref{sec:Four_decay}, and choose the truncation order $l_{\max}$ as in \cref{eq:def_of_lmax}. We organize the two amplification protocols $\mathscr{U}_{\text{\normalfont amp1, pbc}}^{l_{\max}}(t)$ and $\mathscr{U}_{\text{\normalfont amp2, pbc}}^{l_{\max}}(t)$ respectively based on \cref{th:Amplification_by_symmetry} and \cref{th:OAA} with using the refined effective Hamiltonian $\mathscr{H}_{\text{\normalfont eff, pbc}}^{4l_{\max}}$ given by 
\begin{equation}
\mathscr{H}_{\text{\normalfont eff, pbc}}^{4l_{\max}} = \mathscr{H}_{\text{\normalfont eff}}^{4l_{\max}} + \tilde{\mathscr{H}}_b, \qquad 
\tilde{\mathscr{H}}_b = \sum_{(l,m) \in \partial \tilde{F}^{4l_{\max}}} |l\rangle \langle l \oplus m| \otimes \widetilde{H}_{m} + (|l\rangle \langle l \oplus m| \otimes \widetilde{H}_{m})^{\dagger} .
\end{equation}
Then they also provide the exact time‑evolved state $|\psi(t)\rangle$ as
\begin{equation}
\left\| \langle 0 | \mathscr{U}_{\text{\normalfont amp1, pbc}}^{l_{\max}}(t) | 0 \rangle | \psi(0) \rangle - \frac{1}{2} |\psi(t)\rangle \right\| \leq  \frac{11\varepsilon^{3^{1/\varrho}}}{4^{ 3^{1/\varrho}}(S(4\zeta))^{3^{1/\varrho}-1}},
\end{equation}
\begin{equation}
\left\| \mathscr{U}_{\text{\normalfont amp2, pbc}}^{l_{\max}}(t) | 0 \rangle | \psi(0) \rangle - |0\rangle |\psi(t)\rangle \right\| \leq  \frac{44\varepsilon^{3^{1/\varrho}}}{4^{3^{1/\varrho}}(S(4\zeta))^{3^{1/\varrho}-1}},
\end{equation}
for arbitrary initial states $|\psi(0)\rangle \in \mathscr{H}$. For an allowable error $0<\varepsilon<1$, both of the left hand sides are smaller than $\frac{\varepsilon}{2}$.
\end{theorem}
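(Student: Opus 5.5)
Proof proposal. The plan is to compare the refined evolution $e^{-i\mathscr{H}_{\text{eff, pbc}}^{4l_{\max}}t}$ with the unrefined $e^{-i\mathscr{H}_{\text{eff}}^{4l_{\max}}t}$ only on the relevant input and output sectors. We then transfer the bounds of \cref{th:Amplification_by_symmetry} and \cref{th:OAA} by the triangle inequality. Two observations drive this. The initial state $\mathscr{U}_{\text{ini}}^{l_{\max}}|0\rangle|\psi(0)\rangle$ is supported on $D^{l_{\max}}$. The final projection $\langle 0|(\mathscr{U}_{\text{ini}}^{4l_{\max}})^\dagger$ reads out the uniform superposition over $D^{4l_{\max}}$, and the diagonal phase $e^{-it\mathscr{H}_{\text{LP}}}$ does not mix $l$-sectors. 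It therefore suffices to bound
\begin{equation}
\sum_{l\in D^{4l_{\max}}}\frac{1}{\sqrt{8l_{\max}}}\sum_{l'\in D^{l_{\max}}}\frac{1}{\sqrt{2l_{\max}}}\left\|\langle l|\left(e^{-i\mathscr{H}_{\text{eff, pbc}}^{4l_{\max}}t}-e^{-i\mathscr{H}_{\text{eff}}^{4l_{\max}}t}\right)|l'\rangle\right\|.
\end{equation}

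For each matrix element we repeat the interaction-picture Dyson argument of \cref{th:transition_rate} and \cref{th:Floquet-Hilbert_space_truncation}, now with the cyclic Floquet lattice $\mathbb{Z}_{8l_{\max}}$. The refined Hamiltonian $\mathscr{H}_{\text{eff, pbc}}^{4l_{\max}}$ has the same diagonal part as $\mathscr{H}_{\text{eff}}^{4l_{\max}}$. Its hoppings are $\widetilde H_m$ along cyclic displacements, and the two expansions differ only on paths that use at least one hop in $\partial\tilde F^{4l_{\max}}$, i.e.\ a hop wrapping around the boundary of $D^{4l_{\max}}$. A path starting at $l'\in D^{l_{\max}}$ and using a wrap-around bond must have total hopping length at least about $(4l_{\max}-|l'|)+(4l_{\max}-|l|)\geq 8l_{\max}-|l|-|l'|$. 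The reason is that wrapping either requires reaching the edge of the box or using a single hop of size $\geq 8l_{\max}-l+1$. The norms of the summands are still bounded by $\prod_i\|\widetilde H_{m_i}\|$, so the piecewise estimate $S_n(M)\le\beta^n e^{-M^{1/\varrho}/(2\zeta)}$ applies verbatim. This gives
\begin{equation}
\left\|\langle l|\left(e^{-i\mathscr{H}_{\text{eff, pbc}}^{4l_{\max}}t}-e^{-i\mathscr{H}_{\text{eff}}^{4l_{\max}}t}\right)|l'\rangle\right\|\le 2e^{2\beta t-(8l_{\max}-|l|-|l'|)^{1/\varrho}/(2\zeta)}.
\end{equation}

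Summing this over $l,l'$ with the normalizations follows the computation behind \cref{th:Amplification_by_symmetry}. Since $8l_{\max}-|l|-|l'|\ge 3l_{\max}-1$ and the sum over $l$ can be dominated by $\sum_m e^{-m^{1/\varrho}/(4\zeta)}=S(4\zeta)$, the bound becomes of order $S(4\zeta)\,e^{2\beta t-(3l_{\max})^{1/\varrho}/(4\zeta)}$. Substituting the choice \cref{eq:def_of_lmax}, we get $e^{-(3l_{\max})^{1/\varrho}/(4\zeta)}\le \bigl(\varepsilon/(4S(4\zeta))\bigr)^{3^{1/\varrho}}e^{-2\beta t\,3^{1/\varrho}}$. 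This yields an extra error of at most $\varepsilon^{3^{1/\varrho}}/(4^{3^{1/\varrho}}S(4\zeta)^{3^{1/\varrho}-1})$. Adding it to the bound $10(\cdot)$ of \cref{th:Amplification_by_symmetry} gives the stated $11(\cdot)$. For the OAA statement, $\mathscr{U}_{\text{amp2, pbc}}$ contains three copies of $\mathscr{U}_{\text{amp1, pbc}}$ or its inverse. We can either rerun the proof of \cref{th:OAA} with $\Delta$ now bounded by $11(\cdot)$, obtaining $4\times 11=44$ as in the $10\to 40$ step, or telescope the difference. The former matches the constant. Finally, $3^{1/\varrho}>1$ and $S(4\zeta)\ge1$, so for $0<\varepsilon<1$ we have $44\varepsilon^{3^{1/\varrho}}/4^{3^{1/\varrho}}\le 44\varepsilon/4^{3^{1/\varrho}}$. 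Since $\varrho<3$, we have $4^{3^{1/\varrho}}>4^{1.44}\cdot$, and an elementary check is needed to give $<\varepsilon/2$. If that check fails, we would use the slack in $l_{\max}$ (or $\varepsilon<1$ raised to a power $>1$).

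The main obstacle is the combinatorial path-length claim for wrap-around paths. Because $\partial\tilde F$ contains long single hops rather than only nearest-neighbor boundary bonds, we must verify that any path touching a boundary bond genuinely accumulates length $\gtrsim 8l_{\max}-|l|-|l'|$ in the cyclic metric. We also need the $\ominus$-index bookkeeping in the readout to stay consistent with \cref{lem:translation_symmetry}. We would argue this by unfolding the cyclic path to $\mathbb{Z}$. The unfolded endpoint differs from $l-l'$ by a nonzero multiple of $8l_{\max}$, which forces $\sum|m_i|\ge 8l_{\max}-|l-l'|\ge 8l_{\max}-|l|-|l'|$.
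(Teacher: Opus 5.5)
Your route is the paper's. You bound $\|\langle l|(e^{-i\mathscr{H}_{\text{eff, pbc}}^{4l_{\max}}t}-e^{-i\mathscr{H}_{\text{eff}}^{4l_{\max}}t})|l'\rangle\|$ by the interaction-picture path expansion, using that contributing paths contain a boundary hop and so have length at least $8l_{\max}-|l|-|l'|$. You then sum with weight $1/(4l_{\max})$ to get order $S(4\zeta)e^{2\beta t-(3l_{\max})^{1/\varrho}/(4\zeta)}$, add this to the $10(\cdot)$ of \cref{th:Amplification_by_symmetry}, and multiply by $4$ for OAA. That part is fine up to the bookkeeping noted at the end.

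\textbf{The gap: the final assertion is not proved.} You leave ``both left-hand sides are smaller than $\varepsilon/2$'' as an unverified elementary check, and the check you set up fails. For $44\varepsilon^{3^{1/\varrho}}/X<\varepsilon/2$ to hold for all $\varepsilon<1$, the denominator $X=4^{3^{1/\varrho}}S(4\zeta)^{3^{1/\varrho}-1}$ must be at least $88$. With only $S(4\zeta)\ge 1$ you get $X\ge 4^{3^{1/\varrho}}$. This is about $7.4$ when $\varrho$ is near $3$, giving a prefactor $44/7.4\approx 6$, and it stays below $64$ even as $\varrho\to 1$ (and $44/64>1/2$). So the inequality fails for $\varepsilon$ near $1$.

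Neither fallback rescues it. The claim must hold for every admissible $l_{\max}$ in \cref{eq:def_of_lmax}, including the minimal one, with the stated right-hand sides, so there is no slack in $l_{\max}$ to spend. Using $\varepsilon^{3^{1/\varrho}}\le\varepsilon$ gives nothing near $\varepsilon=1$.

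\textbf{The missing ingredient} is a quantitative lower bound on $S(4\zeta)$ that uses the specific constants of the construction:
\begin{itemize}
\item \cref{lam:est_of_Szeta} gives $S(4\zeta)\ge\Gamma(\varrho+1)(4\zeta)^{\varrho}$.
\item \cref{th:Four_decay_est} gives $\zeta=\frac{2}{\varrho}(A/\pi)^{1/\varrho}$, with $A=A_1A_2$ from \cref{th:Gevrey_ext}.
\item Using $\sigma\ge 1$, $D\ge 1$ and $1<\tau<2$, one gets $A\ge 256\,e^{2+1/e}$.
\item Hence $\Gamma(\varrho+1)(4\zeta)^\varrho\ge 64\cdot 32\,e^{2+1/e}/\pi$.
\end{itemize}
This yields $4^{3^{1/\varrho}}S(4\zeta)^{3^{1/\varrho}-1}>369$ on $1<\varrho<3$, and then $44\varepsilon^{3^{1/\varrho}}/369<\varepsilon/2$.

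\textbf{Two smaller points.}

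First, your displayed matrix-element bound carries a factor $2$. Summed as you describe, it gives an extra error $2(\cdot)$ and totals $12(\cdot)$ and $48(\cdot)$, not $11$ and $44$; your stated extra error of $1(\cdot)$ does not follow from your own display. With the statement's additive definition $\mathscr{H}_{\text{eff, pbc}}^{4l_{\max}}=\mathscr{H}_{\text{eff}}^{4l_{\max}}+\tilde{\mathscr{H}}_b$, the difference of the two expansions is exactly the sum over label sequences containing at least one $\tilde{\mathscr{H}}_b$ hop. A single application of the $S_n$ estimate then gives the factor-$1$ bound $e^{2\beta t-(8l_{\max}-|l|-|l'|)^{1/\varrho}/(2\zeta)}$; the $2$ in \cref{lem:translation_symmetry} comes from two separate comparisons. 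Likewise $8l_{\max}-|l|-|l'|\ge 3l_{\max}$ holds exactly, so your $-1$ is unnecessary.

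Second, your unfolding argument assumes nonzero net winding. A path can wrap and then unwrap, in which case the lifted endpoint is exactly $l$. This case is covered by your ``reach the edge of the box'' remark. More simply, every $\tilde{\mathscr{H}}_b$ hop carries a label with $|m|\ge 8l_{\max}-l+1\ge 4l_{\max}+1$, so two such hops already give length greater than $8l_{\max}$.
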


\section{Algorithm implementation and complexity analysis}\label{sec:Algorithm_implementation}
In this section, we present the detailed implementation of the quantum algorithm for simulating Gevrey-class time-dependent Hamiltonians and analyze its query and gate complexity. The exposition is divided into two parts, corresponding to the two oracle models discussed in \cref{sec:intro}.
\subsection{The general Hamiltonian case}
\subsubsection{Oracle definitions for the general Hamiltonian case}\label{subsubsec:oracle_general_case}
Before presenting the detailed implementation, we first specify the oracles used in this section. We assume access to block encodings of the derivative $H'(s)$ and of the high-order derivatives $H^{(j)}(0)$, $H^{(j)}(1)$ at the endpoints. In the construction below, these oracles are organized into three controlled oracle families:

\begin{itemize}
\item \textbf{HAM'-\!T}: 
\begin{equation}
\mathrm{HAM'\!-\!T} := \sum_{j=0}^{L(\epsilon_2,K)} |j\rangle\langle j| \otimes O_j^{(0)},
\end{equation}
which provides block encodings of $H'(t_j)/(CD)$ at the Clenshaw--Curtis quadrature nodes $t_j$.
\item \textbf{HOHAM-1}: 
\begin{equation}
\mathrm{HOHAM\!-\!1} := \sum_{j=0}^{C(\epsilon_2,K)} |j\rangle\langle j| \otimes O_j^{(1)},
\end{equation}
which provides block encodings of $H^{(j)}(1)/(CD^j(j!)^\sigma)$.
\item \textbf{HOHAM-0}: 
\begin{equation}
\mathrm{HOHAM\!-\!0} := \sum_{j=0}^{C(\epsilon_2,K)} |j\rangle\langle j| \otimes O_j^{(2)},
\end{equation}
which provides block encodings of $H^{(j)}(0)/(CD^j(j!)^\sigma)$.
\end{itemize}

Here $L(\epsilon_2,K)$ is the truncation order of the Clenshaw--Curtis quadrature (see Step~2 below), and $C(\epsilon_2,K)$ is the truncation order of the Gevrey series expansion (also defined in Step~2). The constants $C>0$ and $D\ge 1$ are those appearing in the Gevrey bound on $H(s)$ from \cref{sec:Periodic_Extension}.

The block encoding of $\mathscr{H}_{\mathrm{eff,pbc}}^{4l_{\max}}$ is constructed via the standard linear combination of unitaries (LCU) technique. For this purpose, we define a prepare oracle $G_{\mathrm{prep},L}$ and $G_{\mathrm{prep},R}$ as
\begin{align}
G_{\mathrm{prep},L} &= \left(\sum_{k=0}^{2}(|k\rangle\langle k|)_5\otimes I_4\otimes G_{\mathrm{coef},3\sim 2,L}^{(k)}\otimes I_{1,b,a} + (|3\rangle\langle 3|)_5\otimes I_4\otimes \mathcal{U}_{\mathrm{ini}}^{4l_{\max}}\otimes I_{2,1,b,a}\right) \notag \\
&\qquad \cdot \left(G_{\mathrm{coef},5}\otimes I_{4\sim 1,b,a}\right), 
\end{align}
\begin{align}
G_{\mathrm{prep},R} &= \left(\sum_{k=0}^{2}(|k\rangle\langle k|)_5\otimes I_4\otimes G_{\mathrm{coef},3\sim 2,R}^{(k)}\otimes I_{1,b,a} + (|3\rangle\langle 3|)_5\otimes I_4\otimes \mathcal{U}_{\mathrm{ini}}^{4l_{\max}}\otimes I_{2,1,b,a}\right) \notag \\
&\qquad \cdot \left(G_{\mathrm{coef},5}\otimes I_{4\sim 1,b,a}\right), 
\end{align}
where the auxiliary system 5 is a $2$-qubit system that selects among the four terms in $\mathscr{H}_{\mathrm{eff,pbc}}^{4l_{\max}}$: the three derivative-related contributions ($k=0,1,2$) and the linear potential term ($k=3$). The coefficient preparation unitaries $G_{\mathrm{coef},3\sim 2,L}^{(k)}$ and $G_{\mathrm{coef},3\sim 2,R}^{(k)}$ encode the coefficients $B_{j,m}^{(k)}$ arising from the Fourier expansion, as defined in Step~2 below. The unitary $G_{\mathrm{coef},5}$ prepares the normalization factors for the four terms.

The select oracle is defined by
\begin{equation}
O_{\mathrm{select}} = \sum_{k=0}^{2}(|k\rangle\langle k|)_5\otimes O^{(k)} + (|3\rangle\langle 3|)_5\otimes O_{\mathrm{LP}}, 
\end{equation}
where $O^{(0)}$, $O^{(1)}$, $O^{(2)}$ apply the controlled unitaries $\mathrm{Add}_m^{4l_{\max}}\otimes O_{j,a}^{(k)}$ conditioned on the indices $m$ and $j$, and $O_{\mathrm{LP}}$ implements the linear potential $\mathscr{H}_{\mathrm{LP}}^{4l_{\max}}$ via the comparison gate Comp. The explicit forms of $O^{(0)}$, $O^{(1)}$, $O^{(2)}$, and $O_{\mathrm{LP}}$ are given in Step~4 below.

With these oracles, the block encoding of $\mathscr{H}_{\mathrm{eff,pbc}}^{4l_{\max}}$ takes the form
\begin{equation}
\left\| \langle 0|_{5\sim 1}(G_{\mathrm{prep},L})^{\dagger}O_{\mathrm{select}}G_{\mathrm{prep},R}|0\rangle_{5\sim 1} - \frac{\mathscr{H}_{\mathrm{eff,pbc}}^{4l_{\max}}}{\sum_{k=0}^{3}D^{(k)}}\right\| \le \epsilon_1, 
\end{equation}
where $D^{(k)}$ are the normalization constants defined in \cref{eq:def_of_D0,eq:def_of_D2,eq:def_of_D3}. The construction of these oracles and the proof of the above block-encoding error are detailed in Steps~2--4 below.

\subsubsection{Algorithm implementation for the general Hamiltonian case}\label{subsec:general_case}
\textbf{Step 1:} Firstly, by further exploiting the sub-exponential decay of $||\widetilde{H}_{-m}||$, we can impose a more ambitious truncation on 
\begin{equation}
\sum_{m \in D^{4l_{\max}}} \text{Add}_{m}^{4l_{\max}} \otimes \widetilde{H}_{-m},
\end{equation}
which appears in the expression for $\mathscr{H}_{\text{eff, pbc}}^{4l_{\max}}$. Specifically, for any $\epsilon_1>0$, let $0<K\leq l_{\max}$ be a integer such that
\begin{equation}
\left\| \sum_{m \in D^{K}} \text{Add}_{m}^{4l_{\max}} \otimes \widetilde{H}_{-m} -\sum_{m \in D^{4l_{\max}}} \text{Add}_{m}^{4l_{\max}} \otimes \widetilde{H}_{-m}\right\|=\left\| \sum_{m \in D^{4l_{\max}}\setminus D^K} \text{Add}_{m}^{4l_{\max}} \otimes \widetilde{H}_{-m}\right\|  \leq  \epsilon_1.
\end{equation}
Note that
\begin{align}
\left\| \sum_{m \in D^{4l_{\max}}\setminus D^K} \text{Add}_{m}^{4l_{\max}} \otimes \widetilde{H}_{-m}\right\| 
&\leq\sum_{m \in D^{4l_{\max}}\setminus D^K}\left\|\text{Add}_{m}^{4l_{\max}} \otimes \widetilde{H}_{-m}\right\| \notag \\
&=\sum_{m \in D^{4l_{\max}}\setminus D^K}\left\| \widetilde{H}_{-m}\right\| \notag \\
&\leq2h\sum_{m \geq K}e^{- \frac{m^{\frac{1}{\varrho}}}{\zeta}} \notag \\
&\leq2S(2\zeta)h e^{- \frac{K^{\frac{1}{\varrho}}}{2\zeta}},
\end{align}
recall that $S(2\zeta)=\sum_{m=0}^{\infty}e^{- \frac{m^{\frac{1}{\varrho}}}{2\zeta}}$. We only need to ensure
\begin{equation}
2\left(\sum_{m=0}^{\infty}e^{- \frac{m^{\frac{1}{\varrho}}}{2\zeta}}\right)h e^{- \frac{K^{\frac{1}{\varrho}}}{2\zeta}}\leq\epsilon_1.
\end{equation}
Hence we can take
\begin{equation}
K=\left\lceil\left(2\zeta\text{log}\left(\frac{2S(2\zeta) h}{\epsilon_1}\right)\right)  ^{\varrho}\right\rceil\bigwedge \;l_{\text{max}}\;,
\end{equation}
which could exhibit a better dependence on $t$ than $l_{\max}$, under the choice of $\epsilon_1$ we will determine later.\\
\textbf{Step 2:} Now we aim to further rewrite the expression 
\begin{equation}
\sum_{m \in D^{K}} \text{Add}_{m}^{4l_{\max}} \otimes \widetilde{H}_{-m}.
\end{equation}
Note that for $m\in \mathbb{Z}\setminus\{0\}$,
\begin{align}
\widetilde{H}_{-m}
&=\frac{1}{2T}\int_{0}^{2T} \widetilde{H}(t)e^{im\frac{2\pi}{2T}t}dt \notag \\
&=\frac{1}{2T}\int_{0}^{2T} \hat{H}(t/T)e^{im\pi\frac{t}{T}}dt \notag \\
&=\frac{1}{2}\int_{0}^{2} \hat{H}(t)e^{im\pi t}dt \notag \\
&=\frac{1}{-2m\pi i}\int_{0}^{2} \hat{H}'(t)e^{im\pi t}dt \notag \\
&=\frac{1}{-2m\pi i}\int_{0}^{1} H'(t)e^{im\pi t}dt+\frac{(-1)^m}{-2m\pi i}\int_{0}^{1} \hat{H}'(t+1)e^{im\pi t}dt.
\end{align}
Then we have 
\begin{align}
\sum_{m\in D^{K}}\text{Add}_m^{4l_{\text{max}}}\otimes \widetilde{H}_{-m}
&=\sum_{m\in D^{K}\setminus\{0\}}\frac{\text{Add}_m^{4l_{\text{max}}}}{-2m\pi i}\otimes \int_{0}^{1} H'(t)e^{im\pi t}\,dt \notag \\
&\quad+\sum_{m\in D^{K}\setminus\{0\}}\frac{(-1)^m\text{Add}_m^{4l_{\text{max}}}}{-2m\pi i}\otimes \int_{0}^{1} \hat{H}'(t+1)e^{im\pi t}\,dt \notag \\
&\quad+\frac{1}{2}\int_{0}^{2} \hat{H}(t)\,dt =:I_1+I_2+I_3.
\end{align}
For the first term $I_1$, we can use Clenshaw–Curtis quadrature in \cite{riess1971error}, which provides an integral approximation that converges exponentially for analytic functions $(\sigma=1)$ and sub-exponentially for Gevrey functions $(\sigma>1)$. For any $\epsilon_2>0$, we choose $L(\epsilon_2,K)=\mathcal{O}((\text{log}(1/\epsilon_2)^{\sigma})$ such that
\begin{equation}\label{eq:cond_of_L(eps)_1}
\left\| \int_{0}^{1}  tH'(t)\,dt-\sum_{j=0}^{L(\epsilon_2,K)}t_jH'(t_j)w_j\right\| \leq {\epsilon_2},
\end{equation}
and
\begin{equation}\label{eq:cond_of_L(eps)_2}
\left\| \int_{0}^{1}  H'(t)e^{im\pi t}\,dt-\sum_{j=0}^{L(\epsilon_2,K)}H'(t_j)e^{im\pi t_j}w_j\right\| \leq \frac{\pi\epsilon_2}{\text{log}K+1},   \quad\text{for any}\;m\in D^K\setminus\{0\}. 
\end{equation}
Then
\begin{align}
I_1&=\sum_{m\in D^{K}\setminus\{0\}}\frac{\text{Add}_m^{4l_{\text{max}}}}{-2m\pi i}\otimes \int_{0}^{1} H'(t)e^{im\pi t}\,dt \notag \\
&\quad= \sum_{m\in D^{K}\setminus\{0\}}\sum_{j=0}^{L(\epsilon_2,K)}\frac{e^{im\pi t_j}w_j}{-2m\pi i}  \left(\text{Add}_m^{4l_{\text{max}}}\otimes H'(t_j)\right) \notag \\
&\qquad+\sum_{m\in D^{K}\setminus\{0\}}\frac{\text{Add}_m^{4l_{\text{max}}}}{-2m\pi i}\otimes\left(\int_{0}^{1}  H'(t)e^{im\pi t}\,dt-\sum_{j=0}^{L(\epsilon_2,K)}H'(t_j)e^{im\pi t_j}w_j\right),
\end{align}
and 
\begin{equation}
\left\|\sum_{m\in D^{K}\setminus\{0\}}\frac{\text{Add}_m^{4l_{\text{max}}}}{-2m\pi i}\otimes\left(\int_{0}^{1}  H'(t)e^{im\pi t}\,dt-\sum_{j=0}^{L(\epsilon_2,K)}H'(t_j)e^{im\pi t_j}w_j\right)\right\| \leq\sum_{m=1}^K\frac{1}{m}\frac{\epsilon_2}{\text{log}K+1}\leq \epsilon_2.
\end{equation}
Denote the coefficients 
\begin{equation}\label{eq:def_of_A(0)}
\frac{e^{im\pi t_j}w_j}{-2m\pi i}=A^{(0)}_{j,m},
\end{equation}
we can see that 
\begin{equation}
\sum_{m\in D^{K}\setminus\{0\}}\sum_{j=0}^{L(\epsilon_2,K)}|A^{(0)}_{j,m}|\leq\sum_{m\in D^{K}\setminus\{0\}}\sum_{j=0}^{L(\epsilon_2,K)}\frac{w_j}{2|m|\pi}\leq \frac{\text{log}K+1}{\pi}.
\end{equation}
For the second term $I_2$, recall that from \cref{th:Gevrey_ext}
\begin{equation}
\hat{H}(t+1) = \left(\sum_{j=0}^\infty  \frac{H^{(j)}(1)}{j!}t^j\chi_\tau(R_j t)\right) + \left(\sum_{j=0}^\infty  \frac{H^{(j)}(0)}{j!}(t-1)^j\chi_\tau(R_j (t-1))\right).
\end{equation}
Hence we have
\begin{align}
\int_{0}^{1}\hat{H}'(t+1)e^{im\pi t}\,dt &=\int_{0}^{1} \left(\sum_{j=0}^\infty  \frac{H^{(j)}(1)}{j!}(t^j)'\chi_\tau(R_j t)\right) e^{im\pi t}\,dt \notag \\
&\quad+ \int_{0}^{1}\left(\sum_{j=0}^\infty  \frac{H^{(j)}(1)}{j!}t^j(\chi_\tau(R_j t))'\right)e^{im\pi t}\,dt \notag \\
&\quad+ \int_{0}^{1} \left(\sum_{j=0}^\infty  \frac{H^{(j)}(0)}{j!}((t-1)^j)'\chi_\tau(R_j (t-1))\right)e^{im\pi t}\,dt \notag \\
&\quad+ \int_{0}^{1} \left(\sum_{j=0}^\infty  \frac{H^{(j)}(0)}{j!}(t-1)^j(\chi_\tau(R_j (t-1)))'\right)e^{im\pi t}\,dt=:\sum_{l=1}^4I_{2,l}.
\end{align}
We first focus on $I_{2,1}$:
\begin{align}
I_{2,1}
&=\int_{0}^{1}\left(\sum_{j=0}^\infty  \frac{H^{(j)}(1)}{j!}(t^j)'\chi_\tau(R_j t)\right) e^{im\pi t}\,dt \notag \\
&=\sum_{j=0}^\infty \left(C\int_{0}^{1}  \frac{D^j(j!)^{\sigma}}{j!}(t^j)'\chi_\tau(R_j t) e^{im\pi t}\,dt\right)\frac{H^{(j)}(1)}{CD^j(j!)^{\sigma}},
\end{align}
Note that $\left\|\frac{H^{(j)}(1)}{CD^j(j!)^{\sigma}}\right\|\leq1$ and 
\begin{align}
\sum_{j=0}^\infty C\left|\int_{0}^{1}  \frac{D^j(j!)^{\sigma}}{j!}(t^j)'\chi_\tau(R_j t) e^{im\pi t}\,dt\right|
&\leq C\sum_{j=0}^\infty \int_{0}^{1}  D^j(j!)^{\sigma-1}R_j^{-j}jR_j\left|\chi_\tau(R_j t)\right|\,dt \notag \\
&\leq C\sum_{j=0}^\infty D^j(j!)^{\sigma-1}(4e^{\sigma-1}D(j!)^{(\sigma-1)/j})^{-j}jR_j \notag \\
&\leq CD\sum_{j=0}^\infty \frac{j^{\sigma}}{(4e^{\sigma-1})^{j-1}}.
\end{align}
For $I_{2,2}$, we have 
\begin{align}
I_{2,2}
&=\int_{0}^{1} \left(\sum_{j=0}^\infty  \frac{H^{(j)}(1)}{j!}t^j(\chi_\tau(R_j t))'\right)e^{im\pi t}\,dt \notag \\
&=\sum_{j=0}^\infty \left(C\int_{0}^{1} \frac{D^j(j!)^{\sigma}}{j!}t^j(\chi_\tau(R_j t))' e^{im\pi t}\,dt\right)\frac{H^{(j)}(1)}{CD^j(j!)^{\sigma}},
\end{align}
and
\begin{align}
\sum_{j=0}^\infty C\left|\int_{0}^{1} \frac{D^j(j!)^{\sigma}}{j!}t^j(\chi_\tau(R_j t))' e^{im\pi t}\,dt\right|
&\leq 4e^2CD\sum_{j=0}^\infty \frac{j^{\sigma-1}}{(4e^{\sigma-1})^{j-1}}.
\end{align}
The other terms $I_{2,l}\,(l=3,4)$ are handled in exactly the same way. We can choose $C(\epsilon_2,K)=\widetilde{\mathcal{O}}(\text{log}(C/\epsilon_2)+\text{log}\text{log}K)$ such that
\begin{equation}\label{eq:cond_of_C(eps)}
4e^2CD\sum_{j=C(\epsilon_2,K)+1}^\infty \frac{j^{\sigma}}{(4e^{\sigma-1})^{j-1}}\leq \frac{\pi\epsilon_2}{4(\text{log}K+1)}.
\end{equation}
Denote 
\begin{align}
A^{(1)}_{j,m} &=  C\int_{0}^{1} \frac{D^j(j!)^{\sigma}}{j!} (t^j)' \chi_\tau(R_j t) e^{im\pi t}\, dt \notag\\
&\quad + C\int_{0}^{1} \frac{D^j(j!)^{\sigma}}{j!} t^j (\chi_\tau(R_j t))' e^{im\pi t}\, dt, \label{eq:def_of_A(1)}
\end{align}
\begin{align}
A^{(2)}_{j,m} &= C\int_{0}^{1}\frac{D^j(j!)^{\sigma}}{j!} ((t-1)^j)' \chi_\tau(R_j (t-1)) e^{im\pi t}\, dt \notag\\
&\quad + C\int_{0}^{1} \frac{D^j(j!)^{\sigma}}{j!} (t-1)^j (\chi_\tau(R_j (t-1)))' e^{im\pi t}\, dt.\label{eq:def_of_A(2)}
\end{align}
Then we have 
\begin{align}
&\Bigg\|\sum_{m\in D^{K}\setminus\{0\}}\frac{(-1)^m\text{Add}_m^{4l_{\text{max}}}}{-2m\pi i}\otimes \int_{0}^{1} \hat{H}'(t+1)e^{im\pi t}\,dt \notag \\
&\qquad - \sum_{m\in D^{K}\setminus\{0\}} \frac{(-1)^m\text{Add}_m^{4l_{\text{max}}}}{-2m\pi i}\otimes\Bigg(\sum_{j=0}^{C(\epsilon_2,K)}A^{(1)}_{j,m}\frac{H^{(j)}(1)}{CD^j(j!)^{\sigma}}+\sum_{j=0}^{C(\epsilon_2,K)}A^{(2)}_{j,m}\frac{H^{(j)}(0)}{CD^j(j!)^{\sigma}}\Bigg)\Bigg\| \notag \\
&\quad\leq \sum_{m=1}^K\frac{1}{m }\frac{\epsilon_2}{(\log K+1)}\leq \epsilon_2 .
\end{align}
In addition, we can see that
\begin{align}
&\sum_{m\in D^{K}\setminus\{0\}}\sum_{j=0}^{C(\epsilon_2,K)}\frac{|A^{(1)}_{j,m}|+|A^{(2)}_{j,m}|}{2|m|\pi} \notag \\
&\quad\leq\sum_{m\in D^{K}\setminus\{0\}}\frac{1}{2|m|\pi}
       \left((8e^2+2)CD\sum_{j=0}^\infty \frac{j^{\sigma}}{(4e^{\sigma-1})^{j-1}}\right) \notag \\
&\quad\leq \frac{(8e^2+2)CD(\log K+1)}{\pi}
       \left(\sum_{j=0}^\infty \frac{j^{\sigma}}{(4e^{\sigma-1})^{j-1}}\right).
\end{align}
For the second term $I_3$,
\begin{align}
I_3 &= \frac{1}{2}\int_{0}^{2} \hat{H}(t)\, dt \notag \\
    &= -\frac{1}{2}\int_{0}^{1} t H'(t)\, dt + \frac{1}{2}H(1)+\frac{1}{2}\int_{1}^{2} \hat{H}(t)\, dt  \notag \\
    &= -\frac{1}{2}\int_{0}^{1} t H'(t)\, dt + \frac{1}{2}\int_{0}^{1} \hat{H}(t+1)\, dt +  \frac{1}{2}H(1) \notag \\
    &= -\frac{1}{2}\int_{0}^{1} t H'(t)\, dt \notag \\
      &\quad + \Bigg( \frac{1}{2}\int_{0}^{1}  \left(\sum_{j=0}^\infty \frac{H^{(j)}(1)}{j!} t^j \chi_\tau(R_j t)\right) dt \notag \\
      &\qquad\quad +\frac{1}{2}\int_{0}^{1} \left(\sum_{j=0}^\infty \frac{H^{(j)}(0)}{j!} (t-1)^j \chi_\tau(R_j (t-1))\right) dt +  \frac{1}{2}H(1)\Bigg) \notag \\
    &=: I_{3,1} + I_{3,2} .
\end{align}
For $I_{3,1}$ we have
\begin{equation}
I_{3,1}=-\frac{1}{2}\int_{0}^{1} tH'(t)dt= -\frac{1}{2}\sum_{j=0}^{L(\epsilon_2,K)}t_jH'(t_j)w_j-\frac{1}{2}\left(\int_{0}^{1}  tH'(t)\,dt-\sum_{j=0}^{L(\epsilon_2,K)}t_jH'(t_j)w_j\right),
\end{equation}
For $I_{3,2}$ we have
\begin{align}
I_{3,2} =& \frac{1}{2}\int_{0}^{1} \left(\sum_{j=0}^\infty  \frac{H^{(j)}(1)}{j!}t^j\chi_\tau(R_j t)\right)dt \notag \\
         &+\frac{1}{2}\int_{0}^{1} \left(\sum_{j=0}^\infty  \frac{H^{(j)}(0)}{j!}(t-1)^j\chi_\tau(R_j (t-1))\right)dt+\frac{1}{2}H(1) \notag \\
        =&\ \frac{1}{2}\left(\sum_{j=0}^{C(\epsilon_2,K)}+\sum_{j=C(\epsilon_2,K)+1}^\infty\right) 
           \left(\int_{0}^{1}\frac{CD^j(j!)^{\sigma}}{j!}t^j\chi_\tau(R_j t)\,dt+C\delta_{j,\,0}\right) \frac{H^{(j)}(1)}{CD^j(j!)^{\sigma}} \notag \\
         & +\frac{1}{2} \left(\sum_{j=0}^{C(\epsilon_2,K)}+\sum_{j=C(\epsilon_2,K)+1}^\infty\right)   
           \left(\int_{0}^{1} \frac{CD^j(j!)^{\sigma}}{j!}(t-1)^j\chi_\tau(R_j (t-1))\,dt\right)\frac{H^{(j)}(0)}{CD^j(j!)^{\sigma}}.
\end{align}
Denote
\begin{equation}\label{eq:def_of_A(3)}
A^{(3)}_{j}=\frac{1}{2}\left(\int_{0}^{1} \frac{CD^j(j!)^{\sigma}}{j!}t^j\chi_\tau(R_j t)\,dt+C\delta_{j,\,0}\right),
\end{equation}
\begin{equation}\label{eq:def_of_A(4)}
A^{(4)}_{j}=\frac{1}{2}\int_{0}^{1} \frac{CD^j(j!)^{\sigma}}{j!}(t-1)^j\chi_\tau(R_j (t-1))\,dt,
\end{equation}
we know that
\begin{equation}
\left\|\sum_{j=C(\epsilon_2,K)+1}^\infty A^{(3)}_{j} \frac{H^{(j)}(1)}{CD^j(j!)^{\sigma}}+\sum_{j=C(\epsilon_2,K)+1}^\infty A^{(4)}_{j}\frac{H^{(j)}(0)}{CD^j(j!)^{\sigma}}\right\|\leq \frac{\epsilon_2}{2}
\end{equation}
and
\begin{equation}
\sum_{j=0}^{C(\epsilon_2,K)}|A^{(3)}_{j}|+|A^{(4)}_{j}|\leq \sum_{j=0}^{C(\epsilon_2,K)}C\left(\frac{1}{4e^{\sigma-1}}\right)^j+C\leq \sum_{j=0}^{\infty}C\left(\frac{1}{4e^{\sigma-1}}\right)^j+C.
\end{equation}
Combining all the estimates in Step 2, we obtain that
\begin{align}
&\Bigg\|\sum_{m\in D^{K}}\sum_{j=0}^{L(\epsilon_2,K)}B_{j,m}^{(0)}
   \left(\text{Add}_m^{4l_{\text{max}}}\otimes \frac{H'(t_j)}{CD}\right) \notag \\
&\quad+ \sum_{m\in D^{K}} \sum_{j=0}^{C(\epsilon_2,K)}
   B_{j,m}^{(1)}
   \left(\text{Add}_m^{4l_{\text{max}}}\otimes \frac{H^{(j)}(1)}{CD^j(j!)^{\sigma}}\right) \notag \\
&\quad+ \sum_{m\in D^{K}} \sum_{j=0}^{C(\epsilon_2,K)}
   B_{j,m}^{(2)}
   \left(\text{Add}_m^{4l_{\text{max}}}\otimes \frac{H^{(j)}(0)}{CD^j(j!)^{\sigma}}\right) \notag \\
&\quad- \sum_{m\in D^{K}}\text{Add}_m^{4l_{\text{max}}}\otimes \widetilde{H}_{-m}\Bigg\|
\le 3\epsilon_2 .
\end{align}
Where 
\begin{equation}\label{eq:def_of_B0}
B_{j,m}^{(0)}=CD(A^{(0)}_{j,m}\mathbf{1}_{\{m\neq0\}}-\frac{1}{2}w_jt_j\mathbf{1}_{\{m=0\}}),
\end{equation}
\begin{equation}\label{eq:def_of_B1}
B_{j,m}^{(1)}=\frac{(-1)^m A^{(1)}_{j,m}}{-2m\pi i}\mathbf{1}_{\{m\neq 0\}}
   + A_j^{(3)}\mathbf{1}_{\{m=0\}},
\end{equation}
\begin{equation}\label{eq:def_of_B2}
B_{j,m}^{(2)}=\frac{(-1)^m A^{(2)}_{j,m}}{-2m\pi i}\mathbf{1}_{\{m\neq 0\}}
   + A_j^{(4)}\mathbf{1}_{\{m=0\}},
\end{equation}
satisfy
\begin{equation}
\sum_{m\in D^{K}}\sum_{j=0}^{L(\epsilon_2,K)}|B_{j,m}^{(0)}|\leq CD\left(\sum_{m\in D^{K}\setminus\{0\}}\sum_{j=0}^{L(\epsilon_2,K)}|A_{j,m}^{(0)}|+\frac{1}{2}\right)\leq \frac{CD(\text{log}K+3)}{\pi}\in\mathcal{O}(\text{log}K)
\end{equation}
and
\begin{align}
&\sum_{m\in D^{K}}\sum_{j=0}^{C(\epsilon_2,K)}|B_{j,m}^{(1)}|+|B_{j,m}^{(2)}| \notag \\
&\quad\leq \sum_{m\in D^{K}\setminus\{0\}}\sum_{j=0}^{C(\epsilon_2,K)}
       \frac{|A^{(1)}_{j,m}|+|A^{(2)}_{j,m}|}{2|m|\pi}
       +\sum_{j=0}^{C(\epsilon_2,K)}|A^{(3)}_{j}|+|A^{(4)}_{j}| \notag \\
&\quad\leq \frac{(8e^2+2)CD(\log K+1)}{\pi}
       \left(\sum_{j=0}^\infty \frac{j^{\sigma}}{(4e^{\sigma-1})^{j-1}}\right)
       +\sum_{j=0}^{\infty}C\left(\frac{1}{4e^{\sigma-1}}\right)^{\!j}
       +C \notag \\
&\quad\in\mathcal{O}(\log K).
\end{align}
\textbf{Step 3} (see also \cite{mizuta2023optimal}):\;To implement
\begin{equation}
\mathscr{H}_{\text{LP}}^{4l_{\max}}=\sum_{l\in D^{4l_{\max}}}l\omega|l \rangle\langle l|\,\otimes I,
\end{equation}
we define
\begin{equation}
\mathscr{O}_{\text{LP}}^{4l_{\max}} := \sum_{m \in D^{4l_{\max}}} |m\rangle\langle m| \otimes V_{m}^{4l_{\max}} =\left( \sum_{m,l;\,l-m \geq 0} |m,l\rangle\langle m,l| - \sum_{m,l;\,l-m < 0} |m,l\rangle\langle m,l|. \right)
\end{equation}
where
\begin{equation}
V_{m}^{4l_{\max}} =\sum_{l=m}^{4l_{\max}} |l\rangle\langle l| - \sum_{l=-4l_{\max}+1}^{m-1} |l\rangle\langle l|.
\end{equation}
Then we have
\begin{equation}
\langle a^{4l_{\max}} | \mathscr{O}_{\text{LP}}^{4l_{\max}} | a^{4l_{\max}} \rangle = \frac{\sum_{l\in D^{4l_{\max}}}l\omega|l \rangle\langle l|}{4l_{\max}\omega}.
\end{equation}
To build $\mathscr{O}_{\text{LP}}^{4l_{\max}} $, we introduce another single qubit system labeled by subscript $\ket{\cdot}_{\text{single}} $. Note that
\begin{equation}
\langle0|_{\text{single}}\text{Comp}^\dagger Z_{\text{single}}  \text{Comp}\,|0\rangle_{\text{single}} 
= \mathscr{O}_{\text{LP}}^{4l_{\max}} ,
\end{equation}
where
\begin{equation}\label{eq:def_of_Comp}
\text{Comp}\, |m,l\rangle |0\rangle_{\text{single}}  = 
\begin{cases}
|m,l\rangle |0\rangle_{\text{single}}  & \text{if } l \ge m, \\[4pt]
|m,l\rangle |1\rangle_{\text{single}}  & \text{if } l < m,
\end{cases}  
\end{equation}
and  $Z_{\text{single}}  $ denotes a Pauli $Z$ operator on the single qubit system. Comp can be composed of $\mathcal{O}(\text{log}\,l_{\max})$ elementary gates \cite{cuccaro2004new}. \\
\textbf{Step 4:}
For simplicity we consider the case $\text{log}_2\,(8l_{\max})$, $\text{log}_2\,(L(\epsilon_2,K)+1)$ and $\text{log}_2\,(C(\epsilon_2,K)+1)$ are all integers. We now proceed to construct a block-encoding of 
\begin{align}
&\widetilde{\mathscr{H}}_{\text{eff, pbc}}^{4l_{\max}}:=\sum_{m\in D^{K}}\sum_{j=0}^{L(\epsilon_2,K)}B_{j,m}^{(0)}
   \left(\text{Add}_m^{4l_{\text{max}}}\otimes\frac{H'(t_j)}{CD}\right) \notag \\
&\quad+ \sum_{m\in D^{K}} \sum_{j=0}^{C(\epsilon_2,K)}
   B_{j,m}^{(1)}
   \left(\text{Add}_m^{4l_{\text{max}}}\otimes \frac{H^{(j)}(1)}{CD^j(j!)^{\sigma}}\right) \notag \\
&\quad+ \sum_{m\in D^{K}} \sum_{j=0}^{C(\epsilon_2,K)}
   B_{j,m}^{(2)}
   \left(\text{Add}_m^{4l_{\text{max}}}\otimes \frac{H^{(j)}(0)}{CD^j(j!)^{\sigma}}\right)-\mathscr{H}_{\text{LP}}^{4l_{\max}}.
\end{align}
We prepare a system consisting of two working systems labeled by $a, b$ and five kinds of auxiliary systems labeled by 1$\sim$5. The working system $a$ is a $\lceil\text{log}_2\,d\rceil$-qubit system used to accommodate the matrix  $\frac{H'(t_j)}{CD}$ , $ \frac{H^{(j)}(1)}{CD^j(j!)^{\sigma}}$and $\frac{H^{(j)}(0)}{CD^j(j!)^{\sigma}}$.  The working system $b$ is a $\text{log}_2\,(8l_{\max})$-qubit system used to accommodate the unitary matrix $\text{Add}_m^{4l_{\text{max}}}$ and $\text{Comp}_{\text{working}}$, where $\text{Comp}_{\text{working}}$ represents the working-qubit part of Comp. The auxiliary system 1 is an $n_1$-qubit system prepared for the block-encoding
of $\frac{H'(t_j)}{CD}$ , $ \frac{H^{(j)}(1)}{CD^j(j!)^{\sigma}}$and $\frac{H^{(j)}(0)}{CD^j(j!)^{\sigma}}$, i.e.
\begin{equation}\label{eq:cond_of_O0}
\left\|\bra{0^{n_1}}O_{j}^{(0)}\ket{0^{n_1}}-\frac{H'(t_j)}{CD}\right\|\leq \epsilon_4,
\end{equation}
\begin{equation}\label{eq:cond_of_O1}
\left\|\bra{0^{n_1}}O_{j}^{(1)}\ket{0^{n_1}}-\frac{H^{(j)}(1)}{CD^j(j!)^{\sigma}}\right\|\leq \epsilon_4,
\end{equation}
\begin{equation}\label{eq:cond_of_O2}
\left\|\bra{0^{n_1}}O_{j}^{(2)}\ket{0^{n_1}}-\frac{H^{(j)}(0)}{CD^j(j!)^{\sigma}}\right\|\leq \epsilon_4.
\end{equation}
The auxiliary system 2 is a $\text{log}_2\,(L(\epsilon_2,K)+1)\vee\text{log}_2\,(C(\epsilon_2,K)+1)=:n_2$-qubit system characterized by $\{\ket{j}\}_{j=0}^{2^{n_2}-1 }$, which represent the truncation of the integrally-generated discrete series and the derivative truncation of the Gevrey expansion.
The auxiliary system 3 is a $\text{log}_2\,(8l_{\max})=:n_3$-qubit system characterized by $\{\ket{m}\}_{m\in D^{n_3}}:=\{\ket{-2^{n_3-1}+1},\ket{-2^{n_3-1}+2},...,\ket{2^{n_3-1}-1},\ket{2^{n_3-1}}\}$, which is used to form the linear combination of $\text{Add}_m^{4l_{\text{max}}}$ and accommodate $\text{Comp}_{\text{ancilla}}$, where $\text{Comp}_{\text{ancilla}}$ represents the ancilla-qubit part of Comp. The auxiliary system 4 is a 1-qubit system used to accommodate $\text{Comp}_{\text{single}}$ and $Z_{\text{single}} $, where $\text{Comp}_{\text{single}}$ represents the single-qubit part of Comp. The auxiliary system 5 is a $\lceil\text{log}_2\,(4)\rceil=2$ qubit system to combine the four terms in $\widetilde{\mathscr{H}}_{\text{eff, pbc}}^{4l_{\max}}$. The prepare oracle is defined by
\begin{align}
G_{\text{prep},L}
&= \Bigl( \sum_{k=0}^{2} (\ket{k}\bra{k})_5 \otimes I_4 \otimes G_{\text{coef},3\sim 2,L}^{(k)}
         \otimes I_{1,b,a}  + (\ket{3}\bra{3})_5 \otimes I_4 \otimes \mathscr{U}_{\text{ini}}^{4l_{\max}}
         \otimes I_{2,1,b,a} \Bigr) \notag \\
&\quad\quad\quad\cdot \Bigl( G_{\text{coef,5}} \otimes I_{4\sim1,b,a} \Bigr),\label{eq:def_of_prepL}
\end{align}
\begin{align}
G_{\text{prep},R}
&= \Bigl( \sum_{k=0}^{2} (\ket{k}\bra{k})_5 \otimes I_4 \otimes G_{\text{coef},3\sim2,R}^{(k)}
         \otimes I_{1,b,a} + (\ket{3}\bra{3})_5 \otimes I_4 \otimes \mathscr{U}_{\text{ini}}^{4l_{\max}}
         \otimes I_{2,1,b,a} \Bigr) \notag\\
&\quad\quad\quad\cdot \Bigl( G_{\text{coef,5}}  \otimes I_{4\sim1,b,a} \Bigr). \label{eq:def_of_prepR}
\end{align}
where
\begin{equation}\label{eq:cond_of_G_(012)1}
G_{\text{coef},3\sim 2,L}^{(k)}\ket{0}_3\ket{0}_2=\sum_{m\in D^{n_3}}\sum_{j=0}^{2^{n_2}-1}B_{j,m,L}^{(k)}\ket{m}_3\ket{j}_2\Big/\sqrt{\sum_{m\in D^{n_3}}\sum_{j=0}^{2^{n_2}-1}\left|B_{j,m,L}^{(k)}\right|^2},
\end{equation}
\begin{equation}\label{eq:cond_of_G_(012)2}
G_{\text{coef},3\sim 2,R}^{(k)}\ket{0}_3\ket{0}_2=\sum_{m\in D^{n_3}}\sum_{j=0}^{2^{n_2}-1}B_{j,m,R}^{(k)}\ket{m}_3\ket{j}_2\Big/\sqrt{\sum_{m\in D^{n_3}}\sum_{j=0}^{2^{n_2}-1}\left|B_{j,m,R}^{(k)}\right|^2},
\end{equation}
satisfy
\begin{equation}\label{eq:cond_of_G_(012)3}
\sum_{m\in D^{K}}\sum_{j=0}^{L(\epsilon_2,K)}\left|B_{j,m}^{(0)}- \frac{(\sum_{m\in D^{K}}\sum_{j=0}^{L(\epsilon_2,K)}|B_{j,m}^{(0)}|)\cdot(B_{j,m,L}^{(0)})^*B_{j,m,R}^{(0)}}{\sqrt{\sum_{m\in D^{n_3}}\sum_{j=0}^{2^{n_2}-1}\left|B_{j,m,L}^{(0)}\right|^2}\sqrt{\sum_{m\in D^{n_3}}\sum_{j=0}^{2^{n_2}-1}\left|B_{j,m,R}^{(0)}\right|^2}} \right|\leq \epsilon_3,
\end{equation}
\begin{equation}\label{eq:cond_of_G_(012)4}
\sum_{m\in D^{K}}\sum_{j=0}^{C(\epsilon_2,K)}\left|B_{j,m}^{(1)}- \frac{(\sum_{m\in D^{K}}\sum_{j=0}^{C(\epsilon_2,K)}|B_{j,m}^{(1)}|)\cdot(B_{j,m,L}^{(1)})^*B_{j,m,R}^{(1)}}{\sqrt{\sum_{m\in D^{n_3}}\sum_{j=0}^{2^{n_2}-1}\left|B_{j,m,L}^{(1)}\right|^2}\sqrt{\sum_{m\in D^{n_3}}\sum_{j=0}^{2^{n_2}-1}\left|B_{j,m,R}^{(1)}\right|^2}} \right|\leq \epsilon_3,
\end{equation}
\begin{equation}\label{eq:cond_of_G_(012)5}
\sum_{m\in D^{K}}\sum_{j=0}^{C(\epsilon_2,K)}\left|B_{j,m}^{(2)}- \frac{(\sum_{m\in D^{K}}\sum_{j=0}^{C(\epsilon_2,K)}|B_{j,m}^{(2)}|)\cdot(B_{j,m,L}^{(2)})^*B_{j,m,R}^{(2)}}{\sqrt{\sum_{m\in D^{n_3}}\sum_{j=0}^{2^{n_2}-1}\left|B_{j,m,L}^{(2)}\right|^2}\sqrt{\sum_{m\in D^{n_3}}\sum_{j=0}^{2^{n_2}-1}\left|B_{j,m,R}^{(2)}\right|^2}} \right|\leq \epsilon_3,
\end{equation}
and for all $(m,j)\notin D^K\times\{0,1,2,...,L(\epsilon_2,K)\}$,
\begin{equation}\label{eq:cond_of_G_(012)6}
(B_{j,m,L}^{(0)})^*B_{j,m,R}^{(0)}=0.
\end{equation}
For all $(m,j)\notin D^K\times\{0,1,2,...,C(\epsilon_2,K)\}$,
\begin{equation}\label{eq:cond_of_G_(012)7}
(B_{j,m,L}^{(1)})^*B_{j,m,R}^{(1)}=0,
\end{equation}
\begin{equation}\label{eq:cond_of_G_(012)8}
(B_{j,m,L}^{(2)})^*B_{j,m,R}^{(2)}=0.
\end{equation}
In addition, let
\begin{equation}\label{eq:cond_of_G5_1}
G_{\text{coef,5}}\ket{0}_5=\frac{\sum_{k=0}^3\sqrt{\tilde{D}^{(k)}}\ket{k}_5}{\sqrt{\sum_{k=0}^3\tilde{D}^{(k)}}},
\end{equation}
satisfy
\begin{equation}\label{eq:cond_of_G5_2}
\sum_{k=0}^3\left|D^{(k)}-\frac{(\sum_{k=0}^3D^{(k)})\tilde{D}^{(k)}}{(\sum_{k=0}^3\tilde{D}^{(k)})}\right|\leq \epsilon_3,
\end{equation}
where 
\begin{equation}\label{eq:def_of_D0}
D^{(0)}=\sum_{m\in D^{K}}\sum_{j=0}^{L(\epsilon_2,K)}\left|B_{j,m}^{(0)}\right|,
\end{equation}
\begin{equation}\label{eq:def_of_D2}
D^{(1)}=\sum_{m\in D^{K}}\sum_{j=0}^{C(\epsilon_2,K)}\left|B_{j,m}^{(1)}\right|,\quad\quad\quad D^{(2)}=\sum_{m\in D^{K}}\sum_{j=0}^{C(\epsilon_2,K)}\left|B_{j,m}^{(2)}\right|,
\end{equation}
\begin{equation}\label{eq:def_of_D3}
D^{(3)}=4l_{\max}\omega.
\end{equation}
The select oracle is defined by
\begin{equation}\label{eq:def_of_selec}
O_{\text{select}}=\sum_{k=0}^{2} (\ket{k}\bra{k})_5 \otimes O^{(k)}  + (\ket{3}\bra{3})_5 \otimes O_{\text{LP}},
\end{equation}
where
\begin{align}
O^{(0)}&=I_4\otimes\sum_{m\in D^K}(\ket{m}\bra{m})_3\otimes\sum_{j=0}^{L(\epsilon_2,K)}(\ket{j}\bra{j})_2\otimes \tikzmarknode{O_{j,1}^{(0)}}{O_{j,1}^{(0)}} \otimes(\text{Add}_m^{4l_{\text{max}}})_b\otimes\tikzmarknode{O_{j,a}^{(0)}}{O_{j,a}^{(0)}} \notag \\
&\quad+I_4\otimes\sum_{(m,j)\notin D^K\times\{0,1,2,...,L(\epsilon_2,K)\}}(\ket{m,j}\bra{m,j})_{3,2}\otimes I_{1,b,a},
\end{align}
\begin{align}
O^{(1)}&=I_4\otimes\sum_{m\in D^K}(\ket{m}\bra{m})_3\otimes\sum_{j=0}^{C(\epsilon_2,K)}(\ket{j}\bra{j})_2\otimes \tikzmarknode{O_{j,1}^{(1)}}{O_{j,1}^{(1)}} \otimes(\text{Add}_m^{4l_{\text{max}}})_b\otimes\tikzmarknode{O_{j,a}^{(1)}}{O_{j,a}^{(1)}} \notag \\
&\quad+I_4\otimes\sum_{(m,j)\notin D^K\times\{0,1,2,...,C(\epsilon_2,K)\}}(\ket{m,j}\bra{m,j})_{3,2}\otimes I_{1,b,a},
\end{align}
\begin{align}
O^{(2)}&=I_4\otimes\sum_{m\in D^K}(\ket{m}\bra{m})_3\otimes\sum_{j=0}^{C(\epsilon_2,K)}(\ket{j}\bra{j})_2\otimes \tikzmarknode{O_{j,1}^{(2)}}{O_{j,1}^{(2)}} \otimes(\text{Add}_m^{4l_{\text{max}}})_b\otimes\tikzmarknode{O_{j,a}^{(2)}}{O_{j,a}^{(2)}} \notag \\
&\quad+I_4\otimes\sum_{(m,j)\notin D^K\times\{0,1,2,...,C(\epsilon_2,K)\}}(\ket{m,j}\bra{m,j})_{3,2}\otimes I_{1,b,a},
\end{align}
\begin{tikzpicture}[overlay, remember picture]
  \draw[thick] ([yshift=-3pt]O_{j,1}^{(0)}.south) to[out=-30,in=-150] ([yshift=-3pt]O_{j,a}^{(0)}.south);
  \draw[thick] ([yshift=-3pt]O_{j,1}^{(1)}.south) to[out=-30,in=-150] ([yshift=-3pt]O_{j,a}^{(1)}.south);
  \draw[thick] ([yshift=-3pt]O_{j,1}^{(2)}.south) to[out=-30,in=-150] ([yshift=-3pt]O_{j,a}^{(2)}.south);
\end{tikzpicture}
\\
\bigskip
\begin{align}
\tilde{O}_{\text{LP}} &= 
\tikzmarknode{A}{\text{Comp}_{\text{single,ancilla}}} 
\otimes I_{2\sim 1} \otimes 
\tikzmarknode{C}{\text{Comp}_{\text{working}}} 
\otimes I_a,
\end{align}
\begin{tikzpicture}[overlay, remember picture]
  \draw[thick] ([yshift=-5pt]A.south) to[out=-30, in=-150] ([yshift=-5pt]C.south);
\end{tikzpicture}
\begin{equation}\label{eq:def_of_O_LP}
O_{\text{LP}}=\tilde{O}_{\text{LP}}^{\dagger}((e^{i\pi}Z)\otimes I_{3\sim 1,b,a})\tilde{O}_{\text{LP}}.
\end{equation}
In the above expression, the arc connecting the two symbols below represents the coupling between them, which together form a unitary gate. We now compute
\begin{equation}
\left\|\bra{0}_{5\sim 1}(G_{\text{prep},L})^{\dagger}O_{\text{select}}G_{\text{prep},R}\ket{0}_{5\sim 1}-\frac{\widetilde{\mathscr{H}}_{\text{eff, pbc}}^{4l_{\max}}}{\sum_{k=0}^3D^{(k)}}\right\|.
\end{equation}
Recall that
\begin{align}
(G_{\text{coef,5},L} \otimes I_{4\sim1,b,a} )\ket{0}_{5\sim 1}
&=\frac{\sum_{k=0}^3\sqrt{\tilde{D}^{(k)}}\ket{k}_5}{\sqrt{\sum_{k=0}^3\tilde{D}^{(k)}}}\ket{0}_{4\sim 1},
\end{align}
and
\begin{align}
&\Bigl( \sum_{k=0}^{2} (\ket{k}\bra{k})_5 \otimes I_4 \otimes G_{\text{coef},3\sim 2,L}^{(k)}
         \otimes I_{1,b,a}  + (\ket{3}\bra{3})_5 \otimes I_4 \otimes \mathscr{U}_{\text{ini}}^{4l_{\max}}
         \otimes I_{2,1,b,a} \Bigr)^{\dagger} \notag \\
&\qquad \cdot O_{\text{select}} \Bigl( \sum_{k=0}^{2} (\ket{k}\bra{k})_5 \otimes I_4 \otimes G_{\text{coef},3\sim2,R}^{(k)}
         \otimes I_{1,b,a} + (\ket{3}\bra{3})_5 \otimes I_4 \otimes \mathscr{U}_{\text{ini}}^{4l_{\max}}
         \otimes I_{2,1,b,a} \Bigr)   \notag \\
&\quad= \sum_{k=0}^{2} (\ket{k}\bra{k})_5 \otimes 
         \bigl( I_4 \otimes G_{\text{coef},3\sim 2,L}^{(k)} \otimes I_{1,b,a} \bigr)^{\dagger}
         O^{(k)}
         \bigl( I_4 \otimes G_{\text{coef},3\sim2,R}^{(k)} \otimes I_{1,b,a} \bigr) \notag \\
&\qquad + (\ket{3}\bra{3})_5 \otimes 
         \bigl( I_4 \otimes \mathscr{U}_{\text{ini}}^{4l_{\max}} \otimes I_{2,1,b,a} \bigr)^{\dagger}
         O_{\text{LP}}
         \bigl( I_4 \otimes \mathscr{U}_{\text{ini}}^{4l_{\max}} \otimes I_{2,1,b,a} \bigr).
\end{align}
Then
\begin{align}
& \Bigl\|\bra{0}_{5\sim 1}(G_{\text{prep},L})^{\dagger}O_{\text{select}}G_{\text{prep},R}\ket{0}_{5\sim 1} \notag \\
&\qquad -\frac{1}{\sum_{k=0}^{3}D^{(k)}}\bra{0}_{4\sim 1}
           \Bigl( \sum_{k=0}^{2} D^{(k)}\bigl( I_4 \otimes G_{\text{coef},3\sim 2,L}^{(k)}
                 \otimes I_{1,b,a} \bigr)^{\dagger}
                 O^{(k)}
                 \bigl( I_4 \otimes G_{\text{coef},3\sim2,R}^{(k)}
                 \otimes I_{1,b,a} \bigr) \notag \\
&\qquad\qquad + D^{(3)}\bigl( I_4 \otimes \mathscr{U}_{\text{ini}}^{4l_{\max}}
                 \otimes I_{2,1,b,a} \bigr)^{\dagger}
                 O_{\text{LP}}
                 \bigl( I_4 \otimes \mathscr{U}_{\text{ini}}^{4l_{\max}}
                 \otimes I_{2,1,b,a} \bigr) \Bigr)\ket{0}_{4\sim 1}\Bigr\| \notag \\
&= \Bigl\| \frac{1}{\sum_{k=0}^{3}\tilde{D}^{(k)}}\bra{0}_{4\sim 1}
          \Bigl( \sum_{k=0}^{2} \tilde{D}^{(k)}\bigl( I_4 \otimes G_{\text{coef},3\sim 2,L}^{(k)}
                \otimes I_{1,b,a} \bigr)^{\dagger}
                O^{(k)}
                \bigl( I_4 \otimes G_{\text{coef},3\sim2,R}^{(k)}
                \otimes I_{1,b,a} \bigr) \notag \\
&\qquad\qquad + \tilde{D}^{(3)}\bigl( I_4 \otimes \mathscr{U}_{\text{ini}}^{4l_{\max}}
                \otimes I_{2,1,b,a} \bigr)^{\dagger}
                O_{\text{LP}}
                \bigl( I_4 \otimes \mathscr{U}_{\text{ini}}^{4l_{\max}}
                \otimes I_{2,1,b,a} \bigr) \Bigr)\ket{0}_{4\sim 1} \notag \\
&\qquad -\frac{1}{\sum_{k=0}^{3}D^{(k)}}\bra{0}_{4\sim 1}
          \Bigl( \sum_{k=0}^{2} D^{(k)}\bigl( I_4 \otimes G_{\text{coef},3\sim 2,L}^{(k)}
                \otimes I_{1,b,a} \bigr)^{\dagger}
                O^{(k)}
                \bigl( I_4 \otimes G_{\text{coef},3\sim2,R}^{(k)}
                \otimes I_{1,b,a} \bigr) \notag \\
&\qquad\qquad + D^{(3)}\bigl( I_4 \otimes \mathscr{U}_{\text{ini}}^{4l_{\max}}
                \otimes I_{2,1,b,a} \bigr)^{\dagger}
                O_{\text{LP}}
                \bigl( I_4 \otimes \mathscr{U}_{\text{ini}}^{4l_{\max}}
                \otimes I_{2,1,b,a} \bigr) \Bigr)\ket{0}_{4\sim 1}\Bigr\| \notag \\
&\le \frac{\epsilon_3}{\sum_{k=0}^{3}D^{(k)}} .
\end{align}
Note that
\begin{align}
& \Bigl\| \bra{0}_{4\sim 1} \bigl( I_4 \otimes G_{\text{coef},3\sim 2,L}^{(0)}
         \otimes I_{1,b,a} \bigr)^{\dagger}
         O^{(0)}
         \bigl( I_4 \otimes G_{\text{coef},3\sim2,R}^{(0)}
         \otimes I_{1,b,a} \bigr) \ket{0}_{4\sim 1} \notag \\
&\qquad - \sum_{m\in D^{K}} \sum_{j=0}^{L(\epsilon_2,K)} 
          \frac{B_{j,m}^{(0)}}{D^{(0)}} \,
          \text{Add}_m^{4l_{\text{max}}} \otimes \frac{H'(t_j)}{CD} \Bigr\| \notag \\
&= \Bigl\| \sum_{m\in D^{K}} \sum_{j=0}^{L(\epsilon_2,K)}
          \frac{(B_{j,m,L}^{(0)})^* B_{j,m,R}^{(0)}}
               {\sqrt{\sum_{m\in D^{n_3}}\sum_{j=0}^{2^{n_2}-1} |B_{j,m,L}^{(0)}|^2}
                \sqrt{\sum_{m\in D^{n_3}}\sum_{j=0}^{2^{n_2}-1} |B_{j,m,R}^{(0)}|^2}}
          \text{Add}_m^{4l_{\text{max}}} \otimes \bra{0}_1 O_{j}^{(0)} \ket{0}_1 \notag \\
&\qquad - \sum_{m\in D^{K}} \sum_{j=0}^{L(\epsilon_2,K)}
          \frac{B_{j,m}^{(0)}}{\sum_{m\in D^{K}}\sum_{j=0}^{L(\epsilon_2,K)} |B_{j,m}^{(0)}|}
          \text{Add}_m^{4l_{\text{max}}} \otimes \bra{0}_1 O_{j}^{(0)} \ket{0}_1 \notag \\
&\qquad + \sum_{m\in D^{K}} \sum_{j=0}^{L(\epsilon_2,K)}
          \frac{B_{j,m}^{(0)}}{\sum_{m\in D^{K}}\sum_{j=0}^{L(\epsilon_2,K)} |B_{j,m}^{(0)}|}
          \text{Add}_m^{4l_{\text{max}}} \otimes \bra{0}_1 O_{j}^{(0)} \ket{0}_1 \notag \\
&\qquad - \sum_{m\in D^{K}} \sum_{j=0}^{L(\epsilon_2,K)}
          \frac{B_{j,m}^{(0)}}{\sum_{m\in D^{K}}\sum_{j=0}^{L(\epsilon_2,K)} |B_{j,m}^{(0)}|}
          \text{Add}_m^{4l_{\text{max}}} \otimes \frac{H'(t_j)}{CD} \Bigr\| \notag \\
&\le \frac{\epsilon_3}{D^{(0)}} + \epsilon_4 .
\end{align}
In a similar way,
\begin{align}
& \Bigl\| \bra{0}_{4\sim 1} \bigl( I_4 \otimes G_{\text{coef},3\sim 2,L}^{(1)}
         \otimes I_{1,b,a} \bigr)^{\dagger}
         O^{(1)}
         \bigl( I_4 \otimes G_{\text{coef},3\sim2,R}^{(1)}
         \otimes I_{1,b,a} \bigr) \ket{0}_{4\sim 1} \notag \\
&\qquad - \sum_{m\in D^{K}} \sum_{j=0}^{C(\epsilon_2,K)} 
          \frac{B_{j,m}^{(1)}}{D^{(1)}} \,
          \text{Add}_m^{4l_{\text{max}}} \otimes \frac{H^{(j)}(1)}{CD^j(j!)^{\sigma}} \Bigr\| \notag \\
&\le \frac{\epsilon_3}{D^{(1)}} + \epsilon_4 .
\end{align}
\begin{align}
& \Bigl\| \bra{0}_{4\sim 1} \bigl( I_4 \otimes G_{\text{coef},3\sim 2,L}^{(2)}
         \otimes I_{1,b,a} \bigr)^{\dagger}
         O^{(2)}
         \bigl( I_4 \otimes G_{\text{coef},3\sim2,R}^{(2)}
         \otimes I_{1,b,a} \bigr) \ket{0}_{4\sim 1} \notag \\
&\qquad - \sum_{m\in D^{K}} \sum_{j=0}^{C(\epsilon_2,K)} 
          \frac{B_{j,m}^{(2)}}{D^{(2)}} \,
          \text{Add}_m^{4l_{\text{max}}} \otimes \frac{H^{(j)}(0)}{CD^j(j!)^{\sigma}} \Bigr\| \notag \\
&\le \frac{\epsilon_3}{D^{(2)}} + \epsilon_4 .
\end{align}
In addition,
\begin{equation}
\bra{0}_{4\sim 1} (I_4 \otimes \mathscr{U}_{\text{ini}}^{4l_{\max}}
         \otimes I_{2,1,b,a} )^{\dagger} O_{\text{LP}} (I_4 \otimes \mathscr{U}_{\text{ini}}^{4l_{\max}}
         \otimes I_{2,1,b,a} )\ket{0}_{4\sim 1} =-\langle a^{4l_{\max}} |_3(\mathscr{O}_{\text{LP}}^{4l_{\max}} \otimes I_a)| a^{4l_{\max}} \rangle_3=-\frac{\mathscr{H}_{\text{LP}}^{4l_{\max}}}{4l_{\max}\omega}.
\end{equation}
Combining the estimations above we have
\begin{align}
& \Bigl\| \bra{0}_{5\sim 1} (G_{\text{prep},L})^{\dagger} O_{\text{select}} G_{\text{prep},R} \ket{0}_{5\sim 1}
        - \frac{\widetilde{\mathscr{H}}_{\text{eff, pbc}}^{4l_{\max}}}{\sum_{k=0}^{3}D^{(k)}} \Bigr\| \notag \\
&\le \frac{\epsilon_3}{\sum_{k=0}^{3}D^{(k)}}
     + \frac{1}{\sum_{k=0}^{3}D^{(k)}} \sum_{k=0}^{2} D^{(k)}
       \Bigl\| \bra{0}_{4\sim 1} \bigl( I_4 \otimes G_{\text{coef},3\sim 2,L}^{(k)}
             \otimes I_{1,b,a} \bigr)^{\dagger}
             O^{(k)}
             \bigl( I_4 \otimes G_{\text{coef},3\sim2,R}^{(k)}
             \otimes I_{1,b,a} \bigr) \ket{0}_{4\sim 1} \notag \\
&\qquad - \frac{\text{the }k\text{-th term in }\widetilde{\mathscr{H}}_{\text{eff, pbc}}^{4l_{\max}}}{D^{(k)}} \Bigr\| \notag \\
&\le \frac{4\epsilon_3}{\sum_{k=0}^{3}D^{(k)}}
     + \frac{\sum_{k=0}^{2} D^{(k)} \epsilon_4}{\sum_{k=0}^{3}D^{(k)}} .
\end{align}
Recall step 1$\sim$2, for any $\varepsilon_1>0$, let
\begin{equation}\label{eq:def_of_epsilon}
\epsilon_1=\frac{\sum_{k=0}^{3}D^{(k)}}{4}\varepsilon_1,\quad\quad\epsilon_2=\frac{\sum_{k=0}^{3}D^{(k)}}{12}\varepsilon_1,\quad\quad\epsilon_3=\frac{\sum_{k=0}^{3}D^{(k)}}{16}\varepsilon_1,\quad\quad\epsilon_4=\frac{\sum_{k=0}^{3}D^{(k)}}{4\sum_{k=0}^{2}D^{(k)}}\varepsilon_1,
\end{equation}
then
\begin{equation}
\Bigl\| \bra{0}_{5\sim 1} (G_{\text{prep},L})^{\dagger} O_{\text{select}} G_{\text{prep},R} \ket{0}_{5\sim 1}
        - \frac{\mathscr{H}_{\text{eff, pbc}}^{4l_{\max}}}{\sum_{k=0}^{3}D^{(k)}} \Bigr\| \leq \varepsilon_1.
\end{equation}
To summarize, we get
\begin{theorem}[BE of refined effective Hamiltonian in the general case]\label{th:BE_of_refined_effective_Hamiltonian_gene}
Consider $H(s)$ on $[0,1]$ and its Gevrey extension in \cref{sec:Periodic_Extension}, the decay of Fourier coefficients in \cref{sec:Four_decay}, $\varepsilon>0$ and the definition of $l_{\max}$ and $\mathscr{H}_{\text{\normalfont eff, pbc}}^{4l_{\max}}$ in \cref{sec:Error_analysis,sec:Error_analysis_5}. For any $\varepsilon_1>0$, $L(\epsilon_2,K),C(\epsilon_2,K)$ defined in \cref{eq:cond_of_L(eps)_1,eq:cond_of_L(eps)_2,eq:cond_of_C(eps)}; $A_{j,m}^{(k)}\;(k=0,1,2),\,A_{j}^{(3)},A_{j}^{(4)}$ defined in \cref{eq:def_of_A(0),eq:def_of_A(1),eq:def_of_A(2),eq:def_of_A(3),eq:def_of_A(4)}; $B_{j,m}^{(k)}\;(k=0,1,2)$ defined in \cref{eq:def_of_B0,eq:def_of_B1,eq:def_of_B2}; $D^{(k)}\;(k=0,1,2,3)$ defined in \cref{eq:def_of_D0,eq:def_of_D2,eq:def_of_D3}; $\epsilon_k\;(k=1,2,3,4)$ defined in \cref{eq:def_of_epsilon}. Given $G_{\text{\normalfont coef},3\sim 2,L}^{(k)}\;(k=0,1,2)$ and $G_{\text{\normalfont coef},3\sim 2,R}^{(k)}\;(k=0,1,2)$ satisfying \cref{eq:cond_of_G_(012)1,eq:cond_of_G_(012)2,eq:cond_of_G_(012)3,eq:cond_of_G_(012)4,eq:cond_of_G_(012)5,eq:cond_of_G_(012)6,eq:cond_of_G_(012)7,eq:cond_of_G_(012)8}; $G_{\text{\normalfont coef},5}$ satisfying \cref{eq:cond_of_G5_1,eq:cond_of_G5_2}; $O_j^{(k)}\;(k=0,1,2)$ satisfying \cref{eq:cond_of_O0,eq:cond_of_O1,eq:cond_of_O2}; $\text{\normalfont Add}_m^{4l_{\max}}$ in \cref{eq:def_of_Add} and $\sum_{m\in D^K}\ket{m}\bra{m}\otimes\text{\normalfont Add}_m^{4l_{\max}}$ which both can be implemented by $\mathcal{O}(\text{\normalfont log}\,l_{\max})$ elementary gates; $\text{\normalfont Comp}$ in \cref{eq:def_of_Comp} which can be implemented by $\mathcal{O}(\text{\normalfont log}\,l_{\max})$ elementary gates. We can construct prepare oracles $G_{\text{\normalfont prep},L}$ and $G_{\text{\normalfont prep},R}$ as in \cref{eq:def_of_prepL,eq:def_of_prepR} and a select oracle $O_{\text{\normalfont select}}$ as in \cref{eq:def_of_selec}, with five ancillary-qubit systems together comprising 
\begin{equation}
n_1+\text{\normalfont log}_2\,(L(\epsilon_2,K)+1)\vee\text{\normalfont log}_2\,(C(\epsilon_2,K)+1)+\text{\normalfont log}_2(8l_{\max})+3
\end{equation}
ancilla qubits, such that
\begin{equation}
\Bigl\| \bra{0}_{5\sim 1} (G_{\text{\normalfont prep},L})^{\dagger} O_{\text{\normalfont select}} G_{\text{\normalfont prep},R} \ket{0}_{5\sim 1}
        - \frac{\mathscr{H}_{\text{\normalfont eff, pbc}}^{4l_{\max}}}{\sum_{k=0}^{3}D^{(k)}} \Bigr\| \leq \varepsilon_1,
\end{equation}
using 1 query of
\begin{equation}
\text{\normalfont HAM}'\text{\normalfont -T}:=\sum_{j=0}^{L(\epsilon_2,K)}\ket{j}\bra{j}\otimes O^{(0)}_j,
\end{equation}
\begin{equation}
\text{\normalfont HOHAM}\text{\normalfont -1}:=\sum_{j=0}^{C(\epsilon_2,K)}\ket{j}\bra{j}\otimes O^{(1)}_j\;\;\text{and}\;\;\text{\normalfont HOHAM}\text{\normalfont -0}:=\sum_{j=0}^{C(\epsilon_2,K)}\ket{j}\bra{j}\otimes O^{(2)}_j,
\end{equation}
respectively.
\end{theorem}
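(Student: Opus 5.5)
The theorem collects the construction of Steps 1--4, so the proof is a chain of triangle inequalities through three successive approximations of $\mathscr{H}_{\mathrm{eff,pbc}}^{4l_{\max}}$, plus a resource count. The approximations are: (a) truncating the Fourier hopping sum from $D^{4l_{\max}}$ to $D^K$; (b) replacing each $\widetilde H_{-m}$ by finite quadrature and finite Gevrey-series sums; (c) the LCU block encoding of the resulting finite linear combination $\widetilde{\mathscr{H}}_{\mathrm{eff,pbc}}^{4l_{\max}}$. Each difference will be bounded in operator norm, and the normalization $\sum_{k=0}^3 D^{(k)}$ will be tracked throughout.

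\textbf{Step (a).} I use the decomposition $\mathscr{H}_{\mathrm{eff,pbc}}^{4l_{\max}}=\sum_{m\in D^{4l_{\max}}}\mathrm{Add}_m^{4l_{\max}}\otimes\widetilde H_{-m}-\mathscr{H}_{\mathrm{LP}}^{4l_{\max}}$. Since $\mathrm{Add}_m$ is unitary, \cref{th:Four_decay_est} and \cref{lam:est_of_Szeta} give a tail bound $2S(2\zeta)he^{-K^{1/\varrho}/(2\zeta)}\le\epsilon_1$ by the choice of $K$.

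\textbf{Step (b).} I integrate by parts, using \cref{th:Gevrey_ext} for periodicity and endpoint matching, to split $\widetilde H_{-m}=I_1+I_2+I_3$.
\begin{itemize}
\item $I_1$ and $I_{3,1}$ are handled by Clenshaw--Curtis quadrature via \cref{eq:cond_of_L(eps)_1,eq:cond_of_L(eps)_2}.
\item $I_2$ and $I_{3,2}$ are handled by truncating the Borel series of \cref{co:matrix_Borel_lemma} at $C(\epsilon_2,K)$. Its tail is geometric because $R_j^{-j}D^j(j!)^{\sigma-1}=(4e^{\sigma-1})^{-j}$, which gives \cref{eq:cond_of_C(eps)}.
\end{itemize}
The harmonic factor $\sum_{m\le K}1/m\le \log K+1$ is absorbed by the $(\log K+1)^{-1}$ built into these conditions. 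This yields total error $3\epsilon_2$, with coefficient $\ell^1$ norms $D^{(k)}=\mathcal O(\log K)$.

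\textbf{Step (c).} This is the LCU computation already carried out above. Conditioning on register 5 separates the four branches. For $k=0,1,2$, projecting registers $3,2$ onto $G_{\mathrm{coef}}$ states produces the weights $(B_L)^*B_R/\text{norms}$, which match $B^{(k)}_{j,m}/D^{(k)}$ up to $\epsilon_3/D^{(k)}$ by \cref{eq:cond_of_G_(012)3,eq:cond_of_G_(012)4,eq:cond_of_G_(012)5}. Entries outside the index range vanish by \cref{eq:cond_of_G_(012)6,eq:cond_of_G_(012)7,eq:cond_of_G_(012)8}. Projecting register 1 replaces $O_j^{(k)}$ by the target matrices up to $\epsilon_4$. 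The LP branch gives exactly $-\mathscr{H}_{\mathrm{LP}}^{4l_{\max}}/(4l_{\max}\omega)$ through the Comp identity of Step 3. The error in the register-5 amplitudes is controlled by \cref{eq:cond_of_G5_2}.

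Summing gives the block-encoding error
\[
\frac{4\epsilon_3+\sum_{k\le2}D^{(k)}\epsilon_4}{\sum_k D^{(k)}}.
\]
Together with $(\epsilon_1+3\epsilon_2)/\sum_k D^{(k)}$ from (a) and (b), the choices \cref{eq:def_of_epsilon} make the total at most $\varepsilon_1$.

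\textbf{Resource count.} The query and ancilla counts are read off directly. $O_{\mathrm{select}}$ invokes each of HAM$'$-T, HOHAM-1 and HOHAM-0 once, coherently controlled on registers 5 and 2. Registers 1--5 have sizes $n_1$, $n_2$, $\log_2(8l_{\max})$, $1$ and $2$.

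The main obstacle is Step (b), specifically justifying the quadrature rate $L=\mathcal O(\log^{\sigma}(1/\epsilon_2))$ uniformly over $|m|\le K$. The oscillatory factor $e^{im\pi t}$ inflates the Gevrey constants of the integrand by roughly $K^n$. I would first try to absorb this by citing the Clenshaw--Curtis bound of \cite{riess1971error} with a Bernstein-ellipse or Gevrey-norm estimate, which contributes an additive $\mathcal O(K)$ to $L$. If that fails, I would accept $L$ depending on $K$, as the notation $L(\epsilon_2,K)$ already allows.
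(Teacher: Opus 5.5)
Your proposal is correct and follows the paper's proof essentially step for step. It uses the same chain of approximations: Fourier truncation to $D^K$ with error $\epsilon_1$, Clenshaw--Curtis quadrature plus Borel-series truncation with error $3\epsilon_2$, and the LCU error $(4\epsilon_3+\sum_{k\le 2}D^{(k)}\epsilon_4)/\sum_k D^{(k)}$. The choices in \cref{eq:def_of_epsilon} split $\varepsilon_1$ into four equal quarters, and your ancilla and query counts match the paper's.

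The obstacle you flag does not arise for this statement. Here $L(\epsilon_2,K)$ enters only through \cref{eq:cond_of_L(eps)_1,eq:cond_of_L(eps)_2}, and some such $L$ exists because Clenshaw--Curtis quadrature converges for continuous integrands and only finitely many $m$ are involved. Your point about the $K$-dependence of the quadrature rate is a fair criticism of the paper's claim $L=\mathcal{O}((\log(1/\epsilon_2))^{\sigma})$, but it affects only the complexity count in \cref{th:query_gate_complex_gene}.
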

\qed\\
\textbf{Step 5:} Under \cref{th:BE_of_refined_effective_Hamiltonian_gene} and step 3 above, denote
\begin{equation}
U_1=(G_{\text{prep},L})^{\dagger} O_{\text{select}} G_{\text{prep},R},
\end{equation}
\begin{equation}\label{eq:def_of_U_2}
U_2=(I_4\otimes \mathscr{U}_{\text{ini}}^{4l_{\max}}\otimes I_{b,a})^{\dagger}(\text{Comp}\otimes I_a)^\dagger (Z_{4}\otimes I_{3,b,a})  (\text{Comp}\otimes I_a)(I_4\otimes \mathscr{U}_{\text{ini}}^{4l_{\max}}\otimes I_{b,a}).
\end{equation}
We can then construct a block encoding of $e^{-i\mathscr{H}_{\text{eff, pbc}}^{4l_{\max}}t}$ and $e^{-i\mathscr{H}_{\text{LP}}^{4l_{\max}}t}$ using QSVT such that
\begin{equation}
\left\|V_1(t)-e^{-i\mathscr{H}_{\text{eff, pbc}}^{4l_{\max}}t}\right\|\leq 2t(\sum_{k=0}^{3}D^{(k)})\varepsilon_1,
\end{equation}
\begin{equation}
\left\|V_2(t)-e^{-i\mathscr{H}_{\text{LP}}^{4l_{\max}}t}\right\|\leq 2tD^{(3)}\varepsilon_2.
\end{equation}
Where $V_1$ is comprised of 
\begin{equation}
6(\sum_{k=0}^{3}D^{(k)})t+9\text{log}\left(\frac{12}{2t(\sum_{k=0}^{3}D^{(k)})\varepsilon_1}\right)
\end{equation}
uses of $U_1$ or its inverse, 3 uses of controlled-$U_1$ or its inverse, with
\begin{equation}
n_1+\text{log}_2\,(L(\epsilon_2,K)+1)\vee\text{log}_2\,(C(\epsilon_2,K)+1)+\text{log}_2(8l_{\max})+5
\end{equation}
ancilla qubits. $V_2$ is comprised of 
\begin{equation}
6D^{(3)}t+9\text{log}\left(\frac{6}{2tD^{(3)}\varepsilon_2}\right)
\end{equation}
uses of $U_2$ or its inverse, 3 uses of controlled-$U_2$ or its inverse, with
\begin{equation}
\text{log}_2(8l_{\max})+3
\end{equation}
ancilla qubits. Let
\begin{align}
\widetilde{\mathscr{U}}_{\text{amp1,pbc}}^{l_{\max}}(t) 
&= (\mathscr{U}_{\text{ini}}^{4l_{\max}})^{\dagger} V_2(t) V_1(t)\mathscr{U}_{\text{ini}}^{l_{\max}} ,\\
\widetilde{\mathscr{U}}_{\text{amp2,pbc}}^{l_{\max}}(t)&= -\widetilde{\mathscr{U}}_{\text{amp1,pbc}}^{l_{\max}}(t) \mathscr{R}[\widetilde{\mathscr{U}}_{\text{amp1,pbc}}^{l_{\max}}(t) ]^{\dagger}\mathscr{R}\widetilde{\mathscr{U}}_{\text{amp1,pbc}}^{l_{\max}}(t) .
\end{align}
Then
\begin{equation}
\left\|\widetilde{\mathscr{U}}_{\text{amp2,pbc}}^{l_{\max}}(t)-\mathscr{U}_{\text{amp2,pbc}}^{l_{\max}}(t)\right\|\leq 6t(\sum_{k=0}^{3}D^{(k)})\varepsilon_1+6tD^{(3)}\varepsilon_2.
\end{equation}
Denote
\begin{equation}\label{eq:def_of_E0}
E^{(0)}:=\frac{CD(\text{log}l_{\max}+3)}{\pi}\geq D^{(0)},
\end{equation}
\begin{equation}\label{eq:def_of_E1}
E^{(1)}:=\frac{(8e^2+2)CD(\log l_{\max}+1)}{\pi}
       \left(\sum_{j=0}^\infty \frac{j^{\sigma}}{(4e^{\sigma-1})^{j-1}}\right)
       +\sum_{j=0}^{\infty}C\left(\frac{1}{4e^{\sigma-1}}\right)^{\!j}
       +C\geq D^{(1)}+D^{(2)}.
\end{equation}
Pick
\begin{equation}
\varepsilon_1=\frac{\varepsilon}{24t(\sum_{k=0}^{1}E^{(k)}+D^{(3)})},\quad\varepsilon_2=\frac{\varepsilon}{24tD^{(3)}}.
\end{equation}
Combining \cref{th:Refined_effective_Hamiltonian}, we finally get that
\begin{equation}
\left\| \widetilde{\mathscr{U}}_{\text{amp2, pbc}}^{l_{\max}}(t) | 0 \rangle | \psi(0) \rangle - |0\rangle |\psi(t)\rangle \right\| \leq  \varepsilon.
\end{equation}\\
\textbf{Step 6:} We now focus on the query complexity of $U_1$ given by
\begin{equation}
18(\sum_{k=0}^{3}D^{(k)})t+27\text{log}\left(\frac{144(\sum_{k=0}^{1}E^{(k)}+D^{(3)})}{(\sum_{k=0}^{3}D^{(k)})\varepsilon}\right).
\end{equation}
Although substituting the definition of $D^{(k)}$ in \cref{eq:def_of_D0,eq:def_of_D2,eq:def_of_D3} and $E^{(k)}$ in \cref{eq:def_of_E0,eq:def_of_E1} into the formula above yields an upper bound on the query complexity, we here derive the asymptotic dependence. Note that For $k=0,1$,
\begin{equation}
E^{(k)}=\mathcal{O}(\alpha \text{log}l_{\max})=\mathcal{O}\left(\alpha \text{log}\left(\alpha t+\text{log}\left(\frac{1}{\varepsilon}\right)\right)\right).
\end{equation}
Hence
\begin{equation}
K=\left\lceil\left(2\zeta\text{log}\left(\frac{2S(2\zeta) h}{\epsilon_1}\right)\right)  ^{\varrho}\right\rceil\bigwedge \;l_{\text{max}}\;=\mathcal{O}\left(\left(\text{log}\,\alpha t+\text{log}\left(\frac{1}{\varepsilon}\right)\right)^{\sigma}\left(\log\left(\alpha t+\log\left(\frac{1}{\varepsilon}\right)\right)\right)\right).
\end{equation}
Then for $k=0,1,2$,
\begin{equation}
D^{(k)}=\mathcal{O}(\alpha \text{log}K)=\mathcal{O}\left(\alpha \text{log}\text{log}\,\alpha t+\alpha \text{log}\text{log}\left(\frac{1}{\varepsilon}\right)\right),
\end{equation}
and
\begin{equation}
D^{(3)}=\mathcal{O}\left(\omega\left(\alpha t+  \log\left(\frac{1}{\varepsilon}\right)\right)^{\sigma}\text{log}\left(\alpha t+ \log\left(\frac{1}{\varepsilon}\right)\right)\left(\text{log}\text{log}\left(\alpha t+ \log\left(\frac{1}{\varepsilon}\right)\right)\right)^{\sigma}\right).
\end{equation}
Hence
\begin{align}
&18(\sum_{k=0}^{3}D^{(k)})t+27\text{log}\left(\frac{144(\sum_{k=0}^{1}E^{(k)}+D^{(3)})}{(\sum_{k=0}^{3}D^{(k)})\varepsilon}\right)\notag\\
&\quad\quad=\mathcal{O}\left((\sum_{k=0}^{3}D^{(k)})t+\text{log}\left(\frac{144(\sum_{k=0}^{1}E^{(k)}+D^{(3)})}{(\sum_{k=0}^{3}D^{(k)})\varepsilon}\right)\right)\notag \\
&\quad\quad=\mathcal{O}\left(\left(\alpha t+  \log\left(\frac{1}{\varepsilon}\right)\right)^{\sigma}\text{log}\left(\alpha t+ \log\left(\frac{1}{\varepsilon}\right)\right)\left(\text{log}\text{log}\left(\alpha t+ \log\left(\frac{1}{\varepsilon}\right)\right)\right)^{\sigma}\right)\notag \\
&\quad\quad=\widetilde{\mathcal{O}}\left(\left(\alpha t+  \log\left(\frac{1}{\varepsilon}\right)\right)^{\sigma}\right).
\end{align}
By exactly the same derivation, we find that the query complexity of $U_2$ is also 
$\widetilde{\mathcal{O}}\left(\left(\alpha t + \log(1/\varepsilon)\right)^{\sigma}\right)$.

We next consider the complexity of the additional gates used in the algorithm. This complexity firstly arises from prepare oracles for the linear combination of unitary matrices in \cref{eq:def_of_prepL,eq:def_of_prepR}, including 
$G_{\text{\normalfont coef},3\sim 2,L}^{(k)}\ (k=0,1,2)$, 
$G_{\text{\normalfont coef},3\sim 2,R}^{(k)}\ (k=0,1,2)$, and 
$G_{\text{\normalfont coef},5}$. 
Note that the expressions for these gates contain at most 
$\mathcal{O}(l_{\max}\cdot (L(\epsilon_2,K)\vee C(\epsilon_2,K)))$ terms, where $L(\epsilon_2,K)\vee C(\epsilon_2,K)=\mathcal{O}\bigl((\log\alpha t + \log(1/\varepsilon))^{\sigma}\bigr)$. We make the worst-case assumption that a prepare oracle consisting of $\mathcal{O}(l_{\max}\cdot (L(\epsilon_2,K)\vee C(\epsilon_2,K)))$ terms requires $\mathcal{O}(l_{\max}\cdot (L(\epsilon_2,K)\vee C(\epsilon_2,K)))$ elementary gates. Therefore, $G_{\text{prep},L}$ and $G_{\text{prep},R}$ can be implemented with at most 
$\widetilde{\mathcal{O}}\bigl((\alpha t + \log(1/\varepsilon))^{\sigma}\cdot(\log(1/\varepsilon))^{\sigma}\bigr)$ 
elementary gates. Since $U_1$ is queried 
$\widetilde{\mathcal{O}}\left(\left(\alpha t + \log(1/\varepsilon)\right)^{\sigma}\right)$ 
times, the elementary gate complexity of the prepare oracle part is 
$\widetilde{\mathcal{O}}\bigl((\alpha t + \log(1/\varepsilon))^{2\sigma}\cdot(\log(1/\varepsilon))^{\sigma}\bigr)$.

The gate complexity of the second part comes from the adder gates and compare gates used in the select oracle in \cref{eq:def_of_selec}. Since the gate $O_{\text{\normalfont select}} $ contains $\mathcal{O}(1)$ copies of 
$\sum_{m\in D^K}\ket{m}\bra{m}\otimes \text{\normalfont Add}_m^{4l_{\max}}$ and $\text{Comp}$, 
which both can be implemented with $\mathcal{O}(\log l_{\max})$ elementary gates, the parts of the gate $O_{\text{\normalfont select}} $ other than HAM-T part can be implemented with 
$\mathcal{O}(\log l_{\max}) = \mathcal{O}\bigl(\log(\alpha t + \log(1/\varepsilon))\bigr)$ 
elementary gates. Since the gate $O_{\text{\normalfont select}}$ is queried 
$\widetilde{\mathcal{O}}\left(\left(\alpha t + \log(1/\varepsilon)\right)^{\sigma}\right)$ 
times in total, the elementary gate complexity of the select oracle part is 
$\widetilde{\mathcal{O}}\left(\left(\alpha t + \log(1/\varepsilon)\right)^{\sigma}\right)$.

The third part of the gate complexity comes from the compare gate $\mathrm{Comp}$ and the Hadamard gate $\mathscr{U}_{\text{ini}}^{4l_{\max}}$ in $U_2$ in \cref{eq:def_of_U_2}, both of which can be implemented with $\mathcal{O}(\log l_{\max})$ elementary gates. Since $U_2$ is queried 
$\widetilde{\mathcal{O}}\left(\left(\alpha t + \log(1/\varepsilon)\right)^{\sigma}\right)$ 
times in total, the elementary gate complexity of the $U_2$ part is 
$\widetilde{\mathcal{O}}\left(\left(\alpha t + \log(1/\varepsilon)\right)^{\sigma}\right)$.

According to \cite{GilyenSuLowEtAl2019}, the final additional gate complexity should be multiplied by the number of ancilla qubits in \cref{th:BE_of_refined_effective_Hamiltonian_gene}. Since the number of ancilla qubits in the circuit is $\widetilde{\mathcal{O}}(\log l_{\max})$, this does not affect the asymptotic complexity of the additional elementary gates. Combining the above derivations, we find that the additional gate complexity required by the algorithm is 
$\widetilde{\mathcal{O}}\bigl((\alpha t + \log(1/\varepsilon))^{2\sigma}\cdot(\log(1/\varepsilon))^{\sigma}\bigr)$. 
This leads to the following theorem.

\begin{theorem}[Query and gate complexity in the general Hamiltonian case]\label{th:query_gate_complex_gene}
Let $1 \leq \sigma < 2$ and $H(s)\in G^{\sigma}([0,1])$ satisfy $||H(s)||\leq \alpha$. Define $\widetilde{H}(t) = H(t/T)$.
Suppose that we are given access to the $\text{\normalfont HAM}'\text{\normalfont -T}$, $\text{\normalfont HOHAM}\text{\normalfont -1}$ and $\text{\normalfont HOHAM}\text{\normalfont -0}$ oracles, where $\text{\normalfont HAM}'\text{\normalfont -T}$ is the block encoding of $\sum_{j=1}^{L(\epsilon_2,K)}\ket{j}\bra{j}\otimes H'(t_j)$, \text{\normalfont HOHAM}\text{\normalfont -1} is the block encoding of $\sum_{j=0}^{C(\epsilon_2,K)}\ket{j}\bra{j}\otimes H^{(j)}(1)$, \text{\normalfont HOHAM}\text{\normalfont -0} is the block encoding of $\sum_{j=0}^{C(\epsilon_2,K)}\ket{j}\bra{j}\otimes H^{(j)}(0)$. Then there exists a quantum algorithm for solving~\cref{eqn:ham_sim} up to time $t\in[0,T]$ with error at most $\varepsilon$, using $\widetilde{\mathcal{O}}\left(\left(\alpha t+  \log\left(1/\varepsilon\right)\right)^{\sigma}\right)$ queries for $\text{\normalfont HAM}'\text{\normalfont -T}$, $\text{\normalfont HOHAM}\text{\normalfont -1}$ and $\text{\normalfont HOHAM}\text{\normalfont -0}$ oracles, and can be implemented using $\widetilde{\mathcal{O}}\bigl((\alpha t + \log(1/\varepsilon))^{2\sigma}\cdot(\log(1/\varepsilon))^{\sigma}\bigr)$ additional elementary gates. 
\end{theorem}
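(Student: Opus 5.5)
This result follows by assembling the chain of results established above; the plan is to trace the error budget and the resource counts through that chain. First, I apply \cref{th:Gevrey_ext} to extend $H(s)$ to a $2$-periodic $\hat H\in G^{\varrho}$ with $\varrho=\sigma+\tau-1$. The constants are $C_{\text{ex},1},C_{\text{ex},2}=\mathcal{O}(\alpha)$, $A_1=\mathcal{O}(1)$ and $A_2=\mathcal{O}(1/(\tau-1))$. I then feed this into \cref{th:Four_decay_est} to get $\|\widetilde H_m\|\le h e^{-|m|^{1/\varrho}/\zeta}$, in the piecewise form \cref{eq:decay_of_H_m}. Next, \cref{th:transition_rate} and \cref{th:Floquet-Hilbert_space_truncation} give the truncation order \cref{eq:def_of_lmax}. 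With the choice $\tau=1+1/\log(\alpha t+e+\log(1/\varepsilon))$, the factor $(\cdot)^{\tau-1}/(\tau-1)$ collapses to a logarithm, so $l_{\max}=\widetilde{\mathcal O}((\alpha t+\log(1/\varepsilon))^{\sigma})$. The key point is that $\tau-1$ only enters through $\log(1/(\tau-1))$ in $\beta$ and through $\zeta^\varrho=\mathcal O(1/(\tau-1))$. I will verify explicitly that $\beta=\mathcal O(\alpha\log(1/(\tau-1)))$, since $h_2$ carries only a factor $A^{1/2}\sqrt{\log}\,A_2^{-3/2}$.

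Second, I invoke \cref{th:Refined_effective_Hamiltonian}. The amplified protocol $\mathscr U^{l_{\max}}_{\text{amp2,pbc}}(t)$ built from $\mathscr H^{4l_{\max}}_{\text{eff,pbc}}$ outputs $|0\rangle|\psi(t)\rangle$ to error $\varepsilon/2$. It remains to implement this protocol approximately. For this I use the block encoding of \cref{th:BE_of_refined_effective_Hamiltonian_gene}, which has normalization $\sum_k D^{(k)}$ and uses one query each of $\mathrm{HAM}'\text{-T}$, HOHAM-1 and HOHAM-0 per use of $U_1$. I also use the block encoding $U_2$ of $\mathscr H_{\text{LP}}$. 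Applying QSVT to both, as in Step 5, with $\varepsilon_1,\varepsilon_2$ as chosen there, a triangle inequality over the three uses of $\widetilde{\mathscr U}_{\text{amp1,pbc}}$ in OAA bounds the extra error by $\varepsilon/2$. This yields total error $\varepsilon$.

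Third, I count queries as in Step 6. The number of QSVT calls is $\mathcal O\bigl((\sum_k D^{(k)})t+\log(1/\varepsilon)\bigr)$. Here $D^{(0)},D^{(1)},D^{(2)}=\mathcal O(\alpha\log K)$, which holds because the $1/|m|$ weights sum to a harmonic number and the Gevrey series coefficients $A^{(1)}_{j,m},\dots,A^{(4)}_j$ are geometrically summable thanks to $R_j$. The remaining term is $D^{(3)}=4l_{\max}\omega$ with $\omega=\pi/T$ and $t\le T$, which contributes $\mathcal O(l_{\max})$. Since the logarithmic argument is polynomial in these quantities, the query count is $\widetilde{\mathcal O}((\alpha t+\log(1/\varepsilon))^\sigma)$. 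For gates, each use of $G_{\text{prep},L/R}$ costs at most the number of coefficients, namely $\mathcal O(K\cdot(L\vee C))$. Bounding $K\le l_{\max}$ and $L\vee C=\widetilde{\mathcal O}(\log^\sigma(1/\varepsilon))$ from the Clenshaw–Curtis and Gevrey-series truncations gives $\widetilde{\mathcal O}((\alpha t+\log(1/\varepsilon))^\sigma\log^\sigma(1/\varepsilon))$ per use. The adders, Comp and $\mathscr U_{\text{ini}}$ cost only $\mathcal O(\log l_{\max})$ per use, and the QSVT ancilla overhead is polylogarithmic. Multiplying by the query count gives the stated gate bound.

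The main obstacle I expect is the first step: verifying that the $\tau$-dependent constants really only cost polylogarithmic factors. In particular, $\beta$ and $\log S(4\zeta)$ must not pick up a $1/(\tau-1)$ factor multiplying $\alpha t$, because that would destroy near-linearity. I would check this by tracking $h_2\sum_{m\ge1}e^{-m^{1/\varrho}/(2\zeta)}$ via \cref{lam:est_of_Szeta}, carefully, in terms of $A=A_1A_2$. A secondary check is that $L(\epsilon_2,K)$ is polylogarithmic, which requires the Gevrey-$\sigma$ Clenshaw–Curtis convergence for $H'$ on $[0,1]$ only, and not for the extension.
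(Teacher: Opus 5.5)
Your proposal is correct and follows essentially the same route as the paper. The shared steps are the Gevrey extension, the Fourier decay and Lieb--Robinson truncation with $\tau-1=1/\log(\alpha t+e+\log(1/\varepsilon))$, the refined effective Hamiltonian with OAA, the LCU block encoding of normalization $\sum_k D^{(k)}$ (with $D^{(0)},D^{(1)},D^{(2)}=\mathcal{O}(\alpha\log K)$ and $D^{(3)}t=\mathcal{O}(l_{\max})$), the same QSVT error split, and a gate count obtained by multiplying the worst-case prepare-oracle cost (your $\mathcal{O}(K\cdot(L\vee C))$ with $K\le l_{\max}$, matching the paper's $\mathcal{O}(l_{\max}\cdot(L\vee C))$) by the number of uses of $U_1$.
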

\qed

\subsection{The control Hamiltonian case}
\subsubsection{Oracle definitions for the control Hamiltonian case}\label{subsubsec:oracle_control_case}

We now turn to the control Hamiltonian case, where the Hamiltonian takes the special form
\begin{equation}
H(s)=\sum_{j=0}^{j_{\max}-1}\alpha_j(s)M_j,
\end{equation}
with time-independent Hermitian matrices $M_j$ and Gevrey-class coefficient functions $\alpha_j(s)$. In this setting, we no longer require direct oracles for the derivatives of $H(s)$; instead, we only assume access to the HAM-T oracle
\begin{equation}
\mathrm{HAM-T} := \sum_{j=0}^{j_{\max}-1} |j\rangle\langle j| \otimes O_j,
\end{equation}
where $O_j$ is a block encoding of $M_j$ for each $j$, i.e.,
\begin{equation}
\left\| \langle 0^{n_1}| O_j |0^{n_1}\rangle - M_j \right\| \le \epsilon_4.
\end{equation}

The key observation is that the Fourier coefficients $\widetilde{H}_{-m}$ of the extended Hamiltonian can be expressed as linear combinations of the $M_j$'s with coefficients $\alpha_j^m$:
\begin{equation}
\widetilde{H}_{-m} = \sum_{j=0}^{j_{\max}-1} \alpha_j^m M_j,
\end{equation}
where
\begin{equation}
\alpha_j^m = \frac{1}{2}\int_0^2 \hat{\alpha}_j(t) e^{im\pi t}\,dt.
\end{equation}
Note that $\hat{\alpha}_j$ is the period extension of $\alpha_j$. Using the Gevrey extension constructed in \cref{sec:Periodic_Extension} and the Clenshaw Curtis quadrature, we can classically precompute approximations $\tilde{\alpha}_j^m$ satisfying
\begin{equation}
|\tilde{\alpha}_j^m - \alpha_j^m| \le \frac{\epsilon_2}{K j_{\max}}
\end{equation}
for all $m\in D^K$ and $0\le j<j_{\max}$, where $K$ is the Fourier truncation order defined in Step~1 of \cref{subsec:general_case}. The detailed computation of $\tilde{\alpha}_j^m$ is presented in Step~2 below.

With these classically computed coefficients, the block encoding of $\mathscr{H}_{\mathrm{eff,pbc}}^{4l_{\max}}$ is again constructed via the LCU technique. The prepare oracle $G_{\mathrm{prep},L}$ and $G_{\mathrm{prep},R}$ are defined as
\begin{align}
G_{\mathrm{prep},L} &= \left((|0\rangle\langle 0|)_5\otimes I_4\otimes G_{\mathrm{coef},3\sim 2,L}\otimes I_{1,b,a} + (|1\rangle\langle 1|)_5\otimes I_4\otimes \mathcal{U}_{\mathrm{ini}}^{4l_{\max}}\otimes I_{2,1,b,a}\right) \notag \\
&\qquad \cdot \left(G_{\mathrm{coef},5}\otimes I_{4\sim 1,b,a}\right),
\end{align}
\begin{align}
G_{\mathrm{prep},R} &= \left((|0\rangle\langle 0|)_5\otimes I_4\otimes G_{\mathrm{coef},3\sim 2,R}\otimes I_{1,b,a} + (|1\rangle\langle 1|)_5\otimes I_4\otimes \mathcal{U}_{\mathrm{ini}}^{4l_{\max}}\otimes I_{2,1,b,a}\right) \notag \\
&\qquad \cdot \left(G_{\mathrm{coef},5}\otimes I_{4\sim 1,b,a}\right),
\end{align}
where the auxiliary system 5 is now a single qubit that selects between the two terms in $\mathscr{H}_{\mathrm{eff,pbc}}^{4l_{\max}}$: the linear combination of $M_j$'s ($k=0$) and the linear potential term ($k=1$). The coefficient preparation unitaries $G_{\mathrm{coef},3\sim 2,L}$ and $G_{\mathrm{coef},3\sim 2,R}$ encode the classically computed coefficients $\tilde{\alpha}_j^m$ as
\begin{align}
G_{\mathrm{coef},3\sim 2,L}|0\rangle_3|0\rangle_2 &= \sum_{m\in D^{n_3}}\sum_{j=0}^{2^{n_2}-1} \tilde{\alpha}_{j,m,L} |m\rangle_3|j\rangle_2 \bigg/ \sqrt{\sum_{m\in D^{n_3}}\sum_{j=0}^{2^{n_2}-1} |\tilde{\alpha}_{j,m,L}|^2}, \\
G_{\mathrm{coef},3\sim 2,R}|0\rangle_3|0\rangle_2 &= \sum_{m\in D^{n_3}}\sum_{j=0}^{2^{n_2}-1} \tilde{\alpha}_{j,m,R} |m\rangle_3|j\rangle_2 \bigg/ \sqrt{\sum_{m\in D^{n_3}}\sum_{j=0}^{2^{n_2}-1} |\tilde{\alpha}_{j,m,R}|^2},
\end{align}
with the property that
\begin{equation}
\sum_{m\in D^K}\sum_{j=0}^{j_{\max}-1} \left| \tilde{\alpha}_j^m - \frac{(\sum_{m,j}|\tilde{\alpha}_j^m|)\cdot (\tilde{\alpha}_{j,m,L})^* \tilde{\alpha}_{j,m,R}}{\sqrt{\sum_{m,j}|\tilde{\alpha}_{j,m,L}|^2}\sqrt{\sum_{m,j}|\tilde{\alpha}_{j,m,R}|^2}} \right| \le \epsilon_3,
\end{equation}
and for all $(m,j)\notin D^K\times\{0,\ldots,j_{\max}-1\}$,
\begin{equation}
(\tilde{\alpha}_{j,m,L})^* \tilde{\alpha}_{j,m,R} = 0.
\end{equation}
The unitary $G_{\mathrm{coef},5}$ prepares the normalization factors for the two terms.

The select oracle is defined by
\begin{equation}
O_{\mathrm{select}} = (|0\rangle\langle 0|)_5 \otimes O^{(0)} + (|1\rangle\langle 1|)_5 \otimes O_{\mathrm{LP}},
\end{equation}
where $O^{(0)}$ is the controlled unitary that applies $\mathrm{Add}_m^{4l_{\max}}\otimes O_j$ conditioned on the indices $m$ and $j$, and $O_{\mathrm{LP}}$ implements the linear potential $\mathscr{H}_{\mathrm{LP}}^{4l_{\max}}$ via the comparison gate Comp as defined in Step~3 of \cref{subsec:general_case}.

With these oracles, the block encoding of $\mathscr{H}_{\mathrm{eff,pbc}}^{4l_{\max}}$ takes the form
\begin{equation}
\left\| \langle 0|_{5\sim 1}(G_{\mathrm{prep},L})^{\dagger}O_{\mathrm{select}}G_{\mathrm{prep},R}|0\rangle_{5\sim 1} - \frac{\mathscr{H}_{\mathrm{eff,pbc}}^{4l_{\max}}}{\sum_{k=0}^{1}D^{(k)}}\right\| \le \epsilon_1,
\end{equation}
where
\begin{equation}
D^{(0)} = \sum_{m\in D^K}\sum_{j=0}^{j_{\max}-1} |\tilde{\alpha}_j^m|, \qquad D^{(1)} = 4l_{\max}\omega
\end{equation}
are the normalization constants for the two terms. The construction of these oracles and the proof of the above block-encoding error are detailed in Steps~2--4 below. The total number of ancilla qubits required for this construction is
\begin{equation}
n_1 + \lceil \log_2 j_{\max}\rceil + \log_2(8l_{\max}) + 2,
\end{equation}
where $n_1$ is the number of ancilla qubits used in the block encoding of $M_j$.
\subsubsection{Algorithm implementation for the control Hamiltonian case}\label{subsec:control_case}
We consider the control Hamiltonian $ H(s) $ written by
\begin{equation}
H(s) = \sum_{j=0}^{j_{\max}-1} \alpha_j(s) M_j
\end{equation}
with coefficients $\{\alpha_j(s)\}_{j=1}^{j_{\max}}\subseteq G^{\sigma}([0,1])$ satisfying 
\begin{equation}
\exists\; C,D >0,\;\text{s.t.}\quad|\alpha_j^{(n)}(s)|\leq C D^n(n!)^{\sigma} \quad \text{for any}\;n\geq 0,\;j=0,1,...,j_{\max}-1,
\end{equation}
and operators $\{M_j\}_{j=1}^{j_{\max}}$ satisfying $\left\|M_j\right\|\leq 1$. Then
\begin{equation}
\|H^{(n)}(s)\| \le j_{\max}C\cdot D^n (n!)^{\sigma}, \quad \forall\; s\in[0,1]\;, n\geq 0.
\end{equation}
As in \cref{sec:Periodic_Extension}, we can extend the family $\{\alpha_j(s)\}_{j=1}^{j_{\max}}$ from $[0,1]$ to be $\{\hat{\alpha}_j(s)\}_{j=1}^{j_{\max}}$ defined on $[0,2]$; the analyses of \cref{sec:Four_decay} and \cref{sec:Error_analysis} then carry over verbatim, with the original constant $C$ replaced by $j_{\max}C$ and thus $l_{\max}$ in \cref{eq:def_of_lmax} replaced by $j_{\max}^{\varrho}l_{\max}$. \\
\textbf{Step 1:} As stated in step 1 of \cref{subsec:general_case}, for $\epsilon_1>0$, we still choose 
\begin{equation}
K=\left\lceil\left(2\zeta\text{log}\left(\frac{2S(2\zeta) h}{\epsilon_1}\right)\right)  ^{\varrho}\right\rceil\bigwedge \;l_{\text{max}}\;.
\end{equation}
and focus on the implementation of 
\begin{equation}
\sum_{m \in D^{K}} \text{Add}_{m}^{4l_{\max}} \otimes \widetilde{H}_{-m}.
\end{equation}\\
\textbf{Step 2:} The Fourier components are given by
\begin{equation}
\widetilde{H}_{-m} = \sum_{j=0}^{j_{\max}-1}\alpha_j^m M_j,
\end{equation}
where
\begin{align}
\alpha_j^m &= \frac{1}{2T} \int_0^{2T} \hat{\alpha}_j(t/T) e^{im\frac{2\pi}{2T} t}\, dt \notag \\
&= \frac{1}{2} \int_0^{2} \hat{\alpha}_j(t) e^{im\pi t}\, dt \notag \\
&= \frac{1}{2} \int_0^{1} \alpha_j(t) e^{im\pi t}\, dt
   + \frac{(-1)^m}{2} \int_0^{1} \hat{\alpha}_j(t+1) e^{im\pi t}\, dt \notag \\
&= \frac{1}{2} \int_0^{1} \alpha_j(t) e^{im\pi t}\, dt \notag \\
&\quad + \frac{(-1)^m}{2} \int_0^{1} \Bigl( \sum_{n=0}^\infty
          \frac{\alpha_j^{(n)}(1)}{n!} t^n \chi_\tau(R_n t) \notag \\
&\qquad\qquad\qquad\qquad\qquad
          + \sum_{n=0}^\infty
          \frac{\alpha_j^{(n)}(0)}{n!} (t-1)^n \chi_\tau(R_n (t-1)) \Bigr)
          e^{im\pi t}\, dt \notag \\
&= \frac{1}{2} \int_0^{1} \alpha_j(t) e^{im\pi t}\, dt \notag \\
&\quad + \frac{(-1)^m}{2} \sum_{n=0}^\infty C
       \Bigl( \int_0^{1} 
             \frac{D^n(n!)^\sigma}{n!} t^n \chi_\tau(R_n t) e^{im\pi t}\, dt \Bigr)
       \frac{\alpha_j^{(n)}(1)}{CD^n(n!)^\sigma} \notag \\
&\quad + \frac{(-1)^m}{2} \sum_{n=0}^\infty C
       \Bigl( \int_0^{1} 
             \frac{D^n(n!)^\sigma}{n!} (t-1)^n \chi_\tau(R_n (t-1)) e^{im\pi t}\, dt \Bigr)
       \frac{\alpha_j^{(n)}(0)}{CD^n(n!)^\sigma}.
\end{align}
As in step 2 in \cref{subsec:general_case}, note that $ \left|\frac{\alpha_j^{(n)}(1)}{CD^n(n!)^\sigma}\right|\leq 1$ and
\begin{equation}
\sum_{n=0}^\infty C\left|
        \int_0^{1} 
             \frac{D^n(n!)^\sigma}{n!} t^n \chi_\tau(R_n t) e^{im\pi t}\, dt \right|\leq 
C\sum_{n=0}^\infty \left(\frac{1}{4e^{\sigma-1}}\right)^n.
\end{equation}
For $0<\epsilon_2<\frac{1}{2}$, we pick $C(\epsilon_2,K)$
such that
\begin{equation}\label{eq:def_of_C_eps2_control}
C\sum_{n=C(\epsilon_2,K)+1}^\infty \left(\frac{1}{4e^{\sigma-1}}\right)^n\leq \frac{\epsilon_2}{2Kj_{\max}}.
\end{equation}
Assume further that we can classically compute approximations $\tilde\alpha_j^m$ satisfying
\begin{align}
\Big|\tilde{\alpha}_j^m
&-\Big(
\frac{1}{2} \int_0^{1} \alpha_j(t) e^{im\pi t}\, dt \notag\\
&\quad + \frac{(-1)^m}{2} \sum_{n=0}^{C(\epsilon_2,K)} C
       \Bigl( \int_0^{1}
             \frac{D^n(n!)^\sigma}{n!} t^n \chi_\tau(R_n t) e^{im\pi t}\, dt \Bigr)
       \frac{\alpha_j^{(n)}(1)}{CD^n(n!)^\sigma} \notag\\
&\quad + \frac{(-1)^m}{2} \sum_{n=0}^{C(\epsilon_2,K)} C
       \Bigl( \int_0^{1} 
             \frac{D^n(n!)^\sigma}{n!} (t-1)^n \chi_\tau(R_n (t-1)) e^{im\pi t}\, dt \Bigr)
       \frac{\alpha_j^{(n)}(0)}{CD^n(n!)^\sigma}\Big)\Big|\leq \frac{\epsilon_2}{2Kj_{\max}}.\label{eq:def_of_alpha}
\end{align}
Hence
\begin{equation}
|\tilde{\alpha}_j^m-\alpha_j^m|\leq\frac{\epsilon_2}{Kj_{\max}}.
\end{equation}
Then we have
\begin{align}
\left\|\sum_{m \in D^{K}}\sum_{j=0}^{j_{\max}-1}\tilde{\alpha}_j^m \text{Add}_{m}^{4l_{\max}} \otimes M_j-\sum_{m \in D^{K}} \text{Add}_{m}^{4l_{\max}} \otimes \widetilde{H}_{-m}\right\|
&=\left\|\sum_{m \in D^{K}}\sum_{j=0}^{j_{\max}-1}(\tilde{\alpha}_j^m-\alpha_j^m )\text{Add}_{m}^{4l_{\max}} \otimes M_j\right\|\leq 2\epsilon_2.
\end{align}
Then we only need to implement
\begin{equation}
\widetilde{\mathscr{H}}_{\text{eff, pbc}}^{4l_{\max}}:=\sum_{m \in D^{K}}\sum_{j=0}^{j_{\max}-1}\tilde{\alpha}_j^m \text{Add}_{m}^{4l_{\max}} \otimes M_j-\mathscr{H}_{\text{LP}}^{4l_{\max}},
\end{equation}
with the coefficients satisfying
\begin{equation}
\sum_{m \in D^{K}} \sum_{j=0}^{j_{\max}-1}|\tilde{\alpha}_j^m|\leq \sum_{m \in D^{K}} \sum_{j=0}^{j_{\max}-1}|\alpha_j^m|+2\epsilon_2\leq j_{\max}(h_4+2h_3\text{log}(\lceil A/\pi\rceil-1)+2h_2S(\zeta)+1)+1,
\end{equation}
where we use the decay estimates in \cref{sec:Four_decay}.\\
\textbf{Step 3:} For simplicity we consider the case $\text{log}_2\,(8l_{\max})$ is an integer. We prepare a system consisting of two working systems labeled by $a, b$ and five kinds of auxiliary systems labeled by 1$\sim$5. The working system $a$ is a $\lceil\text{log}_2\,d\rceil$-qubit system used to accommodate the matrix $M_j$. The working system $b$ is a $\text{log}_2\,(8l_{\max})$-qubit system used to accommodate the unitary matrix $\text{Add}_m^{4l_{\text{max}}}$ and $\text{Comp}_{\text{working}}$, where $\text{Comp}_{\text{working}}$ represents the working-qubit part of Comp. The auxiliary system 1 is an $n_1$-qubit system prepared for the block-encoding of $M_j$, i.e.
\begin{equation}\label{eq:def_of_O_4.2}
\left\|\bra{0^{n_1}}O_j\ket{0^{n_1}}-M_j\right\|\leq \epsilon_4.
\end{equation}
The auxiliary system 2 is a $\lceil\text{log}_2\,j_{\max}\rceil=:n_2$-qubit system characterized by $\{\ket{j}\}_{j=0}^{2^{n_2}-1 }$, which represent the number of time-independent operator $M_j$. The auxiliary system 3 is a $\text{log}_2\,(8l_{\max})=:n_3$-qubit system characterized by $\{\ket{m}\}_{m\in D^{n_3}}:=\{\ket{-2^{n_3-1}+1},\ket{-2^{n_3-1}+2},...,\ket{2^{n_3-1}-1},\ket{2^{n_3-1}}\}$, which is used to form the linear combination of $\text{Add}_m^{4l_{\text{max}}}$ and accommodate $\text{Comp}_{\text{ancilla}}$, where $\text{Comp}_{\text{ancilla}}$ represents the ancilla-qubit part of Comp. The auxiliary system 4 is a 1-qubit system used to accommodate $\text{Comp}_{\text{single}}$ and $Z_{\text{single}} $, where $\text{Comp}_{\text{single}}$ represents the single-qubit part of Comp. The auxiliary system 5 is a $\lceil\text{log}_2\,(2)\rceil=1$ qubit system to combine the two terms in $\widetilde{\mathscr{H}}_{\text{eff, pbc}}^{4l_{\max}}$. The prepare oracle is defined by
\begin{align}
G_{\text{prep},L}
&= \Bigl((\ket{0}\bra{0})_5 \otimes I_4 \otimes G_{\text{coef},3\sim 2,L}
         \otimes I_{1,b,a}  + (\ket{1}\bra{1})_5 \otimes I_4 \otimes \mathscr{U}_{\text{ini}}^{4l_{\max}}
         \otimes I_{2,1,b,a} \Bigr) \notag \\
&\quad\quad\quad\cdot \Bigl( G_{\text{coef,5}}   \otimes I_{4\sim1,b,a} \Bigr),\label{eq:def_of_prep_4.2L}
\end{align}
\begin{align}
G_{\text{prep},R}
&= \Bigl((\ket{0}\bra{0})_5 \otimes I_4 \otimes G_{\text{coef},3\sim2,R}
         \otimes I_{1,b,a} + (\ket{1}\bra{1})_5 \otimes I_4 \otimes \mathscr{U}_{\text{ini}}^{4l_{\max}}
         \otimes I_{2,1,b,a} \Bigr) \notag\\
&\quad\quad\quad\cdot \Bigl( G_{\text{coef,5}}  \otimes I_{4\sim1,b,a} \Bigr). \label{eq:def_of_prep_4.2R}
\end{align}
where
\begin{equation}\label{eq:cond_of_G_4.2_1}
G_{\text{coef},3\sim 2,L}\ket{0}_3\ket{0}_2=\sum_{m\in D^{n_3}}\sum_{j=0}^{2^{n_2}-1}\tilde{\alpha}_{j,m,L}\ket{m}_3\ket{j}_2\Big/\sqrt{\sum_{m\in D^{n_3}}\sum_{j=0}^{2^{n_2}-1}\left|\tilde{\alpha}_{j,m,L}\right|^2},
\end{equation}
\begin{equation}\label{eq:cond_of_G_4.2_2}
G_{\text{coef},3\sim 2,R}\ket{0}_3\ket{0}_2=\sum_{m\in D^{n_3}}\sum_{j=0}^{2^{n_2}-1}\tilde{\alpha}_{j,m,R}\ket{m}_3\ket{j}_2\Big/\sqrt{\sum_{m\in D^{n_3}}\sum_{j=0}^{2^{n_2}-1}\left|\tilde{\alpha}_{j,m,R}\right|^2},
\end{equation}
satisfy
\begin{equation}\label{eq:cond_of_G_4.2_3}
\sum_{m\in D^{K}}\sum_{j=0}^{j_{\max}-1}\left|\tilde{\alpha}^m_j- \frac{(\sum_{m\in D^{K}}\sum_{j=0}^{j_{\max}-1}|\tilde{\alpha}^m_j|)\cdot(\tilde{\alpha}_{j,m,L})^*\tilde{\alpha}_{j,m,R}}{\sqrt{\sum_{m\in D^{n_3}}\sum_{j=0}^{2^{n_2}-1}\left|\tilde{\alpha}_{j,m,L}\right|^2}\sqrt{\sum_{m\in D^{n_3}}\sum_{j=0}^{2^{n_2}-1}\left|\tilde{\alpha}_{j,m,R}\right|^2}} \right|\leq \epsilon_3,
\end{equation}
and for all $(m,j)\notin D^K\times\{0,1,2,...,j_{\max}-1\}$,
\begin{equation}\label{eq:cond_of_G_4.2_4}
(\tilde{\alpha}_{j,m,L})^*\tilde{\alpha}_{j,m,R}=0.
\end{equation}
In addition, let
\begin{equation}\label{eq:cond_of_G5_4.2_1}
G_{\text{coef,5}}\ket{0}_5=\frac{\sum_{k=0}^1\sqrt{\tilde{D}^{(k)}}\ket{k}_5}{\sqrt{\sum_{k=0}^1\tilde{D}^{(k)}}},
\end{equation}
satisfy
\begin{equation}\label{eq:cond_of_G5_4.2_2}
\sum_{k=0}^1\left|D^{(k)}-\frac{(\sum_{k=0}^1D^{(k)})\tilde{D}^{(k)}}{(\sum_{k=0}^1\tilde{D}^{(k)})}\right|\leq \epsilon_3,
\end{equation}
where 
\begin{equation}\label{eq:def_of_D_4.2}
D^{(0)}=\sum_{m \in D^{K}} \sum_{j=0}^{j_{\max}-1}|\tilde{\alpha}_j^m|,\quad\quad\quad D^{(1)}=4j_{\max}^{\varrho}l_{\max}\omega.
\end{equation}
The select oracle is defined by
\begin{equation}\label{eq:def_of_selec_4.2}
O_{\text{select}}= (\ket{0}\bra{0})_5 \otimes O + (\ket{1}\bra{1})_5 \otimes O_{\text{LP}},
\end{equation}
where
\begin{align}
O&=I_4\otimes\sum_{m\in D^K}(\ket{m}\bra{m})_3\otimes\sum_{j=0}^{j_{\max}-1}(\ket{j}\bra{j})_2\otimes \tikzmarknode{O_{j,1}}{O_{j,1}} \otimes(\text{Add}_m^{4l_{\text{max}}})_b\otimes\tikzmarknode{O_{j,a}}{O_{j,a}} \notag \\
&\quad+I_4\otimes\sum_{(m,j)\notin D^K\times\{0,1,2,...,j_{\max}-1\}}(\ket{m,j}\bra{m,j})_{3,2}\otimes I_{1,b,a},
\end{align}
\begin{tikzpicture}[overlay, remember picture]
  \draw[thick] ([yshift=-3pt]O_{j,1}.south) to[out=-30,in=-150] ([yshift=-3pt]O_{j,a}.south);
\end{tikzpicture}
and $O_{\text{LP}}$ is defined as in \cref{eq:def_of_O_LP}. Then by exactly the same way in step 4 of \cref{subsec:general_case},
\begin{align}
\left\|\bra{0}_{5\sim 1}(G_{\text{prep},L})^{\dagger}O_{\text{select}}G_{\text{prep},R}\ket{0}_{5\sim 1}-\frac{\widetilde{\mathscr{H}}_{\text{eff, pbc}}^{4l_{\max}}}{\sum_{k=0}^1D^{(k)}}\right\|\leq\frac{2\epsilon_3}{\sum_{k=0}^{1}D^{(k)}}
     + \frac{ D^{(0)} \epsilon_4}{\sum_{k=0}^{1}D^{(k)}} .
\end{align}
For any $\varepsilon_1>0$, let
\begin{equation}\label{eq:def_of_epsilon_4.2}
\epsilon_1=\frac{\sum_{k=0}^{1}D^{(k)}}{4}\varepsilon_1,\quad\quad\epsilon_2=\frac{\sum_{k=0}^{1}D^{(k)}}{8}\varepsilon_1,\quad\quad\epsilon_3=\frac{\sum_{k=0}^{1}D^{(k)}}{8}\varepsilon_1,\quad\quad\epsilon_4=\frac{\sum_{k=0}^{1}D^{(k)}}{4D^{(0)}}\varepsilon_1,
\end{equation}
then
\begin{equation}
\left\|\bra{0}_{5\sim 1}(G_{\text{prep},L})^{\dagger}O_{\text{select}}G_{\text{prep},R}\ket{0}_{5\sim 1}-\frac{\mathscr{H}_{\text{eff, pbc}}^{4l_{\max}}}{\sum_{k=0}^1D^{(k)}}\right\|\leq\varepsilon_1.
\end{equation}
To summarize, we get
\begin{theorem}[BE of refined effective Hamiltonian in the control-Hamiltonian case]\label{th:BE_of_refined_effective_Hamiltonian_cont}
Consider $H(s) = \sum_{j=0}^{j_{\max}-1} \alpha_j(s) M_j$ on $[0,1]$ and the Gevrey extension in \cref{sec:Periodic_Extension}, the decay of Fourier coefficients in \cref{sec:Four_decay}, $\varepsilon>0$ and the definition of $l_{\max}$ and $\mathscr{H}_{\text{\normalfont eff, pbc}}^{4l_{\max}}$ in \cref{sec:Error_analysis,sec:Error_analysis_5}. For any $\varepsilon_1>0$, $\tilde{\alpha}_j^m$ in \cref{eq:def_of_alpha}; $D^{(k)}\;(k=0,1)$ defined in \cref{eq:def_of_D_4.2}; $\epsilon_k\;(k=1,2,3,4)$ defined in \cref{eq:def_of_epsilon_4.2}. Given $G_{\text{\normalfont coef},3\sim 2,L}$ and $G_{\text{\normalfont coef},3\sim 2,R}$ satisfying \cref{eq:cond_of_G_4.2_1,eq:cond_of_G_4.2_2,eq:cond_of_G_4.2_3,eq:cond_of_G_4.2_4}; $G_{\text{\normalfont coef},5}$ satisfying \cref{eq:cond_of_G5_4.2_1,eq:cond_of_G5_4.2_2}; $O_j^{(k)}$ satisfying \cref{eq:def_of_O_4.2}; $\text{\normalfont Add}_m^{4l_{\max}}$ in \cref{eq:def_of_Add}  and $\sum_{m\in D^K}\ket{m}\bra{m}\otimes\text{\normalfont Add}_m^{4l_{\max}}$ which both can be implemented by $\mathcal{O}(\text{\normalfont log}\,l_{\max})$ elementary gates; $\text{\normalfont Comp}$ in \cref{eq:def_of_Comp} which can be implemented by $\mathcal{O}(\text{\normalfont log}\,l_{\max})$ elementary gates. We can construct prepare oracles $G_{\text{\normalfont prep},L}$ and $G_{\text{\normalfont prep},R}$ as in \cref{eq:def_of_prep_4.2L,eq:def_of_prep_4.2R} and a select oracle $O_{\text{\normalfont select}}$ as in \cref{eq:def_of_selec_4.2}, with five ancillary-qubit systems together comprising 
\begin{equation}
n_1+\lceil\text{\normalfont log}_2\,j_{\max}\rceil+\text{\normalfont log}_2(8l_{\max})+2
\end{equation}
ancilla qubits, such that
\begin{equation}
\Bigl\| \bra{0}_{5\sim 1} (G_{\text{\normalfont prep},L})^{\dagger} O_{\text{\normalfont select}} G_{\text{\normalfont prep},R} \ket{0}_{5\sim 1}
        - \frac{\mathscr{H}_{\text{\normalfont eff, pbc}}^{4l_{\max}}}{\sum_{k=0}^{1}D^{(k)}} \Bigr\| \leq \varepsilon_1,
\end{equation}
using 1 query of
\begin{equation}\label{eq:def_of_HAM_T}
\text{\normalfont HAM}\text{\normalfont -T}:=\sum_{j=0}^{j_{\max}-1}\ket{j}\bra{j}\otimes O_j.
\end{equation}
\end{theorem}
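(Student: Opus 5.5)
The proof follows the general-case argument (Steps 1--4 of \cref{subsec:general_case}), simplified because only one LCU branch carries a Hamiltonian oracle. Write $\mathscr{H}_{\text{eff, pbc}}^{4l_{\max}}=\sum_{m\in D^{4l_{\max}}}\text{Add}_m^{4l_{\max}}\otimes\widetilde H_{-m}-\mathscr{H}_{\text{LP}}^{4l_{\max}}$ and split the error into four contributions, $\epsilon_1$ through $\epsilon_4$.

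\textbf{Fourier tails ($\epsilon_1$).} Truncating the sum over $m$ to $D^K$ costs at most $\epsilon_1$ in operator norm. This is exactly the Step 1 estimate, with $h$ computed for the extended coefficients $\hat\alpha_j$ and $C$ replaced by $j_{\max}C$.

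\textbf{Coefficient approximation ($\epsilon_2$).} Replacing $\alpha_j^m$ by the classically precomputed $\tilde\alpha_j^m$ costs at most $2\epsilon_2$. Since $\|\text{Add}_m\otimes M_j\|\le 1$, the bound $|\tilde\alpha_j^m-\alpha_j^m|\le \epsilon_2/(Kj_{\max})$ summed over the at most $2K\,j_{\max}$ pairs $(m,j)$ gives this. It reduces the target to the operator $\widetilde{\mathscr{H}}_{\text{eff, pbc}}^{4l_{\max}}$.

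\textbf{LCU block encoding ($\epsilon_3,\epsilon_4$).} Apply the prepare and select algebra branch by branch on register 5. Because $G_{\text{prep},L}$ and $G_{\text{prep},R}$ are controlled on register 5 and $O_{\text{select}}$ is block-diagonal in it, sandwiching by $G_{\text{coef},5}$ gives a weighted sum of the two branches with weights $\tilde D^{(k)}/\sum_k\tilde D^{(k)}$. Condition \cref{eq:cond_of_G5_4.2_2} lets us swap these weights for $D^{(k)}/\sum_kD^{(k)}$ at cost $\epsilon_3/\sum_kD^{(k)}$.

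The two branches are then handled as follows.
\begin{itemize}
\item In branch $0$, the identity part of $O$ on the complement of $D^K\times\{0,\dots,j_{\max}-1\}$ is annihilated by \cref{eq:cond_of_G_4.2_4}. The remaining terms give $\sum_{m,j}\frac{(\tilde\alpha_{j,m,L})^*\tilde\alpha_{j,m,R}}{\mathcal N_L\mathcal N_R}\text{Add}_m\otimes\langle 0^{n_1}|O_j|0^{n_1}\rangle$, where $\mathcal N_L,\mathcal N_R$ are the two state-preparation norms. Condition \cref{eq:cond_of_G_4.2_3} together with $\|\langle0|O_j|0\rangle\|\le1$ yields error $\epsilon_3/D^{(0)}$ against $\sum\tilde\alpha_j^m\text{Add}_m\otimes\langle0|O_j|0\rangle/D^{(0)}$. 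Then \cref{eq:def_of_O_4.2} adds $\epsilon_4$, since the normalized coefficients have $\ell^1$ norm at most $1$.
\item Branch $1$ reproduces $-\mathscr{H}_{\text{LP}}/(4l_{\max}\omega)$ exactly, by the Comp identity of Step 3. Here the normalization $D^{(1)}$ must match the rescaled $l_{\max}$; note that $\mathscr{H}_{\text{LP}}$ itself depends only on the size of the register $D^{4l_{\max}}$.
\end{itemize}
Summing the branches gives the intermediate bound $2\epsilon_3/\sum D^{(k)}+D^{(0)}\epsilon_4/\sum D^{(k)}$.

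\textbf{Assembly.} Dividing the first two contributions by $\sum_kD^{(k)}$ and adding the LCU bound, the choice \cref{eq:def_of_epsilon_4.2} makes the four pieces total at most $\varepsilon_1$ by direct arithmetic. The query count is one HAM-T, because $O_{\text{select}}$ uses the controlled family $\sum_j|j\rangle\langle j|\otimes O_j$ once, inside branch $0$. The ancilla count is read off from the register sizes: $n_1+n_2+n_3+1+1$.

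The main obstacle is the bookkeeping of the normalization. The $\ell^1$ norm of the coefficients must be tied to the actual $\tilde D^{(k)}$ prepared, and the tail and approximation errors, which are measured in absolute operator norm, must be rescaled consistently by $\sum_kD^{(k)}$. I would resolve this the way Step 4 does: bound everything in unnormalized norm first and divide at the end.
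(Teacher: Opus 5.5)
Your proposal is correct and follows the paper's proof essentially step for step. The steps are the truncation of the Fourier sum to $D^K$ (cost $\epsilon_1$), the replacement of $\alpha_j^m$ by the precomputed $\tilde\alpha_j^m$ (cost $2\epsilon_2$), and the branchwise LCU estimate $2\epsilon_3/\sum_kD^{(k)}+D^{(0)}\epsilon_4/\sum_kD^{(k)}$ carried over from Step 4 of the general case. These are assembled with the choice \cref{eq:def_of_epsilon_4.2} so that each piece contributes $\varepsilon_1/4$. Your remark that $D^{(1)}$ must use the rescaled truncation order, as in \cref{eq:def_of_D_4.2} rather than the $4l_{\max}\omega$ written in the oracle-definition subsection, is the right way to read the paper's inconsistent notation.
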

\qed\\
\textbf{Step 4:} Under \cref{th:BE_of_refined_effective_Hamiltonian_cont} and step 3 in \cref{subsec:general_case}, denote
\begin{equation}
U_1=(G_{\text{prep},L})^{\dagger} O_{\text{select}} G_{\text{prep},R},
\end{equation}
\begin{equation}\label{eq:def_of_U_2_4.2}
U_2=(I_4\otimes \mathscr{U}_{\text{ini}}^{4l_{\max}}\otimes I_{b,a})^{\dagger}(\text{Comp}\otimes I_a)^\dagger (Z_{4}\otimes I_{3,b,a})  (\text{Comp}\otimes I_a)(I_4\otimes \mathscr{U}_{\text{ini}}^{4l_{\max}}\otimes I_{b,a}).
\end{equation}
We can then construct a block encoding of $e^{-i\mathscr{H}_{\text{eff, pbc}}^{4l_{\max}}t}$ and $e^{-i\mathscr{H}_{\text{LP}}^{4l_{\max}}t}$ using QSVT such that
\begin{equation}
\left\|V_1(t)-e^{-i\mathscr{H}_{\text{eff, pbc}}^{4l_{\max}}t}\right\|\leq 2t(\sum_{k=0}^{1}D^{(k)})\varepsilon_1,
\end{equation}
\begin{equation}
\left\|V_2(t)-e^{-i\mathscr{H}_{\text{LP}}^{4l_{\max}}t}\right\|\leq 2tD^{(1)}\varepsilon_2.
\end{equation}
Where $V_1$ is comprised of 
\begin{equation}
6(\sum_{k=0}^{1}D^{(k)})t+9\text{log}\left(\frac{12}{2t(\sum_{k=0}^{1}D^{(k)})\varepsilon_1}\right)
\end{equation}
uses of $U_1$ or its inverse, 3 uses of controlled-$U_1$ or its inverse, with
\begin{equation}
n_1+\lceil\text{log}_2\,j_{\max}\rceil+\text{log}_2(8l_{\max})+4
\end{equation}
ancilla qubits. $V_2$ is comprised of 
\begin{equation}
6D^{(1)}t+9\text{log}\left(\frac{6}{2tD^{(1)}\varepsilon_2}\right)
\end{equation}
uses of $U_2$ or its inverse, 3 uses of controlled-$U_2$ or its inverse, with
\begin{equation}
\text{log}_2(8l_{\max})+3
\end{equation}
ancilla qubits. Let
\begin{align}
\widetilde{\mathscr{U}}_{\text{amp1,pbc}}^{l_{\max}}(t) 
&= (\mathscr{U}_{\text{ini}}^{4l_{\max}})^{\dagger} V_2(t) V_1(t)\mathscr{U}_{\text{ini}}^{l_{\max}} , \\
\widetilde{\mathscr{U}}_{\text{amp2,pbc}}^{l_{\max}}(t)&= -\widetilde{\mathscr{U}}_{\text{amp1,pbc}}^{l_{\max}}(t) \mathscr{R}[\widetilde{\mathscr{U}}_{\text{amp1,pbc}}^{l_{\max}}(t) ]^{\dagger}\mathscr{R}\widetilde{\mathscr{U}}_{\text{amp1,pbc}}^{l_{\max}}(t) .
\end{align}
Then
\begin{equation}
\left\|\widetilde{\mathscr{U}}_{\text{amp2,pbc}}^{l_{\max}}(t)-\mathscr{U}_{\text{amp2,pbc}}^{l_{\max}}(t)\right\|\leq 6t(\sum_{k=0}^{1}D^{(k)})\varepsilon_1+6tD^{(1)}\varepsilon_2.
\end{equation}
Denote
\begin{equation}\label{eq:def_of_E_4.2}
E:= j_{\max}(h_4+2h_3\text{log}(\lceil A/\pi\rceil-1)+2h_2S(\zeta)+1)+1\geq D^{(0)}.
\end{equation}
Pick
\begin{equation}
\varepsilon_1=\frac{\varepsilon}{24t(E+D^{(1)})},\quad\varepsilon_2=\frac{\varepsilon}{24tD^{(1)}}.
\end{equation}
Combining \cref{th:Refined_effective_Hamiltonian}, we finally get that
\begin{equation}
\left\| \widetilde{\mathscr{U}}_{\text{amp2, pbc}}^{l_{\max}}(t) | 0 \rangle | \psi(0) \rangle - |0\rangle |\psi(t)\rangle \right\| \leq  \varepsilon.
\end{equation}\\
\textbf{Step 5:} We now focus on the query complexity of $U_1$ given by
\begin{equation}
18(\sum_{k=0}^{1}D^{(k)})t+27\text{log}\left(\frac{144(E+D^{(1)})}{(\sum_{k=0}^{1}D^{(k)})\varepsilon}\right).
\end{equation}
Although substituting the definition of $D^{(k)}$ in \cref{eq:def_of_D_4.2} and $E$ in \cref{eq:def_of_E_4.2} into the formula above yields an upper bound on the query complexity, we here derive the asymptotic dependence. Note that
\begin{equation}
D^{(0)}\leq E=\mathcal{O}\left(j_{\max}\alpha \text{log}\left(\frac{1}{\tau-1}\right)\right)=\mathcal{O}\left(j_{\max}\alpha \text{log}\text{log}\left(\alpha t+\text{log}\left(\frac{1}{\varepsilon}\right)\right)\right),
\end{equation}
and
\begin{equation}
D^{(1)}=\mathcal{O}\left(j_{\max}^{\varrho}\omega\left(\alpha t+  \log\left(\frac{1}{\varepsilon}\right)\right)^{\sigma}\text{log}\left(\alpha t+ \log\left(\frac{1}{\varepsilon}\right)\right)\left(\text{log}\text{log}\left(\alpha t+ \log\left(\frac{1}{\varepsilon}\right)\right)\right)^{\sigma}\right).
\end{equation}
Hence
\begin{align}
&18(\sum_{k=0}^{1}D^{(k)})t+27\text{log}\left(\frac{144(E+D^{(1)})}{(\sum_{k=0}^{1}D^{(k)})\varepsilon}\right)\notag\\
&\quad\quad=\mathcal{O}\left((\sum_{k=0}^{1}D^{(k)})t+\text{log}\left(\frac{144(E+D^{(1)})}{(\sum_{k=0}^{1}D^{(k)})\varepsilon}\right)\right)\notag \\
&\quad\quad=\mathcal{O}\left(j_{\max}^{\varrho}\left(\alpha t+  \log\left(\frac{1}{\varepsilon}\right)\right)^{\sigma}\text{log}\left(\alpha t+ \log\left(\frac{1}{\varepsilon}\right)\right)\left(\text{log}\text{log}\left(\alpha t+ \log\left(\frac{1}{\varepsilon}\right)\right)\right)^{\sigma}\right)\notag \\
&\quad\quad=\widetilde{\mathcal{O}}\left(j_{\max}^{\varrho}\left(\alpha t+  \log\left(\frac{1}{\varepsilon}\right)\right)^{\sigma}\right).
\end{align}
By exactly the same derivation, we find that the query complexity of $U_2$ is also 
$\widetilde{\mathcal{O}}\left(j_{\max}^{\varrho}\left(\alpha t+  \log\left(\frac{1}{\varepsilon}\right)\right)^{\sigma}\right)$.

As in the case of general Hamiltonians in \cref{subsec:general_case}, the additional gate complexity for the controlled Hamiltonian case also consists of three contributions: the prepare oracles for the linear combination of unitary matrices in \cref{eq:def_of_prep_4.2L,eq:def_of_prep_4.2R}, the adder gates and compare gates used in the select oracle in \cref{eq:def_of_selec_4.2}, and the compare gate $\mathrm{Comp}$ and the Hadamard gate $U_{\text{ini}}^{4l_{\max}}$ in $U_2$ in \cref{eq:def_of_U_2_4.2}. The first part contributes 
$\widetilde{\mathcal{O}}\left(j_{\max}^{\varrho+1}\left(\alpha t + \log(1/\varepsilon)\right)^{2\sigma}\right)$, 
while the second and third parts each contribute 
$\widetilde{\mathcal{O}}\left(j_{\max}^{\varrho}\left(\alpha t + \log(1/\varepsilon)\right)^{\sigma}\right)$. The additional gate complexity incurred by the QSVT does not change the asymptotic complexity of the total additional gates as in \cref{subsec:general_case}. Therefore, the additional gate complexity in the controlled Hamiltonian case is 
$\widetilde{\mathcal{O}}\left(j_{\max}^{\varrho+1}\left(\alpha t + \log(1/\varepsilon)\right)^{2\sigma}\right)$. From the above discussion, we obtain the following theorem.

\begin{theorem}[Query and gate complexity in the control-Hamiltonian case]\label{th:query_gate_complex_cont}
Let $1 \leq \sigma < 2$ and $H(s) = \sum_{j=0}^{j_{\max}-1} \alpha_j(s) M_j$ with coefficients $\{\alpha_j(s)\}_{j=1}^{j_{\max}}\subseteq G^{\sigma}([0,1])$ and $|\alpha_j(s)| \leq \alpha$ for each $j$. Define $\widetilde{\alpha}_j(t) = \alpha_j(t/T)$ for each $j$.
Suppose that we are given access to the $\text{\normalfont HAM}\text{\normalfont -T}$ oracles defined by $\text{\normalfont HAM}\text{\normalfont -T}=\sum_{j=0}^{j_{\max}-1}\ket{j}\bra{j}\otimes O_j$, where $O_j$ is the block encoding of $M_j$ for each $j$. Then there exists a quantum algorithm for solving~\cref{eqn:ham_sim} up to time $t\in[0,T]$ with error at most $\varepsilon$, using $\widetilde{\mathcal{O}}\left(j_{\max}^{\varrho}\left(\alpha t+  \log\left(1/\varepsilon\right)\right)^{\sigma}\right)$ $\text{\normalfont HAM}\text{\normalfont -T}$ queries, and can be implemented using $\widetilde{\mathcal{O}}\left(j_{\max}^{\varrho+1}\left(\alpha t + \log(1/\varepsilon)\right)^{2\sigma}\right)$ additional elementary gates, where $\varrho = \sigma + \frac{1}{\text{\normalfont log}\bigl(\alpha t +e+ \text{\normalfont log}(1/\varepsilon)\bigr)}$. 
\end{theorem}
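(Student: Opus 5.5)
The proof assembles the control-Hamiltonian pipeline already laid out in Steps 1--5 of \cref{subsec:control_case}, mirroring the general-case proof of \cref{th:query_gate_complex_gene}.

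\textbf{Step 1: extension and truncation.} Since $\|M_j\|\le 1$ and each $\alpha_j\in G^\sigma([0,1])$, we have $\|H^{(n)}(s)\|\le j_{\max}CD^n(n!)^\sigma$. Apply \cref{th:Gevrey_ext} coefficientwise to each $\alpha_j$, with the same cut-off $\chi_\tau$ and radii $R_n$. This gives a $2$-periodic $G^{\varrho}$ extension $\hat H=\sum_j\hat\alpha_j M_j$ that agrees with $H$ on $[0,1]$, so the simulated dynamics on $[0,T]$ is unchanged. Then:
\begin{itemize}
\item \cref{th:Four_decay_est} gives the subexponential decay of $\widetilde H_m$, with $C$ replaced by $j_{\max}C$.
\item \cref{th:transition_rate,th:Floquet-Hilbert_space_truncation} fix $l_{\max}$ as in \cref{eq:def_of_lmax}, scaled by $j_{\max}^{\varrho}$.
\end{itemize}
With the choice $\tau=1+1/\log(\alpha t+e+\log(1/\varepsilon))$, the estimate after \cref{eq:def_of_lmax} yields $l_{\max}=\widetilde{\mathcal O}(j_{\max}^{\varrho}(\alpha t+\log(1/\varepsilon))^\sigma)$. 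The exponent $\varrho$ in the statement is exactly $\sigma+\tau-1$ for this $\tau$.

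\textbf{Step 2: correctness.} \cref{th:Refined_effective_Hamiltonian} shows that $\mathscr U^{l_{\max}}_{\mathrm{amp2,pbc}}(t)$ produces $|0\rangle|\psi(t)\rangle$ to error $\varepsilon/2$. It then suffices to implement $e^{-i\mathscr H^{4l_{\max}}_{\mathrm{eff,pbc}}t}$ and $e^{-i\mathscr H^{4l_{\max}}_{\mathrm{LP}}t}$ to accuracy $\mathcal O(\varepsilon)$.
\begin{itemize}
\item \cref{th:BE_of_refined_effective_Hamiltonian_cont} supplies an $\varepsilon_1$-accurate block encoding with normalization $D^{(0)}+D^{(1)}$ using one HAM-T query. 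This rests on three ingredients: the truncation to $|m|\le K$ from Step~1, the classically precomputed $\tilde\alpha^m_j$ (truncated Borel series plus quadrature) with error $\epsilon_2/(Kj_{\max})$, and the LCU error bookkeeping.
\item QSVT then gives $V_1,V_2$, and the OAA perturbation bound $6t(\sum_k D^{(k)})\varepsilon_1+6tD^{(1)}\varepsilon_2$ is made $\le\varepsilon/2$ by the choices of $\varepsilon_1,\varepsilon_2$ in Step~4.
\end{itemize}

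\textbf{Step 3: query count.} QSVT uses $\mathcal O((D^{(0)}+D^{(1)})t+\log(1/\varepsilon))$ queries. The two normalizations behave as follows.
\begin{itemize}
\item $D^{(0)}\le E=\mathcal O(j_{\max}\alpha\log\frac1{\tau-1})$. This relies on the $1/|m|$ piece $h_3$ of \cref{eq:decay_of_H_m}; the crude bound $h\,S(\zeta)$ would be too large, since $S(\zeta)\sim\zeta^\varrho=\mathcal O(1/(\tau-1))$ blows up.
\item $D^{(1)}=4j_{\max}^{\varrho}l_{\max}\omega$. Since $\omega=\pi/T$ and $t\le T$, we get $D^{(1)}t=\mathcal O(j_{\max}^\varrho l_{\max})$.
\end{itemize}
Together these give the claimed $\widetilde{\mathcal O}(j_{\max}^{\varrho}(\alpha t+\log(1/\varepsilon))^\sigma)$ HAM-T queries; $U_2$ is counted the same way.

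\textbf{Step 4: gate count.} The extra gates come from three sources:
\begin{itemize}
\item the state preparations $G_{\mathrm{coef},3\sim2}$ over $\mathcal O(j_{\max}l_{\max})$ amplitudes, at worst linear gate cost each, repeated once per query;
\item the adders and comparators, each $\mathcal O(\log l_{\max})$ per query;
\item the QSVT ancilla overhead, a polylog factor.
\end{itemize}
Multiplying by the query count gives $\widetilde{\mathcal O}(j_{\max}^{\varrho+1}(\alpha t+\log(1/\varepsilon))^{2\sigma})$. Unlike the general case, there is no $L(\epsilon_2,K)\vee C(\epsilon_2,K)$ factor, because the derivative and quadrature sums are absorbed classically into $\tilde\alpha^m_j$.

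The main obstacle is keeping the $\tau$-dependence under control. As $\tau\to1$, the constants $A_2$, $\zeta$ and $S(\cdot)$ blow up like $1/(\tau-1)$. One must check that $\beta$, $E$ and the prefactor $(4\zeta)^\varrho$ in $l_{\max}$ contribute only polylogarithmic factors under the chosen $\tau$. This is done by using the piecewise bounds \cref{eq:decay_of_H_m} rather than the uniform bound $h$, and by noting that $(\alpha t+\log(1/\varepsilon))^{\tau-1}=\mathcal O(1)$ for this $\tau$.
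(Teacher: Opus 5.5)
Your proposal is correct and follows the paper's argument essentially step for step: coefficientwise Gevrey extension with $C\to j_{\max}C$, the control-case LCU block encoding with classically precomputed $\tilde\alpha_j^m$, QSVT plus oblivious amplitude amplification, and the counts via $D^{(0)}\le E$ and $D^{(1)}$. One bookkeeping fix is needed, because your Step~1 already puts the factor $j_{\max}^{\varrho}$ into $l_{\max}$ whereas the paper keeps $l_{\max}$ unscaled and writes the truncation order as $j_{\max}^{\varrho}l_{\max}$: with your convention you need $D^{(1)}=4l_{\max}\omega$, and to reach the stated $j_{\max}^{\varrho+1}$ gate bound you should count only the $\widetilde{\mathcal O}(j_{\max}K)$ nonzero prepare amplitudes (they vanish outside $m\in D^K$) rather than $\mathcal{O}(j_{\max}l_{\max})$, since otherwise $j_{\max}^{\varrho}$ is counted twice.
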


\section{Application to slow linear differential equations}\label{sec:linear_odes}

We now apply our Hamiltonian simulation results developed above to general non-unitary linear dynamics. 
Consider the homogeneous linear differential equation
\begin{equation}\label{eq:linear_ode}
    \frac{d}{dt}u(t)
    =-\widetilde A(t)u(t),
    \qquad
    \widetilde A(t)=A(t/T),
    \qquad
    u(0)=u_0,
\end{equation}
where $A(s)$ is analytic but does not need to be Hermitian. 
Its solution is
\begin{equation}\label{eq:linear_ode_propagator}
    u(T)=\mathcal V_A(T) u_0,
    \qquad
    \mathcal V_A(T):=
    \mathcal T\exp\left(-\int_0^T \widetilde A(t)\,dt\right).
\end{equation}
We use the Cartesian decomposition
\begin{equation}\label{eq:cartesian_decomposition}
    A(s)=L(s)+iH(s),
    \qquad
    L(s)=\frac{A(s)+A(s)^\dagger}{2},
    \qquad
    H(s)=\frac{A(s)-A(s)^\dagger}{2i},
\end{equation}
and assume $L(s)\succeq0$ for all $s\in[0,1]$. 
This is precisely the non-positive-logarithmic-norm condition on the generator $-A(s)$ and implies that $\|u(t)\|$ is nonincreasing. 

To solve~\cref{eq:linear_ode_propagator}, existing works~\cite{AnLiuLin2023,AnChildsLin2023,LowSomma2025} have proposed the LCHS approach, which represents the time evolution operator of~\cref{eq:cartesian_decomposition} as a linear combination of unitary operators and implements it through LCU of Hamiltonian simulation algorithms. 
Specifically, the optimal LCHS construction, proposed in~\cite{LowSomma2025}, approximates the non-unitary propagator by a weighted integral of unitary propagators,
\begin{equation}\label{eq:lchs_representation}
    \mathcal V_A(T)
    \approx
    \mathcal O_R(T):=
    \frac{1}{\sqrt{2\pi}}
    \int_{-R}^{R}\widehat f(k)U_k(T)\,dk,
    \qquad
    U_k(T):=
    \mathcal T\exp\left(
        -i\int_0^T\bigl(k\widetilde L(t)+\widetilde H(t)\bigr)\,dt
    \right),
\end{equation}
where $\widetilde L(t)=L(t/T)$ and $\widetilde H(t)=H(t/T)$, and $\widehat f(k)$ is a scalar-valued kernel function with closed-form expression. 
Notice that the Hamiltonian $k\widetilde{L}(t)+\widetilde{H}(t)$ remains slow and analytic. 
This observation allows us to insert our slow analytic Hamiltonian simulation algorithm into the LCHS construction.

\begin{corollary}[Slow Gevrey linear differential equations]\label{th:slow_ode_complexity}
    Consider the problem of linear differential equation~\cref{eq:linear_ode}. 
    Assume that $\widetilde{A}(t) = A(t/T)$ for an analytic matrix-valued function $A(s)$ with $\|A(s)\| \leq \alpha$, and $A(s)$ has the Cartesian decomposition $A(s) = L(s) + iH(s)$ with $L(s)\succeq 0$. 
    Suppose further that either the derivative-access oracles appearing in \cref{th:query_gate_complex_gene} or the control Hamiltonian oracles in~\cref{th:query_gate_complex_cont} are available for $A(s)$, and we are given access to the state preparation oracle of the initial condition $\ket{u_0}$. 
    Then, for any $0<\varepsilon<1$, a state $|\widetilde{u}(T)\rangle$ satisfying
    \begin{equation}
        \left\|
        |\widetilde{u}(T)\rangle
        -\frac{u(T)}{\|u(T)\|}
        \right\|\leq\varepsilon
    \end{equation}
    can be prepared with constant success probability using
    \begin{equation}\label{eq:ode_state_query_complexity}
        \widetilde{\mathcal O}\!\left(
    \frac{\|u(0)\|}{\|u(T)\|}\alpha T \log\left(\frac{1}{\varepsilon}\right)
    \right)
    \end{equation}
    queries to the matrix oracles and $\widetilde{\mathcal O}\!\left(
    \|u(0)\|/\|u(T)\|
    \right)$ queries to the state preparation of $\ket{u_0}$. 
\end{corollary}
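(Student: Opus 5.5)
The plan is to follow the optimal-scaling LCHS framework of~\cite{LowSomma2025} and replace its time-dependent Hamiltonian simulation subroutine with the algorithm of \cref{th:query_gate_complex_gene} (or \cref{th:query_gate_complex_cont}, for control Hamiltonians).

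\emph{Step 1: discretize the LCHS integral.} We first use the kernel $\widehat f$ of~\cite{LowSomma2025}. Its truncation bound gives $\|\mathcal V_A(T)-\mathcal O_R(T)\|\le \varepsilon'$ with $R=\widetilde{\mathcal O}(\log^{1+o(1)}(1/\varepsilon'))$, and the quadrature analysis there gives
\[
\mathcal O_R(T)\approx\sum_{q} c_q U_{k_q}(T),
\qquad \sum_q|c_q|=\mathcal O(1),
\]
with $\mathrm{poly}\log(1/\varepsilon')$ nodes $k_q\in[-R,R]$. Because $L(s)\succeq 0$ and the kernel satisfies $\|\widehat f\|_{L^1}=\mathcal O(1)$, the LCU normalization is constant. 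We take $\varepsilon'=\varepsilon\|u(T)\|/\|u(0)\|$, so that after normalizing, the output state has error $\mathcal O(\varepsilon)$.

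\emph{Step 2: simulate each $U_{k}(T)$.} For fixed $k$, the Hamiltonian $H_k(s):=kL(s)+H(s)$ is again slow and analytic. Its Gevrey constants are $C_k\le(|k|+1)\alpha$ and the same $D$, since $L$ and $H$ inherit the analytic bounds of $A$ through the Cartesian decomposition. The required oracles for $H_k$ are obtained by LCU of the oracles for $A$ and $A^\dagger$: the coefficients are $(k\mp i)/2$, which costs $\mathcal O(1)$ queries, and the normalization is $\mathcal O((|k|+1)\alpha)$. The endpoint derivatives and $H_k'$ are handled the same way, because differentiation commutes with the Cartesian decomposition. For control Hamiltonians, $H_k$ is again a control Hamiltonian with coefficients that can be computed classically. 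Applying \cref{th:query_gate_complex_gene} with $\sigma=1$ and normalization $(R+1)\alpha$ gives $U_{k}(T)$ to precision $\varepsilon'$ with $\widetilde{\mathcal O}(R\alpha T+\log(1/\varepsilon'))$ queries. Since $R=\widetilde{\mathcal O}(\log(1/\varepsilon'))$, this is $\widetilde{\mathcal O}(\alpha T\log(1/\varepsilon))$, where the factors $\log(\|u(0)\|/\|u(T)\|)$ are absorbed into $\widetilde{\mathcal O}$.

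\emph{Step 3: select over the nodes and amplify.} The main technical obstacle is that we cannot pay a separate simulation cost for each node. We need a single controlled circuit $\sum_q|q\rangle\langle q|\otimes U_{k_q}(T)$ whose cost matches one simulation. To build it, we make $k$ a coherent register. The effective Floquet Hamiltonian is then $\sum_q|q\rangle\langle q|\otimes\mathscr H^{(k_q)}_{\rm eff,pbc}$, block-encoded by adding the $q$-register to the prepare and select oracles. We use the common parameters $l_{\max}$, $K$, and $\tau$ chosen for the worst case $|k|=R$, so that one QSVT polynomial serves all nodes.

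To handle the differing normalizations, the plan is to rescale each $H_{k_q}$ by the common factor $(R+1)\alpha$. The block encoding then includes $k$ as a classically controlled coefficient rotation, which adds only polylogarithmic gates. The amplification steps in \cref{th:Amplification_by_symmetry,th:OAA} act per branch and are oblivious to $q$, so they carry over to the controlled circuit. We then apply the LCU with coefficients $c_q$ and run amplitude amplification on the success amplitude, which is of order $\|\mathcal V_A(T)u_0\|/\|u_0\|=\|u(T)\|/\|u(0)\|$. This takes $\mathcal O(\|u(0)\|/\|u(T)\|)$ rounds. Each round makes one call to the controlled simulation and one to the preparation of $|u_0\rangle$, which gives the two query counts claimed in \cref{th:slow_ode_complexity}.

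Finally, we track the errors. The truncation error, the quadrature error, and the per-branch simulation errors $\varepsilon'$ each enter the LCU output multiplied by $\sum_q|c_q|=\mathcal O(1)$. Amplification then magnifies them by at most $\|u(0)\|/\|u(T)\|$, which is exactly offset by the choice of $\varepsilon'$ in Step 1.
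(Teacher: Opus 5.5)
Your proposal is correct and follows the paper's proof. The shared route is: LCHS with the optimal-scaling kernel (constant $L^1$ norm, hence constant LCU normalization); coherent control over the node register, so the query cost is that of the worst node $|k|=R$, namely $\widetilde{\mathcal O}(\alpha R T+\log(1/\delta))=\widetilde{\mathcal O}(\alpha T\log(1/\delta))$; then $\mathcal O(\|u(0)\|/\|u(T)\|)$ rounds of amplitude amplification with $\delta=\Theta(\varepsilon\|u(T)\|/\|u(0)\|)$.

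Your extra details usefully flesh out the step the paper only asserts. These are building the $kL+H$ oracles from those of $A$ and $A^\dagger$ with coefficients $(k\mp i)/2$, and fixing worst-case $l_{\max}$, $K$, $\tau$ with a common normalization $(R+1)\alpha$ so one QSVT polynomial serves every branch. One small point: the paper takes $R=\mathcal O(\log(1/\delta))$ directly from that kernel, so your $\log^{1+o(1)}$ hedge is unnecessary.
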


\begin{proof}
We first construct a block-encoding of the time evolution operator $\mathcal{V}_A(T)$ up to a target operator error $\delta$. 
We choose the LCHS truncation parameter $R$ and quadrature to discretize the integral $\mathcal{O}_R(T)$, and uses LCU to implement $\mathcal{O}_R(T)$. 
By the optimal-scaling LCHS construction~\cite{LowSomma2025}, 
the kernel in Ref.~\cite{LowSomma2025} can be chosen so that
\begin{equation}\label{eq:lchs_parameters}
    R=\mathcal O\!\left(\log\left(\frac{1}{\delta}\right)\right),
    \qquad
    \frac{1}{\sqrt{2\pi}}\int_{-R}^{R}|\widehat f(k)|\,dk=\mathcal O(1),
\end{equation}
and a uniform quadrature converges exponentially. 
Thus the LCU normalization is constant, and the overall query complexity of constructing a block-encoding of $\mathcal{V}_A(T)$ is to cost of simulating $\{U_k(T):|k|\leq R\}$. 
Since the LCHS index register can coherently control the LCU constructions of $kL+H$, so the query cost equals that of the most expensive node $|k|=R$, rather than the number of quadrature nodes. 
Applying \cref{th:query_gate_complex_gene} with simulation error $\mathcal O(\delta)$ therefore yields the cost to be
\begin{equation}
    \widetilde{\mathcal O}\!\left(
    \alpha R T+\log\left(\frac{1}{\delta}\right)
    \right) = \widetilde{\mathcal O}\!\left(
    \alpha T \log\left(\frac{1}{\delta}\right)
    \right)
\end{equation}
queries to the matrix oracles, with $\mathcal{O}(1)$ block-encoding normalization factor. 

To prepare an approximation of the normalized final solution $\ket{u(T)}$, we can simply apply the block-encoding of $\mathcal{V}_A(T)$ to the initial state $\ket{u_0}$ and post select on the ancilla qubits to extract the good subspace. 
The naive post-selection success probability is $\Omega( (\|u(T)\|/\|u(0)\|)^2 )$ and can be boosted to $\Omega(1)$ with $\mathcal{O}(\|u(0)\|/\|u(T)\|)$ rounds of amplitude amplification. 
Therefore, the overall query complexity becomes $\widetilde{\mathcal O}\!\left(
    \frac{\|u(0)\|}{\|u(T)\|}\alpha T \log\left(\frac{1}{\delta}\right)
    \right)$
queries to the matrix oracles and $\widetilde{\mathcal O}\!\left(
    \frac{\|u(0)\|}{\|u(T)\|}
    \right)$ queries to the initial state preparation. 
To make sure the output quantum state is an $\varepsilon$ approximation, it suffices to choose $\delta = \Theta(\epsilon \|u(T)\|/\|u(0)\|)$, which yields the claimed complexity. 
\end{proof}

The above discussions mainly focus on the homogeneous differential equations. 
The inhomogeneous equation $\dot{u}(t)=-\widetilde A(t) u(t)+b(t)$ can be treated based on the Duhamel principle, by applying the same propagator block encoding to the extra term in the Duhamel principle and then using an outer LCU over the time, as in~\cite{LowSomma2025}.

\printbibliography

\appendix
\section{Proofs for \texorpdfstring{\cref{sec:Error_analysis_5}}{Section 5}}
\label{app:main}

This appendix collects the complete proofs of~\cref{lem:translation_symmetry}, ~\cref{th:Amplification_by_symmetry},~\cref{th:OAA} and~\cref{th:Refined_effective_Hamiltonian} in~\cref{sec:Error_analysis_5}. All of these proofs are essentially generalizations of the techniques in \cite{mizuta2023optimal} to the Gevrey-class Hamiltonian setting.

\subsection{Proof of \texorpdfstring{\cref{lem:translation_symmetry}}{Lemma 5.1}: approximate translation symmetry}

\begin{proof}[Proof of~\cref{lem:translation_symmetry}]
We consider a perturbation $\tilde{\mathscr{H}}_b(t)$ designated by
\begin{equation}
\tilde{\mathscr{H}}_b(t) = \sum_{(l,m) \in \partial \widetilde{F}^{4l_{\max}}} |l\rangle \langle l \oplus m| \otimes e^{8il_{\max} \omega t} \widetilde{H}_{m} + \left(|l\rangle \langle l \oplus m| \otimes e^{8il_{\max} \omega t} \widetilde{H}_{m}\right)^{\dagger},
\end{equation}
with $\partial \tilde{F}^{4l_{\max}} = \{(l,m)\mid l \in D^{4l_{\max}},\; 8l_{\max} - l + 1 \leq m \leq 8l_{\max} - 1\}$, and $l\oplus m \in D^{4l_{\max}}$ defined as modulo $8l_{\max}$. This Hamiltonian indicates hopping terms that go across the boundaries $|4l_{\max}\rangle$ and $|-4l_{\max}+1\rangle$. Let $\tilde{\mathscr{U}}_{\text{pert}}(t)$ denote a time evolution operator under $\mathscr{H}_{\text{eff}}^{4l_{\max}}+\tilde{\mathscr{H}}_b(t)$. Then, due to the exact translation symmetry in the interaction picture, the transition amplitude $\langle l | \tilde{\mathscr{U}}_{\text{pert}}(t) | l'\rangle$ satisfies
\begin{equation}
\langle l|\tilde{\mathscr{U}}_{\text{pert}}(t)|l'\rangle = e^{i l'\omega t} \langle l \ominus l'|\tilde{\mathscr{U}}_{\text{pert}}(t)|0\rangle.
\end{equation}
Hence we have
\begin{align}
&\left\| \langle l | e^{-i \mathscr{H}_{\text{eff}}^{4l_{\max}}t} | l' \rangle - e^{i l' \omega t} \langle l \ominus l' | e^{-i \mathscr{H}_{\text{eff}}^{4l_{\max}}t} | 0 \rangle \right\| \notag \\
&\quad\leq \left\| \langle l| \bigl( \tilde{\mathscr{U}}_{\text{pert}}(t) - e^{-i\mathscr{H}_{\text{eff}}^{4l_{\max}}t} \bigr) |l'\rangle \right\| 
      + \left\| \langle l| \tilde{\mathscr{U}}_{\text{pert}}(t)|l'\rangle 
         - e^{i l' \omega t} \langle l \ominus l' | e^{-i \mathscr{H}_{\text{eff}}^{4l_{\max}}t} | 0 \rangle \right\| \notag \\
&\quad= \left\| \langle l| \bigl( \tilde{\mathscr{U}}_{\text{pert}}(t) - e^{-i\mathscr{H}_{\text{eff}}^{4l_{\max}}t} \bigr) |l'\rangle \right\| 
      + \left\| e^{i l'\omega t} \langle l \ominus l'|\tilde{\mathscr{U}}_{\text{pert}}(t)|0\rangle 
         - e^{i l' \omega t} \langle l \ominus l' | e^{-i \mathscr{H}_{\text{eff}}^{4l_{\max}}t} | 0 \rangle \right\| \notag \\ 
&\quad =\left\| \langle l| \bigl( \tilde{\mathscr{U}}_{\text{pert}}(t) - e^{-i\mathscr{H}_{\text{eff}}^{4l_{\max}}t} \bigr) |l'\rangle \right\|+\left\|\langle l \ominus l'|\left(\tilde{\mathscr{U}}_{\text{pert}}(t)|0\rangle-    e^{-i \mathscr{H}_{\text{eff}}^{4l_{\max}}t}\right) | 0 \rangle\right\|.
\end{align}
The difference of transition amplitudes between $\mathscr{H}_{\text{eff}}^{4l_{\max}}+\tilde{\mathscr{H}}_b(t)$ and $\mathscr{H}_{\text{eff}}^{4l_{\max}}$ is bounded in a similar way to that in \cref{th:Floquet-Hilbert_space_truncation}. The difference survives only when the trajectory $|l'\rangle \rightarrow |l_1\rangle \rightarrow \cdots \rightarrow |l_{n-1}\rangle \rightarrow |l\rangle$ pass through the boundaries $|4l_{\max}\rangle$ and $|-4l_{\max}+1\rangle$ via $\tilde{\mathscr{H}}_b(t)$, and then its length $\sum_{i=1}^n |m_i|$ with $m_i = l_i - l_{i-1}$ should be equal to or larger than $(4l_{\max} - |l|) + (4l_{\max} - |l'|)$. Then we can obtain its upper bound in a similar way to \cref{th:Floquet-Hilbert_space_truncation}:
\begin{equation}
\left\| \langle l| \bigl( \tilde{\mathscr{U}}_{\text{pert}}(t) - e^{-i\mathscr{H}_{\text{eff}}^{4l_{\max}}t} \bigr) |l'\rangle \right\|\leq \sum_{n=0}^{\infty} \frac{(2 t)^n}{n!} S_n(8l_{\max} - |l| - |l'|) \leq e^{2\beta  t-(8l_{\max}-|l|-|l'|)^{\frac{1}{\varrho}}/2\zeta }.
\end{equation}
In the same way we have 
\begin{equation}
\left\|\langle l \ominus l'|\left(\tilde{\mathscr{U}}_{\text{pert}}(t)|0\rangle-    e^{-i \mathscr{H}_{\text{eff}}^{4l_{\max}}t}\right) | 0 \rangle\right\| \leq \sum_{n=0}^{\infty} \frac{(2 t)^n}{n!} S_n(8l_{\max} - |l| - |l'|) \leq e^{2\beta  t-(8l_{\max}-|l|-|l'|)^{\frac{1}{\varrho}}/2\zeta }.
\end{equation}
Combining the two estimates yields the desired bound.
\end{proof}

\subsection{Proof of \texorpdfstring{\cref{th:Amplification_by_symmetry}}{Theorem 5.2}: amplification by symmetry}

\begin{proof}[Proof of~\cref{th:Amplification_by_symmetry}]
Note that 
\begin{align}
\langle 0| \mathscr{U}_{\text{amp1}}^{l_{\max}}(t) | 0 \rangle | \psi(0) \rangle
&= \frac{1}{4l_{\max}} \sum_{l' \in D^{l_{\max}}} \sum_{l \in D^{4l_{\max}}} e^{-i l \omega t} \langle l | e^{-i \mathscr{H}_{\text{eff}}^{4l_{\max}} t} | l' \rangle | \psi(0) \rangle.
\end{align}
We separate the summation over $l \in D^{4l_{\max}}$ in the above formula by
\begin{equation}
\sum_{l \in D^{4l_{\max}}} = \sum_{l;\; l-l' \in D^{3l_{\max}}} + \sum_{l \in D^{4l_{\max}};\; l-l' \notin D^{3l_{\max}}}.
\end{equation}
Let us focus on the first summation. For each $l'\in D^{l_{\max}}$, the summation can be approximated as
\begin{align}
& \left\| \sum_{l;\; l-l' \in D^{3l_{\max}}} e^{-i l \omega t} \langle l | e^{-i \mathscr{H}_{\text{eff}}^{4l_{\max}} t} | l' \rangle | \psi(0) \rangle - | \psi(t) \rangle \right\| \notag \\
&\quad \leq \left\| \sum_{l;\; l-l' \in D^{3l_{\max}}} e^{-i l \omega t} \langle l | e^{-i \mathscr{H}_{\text{eff}}^{4l_{\max}} t} | l' \rangle | \psi(0) \rangle 
      - \sum_{l;\; l-l' \in D^{3l_{\max}}} e^{-i (l- l') \omega t} \langle l \ominus l'| e^{-i \mathscr{H}_{\text{eff}}^{4l_{\max}} t} | 0 \rangle | \psi(0) \rangle \right\| \notag \\
&\qquad + \left\| \sum_{l;\; l-l' \in D^{3l_{\max}}} e^{-i (l- l') \omega t} \langle l \ominus l'| e^{-i \mathscr{H}_{\text{eff}}^{4l_{\max}} t} | 0 \rangle | \psi(0) \rangle 
      - | \psi(t) \rangle \right\| \notag \\
&\quad = \left\| \sum_{l;\; l-l' \in D^{3l_{\max}}} e^{-i l \omega t} \langle l | e^{-i \mathscr{H}_{\text{eff}}^{4l_{\max}} t} | l' \rangle | \psi(0) \rangle 
      - \sum_{l;\; l-l' \in D^{3l_{\max}}} e^{-i (l- l') \omega t} \langle l \ominus l'| e^{-i \mathscr{H}_{\text{eff}}^{4l_{\max}} t} | 0 \rangle | \psi(0) \rangle \right\| \notag \\
&\qquad + \left\| \sum_{l \in D^{3l_{\max}}} e^{-i l \omega t} \langle l | e^{-i \mathscr{H}_{\text{eff}}^{4l_{\max}} t} | 0 \rangle | \psi(0) \rangle 
      - | \psi(t) \rangle \right\| \notag \\
&\quad \leq \sum_{l;\; l-l' \in D^{3l_{\max}}} \left\| \langle l | e^{-i \mathscr{H}_{\text{eff}}^{4l_{\max}} t} | l' \rangle | \psi(0) \rangle 
      - e^{i  l'\omega t} \langle l \ominus l'| e^{-i \mathscr{H}_{\text{eff}}^{4l_{\max}} t} | 0 \rangle | \psi(0) \rangle \right\| \notag \\
&\qquad + \left\| \sum_{l \in D^{3l_{\max}}} e^{-i l \omega t} \langle l | e^{-i \mathscr{H}_{\text{eff}}^{4l_{\max}} t} | 0 \rangle | \psi(0) \rangle 
      - | \psi(t) \rangle \right\| \notag \\
&\quad \leq \sum_{l;\; l-l' \in D^{3l_{\max}}} 2e^{2\beta  t - \frac{(8l_{\max}-|l|-|l'|)^{\frac{1}{\varrho}}}{2\zeta} } 
      + \left\| \sum_{l \in D^{3l_{\max}}} e^{-i l \omega t} \langle l | e^{-i \mathscr{H}_{\text{eff}}^{4l_{\max}} t} | 0 \rangle | \psi(0) \rangle 
      - | \psi(t) \rangle \right\| \notag \\
&\quad \leq 4\left(\sum_{m=0}^{\infty} e^{ - \frac{m^{\frac{1}{\varrho}}}{4\zeta} } \right)e^{2\beta  t - \frac{(3l_{\max})^{\frac{1}{\varrho}}}{4\zeta}}
      + \left\| \sum_{l \in D^{3l_{\max}}} e^{-i l \omega t} \langle l | e^{-i \mathscr{H}_{\text{eff}}^{4l_{\max}} t} | 0 \rangle | \psi(0) \rangle 
      - | \psi(t) \rangle \right\| \notag \\
&\quad \leq 8\left(\sum_{m=0}^{\infty} e^{ - \frac{m^{\frac{1}{\varrho}}}{4\zeta} } \right)e^{2\beta  t - \frac{(3l_{\max})^{\frac{1}{\varrho}}}{4\zeta}}\leq \frac{8\varepsilon^{3^{1/\varrho}}}{4^{ 3^{1/\varrho}}(S(4\zeta))^{3^{1/\varrho}-1}}.
\end{align}
Here the third inequality invokes Lemma~\ref{lem:translation_symmetry}, the penultimate one follows the same argument as Theorem~\ref{th:Floquet-Hilbert_space_truncation}, and the last one relies on the choice of~$l_{\max}$. 

We next compute the second summation, which is taken over $l \in D^{4l_{\max}}$ satisfying $l - l' \notin D^{3l_{\max}}$. The Lieb-Robinson bound, dictated by \cref{th:transition_rate}, immediately concludes its upper bound by
\begin{align}
& \left\| \sum_{l;\; l-l' \notin D^{3l_{\max}}} e^{-i l \omega t} \langle l | e^{-i \mathscr{H}_{\text{eff}}^{4l_{\max}} t} | l' \rangle | \psi(0) \rangle \right\| \notag \\
&\quad\leq \sum_{l;\; l-l' \notin D^{3l_{\max}}} e^{2\beta  t - |l-l'|^{\frac{1}{\varrho}}/2\zeta } \notag \\
&\quad\leq 2\left(\sum_{m=0}^{\infty} e^{ - \frac{m^{\frac{1}{\varrho}}}{4\zeta} } \right)  e^{2\beta  t - (3l_{\max})^{\frac{1}{\varrho}}/4\zeta } \leq \frac{2\varepsilon^{3^{1/\varrho}}}{4^{ 3^{1/\varrho}}(S(4\zeta))^{3^{1/\varrho}-1}}.
\end{align}
By combining these results we complete the proof.
\end{proof}

\subsection{Proof of \texorpdfstring{\cref{th:OAA}}{Theorem 5.3}: oblivious amplitude amplification}

\begin{proof}[Proof of~\cref{th:OAA}]
The first one $\mathscr{R}$ reverses the sign of $|l\rangle$ for $l \neq 0$, and it is implemented with $\mathcal{O}(\log l_{\max})$ gates. The second one $\mathscr{U}_{\text{amp2}}^{l_{\max}}(t)$ plays a role in enhancing the amplitude of $|\psi(t)\rangle$ up to $1 - \mathcal{O}(\varepsilon)$. Its action on any initial state $|0\rangle |\psi(0)\rangle$ is computed as follows:
\begin{align}
\mathscr{U}_{\text{amp2}}^{l_{\max}}(t) |0\rangle |\psi(0)\rangle
&= \mathscr{U}_{\text{amp1}}^{l_{\max}}(t) \mathscr{R} (\mathscr{U}_{\text{amp1}}^{l_{\max}}(t))^{\dagger}
   \left(-\frac{1}{2} |0\rangle |\psi(t)\rangle + |\Psi^{\perp}\rangle\right)- \mathscr{U}_{\text{amp1}}^{l_{\max}}(t) \mathscr{R}(\mathscr{U}_{\text{amp1}}^{l_{\max}}(t))^{\dagger}|0\rangle\Delta. \notag \\
&= \mathscr{U}_{\text{amp1}}^{l_{\max}}(t) \mathscr{R} \left\{ |0\rangle |\psi(0)\rangle
   - (\mathscr{U}_{\text{amp1}}^{l_{\max}}(t))^{\dagger} |0\rangle |\psi(t)\rangle \right\} -2 \mathscr{U}_{\text{amp1}}^{l_{\max}}(t) \mathscr{R}(\mathscr{U}_{\text{amp1}}^{l_{\max}}(t))^{\dagger}|0\rangle\Delta.
\end{align}
In the second equality, we use the relation obtained by applying $(\mathscr{U}_{\text{amp1}}^{l_{\max}}(t))^{\dagger}$ to 
\begin{equation}
\mathscr{U}_{\text{amp1}}^{l_{\max}}(t)\,|0\rangle\,|\psi(0)\rangle 
=\frac{1}{2}\,|0\rangle\,|\psi(t)\rangle+|0\rangle\Delta+|\Psi^{\perp}\rangle.
\end{equation}
Next, we evaluate
\begin{align}
\mathscr{R} (\mathscr{U}_{\text{amp1}}^{l_{\max}}(t))^{\dagger} |0\rangle|\psi(t)\rangle
&= 2 |0\rangle \left( \langle 0| (\mathscr{U}_{\text{amp1}}^{l_{\max}}(t))^{\dagger} |0\rangle \right) |\psi(t)\rangle
- (\mathscr{U}_{\text{amp1}}^{l_{\max}}(t))^{\dagger} |0\rangle|\psi(t)\rangle \notag \\
&= |0\rangle U^{\dagger}(t)|\psi(t)\rangle+ 2 |0\rangle \left( \langle 0| (\mathscr{U}_{\text{amp1}}^{l_{\max}}(t))^{\dagger} |0\rangle-\frac{1}{2}U^{\dagger}(t) \right) |\psi(t)\rangle
- (\mathscr{U}_{\text{amp1}}^{l_{\max}}(t))^{\dagger} |0\rangle|\psi(t)\rangle \notag \\
&= |0\rangle |\psi(0)\rangle
- (\mathscr{U}_{\text{amp1}}^{l_{\max}}(t))^{\dagger} |0\rangle|\psi(t)\rangle+2 |0\rangle \left( \langle 0| (\mathscr{U}_{\text{amp1}}^{l_{\max}}(t))^{\dagger} |0\rangle-\frac{1}{2}U^{\dagger}(t) \right) |\psi(t)\rangle.
\end{align}
Hence we have
\begin{align}
&\mathscr{U}_{\text{amp2}}^{l_{\max}}(t) |0\rangle |\psi(0)\rangle \notag \\
&\quad= \mathscr{U}_{\text{amp1}}^{l_{\max}}(t) |0\rangle |\psi(0)\rangle
   - \mathscr{U}_{\text{amp1}}^{l_{\max}}(t) \mathscr{R} (\mathscr{U}_{\text{amp1}}^{l_{\max}}(t))^{\dagger} |0\rangle |\psi(t)\rangle
   - 2 \mathscr{U}_{\text{amp1}}^{l_{\max}}(t) \mathscr{R} (\mathscr{U}_{\text{amp1}}^{l_{\max}}(t))^{\dagger} |0\rangle \Delta \notag \\
&\quad= \mathscr{U}_{\text{amp1}}^{l_{\max}}(t) |0\rangle |\psi(0)\rangle
   - \mathscr{U}_{\text{amp1}}^{l_{\max}}(t) |0\rangle |\psi(0)\rangle
   + |0\rangle |\psi(t)\rangle \notag \\
&\qquad - 2 \mathscr{U}_{\text{amp1}}^{l_{\max}}(t) |0\rangle \left( \langle 0| (\mathscr{U}_{\text{amp1}}^{l_{\max}}(t))^{\dagger} |0\rangle
   - \frac{1}{2} U^{\dagger}(t) \right) |\psi(t)\rangle
   - 2 \mathscr{U}_{\text{amp1}}^{l_{\max}}(t) \mathscr{R} (\mathscr{U}_{\text{amp1}}^{l_{\max}}(t))^{\dagger} |0\rangle \Delta \notag \\
&\quad= |0\rangle |\psi(t)\rangle
   - 2 \mathscr{U}_{\text{amp1}}^{l_{\max}}(t) |0\rangle \left( \langle 0| (\mathscr{U}_{\text{amp1}}^{l_{\max}}(t))^{\dagger} |0\rangle
   - \frac{1}{2} U^{\dagger}(t) \right) |\psi(t)\rangle
   - 2 \mathscr{U}_{\text{amp1}}^{l_{\max}}(t) \mathscr{R} (\mathscr{U}_{\text{amp1}}^{l_{\max}}(t))^{\dagger} |0\rangle \Delta.
\end{align}
\cref{th:Amplification_by_symmetry} essentially indicates that the time evolution operator $U(t)$ is approximated as
\begin{equation}
\left\| \langle 0| \mathscr{U}_{\text{amp1}}^{l_{\max}}(t)|0\rangle - \frac{1}{2} U(t) \right\| \leq \frac{10\varepsilon^{3^{1/\varrho}}}{4^{ 3^{1/\varrho}}(S(4\zeta))^{3^{1/\varrho}-1}},
\end{equation}
and hence the relation
\begin{equation}
\left\| \langle 0| [\mathscr{U}_{\text{amp1}}^{l_{\max}}(t)]^{\dagger}|0\rangle - \frac{1}{2} U(t)^{\dagger} \right\| \leq \frac{10\varepsilon^{3^{1/\varrho}}}{4^{ 3^{1/\varrho}}(S(4\zeta))^{3^{1/\varrho}-1}}
\end{equation}
is also satisfied. This means that  
\begin{equation}
\left\| 2 \mathscr{U}_{\text{amp1}}^{l_{\max}}(t) |0\rangle \left( \langle 0| (\mathscr{U}_{\text{amp1}}^{l_{\max}}(t))^{\dagger} |0\rangle-\frac{1}{2}U^{\dagger}(t) \right) |\psi(t)\rangle\right\|< \frac{20\varepsilon^{3^{1/\varrho}}}{4^{ 3^{1/\varrho}}(S(4\zeta))^{3^{1/\varrho}-1}},
\end{equation}
and with 
\begin{equation}
\left\|\mathscr{U}_{\text{amp1}}^{l_{\max}}(t) \mathscr{R} (\mathscr{U}_{\text{amp1}}^{l_{\max}}(t))^{\dagger}|0\rangle\Delta\right\|=\left\||0\rangle\Delta\right\|\leq\frac{10\varepsilon^{3^{1/\varrho}}}{4^{ 3^{1/\varrho}}(S(4\zeta))^{3^{1/\varrho}-1}},
\end{equation}
we get
\begin{equation}
\left\|\mathscr{U}_{\text{amp2}}^{l_{\max}}(t) |0\rangle |\psi(0)\rangle - |0\rangle |\psi(t)\rangle\right\| \leq \frac{40\varepsilon^{3^{1/\varrho}}}{4^{ 3^{1/\varrho}}(S(4\zeta))^{3^{1/\varrho}-1}}
\end{equation}
for an arbitrary initial state $|\psi(0)\rangle$. This result indicates that the operation $\mathscr{U}_{\text{amp2}}^{l_{\max}}(t)$ generates the time-evolved state $|\psi(t)\rangle$ with the amplitude $1-\mathcal{O}(\varepsilon)$, which completes the proof. 
\end{proof}

\subsection{Proof of \texorpdfstring{\cref{th:Refined_effective_Hamiltonian}}{Theorem 5.4}: refined effective Hamiltonian}
\label{app:refined_proof}

\begin{proof}[Proof of~\cref{th:Refined_effective_Hamiltonian}]
First, we evaluate the difference of the transition rates between $\mathscr{H}_{\text{eff, pbc}}^{4l_{\max}}$ and $\mathscr{H}_{\text{eff}}^{4l_{\max}}$. We replace the perturbation $\tilde{\mathscr{H}}_b(t)$ by the boundary term $\tilde{\mathscr{H}}_b$ in the proof of \cref{lem:translation_symmetry}. We obtain
\begin{equation}
\left\| \langle l | e^{-i\mathscr{H}_{\text{eff, pbc}}^{4l_{\max}}t} | l' \rangle - \langle l | e^{-i\mathscr{H}_{\text{eff}}^{4l_{\max}}t} | l' \rangle \right\| \leq  e^{2\beta  t-(8l_{\max}-|l|-|l'|)^{\frac{1}{\varrho}}/2\zeta }.
\end{equation}
Once we obtain this bound, we can evaluate the deviation from $\mathscr{U}_{\text{amp1}}^{l_{\max}}(t)$ as
\begin{align}
\left\| \langle 0 | \bigl( \mathscr{U}_{\text{amp1, pbc}}^{l_{\max}}(t) - \mathscr{U}_{\text{amp1}}^{l_{\max}}(t) \bigr) | 0 \rangle | \psi(0) \rangle \right\|
&\leq\frac{1}{4l_{\max}} \sum_{l' \in D^{l_{\max}}} \sum_{l \in D^{4l_{\max}}}  \left\|\langle l | e^{-i \mathscr{H}_{\text{eff,pbc}}^{4l_{\max}} t}-e^{-i \mathscr{H}_{\text{eff}}^{4l_{\max}} t} | l' \rangle | \psi(0) \rangle\right\| \notag \\
&\leq \sum_{l' \in D^{l_{\max}},\;l \in D^{4l_{\max}} } \frac{e^{2\beta  t-(8l_{\max}-|l|-|l'|)^{{\frac{1}{\varrho}}}/2\zeta}}{4l_{\max}} \notag \\
&\leq  \left(\sum_{m=0}^{\infty} e^{- \frac{m^{{\frac{1}{\varrho}}}}{4\zeta} }\right) e^{2\beta  t-(3l_{\max})^{\frac{1}{\varrho}}/4\zeta}
\leq \frac{\varepsilon^{3^{1/\varrho}}}{4^{ 3^{1/\varrho}}(S(4\zeta))^{3^{1/\varrho}-1}}.
\end{align}
Since $\langle 0 | \mathscr{U}_{\text{amp1}}^{l_{\max}}(t) |0\rangle |\psi(0)\rangle$ accurately provides the state $|\psi(t)\rangle / 2$ as \cref{th:Amplification_by_symmetry}, a triangle inequality concludes the first inequality. Since the oblivious amplitude amplification generates the error at most four times, the second inequality is immediately derived. The last assertion comes from the fact that for $1<\varrho<3$,
\begin{align}
4^{3^{1/\varrho}}\bigl(S(4\zeta)\bigr)^{3^{1/\varrho}-1}
&\geq 4^{3^{1/\varrho}}\Bigl(\Gamma(\varrho+1)(4\zeta)^{\varrho}\Bigr)^{3^{1/\varrho}-1} \notag \\
&\geq 4^{3^{1/\varrho}}\Bigl(\Gamma(\varrho+1)4^{\varrho}
    \Bigl(\frac{2}{\varrho}\Bigr)^{\varrho}\cdot\frac{A}{\pi}\Bigr)^{3^{1/\varrho}-1} \notag \\
&\geq 4^{3^{1/\varrho}}\Bigl(\frac{64\cdot 32\,e^{2+1/e}}{\pi}\Bigr)^{3^{1/\varrho}-1} \notag \\
&\geq 4^{3^{1/3}}\Bigl(\frac{64\cdot 32\,e^{2+1/e}}{\pi}\Bigr)^{3^{1/3}-1} \notag \\
&> 369.
\end{align}
This completes the proof.
\end{proof}

\end{document}